\numberwithin{equation}{section}
\theoremstyle{plain}
\newtheorem{theorem}{Theorem}[section]
\newtheorem{lemma}[theorem]{Lemma}
\newtheorem{corollary}[theorem]{Corollary}
\newtheorem{proposition}[theorem]{Proposition}
\newtheorem{fact}[theorem]{Fact}
\newtheorem*{claim}{Claim}
\theoremstyle{definition}
\newtheorem{definition}[theorem]{Definition}
\theoremstyle{remark}
\newtheorem{remark}[theorem]{Remark}
\numberwithin{equation}{section}
\def\A{\mathcal{A}}
\def\C{\mathcal{C}}
\def\D{\mathcal{D}}
\def\F{\mathcal{F}}
\def\H{\mathcal{H}}
\def\L{\mathcal{L}}
\def\X{\mathcal{X}}
\def\Y{\mathcal{Y}}
\def\S{\mathbb{S}}
\newcommand*{\N}{{\mathbb{N}}}
\newcommand*{\R}{{\mathbb{R}}}
\let\eps\epsilon
\let\phi\varphi
\DeclareMathOperator*{\pr}{\mathbb{P}}
\DeclareMathOperator*{\ex}{\mathbb{E}}
\DeclareMathOperator*{\E}{\mathbb{E}}
\DeclareMathOperator{\var}{Var}
\DeclareMathOperator{\unif}{Unif}
\DeclareMathOperator*{\argmax}{arg\,max}
\DeclareMathOperator*{\argmin}{arg\,min}
\newcommand{\Ind}{\mathbbm{1}}
\DeclarePairedDelimiter{\norm}{\|}{\|}
\let\hat\widehat
\DeclareMathOperator{\poly}{poly}
\DeclareMathOperator{\sign}{sign}
\DeclareMathOperator{\vol}{vol}
\DeclareMathOperator{\proj}{proj}
\DeclareMathOperator{\ind}{\mathbbm{1}}
\newcommand{\opttrain}{\mathsf{opt}_{\train}}
\newcommand{\cube}[1]{\{\pm 1\}^{#1}}
\newcommand{\ignore}[1]{} 
\newcommand*{\w}{\mathbf{w}}
\newcommand*{\vv}{\mathbf{v}}
\newcommand*{\vu}{\mathbf{u}}
\newcommand*{\x}{\mathbf{x}}
\newcommand*{\z}{\mathbf{z}}
\newcommand*{\Dtarget}{{\D'}}
\newcommand*{\Dgeneric}{\D}
\newcommand*{\Dtrainjoint}{\D_{\X\Y}^{\mathrm{train}}}
\newcommand*{\Dtestjoint}{\D_{\X\Y}^{\mathrm{test}}}
\newcommand*{\Dtrainmarginal}{\D_{\X}^{\mathrm{train}}}
\newcommand*{\Dtestmarginal}{\D_{\X}^{\mathrm{test}}}
\newcommand*{\error}{\mathrm{err}}
\newcommand{\train}{\mathrm{train}}
\newcommand{\test}{\mathrm{test}}
\newcommand{\optcommon}{\lambda}
\newcommand{\concept}{f}
\newcommand{\coptcommon}{\concept^*}
\newcommand{\alg}{\mathcal{A}}
\newcommand{\Slabelled}{S}
\newcommand{\Sunlabelled}{X}
\newcommand{\Strain}{\Slabelled_\train}
\newcommand{\Stest}{\Sunlabelled_\test}
\newcommand{\pup}{p_{\mathrm{up}}}
\newcommand{\pdown}{p_{\mathrm{down}}}
\newcommand{\nats}{\mathbb{N}}
\newcommand{\Gauss}{\mathcal{N}}
\newcommand{\Unif}{\mathrm{Unif}}
\newcommand{\pbound}{B}
\newcommand{\mslack}{\Delta}
\newcommand{\mindex}{\alpha}
\newcommand{\moment}{\mathrm{M}}
\newcommand{\momentempirical}{\hat{\moment}}
\newcommand{\ptfma}{\mathsf{PTF}\mathrm{-}\mathsf{MA}}
\newcommand{\disctest}{\mathsf{DISC}}
\newcommand{\1}{\Ind}
\newcommand{\accuracy}{\eps'}
\newcommand{\mtrain}{m_{\train}}
\newcommand{\mtest}{m_{\test}}
\newcommand{\mconc}{m_{\mathrm{conc}}}
\newcommand{\degreesandwich}{\ell}
\newcommand{\vp}{\mathbf{p}}
\newcommand{\disc}{\mathrm{disc}}
\newcommand{\T}{\mathcal{T}}
\newcommand{\grad}{\nabla}
\newcommand{\twonorm}[1]{\norm{#1}_2}
\newcommand{\Ttrain}{T_{\train}}
\newcommand{\convset}{\mathcal{K}}
\newcommand{\subspace}{\mathcal{U}}
\newcommand{\Dsource}{\D}
\newcommand{\neighborhood}{\mathbf{N}}
\newcommand{\powset}{\mathrm{Pow}}
\newcommand{\Tester}{\mathcal{T}}
\newcommand{\Dclass}{\mathbb{D}}
\newcommand{\Dclasstarget}{\Dclass}
\newcommand{\slack}{\gamma}
\newcommand{\slackamp}{\sigma}
\newcommand{\subspslack}{\slack_s}
\newcommand{\errorslack}{\slack_e}
\newcommand{\subspneighborhood}{\neighborhood_{s}}
\newcommand{\disneighborhood}{\neighborhood_{e}}
\newcommand{\lowconcept}{F}
\newcommand{\concparam}{\mu_c}
\newcommand{\anticoncparam}{\mu_{ac}}
\newcommand{\ball}{\mathbb{B}}
\newcommand{\y}{\mathbf{y}}
\newcommand{\cone}{\mathcal{R}}
\newcommand{\Dgenericlow}{\mathcal{Q}}
\newcommand{\indicesspace}{\mathcal{I}}
\newcommand{\grid}{\mathcal{G}}
\newcommand{\gridcell}{G}
\newcommand{\gridrefinement}{\eta}
\newcommand{\localbalance}{\beta}
\newcommand{\balance}{\beta}
\renewcommand{\accuracy}{\psi}
\title{Efficient Discrepancy Testing for Learning \\ with Distribution Shift}
\author{
    Gautam Chandrasekaran\thanks{\texttt{gautamc@cs.utexas.edu}. Supported by the NSF AI Institute for Foundations of Machine Learning (IFML).} \\ UT Austin
    \and
    Adam R. Klivans\thanks{\texttt{klivans@cs.utexas.edu}. Supported by NSF award AF-1909204 and the NSF AI Institute for Foundations of Machine Learning (IFML).} \\ UT Austin
	\and
    Vasilis Kontonis\thanks{\texttt{vasilis@cs.utexas.edu}. Supported by the NSF AI Institute for Foundations of Machine Learning (IFML).} \\ UT Austin
    \and 
    Konstantinos Stavropoulos\thanks{\texttt{kstavrop@cs.utexas.edu}. Supported by the NSF AI Institute for Foundations of Machine Learning (IFML) and by scholarships from Bodossaki Foundation and Leventis Foundation.} \\ UT Austin
    \and
    Arsen Vasilyan\thanks{\texttt{vasilyan@mit.edu}. Supported in part by NSF awards CCF-2006664, DMS-2022448, CCF-1565235, CCF-1955217,\\ CCF-2310818, Big George Fellowship and Fintech@CSAIL. Work done in part while visiting UT Austin.} \\ MIT
}
\begin{document}

\maketitle

\begin{abstract}
    A fundamental notion of distance between train and test distributions from the field of domain adaptation is discrepancy distance. 
 While in general hard to compute, here we provide the first set of provably efficient algorithms for testing {\em localized} discrepancy distance, where discrepancy is computed with respect to a fixed output classifier.  These results imply a broad set of new, efficient learning algorithms in the recently introduced model of Testable Learning with Distribution Shift (TDS learning) due to Klivans et al. (2023).
    
    Our approach generalizes and improves all prior work on TDS learning: (1) we obtain {\em universal} learners that succeed simultaneously for large classes of test distributions, (2) achieve near-optimal error rates, and (3) give exponential improvements for constant depth circuits. Our methods further extend to semi-parametric settings and imply the first positive results for low-dimensional convex sets. Additionally, we separate learning and testing phases and obtain algorithms that run in fully polynomial time at test time.
\end{abstract}

\section{Introduction}

Distribution shift remains a central challenge in machine learning.  While practitioners may exert some level of control over a model's training distribution, they have far less insight into future, potentially adversarial, test distributions.  Developing algorithms that can predict whether a trained classifier will perform well on an unseen test set is therefore critical to the widescale deployment of modern foundation models.   




A heavily-studied framework for modeling distribution shift is domain adaptation, where a learner has access to labeled examples from some training distribution, unlabeled examples from some test distribution and is asked to output a hypothesis with low error on the test distribution.
Over the last twenty years, researchers in domain adaptation and related fields \cite{ben2006analysis,blitzer2007learning,mansour2009domadapt,ben2010theory,redko2020survey,zhang2020localized,kpotufe2021marginal,hanneke2023limits,kalavasis2024transfer} have established bounds for out-of-distribution generalization in terms of some type of distance between train and test distributions.  By far the most commonly studied notion is discrepancy distance: \[\disc_{\C}(\D,\D')=\sup_{f_1,f_2\in\C}\Bigr|\pr_{\x\sim \D}[f_1(\x)\neq f_2(\x)]-\pr_{\x\sim \D'}[f_1(\x)\neq f_2(\x)]\Bigr|\]

Estimating or even testing discrepancy distance, however, seems difficult, as its definition involves an enumeration over all classifiers from some underlying function class (in \Cref{appendix:np-hardness} we give the first hardness result for computing discrepancy distance in general).  As such, obtaining provably efficient algorithms for domain adaptation has seen little progress (none of the above works give polynomial-time guarantees).  


In search of efficient algorithms for learning with distribution shift with certifiable error guarantees, recent work by \cite{klivans2023testable} defined the \emph{Testable Learning with Distribution Shift} (TDS learning) framework. 
 In this model (similar to domain adaptation), a learner receives labeled examples from train distribution ${\cal D}$, {\em unlabeled} examples from test distribution ${\cal D'}$, and then runs a test.  If the (efficiently computable) test accepts, the learner outputs $h$ that is guaranteed to have low test error with respect to ${\cal D'}$.  No guarantees are given if the test rejects, but it must accept (with high probability) if the marginals of ${\cal D}$ and ${\cal D'}$ are equal.  This framework has led to the first provably efficient algorithms for learning with distribution shift for certain concept classes (for example, halfspaces) \cite{klivans2023testable,klivans2024learning}.

 It is straightforward to see that if algorithm ${\cal A}$ learns concept class ${\cal C}$ in the (ordinary) PAC/agnostic model, and we have an efficient {\em localized} discrepancy tester for ${\cal C}$, then ${\cal C}$ is learnable in the TDS framework: simply apply the discrepancy tester to the output of ${\cal A}$ and accept if this quantity is small.  A dream scenario would be to augment all known PAC/agnostic learning algorithms with associated localized discrepancy testers.  This is nontrivial in part because we cannot make any assumptions on the test distribution ${\cal D}'$ (our test has to always accept or reject correctly).  Nevertheless, our main contribution is a suite of new discrepancy testers for well-studied function class/training distribution pairs that unifies and greatly expands all prior work on TDS learning.




\subsection{Our Contributions}

\paragraph{Optimal Error Guarantees via $\L_1$ Sandwiching.} The work of \cite{klivans2023testable} used a moment-matching approach to show that the existence of $\L_2$ sandwiching polynomial approximators implies TDS learning up to a constant factor of the optimum error. Although their result implies TDS learning for several fundamental concept classes, the $\L_2$ sandwiching requirement seems restrictive for classes such as constant-depth circuits or polynomial threshold functions. In \Cref{thm:sandwiching-tds-kkms-main}, we provide TDS learning results in terms of the much more well-understood notion of $\L_1$ sandwiching, resolving one of the main questions left open in \cite{klivans2023testable}.  As such, we obtain exponential improvements for TDS learning constant depth circuits (AC$^0$), and the first results for degree-2 polynomial threshold functions (see \Cref{table:results-l1-sandwiching}). Our result also bridges a gap between TDS learning and testable agnostic learning \cite{rubinfeld2022testing}, since the latter has been known to be implied by $\L_1$ sandwiching \cite{gollakota2022moment}. Additionally, in the agnostic setting, the error guarantees we achieve are essentially optimal (as opposed to the constant-factor approximation by \cite{klivans2023testable}).

\paragraph{Universal TDS Learners.} A natural and important goal in TDS learning is to design algorithms that accept and make trustworthy predictions whenever the distribution shift is benign. 
In \Cref{theorem:universal-tds-cvx-subspace-juntas-main,theorem:tds-intersections-main}, we give the first TDS learners that are guaranteed to accept whenever the test marginal falls in a wide class of distributions that are not necessarily close to the training distribution (in say statistical distance) but, instead, share some mild structural properties. In the literature of testable agnostic learning, testers with relaxed completeness criteria are called universal \cite{gollakota2023tester}. Our universal TDS learners accept all distributions that are sufficiently concentrated and anti-concentrated and work for convex sets with low intrinsic dimension (\Cref{theorem:universal-tds-cvx-subspace-juntas-main}) and halfspace intersections (\Cref{theorem:tds-intersections-main}). Surprisingly, our algorithms can handle distributions that are heavy-tailed and multimodal, for which efficient (ordinary) agnostic learning algorithms are not known to exist.  Our algorithms exploit localization guarantees from the training phase (e.g., subspace or boundary recovery) to relax the requirements of the testing phase.

\paragraph{Fully Polynomial-Time Testing.} All of the TDS learners we provide consist of two decoupled phases. In the training phase, the algorithm uses labeled training examples to output a candidate hypothesis $h$. The testing phase receives the candidate $h$ and uses unlabeled test examples to decide whether to reject or accept and output $h$. Separation of the two phases is an important feature of our approach, as it may be desirable for these tasks to be performed by distinct parties who have different amounts of available (computing) resources. Efficient implementations of the testing phase are of utmost importance, especially for potential users of large pre-trained models who need to certify that the candidate model at hand is safe to deploy. In \Cref{theorem:tds-intersections-main}, we give \emph{the first TDS learner for intersections of halfspaces that runs in fully polynomial test time}, and additionally improves the overall runtime of the previous state-of-the-art TDS learner for intersection of halfspaces by \cite{klivans2024learning}. In fact, our TDS learner's overall runtime is polynomial in the dimension $d$, while the time complexity of the TDS learner given by \cite{klivans2024learning} involved a factor of $d^{O(\log(1/\eps))}$, where $\eps$ is the error parameter.

\subsection{Our Techniques}

Our approach for designing TDS learners focuses on efficient algorithms for testing a new notion of localized discrepancy distance: 


\begin{definition}[Localized Discrepancy]\label{definition:localized-discrepancy}
    Let $\Dsource$ be a distribution over $\X\subseteq\R^d$ and let $\H,\C\subseteq\cube{\X}$ be hypothesis and concept classes respectively. Define neighborhood $\neighborhood$ to be a function $\neighborhood:\H \to 2^\C$. For $\hat\concept\in \H$, the $(\hat\concept,\neighborhood)$-localized discrepancy from $\Dsource$ to $\Dtarget$ is defined as: 
    \[\disc_{\hat\concept, \neighborhood}(\Dsource,\Dtarget) = \sup_{\concept\in \neighborhood(\hat\concept)}\Bigr(\pr_{\x\sim \Dtarget}[\hat\concept(\x)\neq \concept(\x)] - \pr_{\x\sim \Dsource}[\hat\concept(\x)\neq \concept(\x)]\Bigr)
    \]
\end{definition}
 Testing localized discrepancy is clearly easier than testing the traditional (global) discrepancy distance, since global discrepancy is defined with respect to a supremum over all pairs of concepts within some given class, while localized discrepancy only depends on a small neighborhood of concepts around some given reference classifier $\hat\concept$. 

 Assume for a moment that we have fixed a neighborhood function $\neighborhood$ and have obtained a learner that always outputs a classifier close to the ground truth function $\coptcommon$ (i.e., $\coptcommon\in\neighborhood(\hat\concept)$).  In this case, if we can test localized discrepancy, then we obtain a TDS learner as follows: output $\hat\concept$ if the corresponding localized discrepancy is small and reject otherwise (recall $\hat\concept$ is close to the ground truth for both training and test distributions).  
 

The algorithmic challenge is finding a definition of neighborhood that admits both an efficient learner (for outputting a classifier close to the ground truth) and an efficient localized discrepancy tester. Smaller neighborhoods make the learning problem more difficult while larger neighborhoods make discrepancy testing more challenging. 

Ultimately, the correct choice of neighborhood depends on the properties of the concept class $\C$ and the training distribution.  For our main applications below we briefly describe the choice of neighborhood and high level algorithmic approach. 

\paragraph{Classes with Low-Degree Sandwiching Approximators.} We show that the existence of degree-$\ell$ $\L_1$-sandwiching approximators for a class $\C$ over $\X\subseteq\R^d$ turns out to be sufficient to design a localized discrepancy tester that runs in time $d^{O(\ell)}$ where the notion of neighborhood is widest possible, i.e., $\neighborhood(\hat\concept) = \C$.\footnote{The discrepancy is still localized, since it is defined with respect to a reference hypothesis $\hat\concept$.} In this case, the requirement for the training algorithm is minimal, as the ground truth $\coptcommon$ lies within $\C$, which coincides with $\neighborhood(\hat\concept)$. The proposed tester is based on estimating the chow parameters of the reference hypothesis $\hat\concept$ under the test marginal and checking whether they closely match the chow parameters of $\hat\concept$ under the training marginal. For more details, see \Cref{section:low-sandwiching}.

\paragraph{Convex Sets with Low Intrinsic Dimension.} For convex sets with few relevant dimensions, there are algorithms from standard PAC learning that guarantee approximate recovery of the relevant subspace. This guarantee allows one to choose a much stronger notion of neighborhood while still ensuring that $\coptcommon\in\neighborhood(\hat\concept)$. The appropriate notion of neighborhood contains low-dimensional concepts whose relevant subspace is geometrically close to the subspace of the reference hypothesis. The corresponding tester exhaustively checks that the marginal $\D'$ is well-behaved on the relevant subspace. For more details, see \Cref{section:low-dimensional}.

\paragraph{Intersections of Halfspaces.} For intersections of halfspaces, we prove a structural result stating that finding a hypothesis with low Gaussian disagreement with the ground truth $\coptcommon$ implies approximate pointwise recovery of the boundary of $\coptcommon$. It is therefore sufficient to check whether the marginal of the test distribution assigns unreasonably large mass near the boundary of the training output hypothesis $\hat\concept$, which can be done in fully polynomial time. Any proper algorithm for learning halfspace intersections under Gaussian training marginals is then sufficient for our purposes. For more details, see \Cref{section:poly-time-testing}.

\subsection{Related Work}

\paragraph{Domain Adaptation.} In the past two decades, there has been a long line of research on generalization bounds for domain adaptation. The work of \cite{mansour2009domadapt} introduced the notion of discrepancy distance, following work by \cite{ben2006analysis,ben2010theory}, which used similar notions of distance between distributions. Other important notions of distribution similarity include bounded density ratios \cite{sugiyama2012density} and related notions \cite{kpotufe2021marginal,kalavasis2024transfer}. A type of localized discrepancy distance was defined by \cite{zhang2020localized} and used to provide improved sample complexity bounds for domain adaptation.  None of the above works give efficient (polynomial-time) algorithms.  Here, we give a more general notion of localization and use it to obtain efficient and universal algorithms for TDS learning.

\paragraph{TDS Learning and Related Models.} The framework of TDS learning was defined by \cite{klivans2023testable}, where it was shown that any class that admits degree-$\ell$ $\L_2$-sandwiching approximators can be TDS learned in time $d^{O(\ell)}$ up to error $O(\optcommon)$, where $\optcommon$ is the standard (and necessary) benchmark for the error in domain adaptation when the training and test distributions are allowed to be arbitrary. Here, we show that the relaxed notion of $\L_1$-sandwiching approximators suffices for TDS learning and we improve the error guarantee to nearly-match the information-theoretically optimal $\optcommon$ (see \Cref{section:low-sandwiching}). For intersections of halfspaces under Gaussian training marginals, \cite{klivans2024learning} gave TDS learners with improved guarantees compared to those given by \cite{klivans2023testable} through $\L_2$ sandwiching. Our TDS learners for halfspace intersections are superior to the ones from \cite{klivans2024learning} in terms of overall runtime, universality and test-time efficiency (see \Cref{section:poly-time-testing}).

Another related framework for learning with distribution shift is \emph{PQ learning}, which was defined by \cite{goldwasser2020beyond}. In PQ learning, the learner may reject regions of the domain where it is not confident to make predictions, but the total mass of these regions under the training distribution must be small. In fact, PQ learning is known to imply TDS learning (see \cite{klivans2023testable}). However, the only known algorithms for PQ learning, which were given by \cite{goldwasser2020beyond,kalai2021efficient}, require access to oracles for learning primitives that are known to be hard even for simple classes (see \cite{kalai2021efficient}).

The framework of TDS learning is also related to testable agnostic learning, where the goal of the tester is to certify a near-optimal error guarantee. Testable agnostic learning was defined by \cite{rubinfeld2022testing} and there are several subsequent works in this framework \cite{gollakota2022moment,gollakota2023efficient,gollakota2023tester,diakonikolas2023efficient}. There are many important differences between TDS learning and testable agnostic learning, including the fact that, in testable agnostic learning, there is no distribution shift and that in TDS learning, the learner does not have access to labels from the distribution on which it is evaluated. In particular, testable agnostic learning is only defined in the presence of noise in the labels, while TDS learning is meaningful even when the labels are generated noise-free (i.e., realizable learning). 

\paragraph{PAC Learning.} In the standard framework of PAC learning, there is an abundance of algorithmic ideas and techniques that aim to achieve efficient learning, under various assumptions (see e.g., \cite{LongW94,BlumK97,KOSOS04,klivans2009baumsalgo,KOS08,vempala2010random,vempala2010learning,GKM12,KKM13,DKS18a,diakonikolas2022strongly}). In this work, we make use of polynomial regression \cite{kalai2008agnostically}, dimension reduction techniques \cite{vempala2010learning}, as well as techniques for robustly learning geometric concepts \cite{diakonikolas2018learning}, in order to obtain efficient TDS learners. In fact, our approach of designing TDS learning algorithms through localized discrepancy testing sheds a light on what kinds of guarantees from the training algorithms are desirable for learning in the presence of distribution shift. For example, we show that if approximate subspace recovery is guaranteed after training, then the discrepancy testing problem can be relaxed to an easier, localized version. Moreover, our results on TDS learning halfspace intersections emphasize the importance of proper learners in the context of learning with distribution shift.

\section{Preliminaries}

We use standard big-O notation (and $\tilde{O}$ to hide poly-logarithmic factors), $\R^d$ is the $d$-dimensional euclidean space and $\Gauss_d$ the standard Gaussian over $\R^d$, $\cube{d}$ is the $d$-dimensional hypercube and $\unif(\cube{d})$ the uniform distribution over $\cube{d}$, $\nats$ is the set of natural numbers $\nats = \{1,2,\dots\}$ and $\x\in\R^d$ denotes a vector with $\x = (\x_1,\dots,\x_d)$ and inner products $\x\cdot \vv$.
See also \Cref{appendix:extended-preliminaries}.

\paragraph{Localized Discrepancy Testing.} Testing localized discrepancy (\Cref{definition:localized-discrepancy}) is defined as follows.

\begin{definition}[Testing Localized Discrepancy]\label{definition:testing-localized-discrepancy}
    For a set $\Dclasstarget$ of distributions and $\Dsource$ over $\X$ and $\eps>0$, we say that $\Tester$ is a $(\neighborhood, \eps)$-tester for localized discrepancy from $\Dsource$ with respect to $\Dclasstarget$, if, $\Tester$, upon receiving $\hat\concept\in \H$ and a set $\Sunlabelled$ of $m_\Tester$ i.i.d. examples from some distribution $\Dtarget$ over $\X$ satisfies:
    \begin{enumerate}[label=\textnormal{(}\alph*\textnormal{)}]
    \item (Soundness.) With probability at least $3/4$: $\text{If $\T$ accepts, then }\disc_{\hat\concept, \neighborhood}(\Dsource,\Dtarget) \le \eps\,.$
    \item (Completeness.) If $\Dtarget \in \Dclasstarget$, then $\Tester$ accepts with probability at least $3/4$.
\end{enumerate}
\end{definition}

For a concept class $\C$, a distribution $\D$ over $\X$, $\eps\in(0,1)$, we say that $\C$ has $\eps$-$\L_1$ \textbf{sandwiching degree} $\ell$ with respect to $\D$ if for any $\concept\in \C$, there exist polynomials $\pup,\pdown$ over $\X$ with degree at most $\ell$ such that (1) $\pdown(\x)\le \concept(\x)\le \pup(\x)$ for all $\x\in\X$ and (2) $\E_{\x\sim \D}[\pup(\x)-\pdown(\x)] \le \eps$.

\paragraph{Learning Setting.} For $\X\subseteq\R^d$, the learner is given labeled samples from a training distribution $\Dtrainjoint$ over $\X\times\cube{}$ with $\X$-marginal $\Dtrainmarginal = \D$ and unlabeled examples from the marginal $\Dtestmarginal$ of a test distribution $\Dtestjoint$ over $\X\times\cube{}$. For a concept class $\C\subseteq\{\X\to\cube{}\}$, in the \textbf{realizable setting}, there is $\coptcommon\in\C$ that generates the labels for both $\Dtrainjoint$ and $\Dtestjoint$. In the \textbf{agnostic setting}, the standard goal in domain adaptation is to achieve an error guarantee that is competitive with the information-theoretically optimal joint error $\optcommon = \min_{\concept\in\C}(\error(\concept;\Dtrainjoint)+\error(\concept;\Dtestjoint))$, achieved by some $\coptcommon\in\C$, where $\error(\concept;\Dtrainjoint) = \pr_{(\x,y)\sim\Dtrainjoint}[y\neq \concept(\x)]$ (and similarly for $\error(\concept;\Dtestjoint)$).

\begin{definition}[Universal TDS Learning]\label{definition:tds-main}
    Let $\C$ be a concept class over $\X\subseteq \R^d$, $\Dgeneric$ a distribution over $\X$ and $\Dclasstarget$ some class of distributions over $\X$. The algorithm $\A$ is said to $\Dclasstarget$-universally TDS learn $\C$ with respect to $\Dgeneric$ up to error $\accuracy$ and probability of failure $\delta$ if, upon receiving $\mtrain$ labeled samples from a training distribution $\Dtrainjoint$ with $\X$-marginal $\Dgeneric$ and $\mtest$ unlabeled samples from a test distribution $\Dtestjoint$, w.p. at least $1-\delta$, algorithm $\A$ either rejects, or accepts and outputs a hypothesis $h:\X\to \cube{}$ such that:
    \begin{enumerate}[label=\textnormal{(}\alph*\textnormal{)}]
    \itemsep0em 
    \item (Soundness.) If $\A$ accepts, then the output $h$ satisfies $\error(h;\Dtestjoint) \leq\accuracy$.
    \item (Completeness.) If $\Dtestmarginal \in \Dclasstarget$ then $\A$ accepts.
    \end{enumerate}    
    In the agnostic setting, parameter $\accuracy$ may depend on $\lambda = \lambda(\C; \Dtrainjoint, \Dtestjoint)$, whereas in the realizable setting, $\accuracy = \eps\in(0,1)$. If $\Dclasstarget = \{\Dgeneric\}$, then we simply say that $\A$ $\accuracy$-TDS learns $\C$ w.r.t. $\Dgeneric$.
\end{definition}
Note that the success probability for TDS learning can be amplified through repetition \cite{klivans2023testable} and we will consider $\delta = 0.1$ unless specified otherwise.


\section{Technical Overview}

\subsection{Classes with Low Sandwiching Degree}\label{section:low-sandwiching}

Prior work on TDS learning by \cite{klivans2023testable} showed that the existence of degree-$\ell$ $\L_2$-sandwiching approximators implies TDS learning in time $d^{O(\ell)}$.  A major question left open was whether the more traditional notion of $\L_1$ sandwiching (see \Cref{definition:l1-sandwiching}) suffices for TDS learning.  We answer this question in the affirmative, and as a consequence we obtain exponential improvements in the runtime of TDS learning for constant depth circuits (AC$^{0}$) and the first TDS learning results for degree-$2$ polynomial threshold functions (see \Cref{table:results-l1-sandwiching}). For more details, see \Cref{appendix:chow-matching}.

\begin{theorem}[$\L_1$-sandwiching implies TDS learning]
\label{thm:sandwiching-tds-kkms-main}
 Let $\epsilon,\delta\in (0,1)$ and let $\C\subseteq\{\X\to \cube{}\}$ be a concept class such that the $\eps$-approximate $\L_1$-sandwiching degree of $\C$ under $\Dsource$ is $\ell(\eps) \in\nats$.  Then, there exists a TDS learning algorithm for $\C$ with respect to $\Dsource$ up to error $\optcommon+\opttrain+O(\epsilon)$ and fails with probability at most $\delta$ with time and sample complexity $\poly(d^{\degreesandwich(\eps)},\frac{1}{\epsilon})\log(1/\delta)$.
\end{theorem}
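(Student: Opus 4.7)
The plan is to design a two-phase TDS algorithm whose training phase runs $\L_1$-polynomial regression and whose testing phase performs a Chow-parameter matching test. First I would run $\L_1$-polynomial regression on the training sample to compute a degree-$\ell = \ell(\eps)$ polynomial $p$ minimizing the empirical loss $\frac{1}{|\Strain|}\sum_{(\x,y)\in\Strain}|p(\x)-y|$, and output $\hat\concept = \sign(p)$. Using the hypothesized $\eps$-approximate $\L_1$-sandwich polynomials for each $\concept\in\C$ together with the standard KKMS-style argument (namely $\Ind[\sign(p(\x))\neq y]\le|p(\x)-y|$ combined with $\E_{\Dsource}[|(\pup+\pdown)/2 - \coptcommon|]\le\eps/2$), I would conclude that $\errortrain{\hat\concept}$ is at most $\opttrain$ up to an additive $O(\eps)$ and a small multiplicative slack that the target error expression absorbs.

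Given $\hat\concept$, the tester estimates from training samples the moments $\mu_S = \E_{\x\sim\Dsource}[\x^S]$ and the generalized Chow parameters $\chi_S = \E_{\x\sim\Dsource}[\hat\concept(\x)\,\x^S]$ for every multi-index $S$ with $|S|\le\ell$, and the analogous $\mu_S',\chi_S'$ under $\Dtestmarginal$ from unlabeled test samples. The tester accepts iff $|\mu_S-\mu_S'|\le\tau$ and $|\chi_S-\chi_S'|\le\tau$ for all such $S$, for a tolerance $\tau$ chosen below.

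The soundness argument goes as follows. Fix $\coptcommon\in\C$ attaining $\optcommon$, let $\pup,\pdown$ be its $\L_1$-sandwich polynomials under $\Dsource$, set $g=\tfrac12(\pup+\pdown)$ and $s=\tfrac12(\pup-\pdown)\ge 0$, so $|\coptcommon-g|\le s$ pointwise and $\E_{\Dsource}[s]\le\eps/2$. Since $\hat\concept,\coptcommon\in\cube{}$, we have $\pr[\hat\concept\neq\coptcommon]=\tfrac12(1-\E[\hat\concept\cdot\coptcommon])$, so the task reduces to upper-bounding
\[
\E_{\Dtestmarginal}[\hat\concept\cdot\coptcommon]-\E_{\Dsource}[\hat\concept\cdot\coptcommon]=\bigl(\E_{\Dtestmarginal}[\hat\concept\cdot g]-\E_{\Dsource}[\hat\concept\cdot g]\bigr)+\bigl(\E_{\Dtestmarginal}[\hat\concept\cdot(\coptcommon-g)]-\E_{\Dsource}[\hat\concept\cdot(\coptcommon-g)]\bigr).
\]
The first bracket equals $\sum_{|S|\le\ell}a_S(\chi_S'-\chi_S)$, where $\{a_S\}$ are the monomial coefficients of $g$, so is at most $\|a\|_1\,\tau$ in magnitude. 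The second bracket is at most $\E_{\Dtestmarginal}[s]+\E_{\Dsource}[s]$ in magnitude (since $\hat\concept\in\{\pm 1\}$ and $|\coptcommon-g|\le s$), and the monomial matching forces $\E_{\Dtestmarginal}[s]\le\E_{\Dsource}[s]+\|\mathrm{coef}(s)\|_1\tau\le \eps/2+\|\mathrm{coef}(s)\|_1\tau$. Choosing $\tau$ so both slack terms are $O(\eps)$ gives $\disc_{\hat\concept,\C}(\Dsource,\Dtestmarginal)\le O(\eps)$. The test error bound then follows from $\errortest{\hat\concept}\le\pr_{\Dtestmarginal}[\hat\concept\neq\coptcommon]+\error(\coptcommon;\Dtestjoint)$ together with the triangle inequality $\pr_{\Dsource}[\hat\concept\neq\coptcommon]\le\errortrain{\hat\concept}+\error(\coptcommon;\Dtrainjoint)$ and the training-phase bound on $\errortrain{\hat\concept}$. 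Completeness is immediate: under $\Dtestmarginal=\Dsource$ every $(\mu_S,\chi_S)$ agrees with its test counterpart in expectation, and Hoeffding plus a union bound over the $O(d^\ell)$ estimated quantities gives empirical matching within $\tau$ with high probability from $\poly(d^\ell/\tau,\log(1/\delta))$ samples.

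The hardest step will be controlling the $\ell_1$-norm of the monomial coefficients of $g$ and $s$: these norms determine the admissible tolerance $\tau$ and hence the sample and time complexity. My plan is to argue that $\L_1$-sandwich approximators may be taken with coefficient norm $\poly(d^\ell,1/\eps)$ without loss of generality, either by truncating large coefficients and absorbing the resulting increase in sandwich slack into the $\eps$ budget, or by rewriting everything in an orthonormal basis (e.g., Hermite for $\Gauss_d$ or Walsh--Fourier for $\cube{d}$) where expectations of individual basis elements are uniformly bounded. A secondary subtlety is tightening the training-phase analysis so that the composite error lands exactly at the stated $\optcommon+\opttrain+O(\eps)$ rather than a slightly weaker multiple of $\opttrain$ coming from the naive $\L_1$ slack $\Ind[\sign(p)\neq y]\le|p-y|$.
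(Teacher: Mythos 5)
Your proposal follows essentially the same route as the paper: train with $\L_1$ polynomial regression (the KKMS algorithm) and test by matching low-degree moments and the Chow parameters of $\hat\concept$ between train and test, then transfer the correlation $\E[\hat\concept\,\coptcommon]$ via the sandwich polynomials of the $\coptcommon$ attaining $\optcommon$; your midpoint/half-width split $g=(\pup+\pdown)/2$, $s=(\pup-\pdown)/2$ is only a cosmetic reorganization of the paper's one-sided chain through $\pup$ and $\pup-\pdown$, and the final triangle-inequality assembly giving $\optcommon+\opttrain+O(\eps)$ is identical. Your worry about coefficient norms is handled in the paper exactly along the lines you sketch: its formal sandwiching definition carries a coefficient bound $\pbound$ (tolerance $\eps/(\pbound d^{2\degreesandwich})$), the Boolean constructions come with $\pbound=O(d^{\degreesandwich})$, and for the Gaussian it proves via Hermite expansion and hypercontractivity that any sandwich pair with bounded $\L_1$ gap automatically has coefficients at most $2\cdot(10d)^{\degreesandwich}$.

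One step in your soundness argument does need repair. You define $\mu_S',\chi_S'$ as empirical estimates from the unlabeled test sample, but then write $\E_{\Dtestmarginal}[\hat\concept\, g]-\E_{\Dsource}[\hat\concept\, g]=\sum_S a_S(\chi_S'-\chi_S)$ and bound $\E_{\Dtestmarginal}[s]$ through the matched moments. Since $\Dtestmarginal$ is adversarial, its empirical low-degree moments need not be close to the population ones (they may be heavy-tailed or have no useful concentration), so acceptance of the empirical matching test does not by itself control population test expectations of polynomials, and as written the bracketed identities conflate the two. The paper avoids this by running the entire sandwiching/matching argument against the empirical test sample itself, where the tested inequalities hold exactly, and only afterwards invoking a Chernoff/Hoeffding bound on the bounded quantity $\E[\coptcommon\,\hat\concept]\in[-1,1]$ to pass from the empirical test average to $\E_{\Dtestmarginal}$ — a concentration step that is valid for every test distribution. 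With that reordering (and keeping the train-side Chow concentration as an assumption, verified for $\Gauss_d$ and $\unif(\cube{d})$), your argument goes through and yields the stated guarantee with the claimed $\poly(d^{\degreesandwich},1/\eps)\log(1/\delta)$ complexity.
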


Note that prior work \cite{klivans2023testable} had only obtained a bound of $O(\lambda)$ in the above error guarantee.  Our techniques allow us to achieve the optimal dependence of simply $\lambda$.

\begin{table*}[!htbp]\begin{center}
\begin{tabular}{c l c c c} 
 \toprule
 & \textbf{Concept class} &  \textbf{Training Marginal} & \textbf{Time} & \textbf{Prior Work} \\ \midrule
1 & Degree-2 PTFs & 
\begin{tabular}{c}
     $\Gauss_d$ or \\
    $\unif(\{\pm 1\}^d)$ 
\end{tabular}  & $d^{\widetilde{O}(1/{\eps^9})}$ & None \\ \midrule
2 & Circuits 
of size $s$, depth $t$ 
 & $\unif(\{\pm 1\}^d)$   & $d^{ O(\log({s}/{\eps}))^{O(t)}}$ & \begin{tabular}{c}$d^{\sqrt{s}\cdot O( \log({s}/{\eps}))^{O(t)}}$ \\ only for formulas \end{tabular} \\ 
 \bottomrule
\end{tabular}
\end{center}
\caption{New results for TDS learning through $\L_1$ sandwiching. For constant-depth formulas, we achieve an exponential improvement compared to \cite{klivans2023testable} (which used $\L_2$-sandwiching), and our results work for circuits as well.}
\label{table:results-l1-sandwiching}
\end{table*}

For Gaussian and uniform halfspaces, intersections and functions of halfspaces, as well as for decision trees over the uniform distribution, the $\L_2$-sandwiching approach of \cite{klivans2023testable} provided TDS learning algorithms with similar runtime as the one obtained here, but their error guarantee was $O(\optcommon)+\eps$ instead of $\optcommon+\opttrain+\eps$ (where $\opttrain = \min_{\concept\in\C}\error(\concept;\Dtrainjoint)$), which is the best known upper bound on the error, even information theoretically (see \cite{ben2010theory,david2010impossibility}).

\paragraph{Localized discrepancy testing via Chow matching.} The improvements we obtain here are based on the idea of substituting the moment-matching tester of \cite{klivans2023testable} with a more localized test, depending on a candidate output hypothesis $\hat\concept$ provided by a training algorithm run on samples from the training distribution. In particular, we estimate the Chow parameters \cite{o2008chow} $\E_{\x\sim\Dtestmarginal}[\hat\concept(\x) \x^\mindex]$ for all low-degree monomials $\x^\alpha = \prod_{i=1}^d\x_i^{\alpha_i}$ and reject if they do not match the corresponding quantities $\E_{\x\sim\D}[\hat\concept(\x) \x^\mindex]$ under the training marginal. We obtain the following result.
\begin{proposition}[Informal, see \Cref{thm:chow_matching}]
    For any class $\C$ with low sandwiching degree under $\D$, the low-degree chow matching tester is a tester for localized discrepancy for the neighborhood $\neighborhood(\hat\concept) = \C$, i.e., it certifies that $\pr_{\x\sim\Dtestmarginal}[\hat\concept(\x)\neq \concept(\x)] \le \pr_{\x\sim\D}[\hat\concept(\x)\neq \concept(\x)] + \eps$ for all $\concept\in\C$.
\end{proposition}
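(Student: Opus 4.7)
The plan is to convert the claimed localized-discrepancy bound into a correlation gap between $\hat\concept$ and $\concept$, then bridge it by approximating $\concept$ via the symmetric polynomial from its $\L_1$-sandwich while invoking the tester's Chow-parameter matching guarantee. Using the identity $\Ind[a \neq b] = (1 - ab)/2$ for $a, b \in \{\pm 1\}$, the target inequality becomes $\E_{\D}[\hat\concept(\x)\concept(\x)] - \E_{\Dtestmarginal}[\hat\concept(\x)\concept(\x)] \le 2\eps$. Given the $\eps$-$\L_1$-sandwich $\pdown \le \concept \le \pup$ of degree $\le \ell$ with $\E_{\D}[\pup - \pdown] \le \eps$, I set $q := (\pup + \pdown)/2$ and $s := (\pup - \pdown)/2 \ge 0$, so that $|\concept(\x) - q(\x)| \le s(\x)$ pointwise. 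A case split on $\hat\concept(\x) \in \{\pm 1\}$ then yields the two-sided envelope $\hat\concept(\x) q(\x) - s(\x) \le \hat\concept(\x)\concept(\x) \le \hat\concept(\x) q(\x) + s(\x)$.

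Next I would apply the upper envelope on the $\D$-side and the lower envelope on the $\Dtestmarginal$-side, obtaining
\[
\E_{\D}[\hat\concept\concept] - \E_{\Dtestmarginal}[\hat\concept\concept] \le \bigl(\E_{\D}[\hat\concept\, q] - \E_{\Dtestmarginal}[\hat\concept\, q]\bigr) + \E_{\D}[s] + \E_{\Dtestmarginal}[s].
\]
The first bracket vanishes up to sampling error because $q$ is a linear combination of monomials of degree $\le \ell$ and the tester certifies $\E_{\D}[\hat\concept(\x)\x^\alpha] \approx \E_{\Dtestmarginal}[\hat\concept(\x)\x^\alpha]$ for every such $\alpha$; linearity then propagates the matching to $\hat\concept \cdot q$. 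The term $\E_{\D}[s]$ is at most $\eps/2$ directly from the $\L_1$-sandwich definition.

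The main obstacle is the residual $\E_{\Dtestmarginal}[s]$, the slack polynomial's expectation under the \emph{test} marginal. This quantity cannot be controlled by Chow-parameter matching of $\hat\concept$ alone, since $s$ is a pure polynomial in $\x$ rather than a quantity of the form $\hat\concept \cdot (\text{polynomial})$. The resolution I would adopt is to augment the tester with a low-degree moment-matching check, verifying $|\E_{\D}[\x^\alpha] - \E_{\Dtestmarginal}[\x^\alpha]| \le \eps/d^{O(\ell)}$ for each $|\alpha| \le \ell$; this is the natural companion to the Chow-parameter estimation, incurs the same $d^{O(\ell)}$ sample and time complexity, and yields $\E_{\Dtestmarginal}[s] \le \E_{\D}[s] + O(\eps) = O(\eps)$, which closes the bound. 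A standard uniform-convergence argument across the $d^{O(\ell)}$ monomial statistics with $\poly(d^\ell, 1/\eps)$ unlabeled test samples then certifies all three inequalities simultaneously with high probability.
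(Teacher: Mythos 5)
Your proposal is correct and follows essentially the same route as the paper's proof of \Cref{thm:chow_matching}: rewrite disagreement as correlation via $\Ind[a\neq b]=(1-ab)/2$, replace $\concept$ by its low-degree $\L_1$-sandwich, and transfer the resulting low-degree statistics from train to test marginal, paying the sandwich gap on both sides; your symmetric split into $q=(\pup+\pdown)/2$ and $s=(\pup-\pdown)/2$ is only a cosmetic variant of the paper's one-sided manipulation with $\pup$ and $\pdown$. The plain moment-matching check you present as an ``augmentation'' is in fact already part of the paper's Chow Matching Tester (\Cref{algorithm:l1-sandwiching} tests both $|\E_{\Sunlabelled}[\x^{\alpha}]-\E_{\Dsource}[\x^{\alpha}]|$ and $|\E_{\Sunlabelled}[\hat\concept(\x)\x^{\alpha}]-\E_{\Dsource}[\hat\concept(\x)\x^{\alpha}]|$), and the only quantitative detail you gloss over is that the tolerance must be taken as $\eps/(\pbound d^{2\degreesandwich})$, i.e., it must also scale with the coefficient bound $\pbound$ of the sandwiching polynomials.
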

\paragraph{Proof Outline.} The main observation for obtaining the localized discrepancy testing result is that the disagreement between two functions is a linear function of their correlation, i.e., $2\pr_{\x\sim\Dtestmarginal}[\hat\concept(\x)\neq \concept(\x)] = 1 - \E_{\x\sim\Dtestmarginal}[\hat\concept(\x)\concept(\x)]$, and, because $\concept\in\C$, it is sandwiched by two polynomials $\pup,\pdown$, which implies $\E_{\x\sim\Dtestmarginal}[\hat\concept(\x)\concept(\x)] \ge \E_{\x\sim\Dtestmarginal}[\hat\concept(\x)\pup(\x)]- \E_{\x\sim\Dtestmarginal}[\pup(\x)-\pdown(\x)]$. The latter quantity can be certified to be close to the corresponding quantity under the training marginal $\D$ by Chow (and moment) matching.

Although the notion of neighborhood we require here is quite generic, it is sufficient to provide significant improvements over prior work. The discrepancy tester is localized in the sense that it certifies properties of the tested marginal distribution that are related to a particular candidate hypothesis $\hat\concept$, but actually considers the whole concept class $\C$ to be inside the neighborhood of $\hat\concept$. Since the concept $\coptcommon$ that achieves $\optcommon = \min_{\concept\in\C}(\error(\concept;\Dtrainjoint)+\error(\concept;\Dtestjoint))$ lies within $\C$ by definition, the total test error of $\hat\concept$ is directly related to the error achieved by the training algorithm, whenever the Chow matching tester accepts.


\subsection{Non-Parametric Low-Dimensional Classes}\label{section:low-dimensional}

For non-parametric classes like convex sets over $\R^d$, dimension-efficient TDS learning is impossible, even from an information-theoretic perspective \cite{klivans2023testable} and $2^{\Omega(d)}$ time is required even in the realizable setting. However, the best known upper bound on the $\L_1$ sandwiching degree for convex sets is given indirectly by known results in approximation of convex sets by intersections of halfspaces (see, e.g., \cite{de2023gaussian} and references therein) and implies a TDS learning algorithm that runs in time doubly exponential in $d$. Improving on the doubly exponential bound based on $\L_1$-sandwiching, we provide a realizable TDS learner with singly exponential (in $\poly(d)$) runtime for convex sets that are $\eps$-balanced, meaning that the Gaussian mass of both the interior and the exterior of the convex set is at least $\eps$. For convex sets with only a few relevant dimensions, our results actually give dimension-efficient TDS learners. For more details, see \Cref{appendix:cylindrical-grids}.

\begin{theorem}[TDS Learning of Convex Subspace Juntas]\label{theorem:tds-cvx-subspace-juntas-main}
    For $\eps\in(0,1/2)$, $d,k\in\nats$, let $\C$ be the class of $\eps$-balanced convex sets over $\R^d$ with $k$ relevant dimensions. There is an $O(\eps)$-TDS learner for $\C$ with respect to $\Gauss_d$ in the realizable setting, which, for the training phase, uses $\poly(d) 2^{\poly(k/\eps)}$ samples and time and, for the testing phase, uses $\poly(d) (k/\eps)^{O(k)}$ samples and time. 
\end{theorem}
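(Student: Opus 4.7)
The plan is to instantiate the localized discrepancy framework (\Cref{definition:localized-discrepancy}) with neighborhood $\neighborhood(\hat\concept)$ equal to the set of convex subspace juntas whose relevant $k$-dimensional subspace is within angular distance $\gamma$ of the subspace $\hat W$ of $\hat\concept$, where $\gamma = \poly(\eps/k)/\sqrt{d}$. The algorithm decouples into two phases. In the training phase, I first recover $\hat W$ to angular accuracy $\gamma$ using a standard subspace-recovery routine for low-dimensional concepts under $\Gauss_d$---e.g., moment/Chow-vector analysis as in \cite{vempala2010learning}---with $\poly(d,k,1/\eps)$ samples. Since $\Gauss_d$ is rotation invariant, $\proj_{\hat W}\Gauss_d = \Gauss_k$, so I then learn a convex hypothesis $\hat C \subseteq \hat W$ from the labeled projections $(\proj_{\hat W}\x, y)$ via the $2^{\poly(k/\eps)}$-sized cover of $\eps$-balanced convex bodies in $k$ dimensions by intersections of halfspaces (see e.g.\ \cite{de2023gaussian}), and output $\hat\concept(\x) = \1\{\proj_{\hat W}\x \in \hat C\}$ with total cost $\poly(d)\cdot 2^{\poly(k/\eps)}$. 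By the subspace-recovery guarantee, $\copt \in \neighborhood(\hat\concept)$.

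For the testing phase, I build a cylindrical grid $\grid$ on $\R^d$: a partition into cells of the form $\gridcell \times \hat W^\perp$, where $\gridcell$ runs over a Cartesian grid of side-length $\gridrefinement = \poly(\eps/k)$ on the ball $\ball_k(0,R) \subseteq \hat W$ of radius $R = O(\sqrt{k\log(1/\eps)})$, giving $|\grid| = (k/\eps)^{O(k)}$ cells. Drawing $\mtest = \poly(d)(k/\eps)^{O(k)}$ unlabeled test samples, the tester accepts iff (a) for every cell $\gridcell$, the empirical mass of $\gridcell$ under $\proj_{\hat W}\Dtestmarginal$ is $(\eps/|\grid|)$-close to $\pr_{\z\sim\Gauss_k}[\z\in \gridcell]$, and (b) the tail $\pr_{\x\sim\Dtestmarginal}[\|\proj_{\hat W^\perp}\x\| > R']$ is at most $\eps$ for a threshold $R' = O(\sqrt{d\log(1/\eps)})$. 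Completeness is immediate: when $\Dtestmarginal = \Gauss_d$, both events hold with high probability by Hoeffding applied to $|\grid|+1$ binary tests, at the claimed test-time cost.

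For soundness, fix any $\concept \in \neighborhood(\hat\concept)$ with subspace $W$ within angular distance $\gamma$ of $\hat W$. I would approximate the indicator of $\hat C$ on $\hat W$ by its cell-wise upper and lower constants on $\grid$, which form piecewise-constant $\L_1$-sandwiching functions with gap $O(\eps)$ under $\Gauss_k$ by the Gaussian surface-area bound for $\eps$-balanced convex bodies in $k$ dimensions. Tester condition (a) transfers this $\eps$-gap to $\proj_{\hat W}\Dtestmarginal$ up to an additional $O(\eps)$ slack, and (b) controls the tail outside $\ball_k(0,R) \times \hat W^\perp$. The remaining step---and the main obstacle---is the tilt correction: $\concept$ depends on $\proj_W \x$ rather than $\proj_{\hat W}\x$, and these differ by at most $\gamma\|\x\| \le \gamma(R + R')$ on the high-probability event from (b), so a disagreement $\1\{\proj_W\x \in C\} \neq \1\{\proj_{\hat W}\x \in \hat C\}$ forces $\proj_{\hat W}\x$ into a $\gamma R'$-thin shell around $\partial \hat C$. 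Since we only test the projection onto $\hat W$ but the tilt is driven by the orthogonal component, closing the loop requires simultaneously (i) choosing $\gamma R' \ll \gridrefinement$, so that the thin shell is a union of $O(\eps)$-mass grid cells detectable via condition (a), and (ii) using (b) to ensure that $\x$ with large orthogonal component contribute at most $\eps$. Our choice $\gamma = \poly(\eps/k)/\sqrt{d}$ gives $\gamma R' = O(\eps) \ll \gridrefinement$; combining then yields $\pr_{\x\sim\Dtestmarginal}[\hat\concept(\x) \neq \concept(\x)] \le O(\eps)$, and because $\copt \in \neighborhood(\hat\concept)$ this delivers $\error(\hat\concept;\Dtestjoint) \le O(\eps)$ as required.
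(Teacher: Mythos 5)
Your overall architecture (recover a subspace, learn a low-dimensional convex hypothesis, then test the test marginal on a cylindrical grid over the recovered subspace plus a tail/concentration test) matches the paper's, but there is a genuine gap at the step ``By the subspace-recovery guarantee, $\copt \in \neighborhood(\hat\concept)$.'' No algorithm can recover the \emph{true} relevant subspace $W^*$ to angular accuracy $\gamma$ (let alone $\gamma = \poly(\eps/k)/\sqrt{d}$ with $\poly(d,k,1/\eps)$ samples): directions of $W^*$ along which the concept has negligible influence (e.g., a facet of the convex body lying far in the Gaussian tail) are statistically unidentifiable, so the recovered $\hat W$ can be at large angle from $W^*$ and $\copt$ itself need not be a junta on any subspace near $\hat W$. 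The paper's proof is built precisely around this obstruction: \Cref{theorem:subsp-recovery-cvx-subsp-juntas} only guarantees a \emph{surrogate} concept $\concept = \lowconcept^*(W\x)$ with $\|W-V\|_2\le\slack$ and small Gaussian disagreement from $\hat\concept$, which agrees with $\copt$ only on the region $\|W^*\x\|_2\le\sqrt{k/\slack}$; the tester then adds a second-moment (eigenvalue) check so that, by Markov applied to the \emph{unknown} subspace $W^*$, the test distribution places at most $O(\slack)$ mass where $\|W^*\x\|_2$ is large, and the final error is split into (surrogate vs.\ $\hat\concept$, handled by the cylindrical-grids discrepancy tester) plus ($\copt$ vs.\ surrogate, handled by the concentration test). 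Your argument has no analogue of the surrogate or of the concentration test over unknown $k$-dimensional subspaces, so the soundness chain from the tester's acceptance to $\error(\hat\concept;\Dtestjoint)\le O(\eps)$ does not close.

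Two secondary points. First, your neighborhood is defined only by angular closeness of subspaces; soundness over such a neighborhood cannot yield $\pr_{\x\sim\Dtestmarginal}[\hat\concept(\x)\neq\concept(\x)]\le O(\eps)$ (take $\concept$ a junta on exactly $\hat W$ with a very different body), so you must also include a Gaussian-disagreement constraint as in \Cref{definition:subspace-neighborhood}, and correspondingly need the training phase to certify that disagreement bound (which is where the $2^{\poly(k/\eps)}$ training cost arises; your claim of $\poly(d,k,1/\eps)$-sample recovery is also an overclaim relative to \cite{vempala2010learning}, and the projected learning problem is not exactly realizable because the labels depend on $W^*\x$, not $\hat W\x$). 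Second, your tilt correction pays $\gamma\|\x\|$ with $\|\x\|\le R'=O(\sqrt{d\log(1/\eps)})$, forcing the dimension-dependent accuracy $\gamma\propto 1/\sqrt{d}$; the paper avoids this by bounding $\|(W-V)\x\|_2\le\|W-V\|_2\,\|\proj_U\x\|_2$ with $U$ of dimension at most $2k$ and controlling $\E[\|\proj_U\x\|_2^2]$ via the same spectral test, so only a dimension-free $\slack$ is needed from training.
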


We note that the balancing assumption is mild, since it can be tested by using examples from the training distribution and has been used in prior work on realizable TDS learning of intersections of halfspaces with respect to the Gaussian distribution \cite{klivans2024learning}. 

\paragraph{Universal TDS Learners.} Importantly, the TDS learner of \Cref{theorem:tds-cvx-subspace-juntas-main} can be made universal with respect to a wide class of distributions that enjoy some mild concentration and  anti-concentration properties. The cost is an exponential deterioration of the runtime of the training phase. In other words, finding a hypothesis with better performance on the training distribution suffices to give error guarantees for a wide range of test distributions, including, for example, multi-modal and heavy-tailed distributions. We believe that this result is interesting even from an information-theoretic perspective. In \Cref{table:universa-tds-subspace-juntas} in the appendix, we give a more precise trade-off between universality and training runtime.

Let $\Dclasstarget_k$ be the class of distributions $\D$ over $\R^d$ such that $\E_{\x\sim\D}[(\vv\cdot \x)^4] \le C$ for any $\vv\in\S^{d-1}$ and for any subspace $W\subseteq \R^d$ of dimension at most $k$, the marginal density of $\D$ on $W$ is upper bounded by $C^{k^2}$, where $C$ is some positive universal constant. Then the following is true.

\begin{theorem}[Universal TDS Learning of Convex Subspace Juntas]\label{theorem:universal-tds-cvx-subspace-juntas-main}
     There is a $\Dclasstarget_k$-universal $O(\eps)$-TDS learner for $k$-dimensional $\eps$-balanced convex sets over $\R^d$ with respect to $\Gauss_d$ in the realizable setting, which, for the training phase, uses $\poly(d) \exp({2^{-O(k^2/\eps)}})$ samples and time and, for the testing phase, uses $\poly(d) k^{O(k^3/\eps^2)}$ samples and time. 
\end{theorem}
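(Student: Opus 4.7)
The plan is to instantiate the localization template sketched in \Cref{section:low-dimensional}: in the training phase, under the Gaussian, learn a reference hypothesis $\hat f = \hat g\circ \Pi_{\hat U}$ whose relevant $k$-dimensional subspace $\hat U$ approximates the true relevant subspace $U^*$ very accurately; in the testing phase, merely verify that the test marginal $\D^*$ behaves mildly on this specific low-dimensional subspace $\hat U$.

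\textbf{Training phase.} I invoke a realizable PAC learner for $k$-dimensional $\eps$-balanced convex sets under the Gaussian (for instance, Vempala-style moment-based subspace recovery \cite{vempala2010learning} followed by a brute-force low-dimensional learner over $\hat U$). Driving the training disagreement down to some tiny $\zeta$ yields, by standard perturbation arguments, a subspace $\hat U$ with principal angle at most $\gamma = \poly(\eps/k)$ to $U^*$; the Gaussian-moment sample complexity required to push $\zeta$ this low is what makes the training cost doubly exponential, $\poly(d)\exp(2^{O(k^2/\eps)})$. Let $\neighborhood(\hat f)$ be the set of all $\eps$-balanced $k$-dimensional convex sets whose relevant subspace lies within principal angle $\gamma$ of $\hat U$; by construction $f^*\in\neighborhood(\hat f)$, so it suffices to bound $\disc_{\hat f,\neighborhood}(\Gauss_d,\D^*)$.

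\textbf{Testing phase.} The tester checks two mild regularity properties of $\D^*$: (i) a bounded $4$th moment $\E_{\x\sim\D^*}[(\vv\cdot \x)^4]\le C'$ estimated along a $\poly(\eps/k)$-net of directions in $\hat U$ and in $\hat U^\perp$ (plus a global variance check), and (ii) a bounded density of the push-forward $\Pi_{\hat U}\x$, verified by partitioning a ball of radius $R=\poly(k/\eps)$ in $\hat U$ into a cylindrical grid of side $\Theta(\eps/k)$ and checking that every cell has empirical mass at most $C^{k^2}\vol(\text{cell})+\tol$, together with a check that the mass escaping $B_R$ is negligible. The number of grid cells is $(Rk/\eps)^{O(k)}=k^{O(k^3/\eps^2)}$, matching the stated test complexity. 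Completeness holds trivially for $\D^*\in\Dclasstarget_k$ by definition of the class, combined with standard uniform convergence over the finitely many checks.

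\textbf{Soundness and main obstacle.} For any $f\in\neighborhood(\hat f)$, split $\{\hat f\ne f\}$ according to whether $\Pi_{\hat U}\x$ lies in a width-$\tau$ strip $\partial_\tau$ around the boundary of $\hat g$ inside $\hat U$. On $\partial_\tau$, property (ii) gives mass at most $C^{k^2}\cdot\mathrm{Area}(\partial\hat g)\cdot \tau \le O(\eps)$ for a suitable $\tau = \poly(\eps/k)$, using that $\eps$-balance confines $\partial \hat g$ to a ball of radius $\poly(\log(1/\eps))$ and hence bounds its surface area. Off $\partial_\tau$, a disagreement forces the component of $\x$ along the $\gamma$-tilted direction separating $U^*$ from $\hat U$ to exceed $\Omega(\tau/\gamma)$, which by (i) and Markov has mass $O(\gamma^4/\tau^4)=O(\eps)$. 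The same partition applied under $\Gauss_d$ is controlled by the (tiny) training error, so the overall TDS error is $O(\eps)$. The crux is exactly this soundness step: one must argue, uniformly over the whole neighborhood, a sharp geometric translation of principal-angle misalignment into pointwise boundary displacement of convex bodies, and then balance $\gamma$, $\tau$, the strip area, and the density threshold so that every ingredient contributes at most $O(\eps)$ — this calibration is what pins down the $k^{O(k^3/\eps^2)}$ test complexity and the $\exp(2^{O(k^2/\eps)})$ training cost in the statement.
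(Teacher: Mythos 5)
Your plan matches the paper's blueprint: Vempala-style subspace recovery in the training phase, then a cylindrical-grids tester (\Cref{algorithm:cylindrical-grids}) of localized discrepancy with respect to the subspace neighborhood $\subspneighborhood$, and soundness by splitting the disagreement into a near-boundary strip and a far part. However, the soundness step as you have written it has two genuine problems.

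First, the output hypothesis $\hat\concept$ produced by \Cref{theorem:subsp-recovery-cvx-subsp-juntas} is $\sign(\hat q(V\x))$ for a polynomial $\hat q$ of degree $\poly(k/\slack)$ --- it is a \emph{polynomial threshold function}, not an $\eps$-balanced convex set. Your claim that ``$\eps$-balance confines $\partial\hat g$ to a ball of radius $\poly(\log(1/\eps))$ and hence bounds its surface area'' therefore does not apply to $\hat g$, and cannot be used to control the boundary strip of the hypothesis. The paper controls this by proving a separate boundary-smoothness bound for PTFs (\Cref{lemma:smooth_ptf_boundary}, relying on the anti-concentration of \cite{kelley2021random}), giving $\hat\slackamp = \poly(k/\slack)$; this parameter then enters the grid side length $\gridrefinement$ and hence the stated test complexity. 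Relatedly, what actually controls the strip mass is the \emph{Gaussian} measure of $\partial_\tau$ (a smooth-boundary bound, i.e., Gaussian surface area of convex sets via \cite{ball93} in \Cref{lemma:convex-smooth-boundary}), times the density ratio bound $\anticoncparam$; the Euclidean ``$\mathrm{Area}(\partial\hat g)\cdot\tau$'' in your argument is not the right quantity in dimension $k>1$.

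Second, your ``off-$\partial_\tau$'' case only treats disagreements caused by subspace tilt. But for $\concept\in\subspneighborhood(\hat\concept)$ the low-dimensional functions $\lowconcept$ and $\hat\lowconcept$ need not be rotations of one another; they can disagree away from both boundaries even when the subspaces coincide. Your one-sentence remark that ``the same partition under $\Gauss_d$ is controlled by the training error'' skips the crucial step: one must show that the $\D^*$-mass of the far disagreement region is at most $\anticoncparam(R\sqrt{k})$ times its Gaussian mass, which requires the grid-cell test (each cell fully inside the disagreement region has its empirical mass compared against its Gaussian mass). This is exactly what the paper's decomposition into $P_1$ and $P_2$ in the proof of \Cref{theorem:discrepancy-through-cylindrical-grids} makes precise, and without it the soundness claim does not go through.
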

We remark that the testing time for the universal TDS learner of \Cref{theorem:universal-tds-cvx-subspace-juntas-main} is still singly exponential in $\poly(k)$, although the dependence on $\eps$ is exponentially worse. Having lower testing runtime is a desirable feature because the potential users of large machine learning models might have limited resources compared to those available during training. We provide a more thorough discussion about this feature in the following section.

\paragraph{Cylindrical grids tester for localized discrepancy.} To obtain our TDS learning results of \Cref{theorem:tds-cvx-subspace-juntas-main,theorem:universal-tds-cvx-subspace-juntas-main}, we once more make use of the localized discrepancy testing framework. In particular, we identify low-dimensionality (\Cref{definition:subspace-junta}) and boundary smoothness (\Cref{definition:smooth-boundary}) of the underlying concept class as sufficient conditions for efficient testing of localized discrepancy when the notion of localization is defined with respect to the subspace neighborhood (\Cref{theorem:discrepancy-through-cylindrical-grids}). The subspace neighborhood $\subspneighborhood(\hat\concept)$ contains low-dimensional concepts $\concept$ whose relevant subspace is geometrically close to the relevant subspace for $\hat\concept$ (see \Cref{definition:subspace-neighborhood}). For TDS learning, we combine such testers with known learning algorithms for subspace recovery of low-dimensional convex sets (see, e.g., \cite{vempala2010learning,klivans2024learning} and \Cref{theorem:subsp-recovery-cvx-subsp-juntas}) to ensure that the training phase will output some hypothesis $\hat\concept$ such that the ground truth $\coptcommon$ lies within $\subspneighborhood(\hat\concept)$. 

In other words, we exploit the existence of training algorithms with stronger guarantees (i.e., approximate subspace recovery) than merely training error bounds, to relax the discrepancy testing problem to a low-dimensional localized version, while still providing end-to-end results for TDS learning. This relaxation not only improves the testing runtime, but also enables universality, since the localized discrepancy between two distributions can be much smaller than the global discrepancy between them (see also \cite{zhang2020localized} and references therein).

The idea behind the localized discrepancy tester for the subspace neighborhood is to split the disagreement between $\hat\concept$ and an arbitrary concept $\concept\in\subspneighborhood(\hat\concept)$ under the test distribution in two parts: (1) the disagreement between $\hat\concept$ and a rotated version $\tilde\concept$ of $\concept$ where the input $\x$ is projected on the relevant subspace of the given hypothesis $\hat\concept$ instead of the actual, unknown relevant subspace of $\concept$ and (2) the disagreement between $\tilde\concept$ and $\concept$. For part (2), we use the fact that the relevant subspace of $\concept$ is geometrically close to the relevant subspace for $\hat\concept$ (since $\concept\in\subspneighborhood(\hat\concept)$). We conclude that $\concept$ and $\tilde\concept$ can only disagree far from the origin and, hence, testing that the test marginal is appropriately concentrated suffices to give the desired bound.

\paragraph{Low-dimensional disagreement between concepts with smooth boundaries.} For part (1), we use the fact that the $k$-dimensional relevant subspace $V$ for $\hat\concept$ is known. We construct a grid on $V$ and run tests to certify that the probability (under the test marginal) of falling inside each of the cells is not unreasonably large. In order to bound the size of the grid, we also test that the probability of falling far from the origin on the subspace $V$ is appropriately bounded. We then argue that the disagreement region can be approximated reasonably well by discretizing with respect to an appropriately refined grid. To ensure that the discretization of the near-boundary region does not introduce a significant error blow-up, it is important that $\hat\concept$ and $\tilde\concept$ have smooth boundaries (see \Cref{figure:discretized-boundary} in the appendix).



\subsection{Fully Polynomial-Time Testers}\label{section:poly-time-testing}

Algorithms for TDS learning that are efficient in testing time, can be useful to check whether a pre-trained model can be applied to a particular population, without the need for overly expensive resources. Here, we focus on the class of balanced intersections of halfspaces (see \Cref{definition:balanced-intersections}) and provide the first TDS learner for this class that runs in fully polynomial time during test time. Moreover, the proposed tester is universal with respect to a wide class of distributions that satisfy some concentration and anticoncentration properties.

Let $\Dclasstarget_1$ be the class of distributions $\D$ over $\R^d$ such that for any $\vv\in\S^{d-1}$ we have $\E_{\x\sim\D}[(\vv\cdot \x)^4] \le C$ and, also, that the one-dimensional density of the projection $\vv\cdot \x$ where $\x\sim\D$ is upper bounded by $C$, where $C$ is some positive universal constant. Then the following is true (see also \Cref{theorem:tds-intersections}).

\begin{theorem}[Universal TDS Learning of Balanced Intersections]\label{theorem:tds-intersections-main}
    For $\eps\in(0,1/2)$, $d,k\in\nats$, there is a $\Dclasstarget_1$-universal $O(\eps)$-TDS learner for the class of $\eps$-balanced intersections of $k$ halfspaces over $\R^d$ w.r.t. $\Gauss_d$ in the realizable setting, which, for the training phase, uses $\poly(d) \exp(O(k^5/\eps))$ samples and time and, for the testing phase, uses $\poly(d,k,1/\eps)$ samples and time.
\end{theorem}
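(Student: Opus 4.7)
The plan is to instantiate the localized discrepancy testing framework with a neighborhood tailored to intersections of halfspaces, and to exploit the structural principle mentioned in the overview: for $\eps$-balanced intersections under the Gaussian, small disagreement between $\hat\concept$ and the ground truth $\coptcommon$ forces pointwise boundary recovery. In the training phase I would invoke any proper learner for intersections of $k$ halfspaces under $\Gauss_d$ (for instance the subspace-recovery style algorithms of \cite{vempala2010learning,klivans2024learning}) on the labeled training sample to obtain $\hat\concept = \bigwedge_{i=1}^{k}\{\vv_i\cdot \x \ge \theta_i\}$ with Gaussian disagreement at most a sufficiently small $\gamma = \poly(\eps/k)$ from $\coptcommon$. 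Properness is essential because the tester operates on the defining hyperplanes of $\hat\concept$, and driving $\gamma$ down to the required level is what consumes the $\poly(d)\exp(O(k^5/\eps))$ budget.

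The main technical step, and the step I expect to be the hardest, is the structural lemma. I need to show that if two $\eps$-balanced intersections of $k$ Gaussian halfspaces have Gaussian disagreement at most $\gamma$, then after matching the halfspaces of $\coptcommon$ to those of $\hat\concept$, each corresponding pair of unit normals and offsets satisfies $\|\vv_i - \vv_i^*\|_2 + |\theta_i - \theta_i^*| \le \delta$ with $\delta$ polynomial in $\gamma$ and $1/\eps$. The geometric consequence I actually use is that the symmetric difference $\{\x : \hat\concept(\x)\neq \coptcommon(\x)\}$ is contained in the union of $k$ thin slabs $\bigcup_{i=1}^{k}\{\x : |\vv_i\cdot \x - \theta_i|\le \delta'\}$ around the $k$ hyperplanes defining $\hat\concept$, for some $\delta' = \poly(\gamma, 1/\eps)$. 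The $\eps$-balance assumption is crucial here to rule out degenerate faces whose Gaussian mass is negligible and which could otherwise be perturbed arbitrarily without affecting the Gaussian disagreement; additional care is needed to handle redundant or near-duplicate halfspaces and to correctly match indices between the two intersections.

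Given the structural lemma, the tester becomes transparent. For each of the $k$ halfspaces $(\vv_i,\theta_i)$ defining $\hat\concept$, the tester estimates $\pr_{\x\sim\Dtestmarginal}[|\vv_i\cdot\x - \theta_i|\le \delta']$ from the unlabeled test sample and rejects if any empirical estimate exceeds a threshold $\tau = \Theta(\eps/k)$. Each quantity is a one-dimensional statistic, so standard Chernoff/VC bounds give $\poly(d,k,1/\eps)$ sample and time complexity, yielding the fully polynomial test-time phase promised by the theorem. If the tester accepts, then for every $\concept$ in the neighborhood $\neighborhood(\hat\concept) := \{\eps\text{-balanced intersections of $k$ halfspaces with Gaussian disagreement} \le \gamma \text{ from } \hat\concept\}$, which contains $\coptcommon$ by the training guarantee, the structural lemma combined with a union bound yields
\[\pr_{\x\sim\Dtestmarginal}[\hat\concept(\x)\neq \concept(\x)] \;\le\; \sum_{i=1}^{k}\pr_{\x\sim\Dtestmarginal}[|\vv_i\cdot\x - \theta_i|\le \delta'] \;\le\; k\tau \;=\; O(\eps),\]
which is exactly the soundness requirement.

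For completeness on $\Dclasstarget_1$, I rely on the one-dimensional density bound built into the definition of that class: for any $\D\in\Dclasstarget_1$ and any unit $\vv_i$ the slab probability is at most $2C\delta'$, so choosing $\gamma$ at the outset so that $\delta'(\gamma,\eps)\le \tau/(4C)$ guarantees that the tester accepts with probability at least $3/4$ on $\poly(d,k,1/\eps)$ test samples drawn from any such $\D$. Combining the training and testing phases then yields the claimed $\Dclasstarget_1$-universal $O(\eps)$-TDS learner for $\eps$-balanced intersections of $k$ halfspaces with training complexity $\poly(d)\exp(O(k^5/\eps))$ and testing complexity $\poly(d,k,1/\eps)$, and the two-phase separation makes the testing phase fully polynomial in $(d,k,1/\eps)$ as stated.
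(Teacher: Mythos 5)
Your overall architecture matches the paper's (proper learner for the intersection, then test test-marginal mass near the defining hyperplanes of $\hat\concept$), but there are two substantive gaps.

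First, your key structural lemma is stated in terms of parameter recovery: after matching the halfspaces of $\coptcommon$ to those of $\hat\concept$, each $(\vv_i,\theta_i)$ is close to $(\vv_i^*,\theta_i^*)$. This is both stronger than what you need and, as stated, false: $\eps$-balance constrains the Gaussian mass of the \emph{intersection}, not the contribution of any individual face, so $\coptcommon$ can carry a redundant or nearly-redundant halfspace whose normal is completely unconstrained by the disagreement $\gamma$. You flag this as delicate, but the paper sidesteps it entirely. Its structural input is \Cref{lemma:global-local-convex-balance} (any globally $\balance$-balanced convex set is $(R,\varrho)$-locally balanced) combined with \Cref{proposition:localization-from-locally-balanced}: if $\coptcommon$ is locally balanced and disagrees with $\hat\concept$ on a Gaussian mass smaller than the ball mass $\localbalance\inf_{\|\x\|\le R}\pr[\ball(\x,\varrho)]$, then every disagreement point with $\|\x\|_2\le R$ lies in $\partial_\varrho\hat\concept$. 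This argument never matches halfspaces, works for arbitrary balanced convex sets, and is immune to redundant faces.

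Second, and more critically, your tester only checks the $k$ one-dimensional slab probabilities. The containment $\{\hat\concept\neq\coptcommon\}\subseteq\bigcup_i\{|\vv_i\cdot\x-\theta_i|\le\delta'\}$ cannot hold unconditionally: a slight rotation of a single halfspace produces disagreement wedges that escape any fixed-width slab arbitrarily far from the origin. The containment is only valid inside $\{\|\x\|_2\le R\}$ (this is exactly the radius restriction in \Cref{proposition:localization-from-locally-balanced}), so a sound tester must additionally certify that the test marginal places little mass outside that ball. The paper's tester (\Cref{theorem:boundary-proximity-2}) therefore has a concentration step—it checks the top eigenvalue of the empirical second-moment matrix, which is what uses the fourth-moment bound defining $\Dclasstarget_1$—before invoking the boundary proximity tester of \Cref{lemma:boundary-proximity-intersections}. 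Without this step, your soundness inequality
\[
\pr_{\x\sim\Dtestmarginal}[\hat\concept(\x)\neq\concept(\x)] \le \sum_{i=1}^{k}\pr_{\x\sim\Dtestmarginal}[|\vv_i\cdot\x-\theta_i|\le\delta']
\]
is false: an adversarial test distribution supported far from the origin, away from all $k$ slabs but in the disagreement wedges, would be accepted yet have error close to $1$. You should add the second-moment test and split the disagreement probability into a near-origin term (controlled by slabs) and a far-origin term (controlled by concentration), as the paper does.
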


For comparison, the previous state-of-the-art TDS learning algorithm for halfspace intersections by \cite{klivans2024learning} had overall runtime $d^{O(\log(k/\eps))}+ \poly(d)\exp(O(k^6/\eps^8))$ and testing runtime $d^{O(\log(k/\eps))}+\poly(d)(k/\eps)^{O(k^2)}$ (although training and testing were not explicitly separated).
Hence, the overall runtime of the algorithm of \Cref{theorem:tds-intersections-main} is better than the previous state-of-the-art
, but also enjoys two additional properties: (1) the testing time is fully polynomial and (2) the tester is universal with respect to a wide class (of unimodal and even heavy-tailed distributions). 

We note that it is not by chance that these two properties are satisfied simultaneously: they both relate to the fact that it suffices to solve a simple discrepancy testing problem. Since the tested property is relaxed, more distributions should satisfy it and testing the property can be made efficient. For comparison, as well as to provide a TDS learner with better overall runtime in some regimes, we may trade-off universality and test-time efficiency to obtain the following result (see \Cref{theorem:tds-intersections}).

\begin{theorem}[TDS Learning of Balanced Intersections]\label{theorem:tds-intersections-main-non-universal}
    For $\eps\in(0,1/2)$, $d,k\in\nats$, there is an $O(\eps)$-TDS learner for the class of $\eps$-balanced intersections of $k$ halfspaces over $\R^d$ w.r.t. $\Gauss_d$ in the realizable setting, which, for the training phase, uses $\poly(d) (k/\eps)^{O(k^3)}$ samples and time and, for the testing phase, uses $(dk)^{O(\log(1/\eps))}$ samples and time.
\end{theorem}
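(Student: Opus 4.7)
The plan is to decouple training and testing and apply the Chow-matching framework of \Cref{thm:sandwiching-tds-kkms-main} with the neighborhood $\neighborhood(\hat\concept) = \C$ equal to the class of $\eps$-balanced intersections of $k$ halfspaces. The training algorithm just needs to output a proper hypothesis $\hat\concept \in \C$ that is close to $\coptcommon$ under $\Gauss_d$, and the testing algorithm then certifies a localized discrepancy bound by matching low-degree moments. The higher testing degree here (compared to the universal boundary-mass tester of \Cref{theorem:tds-intersections-main}) is precisely what frees the training algorithm from having to do more than standard proper learning, which is what buys the improved $\poly(d)(k/\eps)^{O(k^3)}$ training cost.

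\textbf{Training and testing phases.} For the training phase, I would invoke a proper PAC learner for $\eps$-balanced intersections of $k$ halfspaces under $\Gauss_d$, in which the relevant (at most) $k$-dimensional subspace is first recovered up to $\eps$-accuracy (as in \Cref{theorem:subsp-recovery-cvx-subsp-juntas}) and then an $\eps$-grid search is performed over $k$-tuples of halfspaces supported on that recovered subspace. This outputs $\hat\concept \in \C$ with $\pr_{\x\sim\Gauss_d}[\hat\concept(\x)\ne \coptcommon(\x)] \le \eps/2$ in $\poly(d)(k/\eps)^{O(k^3)}$ time. For the testing phase, I would apply the Chow-matching tester of \Cref{thm:sandwiching-tds-kkms-main} at degree $\ell = O(\log(1/\eps))$. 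The required input is a degree-$O(\log(1/\eps))$ $\L_1$-sandwicher for $\C$ under $\Gauss_d$, with the $k$-dependence absorbed into the representation size rather than the degree. Estimating and matching the resulting $d^{O(\ell)}\cdot k^{O(\ell)} = (dk)^{O(\log(1/\eps))}$ monomial expectations yields the claimed testing complexity; soundness then gives $\pr_{\Dtestmarginal}[\hat\concept \ne \coptcommon] \le \pr_{\Gauss_d}[\hat\concept\ne \coptcommon] + O(\eps) = O(\eps)$ via localized discrepancy, and completeness in the realizable Gaussian case is immediate from moment concentration.

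\textbf{Main obstacle.} The key technical step is the low-degree sandwiching lemma: every $\concept \in \C$ must admit polynomials $\pup,\pdown$ of degree $\ell = O(\log(1/\eps))$ with $\pdown \le \concept \le \pup$ pointwise and $\ex_{\x\sim\Gauss_d}[\pup(\x) - \pdown(\x)] \le \eps$, with only a $k^{O(\ell)}$ blow-up in representation size. I would build this by combining the classical degree-$O(\log(1/\eps))$ univariate Hermite-truncation $\L_1$-sandwicher for a single Gaussian halfspace, applying a product-style construction for the conjunction of $k$ halfspaces, and controlling the accumulated $\L_1$ defect via a union bound over the $k$ constituent slabs together with the $\eps$-balanced anticoncentration of $\concept$ near the boundary of each slab. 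Once this sandwiching claim is in place, all remaining ingredients (sample complexity of the monomial estimates, the $\delta=0.1$ probability amplification, and properness of the training output) are inherited directly from \Cref{thm:sandwiching-tds-kkms-main}.
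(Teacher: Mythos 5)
There is a genuine gap, and it sits exactly at the step you flag as the ``main obstacle'': degree-$O(\log(1/\eps))$ $\L_1$-sandwiching polynomials for a Gaussian halfspace do not exist, so the whole Chow-matching route cannot give testing time $(dk)^{O(\log(1/\eps))}$. For the univariate sign function under $\Gauss_1$, any sandwiching pair must satisfy $\pup(0)\ge 1$ and $\pdown(0)\le -1$, so $q=(\pup-\pdown)/2$ is a nonnegative degree-$\ell$ polynomial with $q(0)\ge 1$; writing $q$ (essentially a square) in the Hermite basis and using $|H_{2m}(0)|\asymp m^{-1/4}$, Cauchy--Schwarz forces $\E_{\Gauss_1}[q]\gtrsim \ell^{-1/2}$, i.e.\ the $\eps$-approximate $\L_1$-sandwiching degree of even a single balanced halfspace is $\Omega(1/\eps^2)$ (equivalently, by LP duality this is the known fact that $o(1/\eps^2)$-wise independence does not fool halfspaces; balancedness does not help since the obstruction is local at the boundary, and it persists for intersections since a single halfspace is the $k=1$ case). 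Consequently, instantiating \Cref{thm:chow_matching} for this class necessarily uses degree $\poly(1/\eps)$ and testing time $d^{\poly(1/\eps)}$, not $(dk)^{O(\log(1/\eps))}$; the ``Hermite-truncation sandwicher'' and the claim that the $k$-dependence can be absorbed into ``representation size'' (the number of degree-$\ell$ monomials over $\R^d$ is $d^{\Theta(\ell)}$ regardless of $k$) are the points where the argument breaks.

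The paper reaches the stated bounds by avoiding sandwiching altogether. It proves a pointwise localization statement: any $\eps$-balanced convex set is locally balanced (\Cref{lemma:global-local-convex-balance}), hence if the ground truth has sufficiently small Gaussian disagreement with the trained intersection $\hat\concept$, every disagreement point either lies far from the origin or within distance $\varrho$ of one of the $k$ hyperplanes defining $\hat\concept$ (\Cref{proposition:localization-from-locally-balanced}). The test therefore only needs to (i) match moments of degree $O(\log(1/\eps))$ against the Gaussian to certify concentration and (ii) run the boundary-proximity tester for the $k$ known hyperplanes (\Cref{lemma:boundary-proximity-intersections}), which is what \Cref{theorem:boundary-proximity-1} packages into a $(dk)^{O(\log(1/\eps))}$-time localized discrepancy tester. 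The price is paid at training time: the disagreement neighborhood needed here is exponentially small in $k$ (roughly $(\eps/k)^{O(k)}$), so the proper learner of \Cref{theorem:learning-intersections-agnostic} must be run to that accuracy, which is where the $\poly(d)(k/\eps)^{O(k^3)}$ training cost comes from --- your proposed training guarantee of Gaussian error $\eps/2$ would not suffice for this argument, and in your framework it only suffices because it leans on the unavailable low-degree sandwiching.
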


\begin{remark}\label{remark:intersections-noise}
    The algorithms of \Cref{theorem:tds-intersections-main,theorem:tds-intersections-main-non-universal} can both tolerate some amount of noise, i.e., provide an $O(\eps)$ error guarantee even when $\optcommon = \min_{\concept\in\C}(\error(\concept;\Dtrainjoint)+\error(\concept;\Dtestjoint))$ is non-zero (but sufficiently small). For \Cref{theorem:tds-intersections-main}, the amount of noise that can be tolerated is $\optcommon = \exp(-\tilde{O}(k/\eps))$, while for \Cref{theorem:tds-intersections-main-non-universal}, the tolerated amount is $\optcommon = (k/\eps)^{-O(k)}$ (see \Cref{table:universa-tds-intersections}). The amount of noise tolerated by the non-universal tester is more, because the test is more expensive and, therefore, does a better job in translating the guarantees of the training phase to guarantees for the test error. For comparison, the Chow matching tester of \Cref{thm:sandwiching-tds-kkms-main} runs much more expensive tests and can, therefore, tolerate much more noise, i.e., $\optcommon = O(\eps)$.
\end{remark}

\paragraph{Discrepancy testing through boundary proximity.} We once more use the framework of localized discrepancy testing, in order to obtain TDS learners with strong guarantees. In order to achieve fully polynomial-time performance, we aim to use a tester that is as simple as possible. In particular, for a given halfspace intersection $\hat\concept$, we test whether the probability that an example drawn from the test marginal falls close to the boundary of $\hat\concept$, i.e., close to at least one of the defining halfspaces (see \Cref{lemma:boundary-proximity-intersections,definition:tester-for-boundary-proximity}). We also test concentration of the test distribution marginal. 

Interestingly, we show that these two tests are sufficient for certifying low localized discrepancy from the Gaussian distribution with respect to the notion of disagreement neighborhood $\disneighborhood$, i.e., $\concept\in\disneighborhood(\hat\concept)$ if the Gaussian disagreement $\pr_{\x\sim\Gauss_d}[\concept(\x)\neq \hat\concept(\x)]$ between $\concept$ and $\hat\concept$ is small enough (see \Cref{definition:disagreement-neighborhood}). In particular, we show that if $\concept$ is a balanced intersection and $\concept\in\disneighborhood(\hat\concept)$, then $\concept$ and $\hat\concept$ can only differ either (1) far from the origin or (2) close to the boundary of $\hat\concept$ (see \Cref{proposition:localization-from-locally-balanced,lemma:global-local-convex-balance}). Importantly, this property is point-wise: for any $\x\in\R^d$ such that $\concept(\x) \neq \hat\concept(\x)$, $\x$ will either satisfy (1) or (2) and, hence, no distribution over $\R^d$ can fool our tester. 

In the heart of our proof is a geometric lemma which demonstrates that any balanced convex set is locally balanced as well (\Cref{lemma:global-local-convex-balance}), meaning that for any point $\x\in\R^d$, there is a large number of points near $\x$ with the same label as $\x$. Therefore (unless the norm of $\x$ is large), any hypothesis $\hat\concept$ with low Gaussian disagreement from the ground truth $\coptcommon$, must encode all of the local structure (or boundary) of $\coptcommon$ that is not very far from the origin. To show this, we use a geometric argument about convex sets (see \Cref{figure:localization-of-convex-disagreement-main} for the case when the label of $\x$ is $1$. The other case is simpler and follows by the existence of a separating hyperplane between a convex set and any point outside it). 

        \begin{figure}[ht]
        \centering
        \includegraphics[width=.5\linewidth]{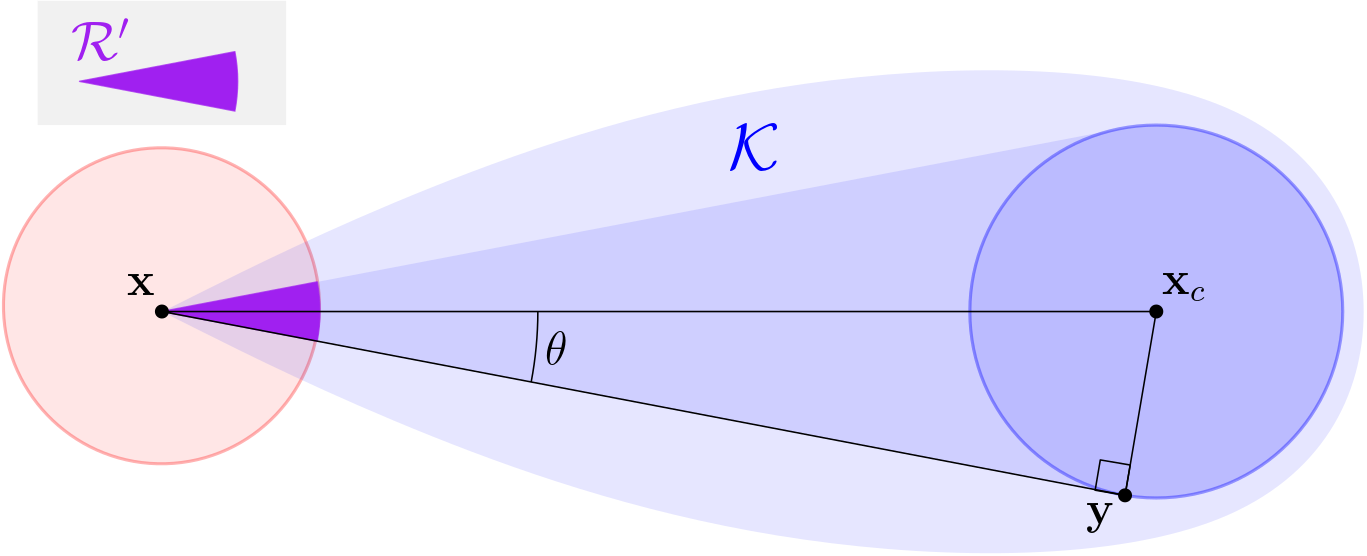}
        \caption{If $\x$ lies within a balanced convex set $\convset$, then many points close to $\x$ lie within $\convset$ as well, i.e., there is a cone $\cone'$ with $\cone'\subseteq\ball(\x,\varrho) \cap \convset$, where $\ball(\x,\varrho)$ is a ball around $\x$. The ball centered at $\x_c$ exists due to the fact that $\convset$ is balanced: any balanced convex set contains some ball with non-negligible radius. The convex hull of $\x$ and the ball at $\x_c$ lies within $\convset$. (See also Fig.~\ref{figure:localization-of-convex-disagreement})}
        \label{figure:localization-of-convex-disagreement-main}
    \end{figure}

Since we have a localized discrepancy tester with respect to the disagreement neighborhood, all we need from the training phase is to output some intersection of halfspaces $\hat\concept$ with low training error (so that the ground truth $\coptcommon$ lies within $\disneighborhood(\hat\concept)$). Hence, we may use any proper PAC learning algorithm for intersections of halfspaces under the Gaussian distribution. We use the algorithm by \cite{diakonikolas2018learning} (see also \Cref{theorem:learning-intersections-agnostic}).

\begin{remark}\label{remark:sufficient-conditions-for-boundary-proximity}
    We note that the three important properties we used to apply the method of boundary proximity are that (1) the hypothesis $\hat\concept$ returned by the learning algorithm admits an efficient boundary proximity tester and (2) the ground truth $\coptcommon$ is locally balanced and (3) that $\hat\concept$ and $\coptcommon$ are both low-dimensional. For more details, see \Cref{appendix:boundary-proximity}.
\end{remark}

\section{Chow Matching Tester}\label{appendix:chow-matching}

We now focus on functions that have low-degree sandwiching polynomials approximators under the training distribution.
\begin{definition}[$\L_1$-sandwiching polynomials]\label{definition:l1-sandwiching}
    Consider $\X\subseteq\R^d$ and a distribution $\Dgeneric$ over $\X$. For $\eps>0$ and $\concept:\X\to \cube{}$, we say that the polynomials $\pup, \pdown:\X \to \R$ are $\eps$-approximate $\L_1$-sandwiching polynomials for $\concept$ under $\Dgeneric$ if the following are true.
    \begin{enumerate}
        \item $\pdown(\x) \le \concept(\x) \le \pup(\x)$, for all $\x \in \X$.
        \item $\E_{\x\sim \Dgeneric}[\pup(\x)-\pdown(\x)]\le \eps$
    \end{enumerate}

    We say that the $\epsilon$-approximate $\L_1$-sandwiching degree of $\C$ under $\Dgeneric$ is at most $\degreesandwich$ and with (coefficient) bound $\pbound$  if for any $\concept\in\C$ there are $\eps$-approximate $\L_1$-sandwiching polynomials $\pup, \pdown$ for $\concept$ such that $\deg(\pup),\deg(\pdown) \le \degreesandwich$ and each of the coefficients of $\pup,\pdown$ are absolutely bounded by $\pbound$.

    It turns out that given a function class $\C$ with low degree sandwiching approximators, we can test localized discrepancy of a hypothesis $\hat{\concept}$ with respect to a very global notion of neighborhood: the entire concept class $\C$. We state the definition here.
\end{definition}
\begin{definition}[Global Neighborhood]
\label{defn:HC_neighborhood}
    The global $(\H,\C)$ neighborhood is defined as $\neighborhood(\hat\concept) = \C$ for all $\hat\concept\in \H$. We denote this by $\neighborhood_{\C}$.
\end{definition}

\subsection{Discrepancy Testing Result}
We now present our discrepancy tester for concept classes with bounded $\epsilon$-approximate $\L_1$ sandwiching degree. The primary advantage of this tester is it's global nature: given a hypothesis $\hat{\concept}$, it certifies low localized discrepancy with respect to every function in the concept class. 
\begin{theorem}[Chow Matching Tester]
\label{thm:chow_matching}
    Let $\Dsource$ be a distribution over a set $\X\subseteq \R^d$. Let $\C\subseteq\{\X\to \cube{}\}$ be a concept class. Let $\epsilon>0,\mconc\in \N$. Let $\H=\cube{\Sunlabelled}$. Assume that the following are true. 
    \begin{enumerate}
        \item ($\L_1$-sandwiching) The $\frac{\epsilon}{3}$-approximate $\L_1$-sandwiching degree of $\C$ w.r.t. $\D$ is $\degreesandwich$ with bound $\pbound$.
        \item (Chow-concentration) For  any function $\hat{\concept}\in \H$, if $X\sim \Dsource^{\otimes m}$ with $m\geq \mconc$, then with probability at least $9/10$, we have that for all $\alpha\in \N^{d}$ with $\norm{\alpha}_1\leq \degreesandwich$, $\bigl|\E_{\Dsource}[\hat{\concept}(\x)\cdot \x^{\alpha}]-\E_{\Sunlabelled}[\hat{\concept}(\x)\cdot\x^{\alpha}]\bigr|\leq \frac{\epsilon}{Bd^{2\degreesandwich}}$.
    \end{enumerate}
    Then, there exists a $(\neighborhood_\C,\epsilon)$-tester $\T$ for localized discrepancy from $\Dsource$ with respect to $\{\Dsource\}$ that uses $\mconc+O(\frac{1}{\eps^2})$ samples and runs in time $\poly\left(\mconc, d^\degreesandwich, \frac{1}{\eps}\right)$. 
\end{theorem}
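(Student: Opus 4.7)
I would define the tester $\T$ to perform monomial-matching checks between source- and test-side empirical Chow parameters. For every multi-index $\alpha \in \nats^d$ with $\|\alpha\|_1 \le \degreesandwich$, $\T$ compares (a) an estimate of $\E_{\Dsource}[\hat\concept(\x)\x^\alpha]$ computed from $\mconc$ auxiliary samples of $\Dsource$ (which is within tolerance $\tau \sim \eps/(\pbound d^{2\degreesandwich})$ of the truth by the Chow-concentration hypothesis) against the empirical estimate of $\E_{\Dtarget}[\hat\concept(\x)\x^\alpha]$ from $\Sunlabelled$ (within the same tolerance via Hoeffding on the bounded statistic $\hat\concept\x^\alpha$), and (b) the analogous comparison for the pure moments $\E[\x^\alpha]$. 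Family (b) is really a free special case of (a), obtained by applying the same tester internally with $\hat\concept$ replaced by the constant function $\mathbf{1}\in\H$ (which lies in $\H=\cube{\X}$ and is therefore also covered by the Chow-concentration hypothesis). The tester accepts iff every comparison is within tolerance. Completeness when $\Dtarget=\Dsource$ is immediate: the true Chow parameters and moments coincide on both sides, and a union bound over the at most $d^{2\degreesandwich}$ checks yields acceptance with probability at least $3/4$.

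For soundness, using the identity $\pr_\D[\hat\concept(\x) \ne \concept(\x)] = \tfrac{1}{2}(1 - \E_\D[\hat\concept(\x)\concept(\x)])$, the target bound $\disc_{\hat\concept, \neighborhood_\C}(\Dsource, \Dtarget) \le \eps$ reduces to showing $\E_{\Dsource}[\hat\concept\concept] - \E_{\Dtarget}[\hat\concept\concept] \le 2\eps$ uniformly over $\concept \in \C$. Fix such a $\concept$ and let $\pup,\pdown$ be its $(\eps/3)$-$\L_1$-sandwiching polynomials of degree $\le \degreesandwich$ with coefficients bounded by $\pbound$, as guaranteed by the hypothesis. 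From $\pdown\le \concept\le \pup$ and $|\hat\concept|=1$ we have the pointwise bound $\bigl|\hat\concept\concept - \hat\concept\cdot\tfrac{\pup+\pdown}{2}\bigr|\le \tfrac{\pup-\pdown}{2}$, so
\begin{equation*}
\E_{\Dsource}[\hat\concept\concept] - \E_{\Dtarget}[\hat\concept\concept]
\le \Bigl(\E_{\Dsource}\bigl[\hat\concept\cdot\tfrac{\pup+\pdown}{2}\bigr] - \E_{\Dtarget}\bigl[\hat\concept\cdot\tfrac{\pup+\pdown}{2}\bigr]\Bigr)
+ \E_{\Dsource}\bigl[\tfrac{\pup-\pdown}{2}\bigr]
+ \E_{\Dtarget}\bigl[\tfrac{\pup-\pdown}{2}\bigr].
\end{equation*}
The first bracketed term is $O(\eps)$: expand $(\pup+\pdown)/2$ in at most $d^{2\degreesandwich}$ monomials, apply the Chow tolerance per monomial, and scale by the coefficient bound $\pbound$. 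The second is $\le \eps/6$ directly from the $\L_1$-sandwiching hypothesis under $\Dsource$. For the third, the moment-matching run of the tester (the $\hat\concept\equiv\mathbf{1}$ case) yields $|\E_{\Dsource}[\pup-\pdown] - \E_{\Dtarget}[\pup-\pdown]| = O(\eps)$, which combined with the sandwiching bound $\E_{\Dsource}[\pup-\pdown] \le \eps/3$ yields $\E_{\Dtarget}[\pup-\pdown] = O(\eps)$.

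The main obstacle is precisely this third term: the Chow-matching test only certifies agreement on correlations weighted by $\hat\concept$, yet the soundness decomposition produces the $\hat\concept$-free residual $\E_{\Dtarget}[\pup-\pdown]$, which is a bare moment of $\Dtarget$. The resolution is the observation that $\mathbf{1}\in \H$, so the Chow-concentration hypothesis applied with $\hat\concept\equiv \mathbf{1}$ delivers the required moment matching at no additional cost and without altering the structure of the tester. The remainder of the argument is routine: a monomial count of at most $d^{2\degreesandwich}$, a triangle-inequality chain across the three terms above, and a careful choice of the tolerance $\tau=\Theta(\eps/(\pbound d^{2\degreesandwich}))$ absorb all additive errors below $\eps$, giving the claimed sample budget $\mconc + O(1/\eps^2)$ and runtime $\poly(\mconc, d^\degreesandwich, 1/\eps)$.
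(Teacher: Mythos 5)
Your tester and the algebraic skeleton (the identity $\pr[\hat\concept\neq\concept]=\tfrac12(1-\E[\hat\concept\,\concept])$, sandwiching via $\pup,\pdown$, coefficient bound $\pbound$ times $d^{2\degreesandwich}$ monomials, and the observation that the pure-moment checks are the $\hat\concept\equiv\mathbf{1}$ case of Chow concentration) all match the paper, and your symmetric decomposition through $\tfrac{\pup+\pdown}{2}$ is a perfectly good variant of the paper's one-sided use of $\pup$. However, there is a genuine gap in your soundness argument: you carry out the entire chain at the population level of $\Dtarget$, and to control the terms $\E_{\Dsource}[\hat\concept\cdot\tfrac{\pup+\pdown}{2}]-\E_{\Dtarget}[\hat\concept\cdot\tfrac{\pup+\pdown}{2}]$ and $\E_{\Dtarget}[\pup-\pdown]$ you need the empirical test-sample Chow parameters $\E_{\Sunlabelled}[\hat\concept(\x)\x^\alpha]$ to be close to the population quantities $\E_{\Dtarget}[\hat\concept(\x)\x^\alpha]$, which you justify by ``Hoeffding on the bounded statistic $\hat\concept\x^\alpha$.'' This step fails: for $\X\subseteq\R^d$ (e.g.\ the Gaussian applications) the monomials $\x^\alpha$ are unbounded, and, more importantly, soundness must hold for an \emph{arbitrary, adversarial} $\Dtarget$ — a heavy-tailed $\Dtarget$ (say, with a tiny-mass, huge-magnitude atom) can have population moments far from anything the sample reveals, so no concentration of $\E_{\Sunlabelled}[\hat\concept(\x)\x^\alpha]$ around $\E_{\Dtarget}[\hat\concept(\x)\x^\alpha]$ can be assumed. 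In such a case your intermediate population-level bounds are simply false even though the tester accepts.

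The paper's proof avoids exactly this: fix the maximizer $\concept^*=\argmax_{\concept\in\C}\bigl(\pr_{\Dtarget}[\hat\concept\neq\concept]-\pr_{\Dsource}[\hat\concept\neq\concept]\bigr)$ (a fixed function, independent of the sample), transfer from population to sample only through the single bounded quantity $\hat\concept(\x)\concept^*(\x)\in\{\pm1\}$ via a Chernoff/Hoeffding bound (this is what the extra $O(1/\eps^2)$ samples are for), and then run the whole sandwiching/Chow-matching chain of inequalities purely on the \emph{empirical} distribution over $\Sunlabelled$, where the accepted test conditions directly control $\E_{\Sunlabelled}[\x^\alpha]$ and $\E_{\Sunlabelled}[\hat\concept(\x)\x^\alpha]$ against the exact source quantities $\E_{\Dsource}[\cdot]$. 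Your decomposition can be salvaged by exactly this reordering (replace $\E_{\Dtarget}$ by $\E_{\Sunlabelled}$ throughout, then cross to $\Dtarget$ only on the bounded correlation with the fixed $\concept^*$), but as written the proposal relies on a concentration statement about an arbitrary test distribution that does not hold. A secondary, more cosmetic deviation: your tester estimates $\E_{\Dsource}[\hat\concept(\x)\x^\alpha]$ from auxiliary source samples, whereas the paper's tester (consistent with \Cref{definition:testing-localized-discrepancy}) receives only test samples and compares them against the exact source expectations, with the Chow-concentration hypothesis invoked on the test sample in the completeness case $\Dtarget=\Dsource$.
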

\begin{proof}
 For an input distribution $\Dtarget$ and function $\hat{\concept}\in \H$, the tester runs \Cref{algorithm:l1-sandwiching} with $\mconc$ samples $X$ from $\Dtarget$ and function $\hat{\concept}$ as input. We now prove it's correctness.
\paragraph{Soundness}
We first consider the case where $\T$ accepts $\Dtarget$. Let $\concept^*=\argmax_{f\in \C}\bigr(\pr_{\x\sim \Dtarget}\bigr[\hat\concept(\x)\neq \concept(\x)\bigr] - \pr_{\x\sim \Dsource}\bigr[\hat\concept(\x)\neq \concept(\x)\bigr]\bigr)$. Since $\pr_{\x\sim \D'}[\hat{\concept}(\x)\neq \concept^*(\x)]=(1-\E_\Dtarget[\concept^*(\x)\cdot \hat{\concept}(\x)])/2$, it is sufficient to prove a lower bound on the second term. From a Chernoff bound, we have that $\E_{\Dtarget}[\concept^*(\x)\cdot \hat{\concept}(\x)]\geq \E_X[\concept^*(\x)\cdot \hat{\concept}(\x)]-\epsilon$ with probability at least $3/4$ when $|\Sunlabelled|\geq C/\epsilon^2$ for some universal constant $C\ge 1$. We now bound $\E_X[\concept^*(\x)\cdot \hat{\concept}(\x)]$. Let $\pup,\pdown$ be $\epsilon$-approximate $\L_1$-sandwiching polynomials for $f^*$ under $\Dsource$. We have that
\begin{align*}
   \E_{\Sunlabelled}&[\concept^*(\x)\cdot \hat{\concept}(\x)]=\E_{\Sunlabelled}[(\concept^*(\x)-\pup(\x))\cdot \hat{\concept}(\x)]+\E_{\Sunlabelled}[\pup(\x)\cdot \hat{\concept}(\x)]\\
   &\geq \E_{\Sunlabelled}[\pdown(\x)-\pup(\x)]+\E_{\Sunlabelled}[\pup(\x)\cdot \hat{\concept}(\x)]
   \geq \E_\Dsource[\pdown(\x)-\pup(\x)]+\E_{\Dsource}[\pup(\x)\cdot \hat{\concept}(\x)]-3\epsilon\\
   &\geq \E_{\Dsource}[{\concept^*(\x)\cdot \hat{\concept}(\x)}]+\E_{\Dsource}[(\pup(\x)-\concept^*(\x))\cdot \hat{\concept}(\x)]-4\epsilon
   \geq \E_{\Dsource}[{\concept^*(\x)\cdot \hat{\concept}(\x)}]-5\epsilon\,.
\end{align*}
The first inequality follows from the fact that $\pdown(\x)\leq \concept^*(\x)\leq \pup(\x)$. To obtain the second inequality, we use the fact that the tester accepts if and only if $|\E_{\Sunlabelled}[\x^{\alpha}]-\E_{\Dsource}[\x^{\alpha}]|<\Delta$ and $|\E_{\Sunlabelled}[\hat{\concept}(\x)\cdot \x^\alpha]-\E_{\Dsource}[\hat{\concept}(\x)\cdot \x^{\alpha}]|<\Delta$ for $\Delta=\frac{\epsilon}{Bd^{2\degreesandwich}}$ and all $\alpha\in \mathbb{N}$ such that $\norm{\alpha}_1\leq \degreesandwich$. Since the coefficients of $\pup,\pdown$ are bounded by $B$ and each have at most $d^{2\degreesandwich}$ monomials, we obtain the second inequality. The last two inequalities use the fact that $\E_{\Dsource}[\pup(\x)-\pdown(\x)]\leq \epsilon$.

Thus, we obtain that $\E_{\Dtarget}[\concept^*(\x)\cdot \hat{\concept}(\x)]\geq \E_{\Dsource}[\concept^*(\x)\cdot \hat{\concept}(\x)]-6\epsilon$ with probability at least $3/4$. This implies that $\pr_{\x\sim \Dtarget}[\concept^*(\x)\neq \hat{\concept}(\x)]\leq \pr_{\x\sim \Dsource}[\concept^*(\x)\neq \hat{\concept}(\x)]+3\epsilon$. From the definition of $\concept^*$, we therefore have that $\disc_{\hat{\concept},\neighborhood_{\C}}(\Dsource,\Dtarget)\leq 3\epsilon$ with probability at least $3/4$ when the tester accepts. 
\paragraph{Completeness}
In this case, we have that $\Dtarget=\Dsource$. Clearly, from our assumption on Chow concentration, we have that with probability at least $4/5$, $|\E_{\Sunlabelled}[\x^{\alpha}]-\E_{\Dsource}[\x^{\alpha}]|<\Delta$ and $|\E_{\Sunlabelled}[\hat{\concept}(\x)\cdot \x^\alpha]-\E_{\Dsource}[\hat{\concept}(\x)\cdot \x^{\alpha}]|<\Delta$ for $\Delta=\frac{\epsilon}{Bd^{2\degreesandwich}}$ and all $\alpha\in \mathbb{N}$ such that $\norm{\alpha}_1\leq \degreesandwich$. Thus, with probability at least $4/5$, the tester will accept.
\end{proof}

\begin{algorithm}
\caption{Chow Matching Tester}\label{algorithm:l1-sandwiching}
\KwIn{Set $X$ from $\Dtarget$, function $\hat{\concept}:\X\to \cube{}$, parameters $\eps>0$, $ \degreesandwich\in\N, \pbound > 0$}
Set $\mslack=\frac{\eps}{\pbound d^{2\degreesandwich}}$ \\
For each $\mindex\in \N^d$ with $\|\mindex\|_1 \le \degreesandwich$, compute the quantity
$\hat{\mathrm{M}}_{\alpha}=\E_{\Sunlabelled}[\hat{\concept}(\x)\cdot \x^{\alpha}]$.\\

\textbf{Accept} if $|\momentempirical_\mindex-\E_{\Dsource}[\hat{\concept}(\x)\cdot \x^{\alpha}]|<\mslack$ and $|\E_{\Sunlabelled}[\x^{\alpha}]-\E_{\Dsource}[\x^{\alpha}]|<\Delta$ for all $\mindex$ with $\|\mindex\|_1 \le \degreesandwich$. \\
\textbf{Reject} otherwise.
\end{algorithm}

\subsection{Applications to TDS Learning}

In this section we prove that any concept class with $\L_1$ sandwiching polynomials can be TDS learned. This improves on the results of Klivans et al. 2023 which proved that $\L_2$ sandwiching implies TDS learning. In particular, our result implies a new TDS learning algorithm for the class of all constant depth circuits($\mathsf{AC0}$) which was unknown in prior work. We also achieve tight dependence on the parameter $\lambda$ as compared to prior work which was off by constant factors.

\begin{algorithm}
\caption{TDS learning through Chow matching}\label{algorithm:tds-l1-sandwiching}
\KwIn{Sets $\Strain$ from $\Dtrainjoint$, $\Stest$ from $\Dtestmarginal$, Training Algorithm $\alg, \epsilon\in (0,1),\degreesandwich\in \N, \pbound>0$}
Let $\hat{\concept}$ be the output of $\A$ when run on input $\Strain$\\
Run the Chow matching tester(\Cref{algorithm:l1-sandwiching}) with inputs $\Stest,\hat{\concept},\epsilon,\degreesandwich$ and $\pbound$ with source distribution $\Dtrainmarginal$.\\
\textbf{Accept} and output $\hat{\concept}$ if the Chow matching tester accepts.\\
\textbf{Reject} otherwise.
\end{algorithm}

We now state our general theorem about the connection between $\L_1$ sandwiching and TDS learning. In contrast to prior work, we completely decouple the training and testing phase of the TDS learner. 

\begin{theorem}[$\L_1$-sandwiching implies TDS learning]
\label{thm:sandwiching-tds}
 Let $\Dsource$ be a distribution over a set $\X\subseteq \R^d$. Let $\C\subseteq\{\X\to \cube{}\}$ be a concept class. Let $\epsilon,\delta\in (0,1)$. Let $\H=\cube{\Sunlabelled}$. Assume that the following are true. 
    \begin{enumerate}
        \item ($\L_1$-sandwiching) The $\epsilon$-approximate $\L_1$ sandwiching degree of $\C$ under $\D$ is $\degreesandwich$ with bound $\pbound$.
        \item (Chow-concentration) For  any function $\hat{\concept}\in \H$, if $X\sim \Dsource^{\otimes m}$ with $m\geq \mconc$, then w.p. at least $9/10$, we have that for all $\alpha\in \N^{d}$ with $\norm{\alpha}_1\leq \degreesandwich$, $\bigl|\E_{\Dsource}[\hat{\concept}(\x)\cdot \x^{\alpha}]-\E_{\Sunlabelled}[\hat{\concept}(\x)\cdot\x^{\alpha}]\bigr|\leq \frac{\epsilon}{Bd^{2\degreesandwich}}$.
        \item (Agnostic Learning Algorithm) There exists an algorithm $\A$ that takes $\mtrain$ samples from $\Dtrainjoint$, runs in time $\Ttrain$, and outputs w.p. at least $1-\frac{\delta}{2}$ a hypothesis $\hat{\concept}$ such that $\pr_{(\x,y)\sim \Dtrainjoint}[y\neq \hat{\concept}(\x)]\leq \error_{\A}$.
    \end{enumerate}
    Then, there exists an algorithm using $\mtrain$ labelled samples from the training distribution, $O\big((\mconc+1/\epsilon^2)\log(1/\delta)\big)$ unlabelled test samples, runs in time $\Ttrain+\poly\big(\mconc,d^{\ell},\frac{1}{\epsilon},\log(1/\delta)\big)$ and TDS learns $\C$ with respect to $\Dsource$ up to error $\lambda+\error_{\A}+\epsilon$ and fails with probability at most $\delta$.
\end{theorem}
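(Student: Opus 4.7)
The plan is to analyze Algorithm~\ref{algorithm:tds-l1-sandwiching} directly, invoking \Cref{thm:chow_matching} as a black box. The training phase runs the agnostic learner $\A$ once to produce a candidate $\hat\concept \in \H$, and the testing phase feeds $\hat\concept$ and the unlabelled test sample into the Chow matching subroutine \Cref{algorithm:l1-sandwiching}. For completeness, assume $\Dtestmarginal = \D$: with probability at least $1-\delta/2$ the learner $\A$ succeeds, and conditional on this, \Cref{thm:chow_matching} accepts with probability at least $3/4$. To push the total success probability above $1-\delta$, I would amplify the tester by running $O(\log(1/\delta))$ independent copies on fresh batches of $\mconc + O(1/\eps^2)$ unlabelled samples and accepting only if the majority accept; standard amplification preserves both the one-sided soundness and the completeness of \Cref{thm:chow_matching}.

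The heart of the argument is the soundness bound. Let $\copt \in \C$ attain $\optcommon = \error(\copt;\Dtrainjoint) + \error(\copt;\Dtestjoint)$. By the usual triangle inequality on disagreement regions,
\begin{align*}
    \error(\hat\concept; \Dtestjoint) \leq \error(\copt; \Dtestjoint) + \pr_{\x\sim\Dtestmarginal}[\hat\concept(\x)\neq\copt(\x)].
\end{align*}
On the event that the (amplified) tester accepts, \Cref{thm:chow_matching} guarantees $\disc_{\hat\concept,\neighborhood_\C}(\D,\Dtestmarginal)\leq\eps$, and since $\copt \in \C = \neighborhood_\C(\hat\concept)$ this yields
\begin{align*}
    \pr_{\x\sim\Dtestmarginal}[\hat\concept(\x)\neq\copt(\x)] \leq \pr_{\x\sim\D}[\hat\concept(\x)\neq\copt(\x)] + \eps.
\end{align*}
A second triangle inequality, now in the training world, gives $\pr_{\x\sim\D}[\hat\concept(\x)\neq\copt(\x)] \leq \error(\hat\concept;\Dtrainjoint) + \error(\copt;\Dtrainjoint) \leq \error_\A + \error(\copt;\Dtrainjoint)$. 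Chaining the three inequalities produces $\error(\hat\concept;\Dtestjoint) \leq \optcommon + \error_\A + \eps$, as claimed.

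The remaining bookkeeping is routine. The training phase inherits the $\mtrain$ samples and $\Ttrain$ runtime of $\A$; the testing phase uses $O((\mconc+1/\eps^2)\log(1/\delta))$ unlabelled examples and $\poly(\mconc,d^{\ell},1/\eps,\log(1/\delta))$ time; and a union bound over the failure events of $\A$ (probability $\leq \delta/2$) and of the amplified tester (probability $\leq \delta/2$) caps the overall failure probability at $\delta$. I do not foresee any genuine obstacle: \Cref{thm:chow_matching} already performs the heavy lifting, and the reduction from test error to localized discrepancy plus training error is a standard decomposition. The only place requiring mild care is the amplification step, which must preserve the one-sided soundness and the completeness of the base tester simultaneously, rather than being treated as a generic two-sided hypothesis test.
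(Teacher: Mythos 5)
Your proposal is correct and follows essentially the same route as the paper's own proof: the same Algorithm~\ref{algorithm:tds-l1-sandwiching}, the same amplification of \Cref{thm:chow_matching} to $1-\delta/2$, and the same three-step chain (triangle inequality at test time, localized discrepancy to transfer from $\Dtestmarginal$ to $\Dtrainmarginal$, triangle inequality at train time). The one point you flag about amplification preserving one-sided soundness is handled correctly since $\disc_{\hat\concept,\neighborhood_\C}(\D,\Dtestmarginal)$ is a fixed quantity, so each repetition independently rejects with probability $\geq 3/4$ whenever the discrepancy exceeds $\eps$.
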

\begin{proof}
Let $\Dtrainjoint$ be the training distribution with marginal $\Dtrainmarginal=\Dsource$ and let $\Dtestjoint$ be the test distribution with marginal equal .
    Let $\Strain$ be a set of $\mtrain$ samples from $\Dtrainjoint$ and let $\Stest$ be a set of $\mconc+1/\epsilon^2$ samples from $\Dtestmarginal$. Run \Cref{algorithm:tds-l1-sandwiching} with inputs $\Strain,\Stest,\A,\epsilon,\ell$ and $\pbound$. We now prove it's correctness.
    \paragraph{Soundness}
    We first consider the case when the input distribution is accepted. This happens when $\Dtestmarginal$ is accepted by the Chow Matching tester from \Cref{algorithm:l1-sandwiching}. From \Cref{thm:chow_matching}, we have that with probability at least $3/4$ , $\disc_{\hat{\concept},\neighborhood_{\C}}(\Dtrainmarginal,\Dtestmarginal)\leq \epsilon$. This probability can be boosted to $1-\delta/2$ by repeating the Chow matching tester $O\big(\log(1/\delta)\big)$ times with independent samples and accepting if and only if a majority of the tests accept. Let $\concept^*=\argmin_{\concept\in \C}\{\error(\concept;\Dtrainjoint)+\error(\concept;\Dtestjoint)\}$. That is, $\lambda=\error(\concept^*;\Dtrainjoint)+\error(\concept^*;\Dtestjoint)$. From \Cref{defn:HC_neighborhood} and the fact that $\disc_{\hat{\concept},\neighborhood_{\C}}(\Dtrainmarginal,\Dtestmarginal)\leq \epsilon$, we have that 
    \begin{equation}
    \label{eqn:tds_chow_disc}
        \pr_{\x\sim \Dtestmarginal}[\concept^*(\x)\neq \hat{\concept}(\x)]-\pr_{\x\sim \Dtrainmarginal}[\concept^*(\x)\neq \hat{\concept}(\x)]\leq \epsilon
    \end{equation}
We also have that $\error(\hat{\concept};\Dtrainjoint)\leq \error_{\A}$ with probability at least $1-\delta/2$ from the error guarantee of $\A$.
We are now ready to bound $\error(\hat{\concept};\Dtestjoint)$. We have that 
\begin{align*}
    \error(\hat{\concept};\Dtestjoint)&\leq   \error({\concept^*};\Dtestjoint)+\pr_{\x\sim \Dtestmarginal}[\concept^*(\x)\neq \hat{\concept}(\x)] \\
    &\leq \error({\concept^*};\Dtestjoint)+\pr_{\x\sim \Dtrainmarginal}[\concept^*(\x)\neq \hat{\concept}(\x)]+\epsilon\\
    &\leq \error(\concept^*;\Dtestjoint)+ \pr_{(\x,y)\sim \Dtrainjoint}[\concept^*(\x)\neq y]+\pr_{(\x,y)\sim \Dtrainjoint}[\hat{\concept}(\x)\neq y]\\
    &\leq \error(\concept^*;\Dtrainjoint)+\error(\concept^*;\Dtestjoint)+\error_{\A}
\leq \lambda+\error_{\A}+\epsilon\,.
\end{align*}
The first and third inequalities follow from the triangle inequality. The second inequality follows from \Cref{eqn:tds_chow_disc}. The penultimate inequality follows from the error guarantee of $\A$. The last inequality follows from the definition of $\lambda$.
\paragraph{Completeness}
This follows immediately from the completeness guarantee of \Cref{thm:chow_matching}. As seen before, the success probability can be boosted to $1-\delta/2$. Thus, the tester accepts when $\Dtestmarginal=\Dtrainmarginal$ with probability at least $1-\delta/2$. 
\end{proof}
\begin{remark}
    The above theorem completely decouples training and testing. This is in contrast to the Klivans et al. 2023 which don't make this distinction. In particular, this forces their output hypothesis to be polynomial threshold function. In our theorem, the hypothesis can be any function output by the training algorithm $\A$ that achieves low error. This is also in contrast with the other TDS learning algorithms in this paper that require additional structure from the hypothesis output by the training algorithm.
\end{remark}

In fact, we can drop Assumption 3 from \Cref{thm:sandwiching-tds} entirely, if we restrict our training algorithm. In particular, we use the following theorem from \cite{kalai2008agnostically}.
\begin{theorem}[Theorem~5 from \cite{kalai2008agnostically}]
 \label{thm:kkms}
 Let $\D$ be a distribution on $\X\times \cube{}$ for $\X\subseteq\R^d$ with marginal $\D_{\X}$. Let $\epsilon,\delta\in (0,1)$. Let $\C$ be a class of functions such that for all $f\in \C$, there exists polynomials $p$ of degree $\degreesandwich$ such that $\E_{\x\sim \D_{\x}}[|f(\x)-p(\x)|]\leq \epsilon$. Then there exists an agnostic learning algorithm $\A$ that has run time and sample complexity at most $\poly(d^\degreesandwich,1/\epsilon,\log(1/\delta))$ that outputs a hypothesis $\hat{\concept}$ such that with probability at least $1-\delta$, we have that 
 \[
 \pr_{(\x,y)\sim \D}[y\neq \hat{\concept}(\x)]\leq \inf_{\concept\in \C}\pr_{(\x,y)\sim \D}[f(\x)\neq y]
 \]
\end{theorem}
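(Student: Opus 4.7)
The plan is to prove this by $\L_1$ polynomial regression. Given $m$ labeled samples $(\x_1,y_1),\dots,(\x_m,y_m)$ drawn i.i.d.\ from $\D$, I would solve the convex program
\[
\hat p \;=\; \argmin_{\deg(p)\le \degreesandwich} \frac{1}{m}\sum_{i=1}^m \bigl|y_i - p(\x_i)\bigr|\,,
\]
which can be cast as a linear program in the $O(d^{\degreesandwich})$ monomial coefficients of $p$ together with $m$ slack variables (one per constraint $|y_i-p(\x_i)|\le s_i$), and hence is solvable in time $\poly(d^{\degreesandwich}, m)$.

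The population analysis proceeds as follows. Let $\copt\in\C$ attain the infimum $\opt:=\inf_{\concept\in\C}\pr_\D[\concept(\x)\ne y]$, and let $p^*$ be its degree-$\degreesandwich$ approximator with $\E_{\x\sim\D_\X}[|\copt(\x)-p^*(\x)|]\le \epsilon$. Since $y,\copt(\x)\in\cube{}$, the triangle inequality gives $\E_\D[|y-p^*(\x)|]\le \E_\D[|y-\copt(\x)|]+\epsilon = 2\opt+\epsilon$, so the population $\L_1$ optimum over degree-$\degreesandwich$ polynomials is at most $2\opt+\epsilon$. By uniform convergence applied to the class of degree-$\degreesandwich$ polynomials clipped to $[-1,1]$ (clipping only decreases the loss against a $\{\pm1\}$-valued label), whose Rademacher complexity is $O(\sqrt{d^{\degreesandwich}/m})$ because it is the $1$-Lipschitz image of an $O(d^{\degreesandwich})$-dimensional linear family, taking $m=\poly(d^{\degreesandwich},1/\epsilon,\log(1/\delta))$ yields $\E_\D[|y-\hat p(\x)|]\le 2\opt + O(\epsilon)$ with probability at least $1-\delta/2$.

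Finally, I would round $\hat p$ (after clipping to $[-1,1]$) to a Boolean hypothesis via a random threshold: set $\hat\concept(\x)=\sign(\hat p(\x)-t)$ with $t\sim\Unif[-1,1]$. The elementary identity $\E_{t\sim\Unif[-1,1]}\bigl[\1\{\sign(\hat p(\x)-t)\ne y\}\bigr]=\tfrac12 |\hat p(\x)-y|$ (valid for $\hat p(\x),y\in[-1,1]$) then gives $\E_t\pr_\D[\hat\concept(\x)\ne y]\le \tfrac12\E_\D[|y-\hat p(\x)|]\le \opt+O(\epsilon)$. Derandomizing by sweeping $t$ over an $O(1/\epsilon)$-fine grid on $[-1,1]$ and picking the value of $t$ that minimizes the empirical $0/1$ error on a fresh validation sample of size $\poly(1/\epsilon,\log(1/\delta))$ produces a deterministic $\hat\concept$ meeting the claimed bound (the $O(\epsilon)$ slack is absorbed into the target error).

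The main obstacle is reconciling the $\L_1$ regression loss with the $0/1$ error while controlling the range of $\hat p$. Without clipping, the $\L_1$ minimizer may take values far outside $[-1,1]$ and the threshold-rounding identity breaks; the standard cure is to replace $\hat p$ by its pointwise truncation to $[-1,1]$, which only decreases $|y-\hat p(\x)|$ since $y\in\cube{}$. A secondary subtlety is the uniform-convergence step: naive bounds in terms of coefficient magnitudes yield spurious $\pbound$-dependence, so one should instead use the pseudo-dimension of clipped degree-$\degreesandwich$ polynomials, which is $\widetilde O(d^{\degreesandwich})$, or a direct Rademacher computation for the underlying $O(d^{\degreesandwich})$-dimensional linear class; either route suffices to give the sample complexity $\poly(d^{\degreesandwich},1/\epsilon,\log(1/\delta))$ stated in the theorem.
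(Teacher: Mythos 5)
The paper does not prove this statement; it imports it verbatim as Theorem~5 of \cite{kalai2008agnostically}, so there is no internal proof to compare against. Your outline is the standard KKMS argument --- $\L_1$ polynomial regression via an LP, truncation to $[-1,1]$, and threshold rounding --- and the pieces you spell out (the clipping monotonicity $|y-\mathrm{clip}(p)(\x)|\le|y-p(\x)|$ for $y\in\cube{}$, the identity $\E_{t\sim\Unif[-1,1]}[\1\{\sign(\hat p(\x)-t)\neq y\}]=\tfrac12|\hat p(\x)-y|$ for $\hat p(\x),y\in[-1,1]$, the derandomization over a threshold grid, and the use of the pseudo-dimension of the clipped class to avoid coefficient-magnitude dependence) are all correct.

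There is, however, a genuine gap in the generalization step that your uniform-convergence argument does not cover. The chain of inequalities you rely on is
$\E_\D[|y-\mathrm{clip}(\hat p)|]\le\hat\E[|y-\mathrm{clip}(\hat p)|]+\eps\le\hat\E[|y-\hat p|]+\eps\le\hat\E[|y-p^*|]+\eps$,
and the last quantity is the \emph{empirical} $\L_1$ loss of the fixed benchmark polynomial $p^*$. Uniform convergence over the clipped class controls the first step but says nothing about $\hat\E[|y-p^*|]$, since $p^*$ itself is not clipped and, because the hypothesis only gives $\E_{\x\sim\D_\X}[|\copt(\x)-p^*(\x)|]\le\eps$, the random variable $|y-p^*(\x)|$ can have unbounded variance under an arbitrary marginal $\D_\X$. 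Hoeffding or Chebyshev therefore do not directly give $\hat\E[|y-p^*|]\le 2\opt+O(\eps)$ with high probability, and a naive Markov bound only gives a constant-factor loss. The fix is to observe that the LP objective value is a $[0,1]$-valued random variable (since $p\equiv\pm1$ is feasible) with expectation at most $2\opt+\eps$, apply Markov to obtain $\hat\E[|y-\hat p|]\le 2\opt+2\eps$ with probability $\Omega(\eps)$, and then boost by running $O(\eps^{-1}\log(1/\delta))$ independent repetitions and selecting the best rounded hypothesis on a fresh validation sample --- still within the claimed $\poly(d^{\degreesandwich},1/\eps,\log(1/\delta))$ budget. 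You should include this repetition-and-validation step (or an equivalent variance argument) rather than asserting concentration of $\hat\E[|y-p^*|]$ from the uniform-convergence bound alone; for the applications in this paper the training marginal is $\Gauss_d$ or $\unif(\cube{d})$ where moment bounds make the issue disappear, but the theorem as stated allows arbitrary $\D_\X$.
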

Armed with this, we give our end to end result that $\L_1$ sandwiching implies TDS learning. 
\begin{theorem}[$\L_1$-sandwiching implies TDS learning]
\label{thm:sandwiching-tds-kkms}
 Let $\Dsource$ be a distribution over a set $\X\subseteq \R^d$. Let $\C\subseteq\{\X\to \cube{}\}$ be a concept class. Let $\epsilon,\delta\in (0,1)$. Let $\H=\cube{\Sunlabelled}$. Assume that the following are true. 
    \begin{enumerate}
        \item ($\L_1$-sandwiching) The $\epsilon$-approximate $\L_1$ sandwiching degree of $\C$ under $\D$ is $\degreesandwich$ with bound $\pbound$.
        \item (Chow-concentration) For  any function $\hat{\concept}\in \H$, if $X\sim \Dsource^{\otimes m}$ with $m\geq \mconc$, then w.p. at least $9/10$, we have that for all $\alpha\in \N^{d}$ with $\norm{\alpha}_1\leq \degreesandwich$, $\bigl|\E_{\Dsource}[\hat{\concept}(\x)\cdot \x^{\alpha}]-\E_{\Sunlabelled}[\hat{\concept}(\x)\cdot\x^{\alpha}]\bigr|\leq \frac{\epsilon}{Bd^{2\degreesandwich}}$.
    \end{enumerate}
    Then, there exists an algorithm that takes $\poly(d^\degreesandwich,1/\epsilon)$ labelled samples from the training distribution, $O\big((\mconc+1/\epsilon^2)\cdot\log(1/\delta)\big)$ unlabelled test samples, runs in time $\poly\big(\mconc,d^{\ell},\frac{1}{\epsilon},\log(1/\delta)\big)$ and TDS learns $\C$ with respect to $\Dsource$ up to error $\lambda+\opttrain+\epsilon$ and fails with probability at most $\delta$.
\end{theorem}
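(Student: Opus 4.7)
The plan is to derive the theorem as a direct corollary of \Cref{thm:sandwiching-tds} by exhibiting an agnostic learner $\A$ satisfying its Assumption 3. The key observation is that \Cref{thm:kkms} (KKMS) supplies such an $\A$ as soon as the class admits low-degree $\L_1$-\emph{approximators}, which is a weaker property than the $\L_1$-\emph{sandwiching} hypothesis we are given. Thus the only new ingredient beyond \Cref{thm:sandwiching-tds} is a short observation that converts sandwiching polynomials into approximating polynomials of the same degree.

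First I would show that Assumption 1 of the current theorem implies the approximation hypothesis of \Cref{thm:kkms}. Given $\epsilon$-approximate $\L_1$-sandwiching polynomials $\pdown, \pup$ of degree at most $\degreesandwich$ for a concept $\concept \in \C$ under $\Dsource$, I would set $p := (\pup + \pdown)/2$. Since $\pdown(\x) \le \concept(\x) \le \pup(\x)$ pointwise, we have $|\concept(\x) - p(\x)| \le (\pup(\x) - \pdown(\x))/2$, so $\E_{\x\sim\Dsource}[|\concept(\x) - p(\x)|] \le \epsilon/2$. Hence $\C$ admits degree-$\degreesandwich$ $\L_1$-approximators under $\Dsource$ in the sense required by \Cref{thm:kkms}, and I can invoke that theorem with the training marginal $\Dtrainmarginal = \Dsource$ to obtain an algorithm $\A$ that, in time and sample complexity $\poly(d^{\degreesandwich}, 1/\epsilon, \log(1/\delta))$, produces with probability at least $1 - \delta/2$ a hypothesis $\hat\concept$ with $\error(\hat\concept; \Dtrainjoint) \le \opttrain + \epsilon$.

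Finally I would apply \Cref{thm:sandwiching-tds} with this $\A$ and $\error_{\A} = \opttrain + \epsilon$. Assumptions 1 and 2 of \Cref{thm:sandwiching-tds} are identical to the two assumptions of the current theorem, and Assumption 3 is now supplied by $\A$. The conclusion of \Cref{thm:sandwiching-tds} then gives a TDS learner with total error $\lambda + \error_{\A} + \epsilon = \lambda + \opttrain + O(\epsilon)$ and the claimed sample and time complexities; rescaling $\epsilon \mapsto \epsilon/C$ for an absolute constant $C$ yields the stated bound. There is no real obstacle here: both the testing primitive (the Chow-matching tester of \Cref{thm:chow_matching}, already embedded in \Cref{thm:sandwiching-tds}) and the training primitive (\Cref{thm:kkms}) are off-the-shelf, and the only subtle point is the averaging trick in the second paragraph, which ensures that the $\L_1$-approximation degree fed to KKMS matches the $\L_1$-sandwiching degree assumed on $\C$.
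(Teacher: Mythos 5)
Your proposal is correct and follows essentially the same route as the paper's proof: instantiate the training algorithm $\A$ in \Cref{thm:sandwiching-tds} with the KKMS agnostic learner from \Cref{thm:kkms}, after noting that $\L_1$-sandwiching implies the existence of $\L_1$-approximating polynomials of the same degree. The only cosmetic difference is your averaging trick $p = (\pup+\pdown)/2$; the paper's (terser) proof implicitly uses the simpler observation that $\pup$ itself already satisfies $\E_{\D}[|\pup - \concept|] = \E_{\D}[\pup - \concept] \le \E_{\D}[\pup - \pdown] \le \epsilon$, so either polynomial serves directly as the $\L_1$-approximator — but both routes are equally valid.
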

\begin{proof}
Observe that $\L_1$ sandwiching polynomials are also $\L_1$ approximating polynomials. Thus, $\C$ satisfies the requirements of \Cref{thm:kkms}. Thus, we can run \Cref{algorithm:tds-l1-sandwiching} with $\A$ instantiated to be the algorithm from \Cref{thm:kkms}. The proof of correctness follows from \Cref{algorithm:tds-l1-sandwiching}.
\end{proof}

We now argue that when $\Dtrainmarginal\in \{\Unif\cube{d},\Gauss_d\}$, then we have that Assumption~2 of \Cref{thm:sandwiching-tds} is always true with $\mconc\leq \poly(d^{\degreesandwich }B/\epsilon)$.
\begin{lemma}
 \label{lem:chow_conc}   
Let $\D\in \{\Unif\cube{d},\Gauss_d\}$. Let $f$ be a function taking values in $\cube{}$.  Let $\ell\in \N$. Let $X\sim \D^{\otimes \mconc}$ for $\mconc\geq \poly(d^{\ell}/\epsilon)$. Then, with probability atleast $9/10$ over $S$, we have that for all  $\alpha\in \N^d$ with $\norm{\alpha}_1\leq \ell$,
\[
\big|\E_{\D}[{f}(\x)\cdot \x^{\alpha}]-\E_{X}[{f}(\x)\cdot \x^{\alpha}]\big|\leq \epsilon.
\]

\end{lemma}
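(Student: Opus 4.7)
The plan is a standard concentration plus union bound. The number of multi-indices $\alpha \in \N^d$ with $\|\alpha\|_1 \le \ell$ is at most $\binom{d+\ell}{\ell} \le (d+1)^\ell$, so it suffices to establish, for each fixed $\alpha$, that $|\E_X[f(\x)\x^\alpha] - \E_\D[f(\x)\x^\alpha]| \le \eps$ fails with probability at most $1/(10 (d+1)^\ell)$, and then take a union bound over all such $\alpha$.

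For a fixed $\alpha$, let $g(\x) = f(\x) \cdot \x^\alpha$. Since $f(\x) \in \{\pm 1\}$, we have $g(\x)^2 \le (\x^\alpha)^2$, and therefore $\var_\D[g] \le \E_\D[(\x^\alpha)^2]$. In the uniform case, $|\x_i| \le 1$ implies $(\x^\alpha)^2 \le 1$, so $\var_\D[g] \le 1$ (and in fact $|g| \le 1$ pointwise). In the Gaussian case,
\[
\E_{\x \sim \Gauss_d}[(\x^\alpha)^2] \;=\; \prod_{i=1}^d \E_{z \sim \Gauss_1}[z^{2\alpha_i}] \;=\; \prod_{i=1}^d (2\alpha_i - 1)!! \;\le\; \prod_{i=1}^d (2\alpha_i)^{\alpha_i} \;\le\; (2\ell)^\ell,
\]
using $\sum_i \alpha_i \le \ell$.

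Given the variance bound, Chebyshev's inequality applied to the i.i.d.\ average $\E_X[g] = \frac{1}{m}\sum_{j=1}^m g(\x^{(j)})$ gives
\[
\pr_{X \sim \D^{\otimes m}}\bigl[\,|\E_X[g] - \E_\D[g]| > \eps\,\bigr] \;\le\; \frac{\var_\D[g]}{m \eps^2} \;\le\; \frac{(2\ell)^\ell}{m \eps^2},
\]
where the factor $(2\ell)^\ell$ can be replaced by $1$ in the uniform case. Choosing $m \ge 10 \, (d+1)^\ell (2\ell)^\ell / \eps^2$, which is $\poly(d^\ell/\eps)$, makes the per-$\alpha$ failure probability at most $1/(10 (d+1)^\ell)$, and the union bound over the $\le (d+1)^\ell$ multi-indices yields overall failure probability at most $1/10$, as required.

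I do not expect a genuine obstacle; the only mildly delicate point is the moment bound on monomials in the Gaussian case, which is handled by the factorization above. If one wanted a sharper sample complexity (eliminating the $(2\ell)^\ell$ factor, or replacing Chebyshev with an exponential tail), one could instead bound high moments of $g$ via Gaussian hypercontractivity (for Gaussians) or Hoeffding directly on the bounded variables (for the uniform case), but neither is needed to meet the stated $\poly(d^\ell/\eps)$ bound.
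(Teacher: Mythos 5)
Your proof is correct and follows essentially the same route as the paper's: bound $\var[f(\x)\x^\alpha]$ by $\E_\D[\x^{2\alpha}]$ using $|f|=1$, control that moment for both the uniform and Gaussian cases, apply Chebyshev to the empirical average, and finish with a union bound over the $O((d+1)^\ell)$ multi-indices. The only cosmetic difference is that you make the Gaussian moment bound explicit as $(2\ell)^\ell$ via the double-factorial calculation, whereas the paper simply cites a $\poly(d^\ell)$ bound from Vershynin.
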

\begin{proof}
    For $\alpha\in \N^d$, let $\hat{Z}=\E_{X}[{f}(\x)\cdot \x^{\alpha}]$ be the empirical mean over the samples. Let $Z=\E_{\D}[{f}(\x)\cdot \x^{\alpha}]$ be the true mean. Clearly, $\E_X[\hat{Z}]=Z$. Thus, we have that $\pr_{X}[|\hat{Z}-Z|\geq \epsilon]\leq \frac{\var_X[\hat{Z}]}{\epsilon^2}$. We have that $\var_X[\hat{Z}]\leq \frac{1}{\mconc}\var[f(\x)\cdot \x^{\alpha}]$. We have that $\var[f(\x)\cdot \x^{\alpha}]\leq \E_{\D}[\x^{2\alpha}]$ from the fact that $f$ takes values in $\cube{}$. When $\D=\Unif\cube{d}$, $\x^{2\alpha}=1$. When $\D=\Gauss_d$, we have that $\E_{\D}[\x^{2\alpha}]\leq \poly(d^\ell)$(see Proposition~2.5.2 \cite{vershynin2018high}). Thus, Thus, choosing $\mconc=\poly(d^{\degreesandwich}/\epsilon)$, we have that $\pr_X[|\hat{Z}-Z|\geq \epsilon]\leq \frac{\epsilon}{d^{\Omega(\degreesandwich)}}$. Taking a union bound over all $\alpha\in \N^d$ completes the proof.
\end{proof}

Applying \Cref{thm:sandwiching-tds-kkms}, \Cref{lem:chow_conc} and the bounds on the sandwiching degrees(\Cref{lem:sand_deg2_boolean,lem:sand_AC0,lem:sand_deg2_gaussian}) from \Cref{sec:sandwiching_bounds}, we immediately get the following results on TDS learning as corollaries.
\begin{corollary}[TDS learning for degree $2$ PTFs with respect to $\Unif\cube{d}$ or $\Gauss_d$]
\label{clry:tds-deg2-ptf}
    Let $\C$ be the class of degree-$2$ PTFs. Let $\epsilon>0$ and $\degreesandwich=\tilde{O}(1/\epsilon^9)$. Then, there exists an algorithm that runs in time $d^{O(\degreesandwich)}$ and TDS learning $\C$ with respect to $\Unif\cube{d}$ or $\Gauss_d$ with error at most $\opttrain+\lambda+\epsilon$.
\end{corollary}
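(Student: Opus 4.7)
The plan is to derive the corollary as a direct instantiation of \Cref{thm:sandwiching-tds-kkms}, with the concept class $\C$ being degree-$2$ polynomial threshold functions and the source marginal $\Dsource$ being either $\unif(\cube{d})$ or $\Gauss_d$. The theorem has two hypotheses ($\L_1$-sandwiching and Chow-concentration), and each is supplied by a lemma already stated earlier in the paper, so the bulk of the work is bookkeeping of parameters.

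For the $\L_1$-sandwiching hypothesis, I would invoke \Cref{lem:sand_deg2_boolean} (for the uniform case on $\cube{d}$) or \Cref{lem:sand_deg2_gaussian} (for the Gaussian case). Both yield $\eps$-approximate $\L_1$-sandwiching polynomials of degree $\degreesandwich = \wt{O}(1/\eps^9)$ for degree-$2$ PTFs, with coefficient bound $\pbound$ that is polynomially bounded in the relevant parameters (which is enough because $\pbound$ only enters the final runtime through the tolerance $\mslack = \eps/(\pbound d^{2\degreesandwich})$, and sample complexity is polynomial in $1/\mslack$). For the Chow-concentration hypothesis, I would invoke \Cref{lem:chow_conc} directly: since $\Dsource \in \{\unif(\cube{d}),\Gauss_d\}$ and $\hat\concept$ takes values in $\cube{}$, that lemma certifies that $\mconc = \poly(d^{\degreesandwich}\pbound/\eps)$ samples suffice for the required uniform Chow estimation error $\eps/(\pbound d^{2\degreesandwich})$ with probability at least $9/10$.

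Plugging these two ingredients into \Cref{thm:sandwiching-tds-kkms} then produces the algorithm, with the claimed error guarantee $\optcommon + \opttrain + \eps$ and total runtime $\poly(d^{\degreesandwich},1/\eps) = d^{O(\degreesandwich)}$ (absorbing the $\poly(\pbound)$ factors into the big-$O$ in the exponent is harmless since $\pbound$ is polynomially bounded). Setting $\delta$ to a constant (say $\delta=0.1$) absorbs the logarithmic factors.

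The main obstacle, such as there is one, is verifying that the coefficient bound $\pbound$ arising from the sandwiching constructions in \Cref{lem:sand_deg2_boolean,lem:sand_deg2_gaussian} is at most $d^{O(\degreesandwich)}$ or at worst $2^{\poly(1/\eps)}$, so that the factor $1/\mslack = \pbound d^{2\degreesandwich}/\eps$ appearing in the Chow-concentration and tester sample complexity is still $d^{O(\degreesandwich)}$. This is immediate from the explicit form of the sandwiching polynomials in those lemmas, whose coefficients are controlled by the polynomial basis used in the construction. With that verified, nothing else in the proof requires new work, and the corollary falls out.
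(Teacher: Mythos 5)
Your proposal is correct and is exactly the paper's own route: the paper states \Cref{clry:tds-deg2-ptf} as an immediate corollary of \Cref{thm:sandwiching-tds-kkms} together with \Cref{lem:chow_conc} and the sandwiching degree bounds in \Cref{lem:sand_deg2_boolean} (uniform case) and \Cref{lem:sand_deg2_gaussian} (Gaussian case). Your verification that the coefficient bound $\pbound = O(d^\degreesandwich)$ keeps $1/\mslack = \pbound d^{2\degreesandwich}/\eps$, and hence $\mconc$ and the overall runtime, at $d^{O(\degreesandwich)}$ is the only nontrivial bookkeeping, and it checks out.
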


\begin{corollary}[TDS learning for depth-$t$ $\mathsf{AC}_0$]
\label{clry:tds-ac0-boolean}
    Let $\C$ be the class of depth-$t$ $\mathsf{AC}_0$ circuits of size $s$ on $\cube{d}$. Let $\epsilon>0$ and $\ell=(\log s)^{O(t)}\log(1/\epsilon)$. Then, there exists an algorithm that runs in time $d^{O(\degreesandwich)}$ and TDS learning $\C$ with respect to $\Unif\cube{d}$ with error at most $\opttrain+\lambda+\epsilon$.
\end{corollary}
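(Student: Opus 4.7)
The plan is to obtain Corollary \ref{clry:tds-ac0-boolean} by simply instantiating \Cref{thm:sandwiching-tds-kkms} with the concept class $\C$ of depth-$t$ size-$s$ $\mathsf{AC}^0$ circuits and training marginal $\Dsource = \Unif\cube{d}$. Two hypotheses of \Cref{thm:sandwiching-tds-kkms} need to be verified: the $\L_1$-sandwiching degree bound and the Chow concentration condition. The error bound and runtime then follow immediately from the conclusion of \Cref{thm:sandwiching-tds-kkms}.

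For the first hypothesis, I would invoke the deferred $\mathsf{AC}^0$ sandwiching lemma \Cref{lem:sand_AC0}. This is essentially the Linial--Mansour--Nisan / Braverman / Tal low-degree $\L_2$-approximation machinery for $\mathsf{AC}^0$, combined with the standard recipe that converts $\L_1$ approximation of a $\pm 1$-valued function into an $\L_1$-sandwich at the cost of a constant-factor blow-up in degree. This delivers $\epsilon$-approximate $\L_1$ sandwiching polynomials of degree $\ell = (\log s)^{O(t)}\log(1/\epsilon)$ with respect to $\Unif\cube{d}$. Because polynomials over $\cube{d}$ can be taken multilinear, Parseval immediately yields a coefficient bound $B \le d^{O(\ell)}$, so the first hypothesis of \Cref{thm:sandwiching-tds-kkms} holds with parameters $\ell$ and $B$ as above.

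For the second hypothesis, I would apply \Cref{lem:chow_conc} with $\D = \Unif\cube{d}$ and target accuracy $\eps/(Bd^{2\ell})$. Since $B = d^{O(\ell)}$, this rescaled accuracy is still of order $1/d^{O(\ell)}$, and the lemma guarantees that $\mconc = \poly(d^{\ell}/\eps)$ samples suffice for the required uniform concentration of empirical Chow parameters $\E_X[\hat\concept(\x)\x^{\alpha}]$ over all $\alpha$ with $\|\alpha\|_1 \le \ell$, with success probability at least $9/10$. Plugging both verified hypotheses into \Cref{thm:sandwiching-tds-kkms} produces a TDS learner with sample and time complexity $\poly(d^{\ell},1/\eps) = d^{O(\ell)}$ and error $\lambda + \opttrain + \eps$, which is exactly the claim.

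I do not expect a real obstacle: the entire argument is a bookkeeping composition of three ingredients already in the paper (\Cref{thm:sandwiching-tds-kkms}, \Cref{lem:chow_conc}, and the sandwiching bound \Cref{lem:sand_AC0}). The only mild point to check is that the coefficient bound $B$ one reads off from the $\mathsf{AC}^0$ sandwiching construction is at most quasi-polynomial in $d^{\ell}$, so that absorbing the factor $B$ into the tolerance in \Cref{lem:chow_conc} still yields overall complexity $d^{O(\ell)}$; as noted, this is automatic on the Boolean cube by multilinearity.
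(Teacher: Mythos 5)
Your proposal is correct and takes essentially the same approach as the paper: the corollary is obtained by combining \Cref{thm:sandwiching-tds-kkms}, \Cref{lem:chow_conc}, and the sandwiching bound \Cref{lem:sand_AC0}, exactly as you describe. One small aside: the paper's proof of \Cref{lem:sand_AC0} does not go through $\L_2$ approximation plus a conversion to a sandwich, but rather through fooling by bounded-independence distributions (\cite{braverman2010polylogarithmic,tal2017tight,harsha2019polynomial}) and the Bazzi-style LP duality of \Cref{lem:fooling_implies_sandwiching_boolean}, which is also where the coefficient bound $O(d^\ell)$ comes from; since you invoke the lemma as a black box this has no bearing on the correctness of your argument.
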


\section{Cylindrical Grids Tester}\label{appendix:cylindrical-grids}

We focus on functions whose values only depend on the projection of the input on some low-dimensional subspace, i.e., we focus on the class of subspace juntas, which is formally defined as follows. 

\begin{definition}[Subspace Junta]\label{definition:subspace-junta}
    We say that a function $\concept:\R^d\to \cube{}$ is a $k$-subspace junta if there exists $W\in\R^{k\times d}$ with $\|W\|_2 = 1$ and $WW^\top = I_k$ as well as a function $\lowconcept:\R^k \to \cube{}$ such that
    \[
        \concept(\x) = \concept_W(\x) = \lowconcept(W\x)\, \text{ for any }\x\in \R^d
    \]
\end{definition}

Since such functions only depend on a low-dimensional subspace, one might hope to exploit this property to obtain more efficient discrepancy testers. However, the relevant subspaces of different subspace juntas can be completely different and the low dimensional structure of a class of subspace juntas does not seem enough to provide significant improvements for global discrepancy testing. Nevertheless, it turns out that testing the localized discrepancy with respect to a notion of subspace neighborhood can be benefited by the low-dimensional structure. In particular, we define the notion of subspace neighborhood as follows.

\begin{definition}[Subspace Neighborhood]\label{definition:subspace-neighborhood}
    Let $\H$ be the class of $k$-subspace juntas (see \Cref{definition:subspace-junta}) and $\C$ be some concept class. We define the $(\subspslack, \errorslack)$-subspace neighborhood $\subspneighborhood: \H\to \powset(\C)$ as follows for any $\hat\concept = \hat\concept_V\in \H$.
    \[
        \subspneighborhood(\hat\concept_V) = \{\concept_W\in \C \,|\, \|W-V\|_2 \le \subspslack \text{ and } \pr_{\x\sim \Gauss}[\concept(\x)\neq \hat\concept(\x)] \le \errorslack \}
    \]
\end{definition}

To design efficient testers for localized discrepancy in terms of the subspace neighborhood, we also use the notion of boundary of concepts and we require the boundaries to be smooth, meaning that the measure of the region close to the boundaries scales proportionally to its thickness. Formally, we provide the following definitions.

\begin{definition}[Boundary of Concept]\label{definition:boundary}
    Let $\lowconcept:\R^k\to\cube{}$ some concept. For $\varrho\ge 0$, we denote $\partial_\varrho \lowconcept$ the $\varrho$-boundary of $\lowconcept$, i.e., the region $\{\x\in \R^k: \exists \z\in\R^k \text{ with } \|\z\|_2\le \varrho \text{ and } \lowconcept(\x+\z) \neq \lowconcept(\x)\}$.
\end{definition}

\begin{definition}[Smooth Boundary]\label{definition:smooth-boundary}
    Let $\lowconcept:\R^k \to \cube{}$. For $\slackamp\ge 1$, we say that $\lowconcept$ has $\slackamp$-smooth boundary with respect to $\Gauss_k$ if for any $\varrho \ge 0$
    \[
        \pr_{\x\sim\Gauss_k}[\x\in \partial_\varrho \lowconcept] := \pr_{\x\sim\Gauss_k}[\exists \z: \|\z\|_2\le \varrho, \lowconcept(\x+\z) \neq \lowconcept(\x)] \le \slackamp\varrho
    \]
\end{definition}

As we will show shortly, the choice of the subspace neighborhood not only enables obtaining faster localized discrepancy testers, but also testers that are guaranteed to accept much wider classes of distributions. This is because the properties of the test marginal that need to be tested in order to ensure low localized discrepancy are much simpler, compared to the properties required for global discrepancy. Such properties are not only easy to test, but are also satisfied by more distributions. The structural properties we will require for the completeness criteria of our algorithms are concentration in every direction and anti-concentration of low-dimensional marginals. More formally, we consider structured distributions to be as follows.

\begin{definition}[Structured Distributions]\label{definition:structured-distributions}
    For $\concparam:\nats\to \R_+$, $\anticoncparam:\R_+ \to \R_+$, $k,d\in\nats$ with $k\le d$, we say that the distribution $\Dgeneric$ over $\R^d$ is $(\concparam,\anticoncparam)$-structured on $k$-dimensions (w.r.t. $\Gauss_k$), if the following are true.
    \begin{enumerate}
        \item\label{condition:structured-concentration} (Concentration) For any $\vv\in\S^{d-1}$ and $p\in \nats$, we have $\E_{\x\sim\Dtarget}[(\vv\cdot \x)^{2p}] \le \concparam(p)$.
        \item\label{condition:structured-anticoncentration} (Anti-concentration) For any subspace $\subspace$ of dimension $k$, if $\Dgenericlow$ is the density of the marginal of $\Dgeneric$ on $\subspace$ we have $\frac{\Dgenericlow(\x)}{\Gauss_k(\x)} \le \anticoncparam(R) $ for any $\x\in\R^k$ with $\|\x\|_2 \le R$.
    \end{enumerate}
    Moreover, if $k = d$, we simply say that $\Dgeneric$ is $(\concparam,\anticoncparam)$-structured.
\end{definition}

\begin{remark}
    We note that the two conditions of \Cref{definition:structured-distributions} are not always independent. For example, if $\anticoncparam(R) = O(1)$, then the distribution $\Dgenericlow$ of condition \ref{condition:structured-anticoncentration} is subgaussian, which implies a bound on $\concparam(p)$ for all $p\in\nats$ (i.e., implies some version of condition \ref{condition:structured-concentration}). However, the anti-concentration condition does not always imply the concentration condition (e.g., if $\anticoncparam(R) = \Theta(e^{{R^2}/{2}})$) and both conditions are important.
\end{remark}
\noindent For example, isotropic log-concave distributions are structured on $k$-dimensions with $\concparam(p) \le (O(p))^{2p}$ and $\anticoncparam(R) = (O(k))^k \exp(\frac{R^2}{2})$.

\subsection{Discrepancy Testing Result}

We now provide our main localized discrepancy testing result for subspace juntas with smooth boundaries, where we use some free parameters $R,p$ that can be chosen according to how structured the target accepted class of distribution is.

\begin{theorem}[Discrepancy Testing through Cylindrical Grids]\label{theorem:discrepancy-through-cylindrical-grids}
    Let $\concparam:\nats\to \R_{\ge 1}$, $\anticoncparam:\R_+ \to \R_{\ge 1}$, $p\in\nats$, $R, \slackamp, \hat\slackamp\ge 1$ and $\subspslack, \errorslack \in(0,1)$. Let also $\H$ (resp. $\C$) be a class whose elements are $k$-subspace juntas over $\R^d$ with $\hat\slackamp$-smooth (resp. $\slackamp$-smooth) boundaries. Consider $\Dclasstarget$ to be the class of distributions over $\R^d$ that are $(\concparam, \anticoncparam)$-structured on $k$-dimensions and $\subspneighborhood :\H\to\powset(\C)$ the $(\subspslack,\errorslack)$-subspace neighborhood. 
    For any $\eps\in(0,1)$, there is a $(\subspneighborhood , \accuracy+\eps)$-tester (\Cref{algorithm:cylindrical-grids}) for localized discrepancy from $\Gauss_d$ with respect to $\Dclasstarget$ with sample complexity $m = \frac{10\concparam(2)}{(\concparam(1))^2}d^4 + \frac{12 R^{2p}}{k\concparam(p)}+\frac{14k(\sqrt{2\pi} \exp(R^2))^k}{\anticoncparam(R\sqrt{k})\gridrefinement^k}\ln(\frac{3R}{\gridrefinement}) + O(\frac{1}{\eps^2})$ and time complexity $O(md^3 + mdk(2\lceil\frac{R}{\gridrefinement}\rceil)^k)$, where $\gridrefinement = \frac{\subspslack R^p}{2\hat\slackamp\sqrt{k}}\sqrt{{\concparam(1)}/{\concparam(p)}}$ and the error parameter $\accuracy$ is
    \[
        \accuracy = \frac{14 k \concparam(p)}{R^{2p}} + 12  \Bigr(\frac{2kR^{2p} \concparam(1)\ln\anticoncparam(R\sqrt{k})}{\concparam(p)}\Bigr)^{\frac{1}{2}} \anticoncparam(R\sqrt{k}) \slackamp\subspslack + 2 \anticoncparam(R\sqrt{k})\errorslack
    \]
\end{theorem}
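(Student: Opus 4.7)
The plan is to prove soundness and completeness of the cylindrical-grids tester separately. The tester verifies, using unlabeled samples from $\Dtarget$, (i) moment/concentration checks certifying that the one-dimensional marginals of $\Dtarget$ are comparable to those of $\Gauss_d$ through the $\concparam(\cdot)$ bounds, and (ii) for each cell $G$ of a grid of side length $\gridrefinement$ placed on a cube of radius $\sim R$ inside $V\cdot\R^d \cong \R^k$, that the mass $\pr_{\Dtarget}[V\x\in G]$ is at most $\anticoncparam(R\sqrt{k})$ times the corresponding Gaussian mass, where $V$ denotes the $k$-dimensional relevant subspace of the given hypothesis $\hat\concept=\hat\concept_V\in\H$.

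For soundness, fix any $\concept=\concept_W\in\subspneighborhood(\hat\concept_V)$ and introduce the auxiliary rotated concept $\tilde\concept$ that replaces $W$ with $V$ while keeping the low-dimensional representative of $\concept$. Then $\tilde\concept$ and $\hat\concept$ are both functions of $V\x$, i.e., $k$-subspace juntas over the \emph{same} subspace. Apply $\{\hat\concept\neq\concept\}\subseteq\{\hat\concept\neq\tilde\concept\}\cup\{\tilde\concept\neq\concept\}$ and control each event under $\Dtarget$ and $\Gauss_d$. For $\{\hat\concept\neq\tilde\concept\}$, the event depends purely on $V\x$, so discretize the symmetric disagreement region as a union of grid cells; the approximation error lies inside $\partial_\gridrefinement\hat\concept\cup\partial_\gridrefinement\tilde\concept$, with Gaussian mass $O((\hat\slackamp+\slackamp)\gridrefinement)$ by smoothness and $\Dtarget$-mass inflated by at most $\anticoncparam(R\sqrt{k})$ via anti-concentration, after absorbing the tail $\concparam(p)/R^{2p}$ from $\|V\x\|_2>R\sqrt{k}$. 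The remaining cells contribute identically on both sides of the discrepancy up to per-cell empirical deviations, controlled by the sample complexity through a union bound over $(R/\gridrefinement)^{k}$ cells.

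For $\{\tilde\concept\neq\concept\}$, a disagreement forces $V\x\in\partial_{\|V\x - W\x\|_2}F$ for the common low-dimensional representative $F$, and $\|V\x-W\x\|_2\le\subspslack\|\x\|_2$ by the subspace-closeness condition $\|V-W\|_2\le\subspslack$. Splitting on $\|\x\|_2\le T$ for an appropriate threshold $T$, under $\Gauss_d$ we get mass $O(\slackamp\subspslack T)$ from $\slackamp$-smoothness of $F$, inflated by the anti-concentration factor $\anticoncparam(R\sqrt{k})$ on the $\Dtarget$ side; the tail $\|\x\|_2>T$ adds at most $\concparam(p)/T^{2p}$ by concentration. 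Combining with the bound $\pr_{\Gauss_d}[\hat\concept\neq\concept]\le\errorslack$ baked into the definition of $\subspneighborhood$, lifted to $\Dtarget$ via anti-concentration, yields the third term $2\anticoncparam(R\sqrt{k})\errorslack$ of $\accuracy$.

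Completeness is comparatively routine: for $\Dtarget\in\Dclasstarget$ the moment checks pass by the defining $\concparam$-bounds via Chebyshev, and each cell's mass inequality holds in expectation by the $\anticoncparam(R\sqrt{k})$ anti-concentration bound, with empirical deviations absorbed via a union bound over the $(R/\gridrefinement)^k$ cells into the sample complexity. The main obstacle is tuning $\gridrefinement$: it must be small enough that the smoothness-induced blowup $\hat\slackamp\gridrefinement$ from discretization is at most comparable to the subspace-perturbation error from the second event, yet large enough that the cell count $(R/\gridrefinement)^k$ keeps the sample complexity and runtime finite. The stated choice $\gridrefinement=\frac{\subspslack R^{p}}{2\hat\slackamp\sqrt{k}}\sqrt{\concparam(1)/\concparam(p)}$ equalizes these two contributions and produces the middle term of $\accuracy$ with its characteristic $\sqrt{R^{2p}\concparam(1)/\concparam(p)}\cdot\anticoncparam\cdot\slackamp\subspslack$ factor.
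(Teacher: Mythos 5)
Your decomposition $\{\hat\concept\neq\concept\}\subseteq\{\hat\concept\neq\tilde\concept\}\cup\{\tilde\concept\neq\concept\}$ is exactly the one the paper uses ($P_2$ and $P_1$ respectively in the paper's notation), and the grid-based strategy, the use of smoothness to control the cells straddling boundaries, and the tuning of $\gridrefinement$ are all in the right spirit. However, there is a genuine gap in how you handle the interior cells of the $\{\hat\concept\neq\tilde\concept\}$ region.

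You claim that the cells that lie entirely inside the disagreement region ``contribute identically on both sides of the discrepancy up to per-cell empirical deviations.'' This is not what the tester certifies. The per-cell check only gives the one-sided bound $\pr_{\Sunlabelled}[V\x\in G]\le 2\anticoncparam(R\sqrt{k})\pr_{\Gauss}[V\x\in G]$, and $\anticoncparam(R\sqrt{k})$ can be very large (e.g.\ $(Ck)^k e^{kR^2/2}$ for log-concave targets), so there is no cancellation whatsoever between the $\Dtarget$-side and the $\Gauss$-side on these cells. The correct argument, which is where the third term $2\anticoncparam(R\sqrt{k})\errorslack$ of $\accuracy$ actually comes from, is to observe that the total $\Gauss$-mass of the interior cells is at most $\pr_{\x\sim\Gauss}[\lowconcept(V\x)\neq\hat\lowconcept(V\x)]$, and then to bound this Gaussian disagreement by $\errorslack$ plus a tail term and a boundary term (using $\pr_{\Gauss}[\lowconcept(W\x)\neq\hat\lowconcept(V\x)]\le\errorslack$ together with another rotation argument and a threshold $R'\approx\sqrt{2k\ln\anticoncparam(R\sqrt{k})}$); multiplying by the factor $2\anticoncparam(R\sqrt{k})$ from the per-cell test then produces the $\errorslack$ term and the $\ln\anticoncparam$ factor in the middle term of $\accuracy$. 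In your write-up you instead attribute both of these contributions to the $\{\tilde\concept\neq\concept\}$ piece, but that piece involves only the tail and boundary-proximity bounds and no $\errorslack$. So the structure is right, but the mechanism producing the dominant error term is misidentified, and the claim of per-cell cancellation would not survive scrutiny.
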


\begin{algorithm}
\caption{Cylindrical Grids Tester}\label{algorithm:cylindrical-grids}
\KwIn{Set $X$ of points in $\R^d$, matrix $V\in \R^{k\times d}$, parameters $p\in\nats,R \ge 1, \gridrefinement >0$}

Compute the matrix $M = \E_{\x\sim \Sunlabelled} [\x\x^\top]$ and \textbf{reject} if the largest eigenvalue is larger than $2\concparam(1)$.

Compute the quantity $\pr_{\x\sim \Sunlabelled}[\|V\x\|_\infty > R]$ and \textbf{reject} if the value is larger than $\frac{2k \concparam(p)}{R^{2p}}$.

Let $\indicesspace = \{-\lceil\frac{R}{\gridrefinement}\rceil,\dots,-1,0,\dots,\lceil\frac{R}{\gridrefinement}\rceil-1\}$ and consider the grid $\grid_{\gridrefinement, R} = \{[i_1 \gridrefinement, (i_1+1)\gridrefinement]\times \cdots \times [i_k \gridrefinement, (i_k+1)\gridrefinement] : i_1,\dots,i_k\in\indicesspace\}$

\For{each grid cell $\gridcell \in \grid_{\gridrefinement, R}$}{
    Compute the quantity $\pr_{\x\sim \Sunlabelled}[V\x\in\gridcell]$ and \textbf{reject} if the value is larger than $2\anticoncparam(R\sqrt{k})\pr_{\x\sim \Gauss}[V\x\in \gridcell]$.
}

Otherwise, \textbf{accept}.
\end{algorithm}

For different target distribution classes we obtain different results, that reveal a trade-off between universality and the size of the subspace neighborhood tested. To accept wider classes of distributions, we restrict to testing localized discrepancy with respect to narrower neighborhoods, which is parameterized by $\subspslack$ and $\errorslack$ in the following corollary. Eventually, for applications in TDS learning, this will result into requiring the training algorithm to provide stronger error guarantees by using more training examples and time.

\begin{corollary}\label{corollary:cylindrical-grids}
    Let $\eps\in(0,1)$, let $\H,\C,\slackamp,\hat\slackamp$ be as in \Cref{theorem:discrepancy-through-cylindrical-grids} and let $\subspneighborhood:\H\to\powset(\C)$ be the $(\subspslack,\errorslack)$-subspace neighborhood (on $k$ dimensions). For a class of distributions $\Dclasstarget$ over $\R^d$, there is a $(\subspneighborhood,\eps)$-tester for localized discrepancy from $\Gauss_d$ with respect to $\Dclasstarget$ in each of the following cases for appropriately large universal constants $C_1,C_2 \ge 1$.
    \begin{enumerate}
        \item\label{case:corollary-grids-gaussian} $\Dclasstarget = \{\Gauss_d\}$, $\slackamp\subspslack \le \errorslack \le (\frac{\eps}{C_1 k})^{C_2}$. The tester has time \& sample complexity $\poly(d) (\frac{k}{\eps})^{O(k)}(\slackamp\hat\slackamp)^k$.
        \item\label{case:corollary-grids-strongly-logconc} $\Dclasstarget$ is the class of $C$-subgaussian and isotropic log-concave measures over $\R^d$ for some $C = O(1)$ 
        and $\slackamp\subspslack \le \errorslack \le (\frac{\eps }{C_1})^{C_2 k}$. The tester has time and sample complexity $\poly(d) (\frac{k}{\eps})^{O(k^2)}(\slackamp\hat\slackamp)^k$.
        \item\label{case:corollary-grids-logconc} $\Dclasstarget$ is the class of isotropic log-concave measures over $\R^d$ and also $\slackamp\subspslack \le \errorslack \le (\frac{1}{C_1})^{-C_2 k^2 \log^2(1/\eps)}$. The tester has time and sample complexity $\poly(d) k^{O(k^3 \log^2(1/\eps))}(\slackamp\hat\slackamp)^k$.
        \item\label{case:corollary-grids-structured} $\Dclasstarget$ is the class of distributions over $\R^d$ that are $(\concparam,\anticoncparam)$-structured on $k$-dimensions, with $\concparam(2)\le C$ and $\anticoncparam(R) \le C^{k^2} e^{R^2/2}$ for some $C = O(1)$ and $\slackamp\subspslack \le \errorslack \le (\frac{1}{C_1})^{-C_2 k^2/\eps}$. The tester has time and sample complexity $\poly(d) k^{O(k^3/\eps^2)}(\slackamp\hat\slackamp)^k$.
    \end{enumerate}
\end{corollary}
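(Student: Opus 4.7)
The plan is to apply \Cref{theorem:discrepancy-through-cylindrical-grids} in each of the four cases by instantiating the free parameters $p \in \nats$ and $R \ge 1$ appropriately, using distribution-specific upper bounds on $\concparam(p)$ and $\anticoncparam(R\sqrt{k})$, and then balancing the three summands in the error expression
\[
    \accuracy = \frac{14 k \concparam(p)}{R^{2p}} + 12 \Bigl(\frac{2kR^{2p} \concparam(1)\ln\anticoncparam(R\sqrt{k})}{\concparam(p)}\Bigr)^{1/2} \anticoncparam(R\sqrt{k}) \slackamp\subspslack + 2 \anticoncparam(R\sqrt{k})\errorslack.
\]
In each case I will force each of the three terms to be at most $\eps/3$. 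The first term is a pure concentration tail that shrinks like $\concparam(p)/R^{2p}$ and dictates how large to take $R$ once $p$ is fixed; the second couples $R$ and $p$ with the subspace-recovery error $\subspslack$; the third is handled directly by the hypothesis on $\errorslack$, as long as the universal constants $C_1,C_2$ are chosen large enough to absorb $\anticoncparam(R\sqrt{k})$.

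For Case~\ref{case:corollary-grids-gaussian} ($\Dclasstarget = \{\Gauss_d\}$) the Gaussian moment formula gives $\concparam(p) \le (2p)^p$ and the pointwise density ratio is $\anticoncparam(R) = 1$, so $\ln\anticoncparam = 0$ and the second term vanishes. Taking $p = \Theta(\log(k/\eps))$ and $R = \Theta(\sqrt{p})$ makes the first term $O(\eps)$, and the hypothesized bound on $\errorslack$ handles the third. Substituting $\gridrefinement = \frac{\subspslack R^p}{2\hat\slackamp\sqrt{k}}\sqrt{\concparam(1)/\concparam(p)}$ back into the sample/time bound of \Cref{theorem:discrepancy-through-cylindrical-grids}, the grid-cell count $(R/\gridrefinement)^k$ becomes $(k/\eps)^{O(k)}(\slackamp\hat\slackamp)^k$, yielding the stated complexity. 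Case~\ref{case:corollary-grids-strongly-logconc} is analogous: $C$-subgaussianity still gives $\concparam(p) \le (Cp)^p$, and for a $k$-dimensional marginal of an isotropic log-concave measure one has the pointwise density bound $\anticoncparam(R) \le C^{O(k)}$ via Lov\'asz--Vempala, so $\ln\anticoncparam = O(k)$. Absorbing this extra factor of $k$ forces $p = \Theta(k \log(k/\eps))$ and produces the $(k/\eps)^{O(k^2)}$ rate.

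For Case~\ref{case:corollary-grids-logconc}, general isotropic log-concave measures only satisfy $\concparam(p) \le (Cp)^{2p}$ (Borell-type tails) and their $k$-dimensional marginal densities are bounded by $\anticoncparam(R) \le (Ck)^{O(k)}\, e^{R^2/2}$, so $\ln \anticoncparam(R\sqrt{k}) = O(R^2 k + k \log k)$. The optimal balance is $p = \Theta(k \log^2(1/\eps))$ and $R = \Theta(p)$, which yields the quoted $k^{O(k^3 \log^2(1/\eps))}$ complexity. Case~\ref{case:corollary-grids-structured} is the most direct: plug the hypothesized $\concparam$ (extended from the $\concparam(2) \le C$ bound to all $p$ by interpolation against the anticoncentration tail) and $\anticoncparam(R) \le C^{k^2} e^{R^2/2}$ into the formula, choose $p = \Theta(k/\eps)$ and $R = \Theta(\sqrt{p})$, and read off the stated $k^{O(k^3/\eps^2)}$ complexity.

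The main obstacle is the algebra in Cases~\ref{case:corollary-grids-logconc} and~\ref{case:corollary-grids-structured}, where the $e^{R^2/2}$ factor in the second summand of $\accuracy$ competes directly with the $R^{2p}$ decay of the first summand. The fiddly part is tracking constants so that the cascade of inequalities leaves the stated polylog/subpoly dependence on $\eps$ intact and so that the hypothesis $\errorslack \le (1/C_1)^{-C_2 k^2/\eps}$ really does dominate $\anticoncparam(R\sqrt{k})$ at the chosen $R$; everything else is mechanical bookkeeping on top of \Cref{theorem:discrepancy-through-cylindrical-grids}.
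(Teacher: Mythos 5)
Your high-level strategy is the paper's: invoke \Cref{theorem:discrepancy-through-cylindrical-grids} with class-specific bounds on $\concparam(p)$ and $\anticoncparam(R\sqrt{k})$, tune $p$ and $R$ to drive the three summands of $\accuracy$ below $\eps$, and read the complexity off the grid size. But your parameter instantiations contain a genuine gap, most seriously in Case~\ref{case:corollary-grids-structured}. There you take $p=\Theta(k/\eps)$ after "extending" $\concparam(2)\le C$ to all $p$ "by interpolation against the anticoncentration tail." No such extension exists: the class in Case~\ref{case:corollary-grids-structured} only promises a fourth-moment bound, and the condition $\anticoncparam(R)\le C^{k^2}e^{R^2/2}$ is merely a uniform pointwise bound on the $k$-dimensional marginal densities (the $e^{R^2/2}$ exactly cancels the Gaussian decay), so it imposes no tail decay and no bound on $\E[(\vv\cdot\x)^{2p}]$ for $p\ge 3$ — this is precisely the heavy-tailed regime the corollary is meant to cover. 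With $\concparam(p)$ unbounded over the class, the concentration test's threshold $2k\concparam(p)/R^{2p}$ is unusable for large $p$ and completeness fails. The paper's proof takes $p=1$ in this case, which forces $R=\Theta(\sqrt{k/\eps})$ by Chebyshev alone; that is exactly where the $e^{kR^2}$-type factors of order $\exp(O(k^2/\eps))$ and the $k^{O(k^3/\eps^2)}$ complexity (and the $\errorslack$ threshold with exponent $k^2/\eps$) come from. Your route cannot recover this case.

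There are also quantitative problems in Cases~\ref{case:corollary-grids-strongly-logconc} and~\ref{case:corollary-grids-logconc}. Your claim that a $k$-dimensional marginal of a subgaussian isotropic log-concave measure satisfies $\anticoncparam(R)\le C^{O(k)}$ independently of $R$ is false: the density ratio to $\Gauss_k$ grows like $e^{R^2/2}$ even for the uniform distribution on a ball of radius $\Theta(\sqrt{k})$, and the correct bound (used in the paper) is $\anticoncparam(R\sqrt{k})\le (Ck)^k e^{kR^2/2}$. Once this $R$-dependence is restored, your choices $p=\Theta(k\log(k/\eps))$ in Case~\ref{case:corollary-grids-strongly-logconc} and $p=\Theta(k\log^2(1/\eps))$, $R=\Theta(p)$ in Case~\ref{case:corollary-grids-logconc} make $R$ too large: the third summand $2\anticoncparam(R\sqrt{k})\errorslack$ is then of order $(k/\eps)^{\Theta(k^2)}\errorslack$, which is not $O(\eps)$ when $\errorslack$ sits at its allowed ceiling $(\eps/C_1)^{C_2 k}$ (and no universal constants $C_1,C_2$ can fix this, since the needed exponent scales with $k$), while in Case~\ref{case:corollary-grids-logconc} you get $\exp(O(k^3\log^4(1/\eps)))$, exceeding the stated $k^{O(k^3\log^2(1/\eps))}$ complexity. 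The paper instead takes $p=\Theta(\log(1/\eps))$ in Cases \ref{case:corollary-grids-gaussian}--\ref{case:corollary-grids-logconc}, so that $R$ is only $\Theta(\sqrt{\log(1/\eps)})$ (or $\Theta(\log(1/\eps))$ times a mild $k^{1/\log(1/\eps)}$ factor in the log-concave case), keeping $e^{kR^2/2}$ within the budget allowed by the $\errorslack$ hypothesis. Your Case~\ref{case:corollary-grids-gaussian} is fine.
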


\begin{proof}
    To apply \Cref{theorem:discrepancy-through-cylindrical-grids} in each case, it suffices to show bounds for $\concparam(p)$ and $\anticoncparam(R\sqrt{k})$ for each of the choices for $\Dclasstarget$. We then pick $p = \log(1/\eps)$ in Cases \ref{case:corollary-grids-gaussian},\ref{case:corollary-grids-strongly-logconc} and \ref{case:corollary-grids-logconc} and $p = 1$ in Case \ref{case:corollary-grids-structured} and $R$ sufficiently small to achieve error guarantee $\eps$. For Case \ref{case:corollary-grids-gaussian}, $\concparam(p) \le (C p)^{p}$ and $\anticoncparam(R\sqrt{k}) \le 1$. For case \ref{case:corollary-grids-strongly-logconc}, $\concparam(p) \le (2C p)^{p}$ and $\anticoncparam(R\sqrt{k}) \le (Ck)^k e^{kR^2/2}$. Finally, for Case \ref{case:corollary-grids-logconc}, $\concparam(p) \le (C p)^{2p}$ and $\anticoncparam(R\sqrt{k}) \le (Ck)^k e^{kR^2/2}$. These bounds follow from properties of log-concave and subgaussian distributions (see, e.g., \cite{lovasz2007geometry, vershynin2018high}).
\end{proof}

In order to prove \Cref{theorem:discrepancy-through-cylindrical-grids}, we first provide a tester which can certify that the mass assigned by the tested distribution to the region near the boundary of any function with smooth boundary is bounded. Structured distributions (\Cref{definition:structured-distributions}) indeed have this property and the proposed tester can certify it universally over the class of such distributions. 

This can be done by considering a cover the low-dimensional space by a grid of bounded size and checking whether the probability of falling within each of the grid cells is appropriately bounded. To account for grid cells that are far from the origin, it suffices to check that the tested distribution is sufficiently concentrated. If these tests pass, then we have a certificate that the mass of the tested distribution close to the boundary of any smooth function is appropriately bounded, because such regions can be covered by the union of a relatively small number of grid cells (see \Cref{figure:discretized-boundary}).

\begin{lemma}[Grids Tester]\label{lemma:cylindrical-grids-tester}
    Let $\concparam:\nats\to \R_+$, $\anticoncparam:\R_+ \to \R_+$, $p\in\nats$, $R,\slackamp\ge 1$ and  $\varrho\in(0,1)$.
    There is a tester $\T$ which, upon receiving a set $\Sunlabelled$ of vectors in $\R^k$, and in time $|\Sunlabelled|\cdot(O(\frac{R\sqrt{k}}{\varrho}))^k$, either accepts or rejects and satisfies the following.
    \begin{enumerate}[label=\textnormal{(}\alph*\textnormal{)}]
    \item (Soundness.) If $\T$ accepts, then for any $\lowconcept:\R^k\to\cube{}$ with $\slackamp$-smooth boundary we have
    \[
        \pr_{\x\sim \Sunlabelled}[\x\in\partial_\varrho \lowconcept] \le \frac{2k\concparam(p)}{R^{2p}} + 4 \slackamp \varrho \,\anticoncparam(R\sqrt{k})
    \]
    \item (Completeness.) If $\Sunlabelled$ consists of at least $\frac{12R^{2p}}{k\concparam(p)}+ \frac{14k(3 \sqrt{2\pi k} \exp(R^2))^k}{\anticoncparam(R\sqrt{k}) \varrho^k}\ln(\frac{9Rk}{\varrho})$ i.i.d. examples from some $(\concparam,\anticoncparam)$-structured distribution over $\R^k$, then $\Tester$ accepts with probability at least $99\%$.
    \end{enumerate}
\end{lemma}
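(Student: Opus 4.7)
The plan is to run a two-step tester that is the natural specialization of \Cref{algorithm:cylindrical-grids} to the case where samples already live in $\R^k$ (so the projection $V$ and the covariance check are unnecessary). Set $\gridrefinement := \varrho/\sqrt{k}$ and partition (an enclosing rounded-out version of) $[-R, R]^k$ into a uniform grid $\grid_{\gridrefinement, R}$ of axis-aligned cells of side length $\gridrefinement$. The tester accepts iff both: (i) $\pr_{\x \sim \Sunlabelled}[\|\x\|_\infty > R] \le \frac{2k\concparam(p)}{R^{2p}}$, and (ii) for every cell $\gridcell \in \grid_{\gridrefinement, R}$, $\pr_{\x \sim \Sunlabelled}[\x \in \gridcell] \le 2\anticoncparam(R\sqrt{k})\,\pr_{\x \sim \Gauss_k}[\x \in \gridcell]$. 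The runtime bound $|\Sunlabelled|\cdot(O(R\sqrt{k}/\varrho))^k$ is immediate, since the grid has at most $(2\lceil R/\gridrefinement\rceil)^k$ cells and for each cell the relevant empirical probability can be computed in one pass through the samples.

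For soundness, fix any $\lowconcept$ with $\slackamp$-smooth boundary and decompose $\partial_\varrho \lowconcept$ into an outer part $\partial_\varrho\lowconcept\setminus [-R,R]^k$ and an inner part $\partial_\varrho\lowconcept\cap [-R,R]^k$. Check (i) directly bounds the empirical mass of the outer part by $\frac{2k\concparam(p)}{R^{2p}}$. The key geometric observation for the inner part is that any cell $\gridcell$ meeting $\partial_\varrho \lowconcept$ is entirely contained in $\partial_{2\varrho}\lowconcept$: if $\y \in \gridcell \cap \partial_\varrho \lowconcept$ then $\lowconcept(\y+\z)\neq \lowconcept(\y)$ for some $\|\z\|_2\le \varrho$, and any $\x\in \gridcell$ satisfies $\|\x-\y\|_2\le \gridrefinement\sqrt{k} = \varrho$, so by the triangle inequality there is a point within distance $2\varrho$ of $\x$ where $\lowconcept$ differs from $\lowconcept(\x)$. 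Summing check (ii) over such cells and applying $\slackamp$-smoothness to bound $\pr_{\Gauss_k}[\x \in \partial_{2\varrho}\lowconcept]\le 2\slackamp\varrho$ yields an empirical mass of at most $4\slackamp\varrho\,\anticoncparam(R\sqrt{k})$ on the inner part, proving the stated soundness bound.

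For completeness, assume the samples are drawn i.i.d.\ from some $(\concparam,\anticoncparam)$-structured distribution $\D$. For check (i), applying Markov's inequality coordinate-wise to condition~\ref{condition:structured-concentration} of \Cref{definition:structured-distributions} (with $\vv = e_i$) gives $\pr_\D[\|\x\|_\infty > R] \le k\concparam(p)/R^{2p}$, so a multiplicative Chernoff bound shows that $\Theta(R^{2p}/(k\concparam(p)))$ samples suffice for the empirical tail mass to stay within a factor of $2$ of the true value with probability at least $1-1/200$. For check (ii), condition~\ref{condition:structured-anticoncentration} applied with $\subspace = \R^k$ gives $\pr_\D[\x\in \gridcell] \le \anticoncparam(R\sqrt{k})\,\pr_{\Gauss_k}[\x\in \gridcell]$ for every cell, since each cell lies in the ball of radius $R\sqrt{k}$ centered at the origin. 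A multiplicative Chernoff bound cell-by-cell combined with a union bound over the $(2\lceil R/\gridrefinement\rceil)^k$ cells then yields the second term of the stated sample complexity, where the minimum cell Gaussian mass (attained at the corners of $[-R,R]^k$) is lower-bounded by $\gridrefinement^k (2\pi)^{-k/2}\exp(-kR^2/2)$.

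The main obstacle is the completeness analysis of check (ii): individual cells near the corners of $[-R,R]^k$ can have exponentially small Gaussian mass, so the additive form of Chernoff's bound would force the sample complexity to scale with the \emph{inverse square} of this minimum mass, blowing up by an extra factor of $\anticoncparam(R\sqrt{k})^{-1}\exp(kR^2/2)$. The fix is to exploit that we only need to compare the empirical mass to the a priori upper bound $\anticoncparam(R\sqrt{k})\,\pr_{\Gauss_k}[\x\in\gridcell]$ rather than to the true mass, which, combined with the multiplicative Chernoff bound, gives the correct scaling linear in $1/\pr_{\Gauss_k}[\x\in\gridcell]$. Absorbing the $\log N$ factor from the union bound into the $\ln(9Rk/\varrho)$ term and substituting $\gridrefinement = \varrho/\sqrt{k}$ then recovers the claimed sample complexity.
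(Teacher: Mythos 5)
Your proof is correct and follows essentially the same approach as the paper: the same two-step tester (tail check plus per-cell mass check), the same inner/outer decomposition for soundness with the cell-to-boundary inclusion argument, and the same Markov/multiplicative-Chernoff/union-bound for completeness. The only differences are harmless constant choices (you take $\gridrefinement = \varrho/\sqrt{k}$ rather than the paper's $\varrho/(3\sqrt{k})$, both of which yield the stated $4\slackamp\varrho$ bound, and your minimum-cell-mass lower bound $\exp(-kR^2/2)$ should be slackened slightly to account for cells protruding past $[-R,R]^k$, which the paper absorbs into $\exp(-kR^2)$).
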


\begin{proof}
    Let $\gridrefinement = \frac{\varrho}{3\sqrt{k}}$ be some parameter, $\indicesspace = \{-\lceil\frac{R}{\gridrefinement}\rceil,\dots,-1,0,\dots,\lceil\frac{R}{\gridrefinement}\rceil-1\}$ be a set of indices and $\grid_{\gridrefinement, R} = \{[i_1 \gridrefinement, (i_1+1)\gridrefinement]\times \cdots \times [i_k \gridrefinement, (i_k+1)\gridrefinement] : i_1,\dots,i_k\in\indicesspace\}$ the corresponding finite grid with cell length $\gridrefinement$ (each cell corresponds to a hypercube in $\R^k$, the cartesian product of $k$ intervals each of length $\gridrefinement$). The tester does the following.
    \begin{enumerate}
        \item Computes the quantity $\pr_{\x\sim \Sunlabelled}[\|\x\|_\infty > R]$ and rejects if the computed value is larger than $\frac{2k\concparam(p)}{R^{2p}}$.
        \item For each cell $G$ in the grid $\grid_{\gridrefinement,R}$, computes the quantity $\pr_{\x\sim \Sunlabelled}[\x\in G]$ and rejects if the computed value is $\pr_{\x\sim \Sunlabelled}[\x\in G] > 2\anticoncparam(R\sqrt{k})\pr_{\x\sim \Gauss_k}[\x\in G]$.
        \item Otherwise, the tester accepts.
    \end{enumerate}

    \paragraph{Soundness.} Suppose that the tester $\T$ has accepted. This means that the quantities $\pr_{\x\sim \Sunlabelled}[\|\x\|_\infty> R]$ and $\pr_{\x\sim \Sunlabelled}[\x\in G]$ are appropriately bounded (for any $G\in \grid_{\gridrefinement,R}$). Let $\lowconcept$ be any function with $\slackamp$-smooth boundary with respect to $\Gauss_k$.

    Consider $\tilde{\grid}\subseteq \grid_{\gridrefinement,R}$ to be the set of grid cells that have non-empty intersection with the set $\partial_\varrho \lowconcept$ (see \Cref{definition:boundary}), i.e., $\tilde{\grid} := \{G\in \grid_{\gridrefinement,R} : G\cap \partial_\varrho \lowconcept \neq \emptyset\}$. Observe that if $\x \in \partial_\varrho \lowconcept$ then either $\|\x\|_\infty > R$, or $\x \in G$ for some $G\in \tilde{\grid}$, because the grid covers the set $\{\x: \|\x\|_\infty \le R\}$. Moreover, if $\x\in \tilde{\grid}$, then there is a point $\y\in \tilde{\grid}\cap \partial_\varrho \lowconcept$ that falls in the same cell as $\x$ and, therefore, $\|\x-\y\|_2 \le \gridrefinement\sqrt{k}$, because each cell has length $\gridrefinement$. This implies that $\x\in \partial_{\varrho +\gridrefinement\sqrt{k}}\lowconcept$. We overall have the following (see also \Cref{figure:discretized-boundary}).
    \begin{equation}\label{equation:boundary-discretization-inclusions}
        \partial_\varrho\lowconcept \setminus \{\x: \|\x\|_\infty > R\} \subseteq \bigcup_{\gridcell \in \tilde{\grid}}\gridcell \subseteq \partial_{\tilde{\varrho}}\lowconcept\,, \text{ where }\tilde{\varrho} := \varrho +\gridrefinement\sqrt{k}
    \end{equation}

    \begin{figure}[ht!]
        \centering
        \includegraphics[width=.4\linewidth]{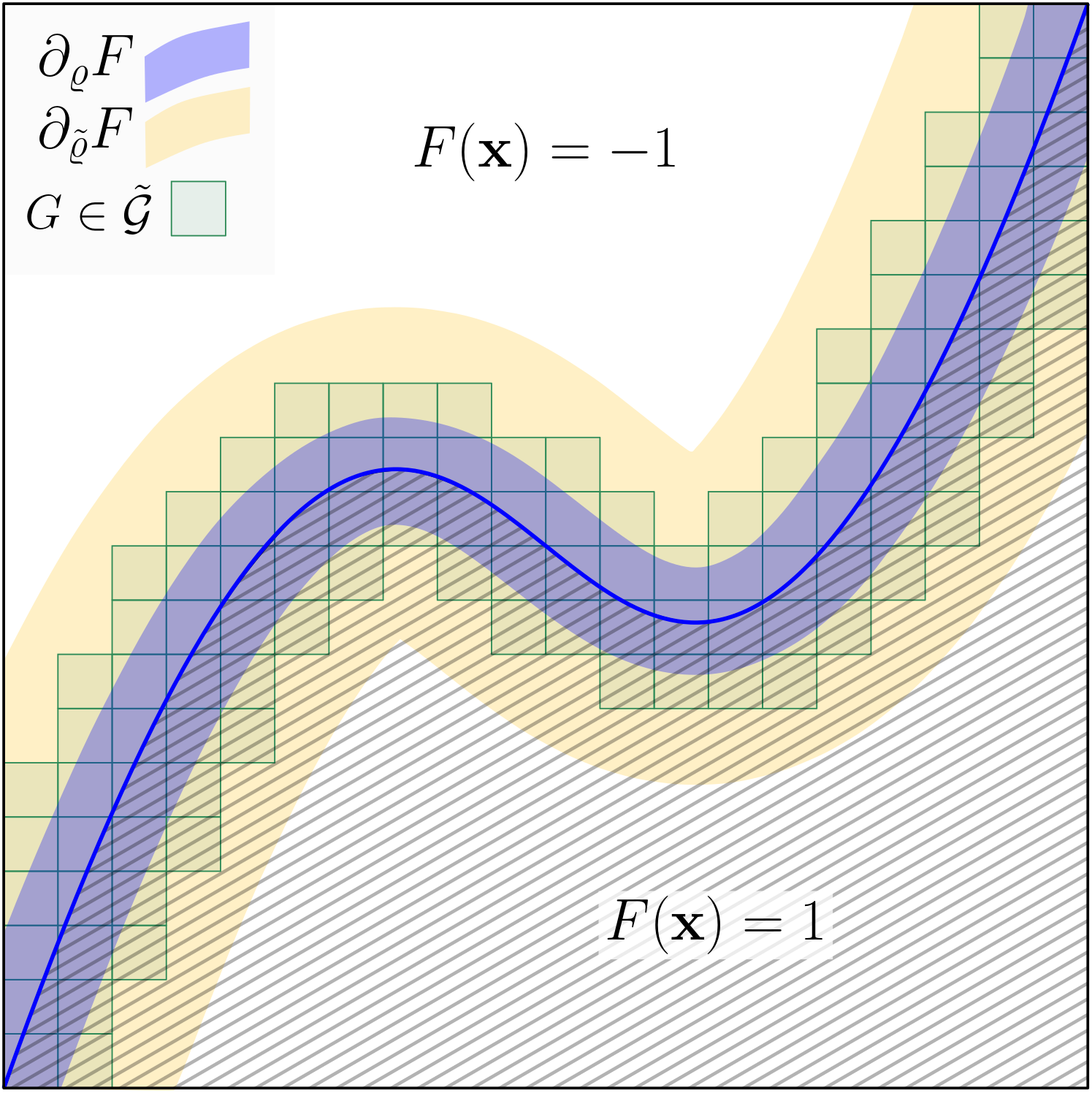}
        \caption{Discretization of smooth boundary}
        \label{figure:discretized-boundary}
    \end{figure}

    Combining the first inclusion in expression \eqref{equation:boundary-discretization-inclusions} with the fact that the tester has accepted, the quantity $\pr_{\x\sim \Sunlabelled}[\x\in \partial_\varrho \lowconcept]$ is bounded as follows.
    \begin{align*}
        \pr_{\x\sim \Sunlabelled}[\x\in \partial_\varrho \lowconcept] &\le \pr_{\x\sim\Sunlabelled}[\|\x\|_\infty > R] + \sum_{\gridcell \in \tilde{\grid}}\pr_{\x\sim \Sunlabelled}[\x\in \gridcell] \\
        &\le \frac{2k\concparam(p)}{R^{2p}} + 2\anticoncparam(R\sqrt{k}) \sum_{\gridcell\in \tilde{\grid}}\pr_{\x\sim \Gauss_k}[\x\in G]
    \end{align*}
    For any $\gridcell,\gridcell'\in \tilde{\grid}$ with $\gridcell\neq \gridcell'$, the events that $\x\in\gridcell$ and that $\x\in \gridcell'$ are mutually exclusive. Therefore $\sum_{\gridcell\in \tilde{\grid}}\pr_{\x\sim \Gauss_k}[\x\in G] = \pr_{\x\sim \Gauss_k}[\x\in\cup_{\gridcell\in \tilde{\grid}} G] \le \pr_{\x\sim \Gauss_k}[\x\in\partial_{\tilde{\varrho}} \lowconcept]$, where the final inequality follows from the second inclusion in expression \eqref{equation:boundary-discretization-inclusions}. Since $\lowconcept$ has $\slackamp$-smooth boundary, we have $\pr_{\x\sim \Gauss_k}[\x\in\partial_{\tilde{\varrho}} \lowconcept] \le \slackamp \tilde{\varrho}$. Overall, we have 
    \begin{align*}
        \pr_{\x\sim \Sunlabelled}[\x\in \partial_\varrho \lowconcept] &\le \frac{2k\concparam(p)}{R^{2p}} + 2\slackamp(\varrho+\gridrefinement\sqrt{k})\anticoncparam(R\sqrt{k}) \\
        &\le \frac{2k\concparam(p)}{R^{2p}} + 4\slackamp\varrho\, \anticoncparam(R\sqrt{k})\,,\text{ as desired.}
    \end{align*}

    \paragraph{Completeness.} Suppose, now, that the examples $\Sunlabelled$ are drawn i.i.d. from a $(\concparam,\anticoncparam)$-structured distribution $\Dgenericlow$. We first show that, with probability at least $1-\frac{1}{200}$, we have $\pr_{\x\sim\Sunlabelled}[\|\x\|_\infty > R] \le \frac{2k\concparam(p)}{R^{2p}}$.

    We first bound the quantity $\pr_{\x\sim\Dgenericlow}[\|\x\|_\infty > R]$, by using Markov's inequality as follows.
    \begin{align*}
        \pr_{\x\sim \Dgenericlow}[\|\x\|_\infty > R] &\le k \sup_{\vv\in \S^{k-1}}\pr_{\x\sim \Dgenericlow}[|\vv\cdot \x| > R] \\
        &\le k \frac{\sup_{\vv\in \S^{k-1}}\E_{\x\sim \Dgenericlow}[(\vv\cdot \x)^{2p}]}{R^{2p}} \\
        &\le \frac{k\concparam(p)}{R^{2p}}\,,\text{ since $\Dgenericlow$ is structured.}
    \end{align*}
    By the multiplicative Chernoff bound\footnote{We use the version of the Chernoff bound that uses an upper bound on the expectation rather than the exact value, through a standard coupling argument.}, we have that $\pr_{\x\sim\Sunlabelled}[\|\x\|_\infty > R] \le 2 \pr_{\x\sim\Dgenericlow}[\|\x\|_\infty > R]$ with probability at least $1-\exp(-|\Sunlabelled|\frac{k\concparam(p)}{2R^{2p}}) \ge 1-\frac{1}{200}$, since $|\Sunlabelled| \ge \frac{12R^{2p}}{k\concparam(p)}$.


    We will show that for each $\gridcell\in\grid_{\gridrefinement,R}$, $\pr_{\x\sim \Sunlabelled}[\x\in \gridcell] \le 2\anticoncparam(R\sqrt{k})\pr_{\x\sim \Gauss_k}[\x\in G]$, with probability at least $1-\exp(-\frac{|\Sunlabelled|}{2}{\anticoncparam(R\sqrt{k})}\gridrefinement^k/{(\sqrt{2\pi} e^{R^2})^k})$. The desired result then follows by a union bound over $\grid_{\gridrefinement,R}$ (where $|\grid_{\gridrefinement,R}| \le (3R/\gridrefinement)^k$) and the fact that $|\Sunlabelled|\ge \frac{14k(\sqrt{2\pi} \exp(R^2))^k}{\anticoncparam(R\sqrt{k})\gridrefinement^k}\ln(\frac{3R}{\gridrefinement})$.

    We first bound $\pr_{\x\sim\Dgenericlow}[\x\in\gridcell]$ as follows by using the fact that $\Dgenericlow$ is structured \& $\|\x\|_2 \le \|\x\|_\infty \sqrt{k} \le R\sqrt{k}$ for all $\x\in \gridcell$ (because $\gridcell\in{\grid_{\gridrefinement,R}}$).
    \[
        \pr_{\x\sim\Dgenericlow}[\x\in\gridcell] = \int_{\x\in\gridcell} \Dgenericlow(\x)\, d\x \le \anticoncparam(R\sqrt{k})\int_{\x\in \gridcell}\Gauss(\x)\, d\x = \anticoncparam(R\sqrt{k})\pr_{\x\sim\Gauss}[\x\in\gridcell]
    \]
    By the multiplicative Chernoff bound, we once more have that $\pr_{\x\sim\Sunlabelled}[\x\in \gridcell] \le 2\pr_{\x\sim\Dgenericlow}[\x\in\gridcell]$ with probability at least $1-\exp(-\frac{|\Sunlabelled|}{2}\anticoncparam(R\sqrt{k})\pr_{\x\sim \Gauss}[\x\in\gridcell])$ and conclude the proof by observing that $\pr_{\x\sim \Gauss}[\x\in\gridcell] \ge (\frac{\gridrefinement}{\sqrt{2\pi} \exp(R^2)})^k$.
\end{proof}

\begin{remark}\label{remark:cylindrical-grids-beyond-gaussian}
    We note that \Cref{lemma:cylindrical-grids-tester} is not specialized to the Gaussian distribution. The only requirement is that the distribution of the completeness criterion is structured with respect to the same distribution for which the functions $\lowconcept$ of the soundness criterion have smooth boundary. In particular, in \Cref{definition:structured-distributions}, the anti-concentration condition \ref{condition:structured-anticoncentration} is defined with respect to the Gaussian, but it could also be defined with respect to some other distribution. The concentration condition \ref{condition:structured-concentration} is always the same.
\end{remark}

We are now ready to prove \Cref{theorem:discrepancy-through-cylindrical-grids}. The idea is that if a function $\concept$ lies within the subspace neighborhood of another function $\hat \concept$, then the disagreement region between the two functions is bounded by the union of: (1) their disagreement after projecting on the relevant subspace for $\hat\concept$ (since the subspace is known, it can be tested exhaustively, similarly to \Cref{lemma:cylindrical-grids-tester}) and (2) the region far from the origin (for which testing concentration suffices).

\begin{proof}[Proof of \Cref{theorem:discrepancy-through-cylindrical-grids}]
   Let $\Dtarget$ be the unknown distribution and $\Sunlabelled$ a set of $m$ i.i.d. samples from $\Dtarget$ and let $\gridrefinement = \frac{\subspslack R^p}{2\hat\slackamp\sqrt{k}}\sqrt{\frac{\concparam(1)}{\concparam(p)}}$. Let $(\hat\concept_V, \Sunlabelled)$ be an instance of the localized discepancy testing problem (see \Cref{definition:localized-discrepancy}). We run \Cref{algorithm:cylindrical-grids} with input $(\Sunlabelled, V, p , R, \gridrefinement)$ and accept (or reject) accordingly.

   \paragraph{Soundness.} Suppose that the algorithm accepts. We will show that $\pr_{\x\sim \Sunlabelled}[\hat\concept(\x)\neq \concept(\x)] \le \accuracy$ for any $\concept\in \subspneighborhood (\hat\concept)$. Since the event that $\hat\concept(\x)\neq \concept(\x)$ is independent for each $\x \in \Sunlabelled$, we may apply the Hoeffding bound to show that $\pr_{\x\sim \Dtarget}[\hat\concept(\x)\neq \concept(\x)] \le \accuracy+\eps$ with probability at least $3/4$ whenever $|\Sunlabelled| \ge \frac{3}{\eps^2}$. To bound the empirical quantity, we have the following, for $R_s = R^p ({{\concparam(1)}/{\concparam(p)}})^{1/2}$ and $ \varrho = \frac{\subspslack R_s}{\hat\slackamp}$.

   \begin{align*}
        \pr_{\x\sim\Sunlabelled}[\lowconcept(W\x)\neq \hat\lowconcept(V\x)] \le \underbrace{\pr_{\x\sim\Sunlabelled}[\lowconcept(W\x)\neq \lowconcept(V\x)]}_{P_1} + \underbrace{\pr_{\x\sim\Sunlabelled}[\lowconcept(V\x)\neq \hat\lowconcept(V\x)]}_{P_2}
   \end{align*}

   For the term $P_1$, we observe that $\lowconcept(W\x) = \lowconcept((W-V)\x+V\x)$ and therefore
   \begin{align*}
        P_1 &\le \pr_{\x\sim\Sunlabelled}[\|(W-V)\x\|_2 \ge \subspslack R_s] + \pr_{\x\sim\Sunlabelled}[\exists \z\in\R^k:\|\z\|_2 \le \subspslack R_s, \lowconcept(V\x+\z) \neq \lowconcept(V\x)] \\
        &= \pr_{\x\sim\Sunlabelled}[\|(W-V)\x\|_2 \ge \subspslack R_s] + \pr_{\x\sim\Sunlabelled}[V\x\in \partial_{\subspslack R_s}\lowconcept]
   \end{align*}
   By applying Chebyshev's inequality for the first term in the above expression and \Cref{lemma:cylindrical-grids-tester} for the second term (note that we have chosen $\gridrefinement \le \frac{\subspslack R_s}{3 \sqrt{k}}$ and \Cref{algorithm:cylindrical-grids} runs the tester corresponding to \Cref{lemma:cylindrical-grids-tester}), we obtain the following bound for $P_1$ (recall that $\|W-V\|_2 \le \subspslack$ and $\|(W-V)\x\|_2\le \|W-V\|_2\|\proj_{U}\x\|_2$, where $U$ is the span of the columns of the matrix $W-V$).

   \begin{align*}
       P_1 &\le \frac{k \sup_{\vv\in\S^{d-1}}\E_{\x\sim\Sunlabelled}[(\vv\cdot \x)^2]}{R_s^{2}} + \frac{2k\concparam(p)}{R^{2p}} + 4\slackamp \subspslack R_s \anticoncparam(R\sqrt{k}) \\
       &\le \frac{2k \concparam(1)}{R_s^{2}} + \frac{2k\concparam(p)}{R^{2p}} + 4\slackamp \subspslack R_s \anticoncparam(R\sqrt{k})
   \end{align*}
   The last inequality follows from the spectral bound on the empirical covariance matrix $M = \E_{\x\sim\Sunlabelled}[\x\x^\top]$ implied by \Cref{algorithm:cylindrical-grids} upon acceptance.

   For the term $P_2$, consider the set of grid cells $\tilde\grid$ with non-zero intersection with the disagreement region, i.e., $\tilde\grid = \{\gridcell\in\grid_{\gridrefinement, R}: \text{ there is }\x \text{ with } V\x\in \gridcell \text{ and }\lowconcept(V\x)\neq \hat\lowconcept(V\x)\}$. Recall that $\varrho = \gridrefinement\sqrt{k}$ and let $\tilde\grid_{\mathrm{in}}$ be the interior part of $\tilde\grid$, i.e., $\tilde\grid_{\mathrm{in}} = \{\gridcell\in\tilde\grid: \text{ for any } \x \text{ with }V\x\in \gridcell\text{ we have }\lowconcept(V\x)\neq \hat\lowconcept(V\x)\}\}$. 

   Let $\x$ be such that $\|V\x\|_\infty \le R$, $\lowconcept(V\x)\neq \hat\lowconcept(V\x)$ and $V\x\notin \partial_\varrho\lowconcept \cup \partial_\varrho\hat\lowconcept$. It must be that $V\x$ lies within some grid cell in $\tilde\grid_{\mathrm{in}}$. To see this, note that $V\x$ must be in exactly one grid cell $\gridcell$ in $\tilde{\grid}$ (by definition of $\tilde{\grid}$) and if this grid cell was in $\tilde{\grid} \setminus \tilde\grid_{\mathrm{in}}$, this would imply that for some $\x'$ with $V\x'\in \gridcell$ we would have either $\lowconcept(V\x)\neq \lowconcept(V\x')$ or $\hat\lowconcept(V\x)\neq \hat\lowconcept(V\x')$ (because $\lowconcept,\hat\lowconcept$ disagree on $V\x$ but agree on $V\x'$). However, $\|V\x-V\x'\|_2 \le \gridrefinement\sqrt{k} = \varrho$, because they are in the same grid cell and we conclude that $V\x\in \partial_\varrho\lowconcept \cup \partial_\varrho\hat\lowconcept$, which is a contradiction. Overall, we have the following.

   \[
        P_2 \le \underbrace{\pr_{\x\sim\Sunlabelled}[\|V\x\|_\infty > R]}_{P_{21}} + \underbrace{\pr_{\x\sim\Sunlabelled}[V\x\in\partial_\varrho \lowconcept]}_{P_{22}} + \underbrace{\pr_{\x\sim\Sunlabelled}[V\x\in\partial_\varrho \hat\lowconcept]}_{P_{23}} + \underbrace{\sum_{\gridcell\in \tilde\grid_{\mathrm{in}}} \pr_{\x\sim\Sunlabelled}[V\x\in\gridcell]}_{P_{24}}
   \]
   For the term $P_{21}$, we use the bound implied by \Cref{algorithm:cylindrical-grids}, for the terms $P_{22}, P_{23}$ we apply \Cref{lemma:cylindrical-grids-tester} and for the term $P_{24}$, we use the fact that (upon acceptance) $\pr_{\x\sim \Sunlabelled}[V\x\in \gridcell] \le 2\anticoncparam(R\sqrt{k})\pr_{\x\sim \Gauss}[V\x\in \gridcell]$ to obtain the following.
   \begin{align*}
        P_{24} &\le 2\anticoncparam(R\sqrt{k}) \sum_{\gridcell\in\tilde{\grid}_{\mathrm{in}}} \pr_{\x\sim \Gauss}[V\x\in \gridcell] \\
        &\le 2\anticoncparam(R\sqrt{k}) \pr_{\x\sim\Gauss}[\lowconcept(V\x)\neq \hat\lowconcept(V\x)]
   \end{align*}
   We bound the quantity $\pr_{\x\sim\Gauss}[\lowconcept(V\x)\neq \hat\lowconcept(V\x)]$ as follows.
   \begin{align*}
       \pr_{\x\sim\Gauss}[\lowconcept(V\x)\neq \hat\lowconcept(V\x)] &\le \pr_{\x\sim\Gauss}[\lowconcept(W\x)\neq \hat\lowconcept(V\x)] + \pr_{\x\sim\Gauss}[\lowconcept(W\x)\neq \lowconcept(V\x)] \\
       &\le \errorslack + \pr_{\x\sim\Gauss}[\|(W-V)\x\|_2 > \subspslack R'] + \pr_{\x\sim\Gauss}[V\x\in \partial_{\subspslack R'}\lowconcept] \\
       &\le \errorslack + 4ke^{-\frac{R'^2}{2k}} + \slackamp \subspslack R'
   \end{align*}
   where the last inequality follows from Gaussian concentration and the fact that $\lowconcept$ has $\slackamp$-smooth boundary. By choosing $R' = ({2k\ln(\frac{R^{2p}\anticoncparam(R\sqrt{k})}{\concparam(p)})})^{1/2}$, we obtain that
   \[
        P_{24} \le 2 \anticoncparam(R\sqrt{k}) \errorslack + \frac{4k\concparam(p)}{R^{2p}} + 2\slackamp \subspslack \anticoncparam(R\sqrt{k}) \Bigr({2k\ln\Bigr(\frac{R^{2p}\anticoncparam(R\sqrt{k})}{\concparam(p)}\Bigr)}\Bigr)^{1/2}
   \]
   Overall, for the term $P_{2}$ we have the following bound.
   \[
        P_2 \le \frac{10k\concparam(p)}{R^{2p}} + 10\slackamp \subspslack R^p\sqrt{\frac{2k\concparam(1)}{\concparam(p)}} \anticoncparam(R\sqrt{k})({\ln\anticoncparam(R\sqrt{k})})^{1/2} + 2\errorslack \anticoncparam(R\sqrt{k})
   \]
   Combining the bounds for $P_1$ and $P_2$, we obtain the desired result.

   \paragraph{Completeness.} Suppose, now, that $\Dtarget \in \Dclasstarget$. It suffices to show that all the tests will accept with probability at least $3/4$. For the quantity $\pr_{\x\sim\Sunlabelled}[\|V\x\|_\infty>R]$ as well as the quantities $\pr_{\x\sim\Sunlabelled}[V\x\in \gridcell]$, we apply the Chernoff Bound as described in the proof of completeness of the grid tester (see the proof of \Cref{lemma:cylindrical-grids-tester}). For the quantity $M = \E_{\x\sim\Sunlabelled}[\x\x^\top]$, we use the Chebyshev's inequality on each of the random variables $M_{ij} = \E_{\x\sim\Sunlabelled}[\x_i\x_j]$, the fact that $\E[M_{ij}^2] \le \concparam(2)$ and a union bound over $i,j\in[d]$.
\end{proof}

\subsection{Application to TDS Learning}

Interestingly, in learning theory, there are algorithms that are guaranteed to recover the relevant subspace for certain classes of subspace juntas that have some additional properties. This enables us to use the discrepancy tester of \Cref{theorem:discrepancy-through-cylindrical-grids} to obtain end-to-end results for TDS learning, because the training phase can guarantee that the ground truth lies within the subspace neighborhood of the output hypothesis $\hat\concept$, for which we have efficient localized discrepancy testers. Here, we present a TDS learning result for balanced convex subspace juntas in the realizable setting. The class of balanced convex subspace juntas is defined as follows.

\begin{definition}[Balanced Convex Subspace Juntas]\label{definition:convex-subspace-junta}
    A concept $\concept:\R^d \to \cube{}$ is a $\balance$-balanced convex $k$-subspace junta if it is $\balance$-balanced (see \Cref{definition:globally-balanced}), convex and a $k$-subspace junta (see \Cref{definition:subspace-junta}).
\end{definition}

We make use of known algorithms from PAC learning that are guaranteed to approximately recover the effective ground-truth subspace in terms of geometric distance, which is important since the tester of \Cref{theorem:discrepancy-through-cylindrical-grids} works with respect to the subspace neighborhood and obtain the following theorem, which underlines a trade-off between training time and universality.

\begin{theorem}[TDS Learning of Convex Subspace Juntas]\label{theorem:tds-cvx-subspace-juntas}
    For $\balance\in(0,1/2)$, $d,k\in\nats$, let $\C$ be the class of $\balance$-balanced convex $k$-subspace juntas over $\R^d$. For any $\eps\in(0,1)$, there is a (decoupled) $\eps$-TDS learner for $\C$ with respect to $\Gauss_d$ in the realizable setting, which, for the learning phase, uses $\poly(d) (\frac{1}{\balance})^{\poly(k/\eps)}$ samples and time and, for the testing phase, uses $\poly(d) (k/\eps)^{O(k)}$ samples and time. Moreover, in the same setting, there is a $\Dclasstarget$-universal $\eps$-TDS learner for $\C$ for each of the cases listed in \Cref{table:universa-tds-subspace-juntas}.
\end{theorem}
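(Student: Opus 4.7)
The plan is to decouple into a training and a testing phase and invoke the localized discrepancy framework with the subspace neighborhood $\subspneighborhood$. The training phase will produce a convex $k$-subspace junta $\hat\concept = \hat\lowconcept \circ V$ so that, with high probability, the ground truth $\coptcommon = \lowconcept \circ W$ lies in $\subspneighborhood(\hat\concept)$; the testing phase then applies the cylindrical grids tester of \Cref{theorem:discrepancy-through-cylindrical-grids} (via the appropriate case of \Cref{corollary:cylindrical-grids}) to certify small localized discrepancy, after which we output $\hat\concept$.

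For the training phase I would proceed in two steps. First, I would run a subspace-recovery algorithm for balanced convex $k$-subspace juntas over $\Gauss_d$, e.g.\ the moment-based procedure of \cite{vempala2010learning} or the boundary-based procedure used in \cite{klivans2024learning} (invoked as \Cref{theorem:subsp-recovery-cvx-subsp-juntas}), to obtain $V\in\R^{k\times d}$ with $VV^\top = I_k$ and $\|V-W\|_2\le \subspslack$; the sample and time cost of this step is $\poly(d)\cdot (1/\balance)^{\poly(k/\subspslack)}$, which is what ultimately produces the $(1/\balance)^{\poly(k/\eps)}$ factor in the training cost. Second, on the $V$-projected data I would run a standard PAC learner for convex bodies in $\R^k$ (e.g.\ ERM over a finite cover, whose size depends only on $k$ and $\errorslack$) to obtain $\hat\lowconcept$ with $\pr_{\x\sim \Gauss_d}[\hat\concept(\x)\neq\coptcommon(\x)] \le \errorslack$. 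The parameters $\subspslack,\errorslack$ are tuned to satisfy the requirements of the targeted case of \Cref{corollary:cylindrical-grids}.

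For the first part of the theorem I would invoke Case \ref{case:corollary-grids-gaussian} of \Cref{corollary:cylindrical-grids}, which requires $\slackamp\subspslack \le \errorslack \le (\eps/(C_1 k))^{C_2}$ and yields a tester of cost $\poly(d)(k/\eps)^{O(k)}(\slackamp\hat\slackamp)^k$. Since the low-dimensional representations of concepts in $\C$ are convex bodies in $\R^k$, Ball's bound on the Gaussian surface area of convex sets gives $\slackamp,\hat\slackamp = O(k^{1/4})$ in the sense of \Cref{definition:smooth-boundary}, so $(\slackamp\hat\slackamp)^k = k^{O(k)}$ and the testing cost is $\poly(d)(k/\eps)^{O(k)}$ as claimed. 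Soundness then follows because, upon acceptance,
\[
\pr_{\Dtestmarginal}[\coptcommon(\x)\neq \hat\concept(\x)] \le \pr_{\Gauss_d}[\coptcommon(\x)\neq \hat\concept(\x)] + \eps/2 \le \errorslack + \eps/2 \le \eps,
\]
which in the realizable setting equals the test error of $\hat\concept$; completeness follows because $\Gauss_d$ lies in the accepted class for every case of \Cref{corollary:cylindrical-grids}. For the universal version the same template applies with Cases \ref{case:corollary-grids-strongly-logconc}--\ref{case:corollary-grids-structured} in place of Case \ref{case:corollary-grids-gaussian}; each stronger universality requirement tightens $\subspslack$ and $\errorslack$ (and thus inflates the training cost) while the testing cost scales as in the corresponding case of the corollary, yielding the entries of \Cref{table:universa-tds-subspace-juntas}.

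The main obstacle is the subspace-recovery step. What is needed is an algorithm that, given labeled Gaussian samples from a $\balance$-balanced convex $k$-subspace junta, outputs $V$ with $\|V-W\|_2\le \subspslack$ at cost $\poly(d)\cdot(1/\balance)^{\poly(k/\subspslack)}$. The balance assumption is crucial because it guarantees that the covariance (or some low-degree moment) of the optimal labeling function has a nontrivial spectral gap supported on $W$, which makes the ideal subspace identifiable from moments and the residual eigenvalues controllable. Verifying that existing recovery results extend from intersections of halfspaces (or other specialized classes) to the full class of balanced convex subspace juntas --- or invoking such an extension directly via \Cref{theorem:subsp-recovery-cvx-subsp-juntas} --- is the technical crux; the other ingredients (low-dimensional PAC learning of convex bodies, the Gaussian surface-area bound, and the application of the grids tester) are either standard or already established in the paper.
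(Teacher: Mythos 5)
Your high-level template—train to land $\coptcommon$ in a subspace neighborhood, then run the cylindrical grids tester—is the right one, but there is a genuine gap in the training-phase guarantee you assume, and it propagates into your soundness argument. You assert that subspace recovery produces $V$ with $\|V-W^*\|_2 \le \subspslack$ relative to the \emph{true} relevant subspace $W^*$ of $\coptcommon$, and hence that $\coptcommon\in\subspneighborhood(\hat\concept)$. That is not what \Cref{theorem:subsp-recovery-cvx-subsp-juntas} provides. It only guarantees the existence of a \emph{proxy} concept $\concept = \lowconcept^*\circ W$ (same low-dimensional body, possibly different subspace matrix $W\ne W^*$) such that (a) $\concept\in\subspneighborhood(\hat\concept)$ and (b) $\concept$ agrees with $\coptcommon$ only on the ball $\{\x:\|W^*\x\|_2 \le \sqrt{k/\slack}\}$. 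The recovered subspace is only ``effectively relevant'': outside that ball $\concept$ and $\coptcommon$ may disagree. Consequently, your soundness chain
\[
\pr_{\Dtestmarginal}[\coptcommon(\x)\neq \hat\concept(\x)] \le \pr_{\Gauss_d}[\coptcommon(\x)\neq \hat\concept(\x)] + \eps/2
\]
does not follow from the grids tester, because the tester certifies a discrepancy bound only for concepts \emph{in} $\subspneighborhood(\hat\concept)$, and $\coptcommon$ need not be one of them.

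The missing ingredient is an additional concentration test and a triangle-inequality detour through the proxy. The paper's testing phase first checks that the top eigenvalue of $\E_{\x\sim\Stest}[\x\x^\top]$ is at most $2$; together with the fact that $\pr_{\Gauss_d}[\|W^*\x\|_2>\sqrt{k/\slack}]\le 2\slack$ (Gaussian concentration, plus an empirical Hoeffding step), this bounds $\pr_{\Dtestmarginal}[\|W^*\x\|_2>\sqrt{k/\slack}]$. Soundness then goes via $\pr_{\Dtestjoint}[\coptcommon\neq\hat\concept] \le \pr_{\Dtestmarginal}[\|W^*\x\|_2>\sqrt{k/\slack}] + \pr_{\Dtestmarginal}[\hat\concept\neq\concept]$, and the second term is what the grids tester actually certifies (since $\concept\in\subspneighborhood(\hat\concept)$). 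Two secondary points: the output $\hat\concept$ from \Cref{theorem:subsp-recovery-cvx-subsp-juntas} is a degree-$\poly(k/\slack)$ polynomial threshold function, not a convex body, so the relevant smoothness parameter is $\hat\slackamp=\poly(k/\slack)$ via \Cref{lemma:smooth_ptf_boundary} rather than a surface-area bound; and the non-asymptotic boundary-smoothness constant for convex sets is $O(k\log k)$ (\Cref{lemma:convex-smooth-boundary}), not Ball's asymptotic $O(k^{1/4})$—this does not change the $(k/\eps)^{O(k)}$ testing cost but the citation is off.
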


\begin{table*}[!htbp]\begin{center}
\begin{tabular}{c c c c} 
 \toprule
 & \textbf{Class $\Dclasstarget$ over $\R^d$} & \textbf{Training Time and Samples} &\textbf{Testing Time and Samples}  \\ \midrule
1 & \begin{tabular}{c} $1$-subgaussian \& \\ Isotropic Log-Concave \end{tabular} & $\poly(d) (\frac{1}{\balance})^{\poly(1/\eps^k)}$ & $\poly(d)(k/\eps)^{O(k^2)}$  \\ \midrule
2 & Isotropic Log-Concave & $\poly(d) (\frac{1}{\balance})^{2^{-O(k^2 \log^2(1/\eps))}}$ & $\poly(d)k^{O(k^3 \log^2(1/\eps))}$ \\ \midrule
3 & \begin{tabular}{c} Fourth Moments Bound: \\ $\E[(\vv\cdot \x)^4] \le C\|\vv\|_2^4$ \& \\ Dimension-$k$ Marginals \\ Density Bound: $C^{k^2}$ \end{tabular} & $\poly(d) (\frac{1}{\balance})^{2^{-O(k^2/\eps)}}$ & $\poly(d)k^{O(k^3/\eps^2)}$ \\
 \bottomrule
\end{tabular}
\end{center}
\caption{Specifications for $\Dclasstarget$-universal $(\eps,\delta)$-TDS learning of $\balance$-balanced convex $k$-subspace juntas. The properties that define the class $\Dclasstarget$ in line 3, hold for some given universal constant $C\ge 1$, for all members of $\Dclasstarget$, for all $\vv\in\R^{d}$ and the density bound holds for any projection on some $k$-dimensional subspace of any member of $\Dclasstarget$.}
\label{table:universa-tds-subspace-juntas}
\end{table*}

In order to obtain a TDS learner for some class $\C$, one might hope to learn a hypothesis $\hat\concept$ during the training phase, such that the subspace neighborhood of $\hat\concept$ (see \Cref{definition:subspace-neighborhood}) contains the ground truth. Then, the test error can be bounded simply by running the localized discrepancy tester of \Cref{theorem:discrepancy-through-cylindrical-grids}, assuming that both $\hat\concept$ and the class $\C$ have smooth boundaries. In \Cref{appendix:smooth-boundary}, we show that, indeed, convex subspace juntas have smooth boundaries. However, for the learning guarantee, prior work in standard PAC learning implicitly provides the following weaker guarantee regarding subspace retrieval for convex subspace juntas, which, as we show, is, nevertheless, still sufficient for our purposes.

\begin{theorem}[Implicit in \cite{vempala2010learning}, see also \cite{klivans2024learning}]\label{theorem:subsp-recovery-cvx-subsp-juntas}
    For any $\slack\in (0,1)$, $\balance\in(0,1/2)$, there is an algorithm that, upon receiving a number of i.i.d. examples from $\Gauss_d$, labeled by some $\balance$-balanced convex $k$-subspace junta $\coptcommon(\x) = \lowconcept^*({W^*}\x)$, runs in time $\poly(d)(\frac{1}{\balance})^{\poly(k/\slack)}$ and returns, w.p. at least $0.99$, some polynomial $\hat q:\R^k\to \cube{}$ of degree at most $\poly(k/\slack)$ and some $V\in\R^{k\times d}$ with $VV^\top = I_k$ such that the following are true for the hypothesis $\hat\concept(\x) = \sign(\hat{q}(V\x))$ and some $\concept(\x) = \lowconcept^*(W\x)$ with $WW^\top = I_k$.
    \begin{enumerate}[label=\textnormal{(}\alph*\textnormal{)}]
        \item\label{item:subsp-recovery-cvx-juntas-1} $\concept\in \subspneighborhood(\hat\concept)$, where $\subspneighborhood$ is the $k$-dimensional $(\slack,\slack)$-subspace neighborhood, i.e., $\|W-V\|_2 \le \slack$ and $\pr_{\x\sim\Gauss_d}[\concept(\x)\neq\hat\concept(\x)] \le \slack$.
        \item\label{item:subsp-recovery-cvx-juntas-2} For any $\x\in\R^d$ with $\|W^*\x\|_2 \le \sqrt{k/{\slack}}$, we have $\concept(\x) = \coptcommon(\x)$. 
    \end{enumerate} 
\end{theorem}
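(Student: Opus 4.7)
The plan follows the canonical two-stage template from \cite{vempala2010learning}: (i) recover the effective subspace by moment-based PCA, then (ii) fit a low-degree polynomial in the recovered coordinates by $L_2$ regression. The three guarantees follow by a careful choice of the auxiliary basis $W$.

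\textbf{Step 1 (spectral recovery of the subspace).} Consider the Chow-type matrix
\[
M \;=\; \E_{\x\sim \Gauss_d}\bigl[\coptcommon(\x)\,(\x\x^\top - I_d)\bigr].
\]
Since $\coptcommon(\x) = \lowconcept^*(W^*\x)$ depends only on $W^*\x$, and the Gaussian coordinates orthogonal to $\mathrm{rowspan}(W^*)$ are independent of those inside it, decomposing $\x = W^{*\top}\y + \x_\perp$ and taking expectations gives
\[
M \;=\; W^{*\top}\, M_0\, W^*,\qquad M_0 := \E_{\y\sim\Gauss_k}\!\bigl[\lowconcept^*(\y)(\y\y^\top - I_k)\bigr].
\]
So $\mathrm{colspan}(M)\subseteq \mathrm{colspan}(W^{*\top})$. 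The quantitative core is a non-degeneracy lemma (Vempala, strengthened in Klivans et al.\ 2024): for a $\balance$-balanced convex indicator in $\R^k$, the smallest nonzero singular value of $M_0$, or of an analogous higher-order Hermite surrogate $\E[\coptcommon(\x) H_\alpha(\x)]$ with $|\alpha|\le \poly(k/\slack)$, is at least $\balance^{\poly(k)}$. Estimating $M$ from $n = \poly(d)(1/\balance)^{\poly(k/\slack)}$ i.i.d.\ samples to produce $\widehat M$ and taking the top-$k$ singular subspace yields an orthonormal $V\in\R^{k\times d}$ whose row space lies within Davis--Kahan distance $\tau\le \slack^2/4$ of $\mathrm{rowspan}(W^*)$.

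\textbf{Step 2 (definition of $W$ and verification of property (b)).} Let $P$ denote orthogonal projection onto $\mathrm{rowspan}(W^*)$, and define $W$ as the Procrustes-optimal orthonormal basis of $\mathrm{rowspan}(W^*)$ closest to $V$, obtained by orthonormalizing $VP$ inside that subspace. Davis--Kahan then gives $\|V - W\|_2 = O(\sqrt\tau)\le \slack$, establishing the first half of (a). Because $W$ and $W^*$ share the same row space, the intermediate concept $\concept(\x):=\lowconcept^*(W\x)$ is identically equal to $\coptcommon(\x)$ on all of $\R^d$, once we interpret the ``$\lowconcept^*$'' appearing in $\concept$ as the $k$-dimensional representative of $\coptcommon$ in the new $W$-basis (formally, $\lowconcept^*_{\mathrm{new}}(\y) := \lowconcept^*_{\mathrm{old}}(R^\top \y)$ for the rotation $R\in O(k)$ with $W=RW^*$). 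In particular, property (b) holds trivially on all of $\R^d$, and a fortiori on the ball $\{\x : \|W^*\x\|_2\le \sqrt{k/\slack}\}$.

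\textbf{Step 3 (low-degree polynomial regression in recovered coordinates).} Convex bodies in $\R^k$ admit degree-$r=\poly(k/\slack)$ $L_1$-sandwiching polynomials under $\Gauss_k$ (see the sandwiching bounds invoked for Corollary~\ref{clry:tds-deg2-ptf}/\ref{clry:tds-ac0-boolean}, applied in the convex setting). Composing with the feature map $\x\mapsto V\x$ and running unconstrained $L_2$ polynomial regression of degree at most $r$ yields a polynomial $\hat q:\R^k\to\R$ with $\E_{\x\sim\Gauss_d}[(\hat q(V\x)-\coptcommon(\x))^2]\le \slack^2/4$, using $\poly(d)(1/\balance)^{\poly(k/\slack)}$ labelled samples. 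Setting $\hat\concept(\x)=\sign(\hat q(V\x))$ and applying Markov to the squared error gives $\pr_\x[\hat\concept(\x)\ne\coptcommon(\x)]\le \slack/4$, which combined with $\concept = \coptcommon$ from Step~2 delivers the second half of (a).

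\textbf{Main obstacle.} The quantitative crux is the non-degeneracy of $M_0$ (or its higher-Hermite surrogate): for bodies that are symmetric in certain directions the second-moment matrix can be rank-deficient in precisely the directions we need, forcing a search over Hermite tensors up to degree $\poly(k/\slack)$ and a lower bound on the combined smallest singular value of the form $\balance^{\poly(k)}$. Propagating this spectral gap through Davis--Kahan to guarantee $\|V-W\|_2\le \slack$, while keeping the total sample and runtime cost at $\poly(d)(1/\balance)^{\poly(k/\slack)}$, is the delicate part of the analysis; the subsequent polynomial-regression step is standard.
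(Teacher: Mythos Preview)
The paper does not prove this theorem---it is cited as implicit in \cite{vempala2010learning} and \cite{klivans2024learning}---so there is no in-paper proof to compare against. That said, your sketch has a genuine gap that is worth flagging.

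Your Step 1 asserts a non-degeneracy lemma giving a uniform lower bound $\balance^{\poly(k)}$ on the smallest singular value of $M_0$ (or a higher-Hermite surrogate), from which full recovery of $\mathrm{rowspan}(W^*)$ would follow via Davis--Kahan. This claim is false as stated. Consider $k=2$ with $\lowconcept^*(\z)=\ind\{z_1\ge 0,\; z_2\ge -T\}$: the body is $\balance$-balanced with $\balance\approx 1/2$ for all large $T$, yet every moment $\E[\lowconcept^*(\z)\,q(z_2)]$ in the $z_2$ direction is $O(e^{-T^2/2})$, which can be made smaller than any fixed $\balance^{\poly(k)}$ threshold. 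No algorithm using moments of degree $\poly(k/\slack)$ will detect the $z_2$ direction once $T\gg\sqrt{k/\slack}$, so the recovered $V$ cannot be close (as a subspace) to $\mathrm{rowspan}(W^*)$.

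This degeneracy is exactly why condition (b) is stated only on the ball $\|W^*\x\|_2\le\sqrt{k/\slack}$ rather than everywhere: the undetectable directions are precisely those whose constraints are inactive on that ball. Your Step 2 tries to sidestep the issue by reinterpreting $\lowconcept^*$ in a rotated basis so that $\concept\equiv\coptcommon$ globally; but the theorem uses the same $\lowconcept^*$ in both $\coptcommon$ and $\concept$, and the presence of the explicit radius $\sqrt{k/\slack}$ in (b) signals that $\concept$ and $\coptcommon$ are not meant to coincide everywhere. The correct construction chooses $W$ to match $W^*$ on the detectable directions (so that (a) holds) and to match $V$ on the undetectable ones, and then argues that on the ball those undetectable constraints are vacuous for both $W\x$ and $W^*\x$---this last step is where the radius $\sqrt{k/\slack}$ and the moment-threshold interact, and it is the piece your outline is missing.
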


We are now ready to prove \Cref{theorem:tds-cvx-subspace-juntas}.

\begin{proof}[Proof of \Cref{theorem:tds-cvx-subspace-juntas}]
    Our plan is to combine \Cref{theorem:discrepancy-through-cylindrical-grids} with \Cref{theorem:subsp-recovery-cvx-subsp-juntas}. We will use an additional test, to account for the fact that \Cref{theorem:subsp-recovery-cvx-subsp-juntas} does not provide exact subspace recovery, but, rather, recovery of the effectively relevant subspace (see \cref{item:subsp-recovery-cvx-juntas-2}). 

    Suppose that the training distribution $\Dtrainjoint$ has marginal $\Dtrainmarginal = \Gauss_d$ and that the labels (both in training and in test distribution $\Dtestjoint$ as well) are generated by some $\balance$-balanced convex $k$-subspace junta $\coptcommon:\R^d\to \cube{}$, where $\coptcommon(\x) = \lowconcept^*(W^*\x)$ for some $W^*\in\R^{k\times d}$ with $W^*{W^*}^\top = I_k$. 

    \paragraph{Learning Phase.} The learner runs the algorithm of \Cref{theorem:subsp-recovery-cvx-subsp-juntas} for $\slack$ chosen so that the error parameter $\eps'(\slack)$ of \Cref{theorem:discrepancy-through-cylindrical-grids} is at most $\eps'\le \eps/3$  using labeled examples from $\Dtrainjoint$ and computes $\hat\concept(\x) = \sign(\hat q(V\x))$ with the corresponding specifications. For the particular choice of $\slack$, see \Cref{corollary:cylindrical-grids}, where $\slackamp = \poly(k)$ according to \Cref{lemma:convex-smooth-boundary}.

    \paragraph{Testing Phase.} The tester first computes the maximum eigenvalue of the matrix $\E_{\x\sim\Stest}[\x\x^\top]$ using samples $\Stest$ drawn from $\Dtestmarginal$ and rejects if the quantity is larger than $2$. Then, the tester runs the localized discrepancy tester of \Cref{theorem:discrepancy-through-cylindrical-grids} and rejects or accepts accordingly.

    \paragraph{Testing Run-Time.} To bound the testing run-time we use \Cref{corollary:cylindrical-grids}, where $\slackamp = \poly(k)$ (because $\C$ is the class of convex subspace juntas and due to \Cref{lemma:convex-smooth-boundary}) and $\hat\slackamp = \poly(k/\slack)$, because $\hat\concept$ is a polynomial threshold function of degree $\poly(k/\slack)$ and, therefore, has $\poly(k/\slack)$-smooth boundary according to \Cref{lemma:smooth_ptf_boundary}.

    \paragraph{Soundness.} If the tester accepts \& $|\Stest|\ge \poly(1/\eps)$, then we have $\pr_{\x\sim\Dtestmarginal}[\|W^*\x\|_2 > \sqrt{k/\slack}] \le \pr_{\x\sim\Stest}[\|W^*\x\|_2 > \sqrt{k/\slack}] + \eps/6$ (by the Hoeffding bound) and $\pr_{\x\sim\Stest}[\|W^*\x\|_2 > \sqrt{k/\slack}] \le 2\slack \le \eps/6$ for $\slack \le \eps/12$. Hence, overall, by combining \Cref{theorem:subsp-recovery-cvx-subsp-juntas} with the guarantees from the fact that the testing phase has accepted, we have 
    \begin{align*}
        \error(\hat\concept;\Dtestjoint) &= \pr_{\x\sim\Dtestjoint}[\coptcommon(\x)\neq \hat\concept(\x)] \\
        &\le \pr_{\x\sim\Dtestmarginal}[\|W^*\x\|_2 > \sqrt{k/\slack}] + \pr_{\x\sim\Dtestmarginal}[\hat\concept(\x)\neq \concept(\x)] \\
        &\le \frac{\eps}{3} + \pr_{\x\sim\Gauss_d}[\hat\concept(\x)\neq \concept(\x)] + \frac{\eps}{3} \\
        &\le \frac{2\eps}{3} + \slack \le \eps\,,
    \end{align*}
    where we used the soundness property of the cylindrical grids tester (\Cref{theorem:discrepancy-through-cylindrical-grids} and \Cref{corollary:cylindrical-grids}) and the fact that $\concept$ is a hypothesis with the properties specified in \Cref{theorem:subsp-recovery-cvx-subsp-juntas} and, in particular, lies within the subspace neighborhood of $\hat\concept$.

    \paragraph{Completeness.} Combine the completeness guarantee of \Cref{theorem:discrepancy-through-cylindrical-grids} and the fact that $\E_{\x\sim\Stest}[\x\x^\top]$ has, with probability at least $0.99$, bounded maximum eigenvalue whenever $\Dtestmarginal$ lies within $\Dclasstarget$ (for any $\Dclasstarget$ in \Cref{table:universa-tds-subspace-juntas}) and $|S_\test| \ge \poly(d)$.  
\end{proof}

\section{Testing Boundary Proximity}\label{appendix:boundary-proximity}

We now focus on classes of low-dimensional concepts (see \Cref{definition:subspace-junta}) that are locally structured. In particular, we consider subspace juntas that are locally balanced, meaning that near any point $\x$ in the domain, there are several points with the same label as $\x$. This condition is important to ensure that there are, for example, no zero measure regions over the (Gaussian) training distribution that contain significant information about the ground truth. We will show that this condition actually enables significant improvements for the testing runtime for TDS learning. More formally, we give the following definition.

\begin{definition}[Locally Balanced Concepts]\label{definition:locally-balanced}
    For $R\ge 1$ and $r, \localbalance \in (0,1)$, we say that a function $\lowconcept:\R^k\to \cube{}$ is $(R,r)$-locally $\localbalance$-balanced if for any $\varrho \le r$ and $\x\in\R^k$ with $\|\x\|_2\le R$, the following is true.
    \[
        \pr_{\z\sim \Gauss_k}[ \lowconcept(\z)= \lowconcept(\x) \;|\; \z\in \ball_k(\x,\varrho)] > \localbalance
    \]
    For a subspace junta $\concept(\x) = \lowconcept(W\x)$, we say that $\concept$ is $(R,r)$-locally $\localbalance$-balanced on the relevant subspace if $\lowconcept$ is $(R,r)$-locally $\localbalance$-balanced.
\end{definition}

For locally balanced concepts, it is possible to obtain efficient localized discrepancy testers with respect to the disagreement neighborhood, i.e., the neighborhood of concepts that have low disagreement with the reference hypothesis $\hat\concept$ under the Gaussian distribution (or, in general, the reference distribution at hand).

\begin{definition}[Disagreement Neighborhood]\label{definition:disagreement-neighborhood}
    Let $\H$ and $\C$ be some concept classes. We define the (Gaussian) $\errorslack$-disagreement neighborhood $\disneighborhood: \H\to \powset(\C)$ as follows for any $\hat\concept \in \H$.
    \[
        \disneighborhood(\hat\concept) = \{\concept\in \C \,|\,  \pr_{\x\sim \Gauss}[\concept(\x)\neq \hat\concept(\x)] \le \errorslack \}
    \]
\end{definition}

We also define the boundary proximity tester, which directly tests whether the probability of falling close to the boundary of some reference hypothesis $\hat\concept$ is appropriately bounded. This testing problem can be solved efficiently, for example, for the fundamental class of halfspace intersections.

\begin{definition}[Boundary Proximity Tester]\label{definition:tester-for-boundary-proximity}
    For $\hat\slackamp\ge 1$, $\varrho\in(0,1)$, let $\H$ be some class of functions from $\R^d$ to $\cube{}$ and let $\Dclasstarget$ be some class of distributions over $\R^d$. The tester $\Tester$ is called a $(\varrho,\hat\slackamp)$-boundary proximity tester for $\H$ with respect to $\Dclasstarget$ if, upon receiving some $\hat\concept\in\H$ and a set $\Sunlabelled$ of points in $\R^d$, the tester either accepts or rejects and satisfies the following.
    \begin{enumerate}[label=\textnormal{(}\alph*\textnormal{)}]
    \item (Soundness.) If $\T$ accepts, then $\pr_{\x\sim \Sunlabelled}[\x\in\partial_\varrho \hat\concept] \le \hat\slackamp \varrho$.
    \item (Completeness.) If $\Sunlabelled$ consists of (at least) $m_\Tester$ i.i.d. examples from some distribution in $\Dclasstarget$, then the tester $\Tester$ accepts with probability at least $99\%$.
    \end{enumerate}
\end{definition}

Note that the complexity of boundary proximity testing depends on the simplicity of $\hat\concept$ and, therefore, considering applications in TDS learning, where $\hat\concept$ is the output of the learning algorithm, highlights the importance of proper learning algorithms that output some simple hypothesis with low error. Since the hypothesis is simple, disagreement-localized discrepancy testing is tractable and since its error is low, the ground truth is likely within the disagreement neighborhood and disagreement-localized discrepancy testing suffices to guarantee low test error.

\subsection{Discrepancy Testing Result}

In order to obtain a localized discrepancy tester assuming access to a boundary proximity tester, we first show a simple proposition connecting local balance condition with boundary proximity testing. In particular, if two functions have low Gaussian disagreement, but one of them is locally balanced, then all of the points of disagreement are either close to the boundary of the other function, or far from the origin.

\begin{proposition}[Localization of Disagreement from Locally Balanced Concepts]\label{proposition:localization-from-locally-balanced}
    Let $\lowconcept,\hat\lowconcept:\R^k\to \cube{}$, where $\lowconcept$ is $(R,\varrho)$-locally $\localbalance$-balanced and $\lowconcept,\hat\lowconcept$ have disagreement $\slack = \localbalance \inf_{\|\x\|_2\le R}\pr_{\z\sim\Gauss_k}[\z\in \ball_k(\x,\varrho)]$, i.e., $\pr_{\z\sim\Gauss_k}[\lowconcept(\z)\neq \hat\lowconcept(\z)]\le \slack$. Then, for any $\x$ with $\|\x\|_2 \le R$ and $\lowconcept(\x)\neq\hat\lowconcept(\x)$, we have $\x\in\partial_\varrho \hat\lowconcept$.
\end{proposition}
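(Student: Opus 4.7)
The plan is to argue by contradiction. Suppose $\x$ satisfies $\|\x\|_2\le R$ and $\lowconcept(\x)\neq \hat\lowconcept(\x)$, yet $\x\notin \partial_\varrho \hat\lowconcept$. By the definition of the boundary (\Cref{definition:boundary}), the latter means that $\hat\lowconcept$ is constant on the ball $\ball_k(\x,\varrho)$, taking the value $\hat\lowconcept(\x)$ everywhere inside this ball.

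Next I would invoke the local balance hypothesis at the point $\x$, which is applicable since $\|\x\|_2\le R$ and we use $\varrho$ as the radius (which is allowed since the definition quantifies over all radii at most the given one, and we may take equality). This yields
\[
    \pr_{\z\sim\Gauss_k}[\lowconcept(\z)=\lowconcept(\x) \mid \z\in \ball_k(\x,\varrho)] > \localbalance\,,
\]
which, after multiplying both sides by $\pr_{\z\sim\Gauss_k}[\z\in\ball_k(\x,\varrho)]$, gives
\[
    \pr_{\z\sim\Gauss_k}[\lowconcept(\z)=\lowconcept(\x)\text{ and }\z\in\ball_k(\x,\varrho)] > \localbalance \cdot \pr_{\z\sim\Gauss_k}[\z\in \ball_k(\x,\varrho)] \ge \slack\,,
\]
where the last step uses the definition $\slack=\localbalance\inf_{\|\x\|_2\le R}\pr_{\z\sim\Gauss_k}[\z\in\ball_k(\x,\varrho)]$ together with $\|\x\|_2\le R$.

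On the event inside the probability above, we have $\z\in\ball_k(\x,\varrho)$, so $\hat\lowconcept(\z)=\hat\lowconcept(\x)$ by our contradiction hypothesis; combined with $\lowconcept(\z)=\lowconcept(\x)\neq \hat\lowconcept(\x)=\hat\lowconcept(\z)$, this forces $\lowconcept(\z)\neq \hat\lowconcept(\z)$ on this entire event. Hence $\pr_{\z\sim\Gauss_k}[\lowconcept(\z)\neq \hat\lowconcept(\z)] > \slack$, contradicting the hypothesis that the Gaussian disagreement is at most $\slack$. Therefore $\x\in \partial_\varrho \hat\lowconcept$, as claimed. The whole argument is essentially a definition chase, and the only subtlety is matching the ``locally balanced'' pointwise lower bound with the chosen threshold $\slack$; the quantifier over $\|\x\|_2\le R$ in the definition of $\slack$ is precisely what makes the bound tight enough at every admissible $\x$.
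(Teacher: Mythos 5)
Your proof is correct and follows essentially the same route as the paper's: contradiction via the observation that $\x\notin\partial_\varrho\hat\lowconcept$ forces $\hat\lowconcept$ to be constant on $\ball_k(\x,\varrho)$, then applying the local balance condition at $\x$ and the definition of $\slack$ (as $\localbalance$ times the infimum of the ball mass over $\|\x\|_2\le R$) to force the Gaussian disagreement above $\slack$. If anything, you spell out the quantitative step linking the conditional balance bound to $\slack$ slightly more explicitly than the paper does, which is fine.
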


\begin{proof}[Proof of \Cref{proposition:localization-from-locally-balanced}]
    Suppose, for contradiction, that there exists some $\x\in\R^k$ with $\|\x\|_2 \le R$ and $\lowconcept(\x)\neq\hat\lowconcept(\x)$, for which $\x\notin\partial_\varrho \hat\lowconcept$. Then, it must be that $\hat\lowconcept(\z) = \hat\lowconcept(\x)$ for all $\z\in \ball_k(\x,\varrho)$ (otherwise, $\x\in \partial_\varrho \hat\lowconcept$). We have that $\pr_{\z\sim\Gauss_k}[\lowconcept(\z)\neq \hat\lowconcept(\z)] \ge \pr_{\z\sim\Gauss_k}[\z\in\ball_k(\x,\varrho) \text{ and }\lowconcept(\z)\neq \hat\lowconcept(\z)]$ and also $\lowconcept(\z)\neq \hat\lowconcept(\z)$ is equivalent to $\lowconcept(\z)\neq \hat\lowconcept(\x)$ (because $\hat\lowconcept(\z) = \hat\lowconcept(\x)$), which, in turn, is equivalent to $\lowconcept(\z) = \lowconcept(\x)$ (because $\lowconcept(\x)\neq\hat\lowconcept(\x)$). Overall, $\pr_{\z\sim\Gauss_k}[\lowconcept(\z)\neq \hat\lowconcept(\z)] \ge \pr_{\z\sim\Gauss_k}[\z\in\ball_k(\x,\varrho) \text{ and }\lowconcept(\z) = \lowconcept(\x)] > \slack$, by assumption, and we reached contradiction.
\end{proof}

\begin{remark}\label{remark:locally-balanced-beyond-gaussian}
    Note that \Cref{proposition:localization-from-locally-balanced} is not specialized to the Gaussian disagreement between $\lowconcept$ and $\hat\lowconcept$, but would also work for any distribution $\Dgenericlow$, if the local balance (\Cref{definition:locally-balanced}) was also defined w.r.t. $\Dgenericlow$.
\end{remark}

We combine the boundary proximity tester with a moment matching tester for concentration (to bound the probability of falling far from the origin) to obtain a non-universal localized discrepancy tester (\Cref{theorem:boundary-proximity-1}). If we instead use a spectral tester for concentration, we obtain a universal localized discrepancy tester (\Cref{theorem:boundary-proximity-2}).

\begin{theorem}[Discrepancy Testing through Boundary Proximity]\label{theorem:boundary-proximity-1}
    Let $p\in\nats$, $R, \hat\slackamp\ge 1$, $r,\localbalance\in(0,1)$ and $0\le \errorslack\le \frac{\localbalance r^k}{k^{k/2}}e^{-2R^2}$. Let also $\H$ and $\C$ be a classes whose elements are $k$-subspace juntas over $\R^d$ and $\disneighborhood:\H\to\powset(\C)$ the $\errorslack$-disagreement neighborhood.
    Assume that the elements of $\C$ are $(R,r)$-locally $\localbalance$-balanced on the relevant subspaces and let $\Tester$ be a $(\varrho,\hat\slackamp)$-boundary proximity tester for $\H$ w.r.t. $\Gauss_d$, requiring $m_\Tester$ samples, with $\varrho = (\frac{\errorslack}{\localbalance})^{1/k}\sqrt{k}e^{2R^2/k}$. 
    For any $\eps\in(0,1)$, there is a $(\disneighborhood, \accuracy+\eps)$-tester for localized discrepancy from $\Gauss_d$ with respect to $\Gauss_d$ with sample complexity $m = m_\Tester + {O(dk)^{4p+1}} + O(\frac{1}{\eps^2})$, that calls $\Tester$ once and uses additional time $O(m d^{2p+1})$, where the error parameter $\accuracy$ is
    \[
        \accuracy = 2\Bigr(\frac{4kp}{R^2}\Bigr)^p + \hat\slackamp\sqrt{k}\Bigr(\frac{\errorslack  \exp(2R^2)}{\localbalance }\Bigr)^{1/k}
    \]
\end{theorem}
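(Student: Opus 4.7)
The plan is to build the tester as the conjunction of two subtests run on the same i.i.d. sample $\Sunlabelled\sim\Dtarget^{\otimes m}$: (i) a moment-matching concentration test that estimates the $2p$-th directional moments $\E_{\Sunlabelled}[(\vv\cdot \x)^{2p}]$ along the rows $\vv$ of the relevant subspace $V\in\R^{k\times d}$ of $\hat\concept$ and rejects if any exceeds, say, $2\cdot(2p-1)!!$; and (ii) the given boundary-proximity tester $\Tester$ applied to $(\hat\concept, \Sunlabelled)$. Accept iff both subtests accept. The moment test certifies (via Markov on $\|V\x\|_2^{2p}=\sum_i(\vv_i\cdot\x)^{2p}$) that $\pr_{\x\sim \Sunlabelled}[\|V\x\|_2>R]\le 2(4kp/R^2)^p$. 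The sample budget $m_\Tester + O(dk)^{4p+1} + O(1/\eps^2)$ decomposes accordingly: $m_\Tester$ for $\Tester$, a Chebyshev-style budget polynomial in $(dk)^{4p}$ for estimating the $k$ moments to sub-constant accuracy, and $O(1/\eps^2)$ extra samples for the final Hoeffding conversion.

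For soundness, fix any $\concept\in\disneighborhood(\hat\concept)$, so $\pr_{\x\sim\Gauss_d}[\concept(\x)\neq \hat\concept(\x)]\le \errorslack$ and $\concept$ is $(R,r)$-locally $\localbalance$-balanced on its relevant subspace. Decompose the event $\{\concept(\x)\neq \hat\concept(\x)\}$ into the tail $\{\|V\x\|_2>R\}$ (bounded above by the concentration certificate) and its complement. For the bounded-norm part, the key step is to invoke \Cref{proposition:localization-from-locally-balanced}. With the stated choice $\varrho=(\errorslack/\localbalance)^{1/k}\sqrt{k}\,e^{2R^2/k}$, a direct Gaussian-density lower bound on an inscribed cube gives $\pr_{\z\sim\Gauss_k}[\z\in \ball_k(\y,\varrho)]\ge (\varrho/\sqrt{k})^k e^{-2R^2}$ uniformly over $\|\y\|_2\le R$, which rearranges exactly to $\errorslack\le \localbalance\cdot\inf_{\|\y\|_2\le R}\pr_{\z\sim\Gauss_k}[\z\in \ball_k(\y,\varrho)]$; this is the hypothesis on $\errorslack$ in the theorem. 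The Proposition then forces $V\x\in\partial_\varrho \hat\lowconcept$, and since $VV^\top=I_k$ this lifts to $\x\in\partial_\varrho \hat\concept$ in $\R^d$. The soundness of $\Tester$ yields $\pr_{\x\sim\Sunlabelled}[\x\in\partial_\varrho\hat\concept]\le \hat\slackamp\varrho$. Summing both contributions gives the empirical bound $\accuracy=2(4kp/R^2)^p + \hat\slackamp\sqrt{k}(\errorslack\,e^{2R^2}/\localbalance)^{1/k}$, and a Hoeffding step on the disagreement indicator upgrades this to $\accuracy+\eps$ on $\Dtarget$. The discrepancy bound follows because the subtracted $\pr_{\Gauss_d}[\hat\concept\neq \concept]$ term is nonnegative.

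Completeness is routine: when $\Dtarget=\Gauss_d$ the true $2p$-th moment is $(2p-1)!!$, and $O(dk)^{4p+1}$ samples suffice (by variance estimation for degree-$2p$ Gaussian polynomials plus a union bound over the $k$ rows of $V$) to keep each empirical estimate within the acceptance window with probability at least $0.99$; $\Tester$ accepts by its own completeness hypothesis.

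The main obstacle is the application of \Cref{proposition:localization-from-locally-balanced} when the relevant $k$-subspaces of $\concept$ and $\hat\concept$ need not be aligned, since the Proposition is phrased for two functions on a common $\R^k$. The cleanest resolution, enabled by the Remark following the Proposition, is to pass to $\R^d$ after showing that the $(R,r)$-local $\localbalance$-balance of $\lowconcept$ in $\R^k$ together with $WW^\top=I_k$ yields a suitable $\R^d$-local balance for $\concept$ under $\Gauss_d$; any loss in the balance constant is absorbed by the exponential factor $e^{2R^2/k}$ in $\varrho$, which is precisely what drives the admissible range $\errorslack\le \localbalance r^k k^{-k/2}e^{-2R^2}$ appearing in the hypothesis. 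The remaining calculations to match the stated $\accuracy$ and sample complexity are bookkeeping.
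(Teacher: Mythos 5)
Your overall architecture (moment-matching concentration test plus the boundary-proximity tester, with soundness via \Cref{proposition:localization-from-locally-balanced}) matches the paper, but there is a genuine gap in how you resolve the misaligned-subspace issue, and a correlated gap in the moment test.

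The paper does not pass to $\R^d$. Instead it forms a $2k$-dimensional matrix $U$ whose row span contains both row spans of $W$ (the subspace of $\concept$) and $V$ (the subspace of $\hat\concept$), sets $\lowconcept'(\z)=\lowconcept(WU^\top\z)$ and $\hat\lowconcept'(\z)=\hat\lowconcept(VU^\top\z)$, and applies \Cref{proposition:localization-from-locally-balanced} to $\lowconcept',\hat\lowconcept'$ on $\R^{2k}$. Your proposed alternative — lifting the local-balance condition to $\R^d$ and invoking the proposition there — does not work quantitatively. The proposition requires $\errorslack\le \localbalance\inf_{\|\x\|_2\le R}\pr_{\z}[\z\in\ball(\x,\varrho)]$, and in $\R^d$ the ball probability scales like $(\varrho/\sqrt{d})^d$ (up to Gaussian-density factors), whereas the hypothesis of the theorem only grants $\errorslack\le \localbalance r^k k^{-k/2}e^{-2R^2}$, a quantity that decays in $k$ but not in $d$. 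For $d\gg k$ this is hopelessly off; the factor $e^{2R^2/k}$ in $\varrho$ cannot absorb an exponential-in-$d$ loss. The $2k$-dimensional intermediary $U$ is exactly what keeps the dimension appearing in the ball-volume at $O(k)$ rather than $d$, and \Cref{remark:locally-balanced-beyond-gaussian} is about changing the reference distribution, not the ambient dimension.

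The second gap is correlated: because $U$ depends on the unknown $W$, the concentration test cannot restrict itself to moments along the rows of $V$, as your test does. The paper's Step 1 matches \emph{all} monomial moments $\E[\x^\mindex]$ with $\|\mindex\|_1\le 2p$ to accuracy $\Delta=(2kd)^{-2p}$, so that afterwards $\E_{\Sunlabelled}[\|U\x\|_2^{2p}]$ can be controlled for \emph{any} $U$ of the right shape by expanding $\|U\x\|_2^{2p}$ as a degree-$2p$ polynomial with $\ell_1$ coefficient mass at most $(4kd^2)^p$. Your $k$-direction test along $V$ says nothing about $\|W\x\|_2$ when $W$ sits outside the row span of $V$, which is allowed since the disagreement neighborhood $\disneighborhood$ (unlike the subspace neighborhood $\subspneighborhood$) imposes no geometric closeness between $W$ and $V$. (Minor: $\|V\x\|_2^{2p}\ne\sum_i(\vv_i\cdot\x)^{2p}$; equality only holds at $p=1$.) The completeness sketch and the final Hoeffding step are fine.
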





\begin{proof}[Proof of \Cref{theorem:boundary-proximity-1}]
    Let $\varrho = (\errorslack/\localbalance)^{1/k}\sqrt{k}\exp(2R^2/k)$, $\Delta = \frac{1}{(2kd)^{2p}}$ and let $(\hat\concept,\Sunlabelled)$ be an instance of the localized discrepancy problem (see \Cref{definition:localized-discrepancy}). The algorithm does the following.

    \begin{enumerate}
        \item For each $\mindex\in\N^d$ with $\|\mindex\|_1\le 2p$, compute the quantities $M_\mindex = \E_{\x\sim \Sunlabelled}[\x^\mindex] = \E_{\x\sim\Sunlabelled}[\prod_{i\in[d]}\x_i^{\mindex_i}]$ and \textbf{reject} if for some $\mindex$ as such, we have $|M_\mindex - \E_{\x\sim\Gauss}[\x^\mindex]|> \Delta$.
        \item Run the boundary proximity tester $\Tester$ with inputs $(\varrho, \hat\concept, \Sunlabelled)$ and \textbf{reject} if $\Tester$ rejects.
        \item Otherwise, \textbf{accept}.
    \end{enumerate}

    \paragraph{Soundness.} Assume, first, that all of the tests have passed. We will show that for any $\concept\in\disneighborhood(\hat\concept)$, we have $\pr_{\x\sim \Sunlabelled}[\concept(\x)\neq \hat\concept(\x)] \le \accuracy$. Since the event that $\hat\concept(\x)\neq \concept(\x)$ is independent for each $\x \in \Sunlabelled$, we may apply the Hoeffding bound to show that $\pr_{\x\sim \Dtarget}[\hat\concept(\x)\neq \concept(\x)] \le \accuracy+\eps$ with probability at least $3/4$ whenever $|\Sunlabelled| \ge \frac{3}{\eps^2}$. Since $\concept$ and $\hat\concept$ are $k$-subspace juntas, we have that $\concept(\x) = \lowconcept(W\x)$ and $\hat\concept(\x)= \hat\lowconcept(V\x)$ for $W,V\in \R^{k\times d}$ so that $WW^\top = VV^\top = I_k$. Let $U\in\R^{2k\times d}$ be a matrix such that $UU^\top = I_{2k}$ and the span of the rows of $U$ contains the span of the rows of $W$ and of $V$ taken together. This, together with the fact that $WW^\top = I_k$, imply that for any $\x\in \R^d$ we have $W\x = WU^\top U\x$ and, similarly, $V\x = VU^\top U\x$ (the part of $\x$ that falls within the subspace spanned by the rows of $W$ does not change by applying the projection matrix $U^\top U$ and the remaining part is irrelevant). Moreover, we have that $\|U\|_2 = \|U^\top\|_2 = \|W\|_2 = \|V\|_2^\top = 1$. Let $\lowconcept'(\z) = \lowconcept(WU^\top \z)$ and $\hat{\lowconcept}'(\z) = \hat\lowconcept(VU^\top\z)$. 
    
    We have that $\pr_{\x\sim \Gauss}[\lowconcept'(U\x) \neq \hat\lowconcept'(U\x)] \le \errorslack$, by assumption. By \Cref{proposition:localization-from-locally-balanced}, applied on $\lowconcept', \hat\lowconcept'$, and since $\errorslack \le \frac{\localbalance r^k}{k^{k/2}}e^{-2R^2}$, we have that for any $\x\in\R^d$ such that $\lowconcept'(U\x) \neq \hat\lowconcept'(U\x)$ (i.e., $\lowconcept(W\x) \neq \hat\lowconcept(V\x)$)) at least one of the following is true: (a) $\|U\x\|_2 \ge R$ or (b) $U\x \in \partial_\varrho \hat\lowconcept'$. According to \Cref{proposition:low-dim-smooth-boundary}, $U\x\in \partial\varrho \hat\lowconcept'$ implies that $VU^\top U\x \in \partial_\varrho\hat\lowconcept$, which, in turn, implies that $V\x \in \partial_\varrho\hat\lowconcept$, since $V\x = VU^\top U\x$ and therefore, by \Cref{proposition:low-dim-smooth-boundary} we also have that $\x\in \partial_\varrho\hat\concept$. Therefore, overall, we have
    \begin{align*}
        \pr_{\x\sim \Sunlabelled}[\concept(\x)\neq \hat\concept(\x)]  &\le \pr_{\x\sim \Sunlabelled}[\|U\x\|_2 \ge R]+\pr_{\x\sim \Sunlabelled}[\x\in\partial_\varrho\hat\concept]
    \end{align*}
    In order to bound the term $\pr_{\x\sim \Sunlabelled}[\|U\x\|_2 \ge R]$, we use the fact that the test of step 1 of the algorithm has passed. In particular, by applying Markov's inequality appropriately, we obtain that $\pr_{\x\sim \Sunlabelled}[\|U\x\|_2 \ge R] \le \frac{1}{R^{2p}}\E_{\x\sim \Sunlabelled}[\|U\x\|_2^{2p}]$. Note that the expression $\|U\x\|_2^{2p}$ corresponds to a polynomial of degree at most $2p$ and corresponding to coefficient vector whose absolute ($\ell_1$) norm is bounded by $(4kd^2)^p$. In particular, we have that (for all $\x\in \R^d$) $\|U\x\|_2^{2p} = \sum_{\mindex\in\N^d}c_\mindex \x^\mindex$ (recall that $\x^\mindex = \prod_{i\in[d]}\x_i^{\mindex_i}$), where $\sum_{\mindex\in\N^d}|c_\mindex| \le (4kd^2)^p$ and $c_\mindex = 0$ whenever $\|\mindex\|_1 > 2p$. Therefore, by linearity of expectation, we have $\E_{\x\sim\Sunlabelled}[\|U\x\|_2^{2p}] = \sum_{\mindex}c_\mindex\E_{\x\sim \Sunlabelled}[\x^\mindex] = \sum_{\mindex}c_\mindex(\E_{\x\sim \Gauss}[\x^\mindex] + \Delta_\mindex) = \E_{\x\sim\Gauss}[\|U\x\|_2^{2p}] + \sum_{\mindex}c_\mindex \Delta_\mindex$, where $|\Delta_\mindex| \le \frac{1}{(2kd)^{2p}}$ for any $\mindex$ with $\|\mindex\|_1\le 2p$. Hence, overall, we have $\E_{\x\sim\Sunlabelled}[\|U\x\|_2^{2p}] \le \E_{\x\sim\Gauss}[\|U\x\|_2^{2p}] + 1 \le 2(4kp)^p$, which implies that $\pr_{\x\sim \Sunlabelled}[\|U\x\|_2 \ge R] \le 2\frac{(4kp)^p}{R^{2p}}$. 

    For the term $\pr_{\x\sim \Sunlabelled}[V\x\in\partial_\varrho\hat\lowconcept]$, we use the fact that the tester $\Tester$ has accepted and hence we have $\pr_{\x\sim \Sunlabelled}[\x\in \partial_\varrho \hat\concept] \le \hat\slackamp \varrho \le \hat\slackamp(\frac{\errorslack \exp(2R^2)}{\localbalance k^{-k/2}})^{1/k}$. We have shown that $\pr_{\x\sim \Sunlabelled}[\concept(\x)\neq \hat\concept(\x)] \le \accuracy$, as desired.

    \paragraph{Completeness.} Suppose now that $\Sunlabelled$ consists of i.i.d. examples from the Gaussian distribution $\Gauss_d$. To ensure that with probability at least $9/10$, the tests of step 1 pass, we pick $|\Sunlabelled| \ge \frac{(Cdk)}{\Delta^2}$, for some sufficiently large $C$. This is because the Gaussian moments concentrate (e.g., due to Chebyshev's inequality) as well as a union bound. For step 2, it suffices that $|\Sunlabelled|\ge m_\Tester$.
\end{proof}

We now give our universal discrepancy tester though testing boundary proximity.

\begin{theorem}[Universal Discrepancy Testing through Boundary Proximity]\label{theorem:boundary-proximity-2}
    In the setting of \Cref{theorem:boundary-proximity-1}, if the tester $\Tester$ works with respect to a class $\Dclasstarget$ of distributions over $\R^d$ such that for some $\concparam\ge 1$ we have $\sup_{\vv\in\S^{d-1}}\E_{\x\sim \D}[(\vv\cdot\x)^4] \le \concparam$ for all $\D\in \Dclasstarget$, then there is a $(\disneighborhood, \accuracy+\eps)$-tester for localized discrepancy from $\Gauss_d$ with respect to $\Dclasstarget$ with sample complexity $m = m_\Tester + 20d^4+\frac{3}{\eps^2}$, that calls $\Tester$ once and uses additional time $O(m d^{2} + d^3)$, where the error parameter $\accuracy$ is
    \[
        \accuracy = \frac{4k\concparam}{R^2} + \hat\slackamp\sqrt{k}\Bigr(\frac{\errorslack \exp(2R^2)}{\localbalance }\Bigr)^{1/k}
    \]
\end{theorem}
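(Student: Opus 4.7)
The plan is to follow the blueprint of Theorem~\ref{theorem:boundary-proximity-1}, keeping the soundness decomposition based on Proposition~\ref{proposition:localization-from-locally-balanced} but replacing the low-degree moment-matching concentration test (which was tied to Gaussian moments) with a spectral test on the empirical covariance matrix. The spectral test passes for any distribution whose second moments are bounded, which follows from the fourth-moment hypothesis on $\Dclasstarget$ via Cauchy--Schwarz, thereby yielding universality.

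Concretely, the tester computes $M = \E_{\x\sim\Sunlabelled}[\x\x^\top]$ and rejects if $\lambda_{\max}(M) > 2\concparam$; it then invokes the boundary proximity tester $\Tester$ on $(\hat\concept,\Sunlabelled)$ with $\varrho = (\errorslack/\localbalance)^{1/k}\sqrt{k}\exp(2R^2/k)$, accepting only if $\Tester$ also accepts. For soundness, assume both tests pass and fix $\concept\in\disneighborhood(\hat\concept)$. Lifting to a common $2k$-dimensional subspace via an orthonormal $U\in\R^{2k\times d}$ whose row span contains the relevant subspaces of $\concept$ and $\hat\concept$, and defining $\lowconcept',\hat\lowconcept':\R^{2k}\to\cube{}$ exactly as in Theorem~\ref{theorem:boundary-proximity-1}, Proposition~\ref{proposition:localization-from-locally-balanced} (applied to $\lowconcept'$, which inherits $(R,r)$-local $\localbalance$-balance from $\lowconcept$) combined with Proposition~\ref{proposition:low-dim-smooth-boundary} yields
\[
    \pr_{\x\sim\Sunlabelled}[\concept(\x)\ne\hat\concept(\x)] \le \pr_{\x\sim\Sunlabelled}[\|U\x\|_2\ge R] + \pr_{\x\sim\Sunlabelled}[\x\in\partial_\varrho\hat\concept].
\]
The boundary term is bounded by $\hat\slackamp\varrho$ via the soundness of $\Tester$, giving the second summand of $\accuracy$. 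For the tail term, Markov plus $\E_{\x\sim\Sunlabelled}[\|U\x\|_2^2] = \trc(UMU^\top) \le 2k\,\lambda_{\max}(M)\le 4k\concparam$ (using that $U$ has $2k$ orthonormal rows and the spectral test accepted) gives $\pr_{\x\sim\Sunlabelled}[\|U\x\|_2\ge R]\le 4k\concparam/R^2$, which exactly matches the first summand of $\accuracy$. A Hoeffding bound with $3/\eps^2$ extra samples then lifts this empirical estimate to $\pr_{\x\sim\Dtarget}[\concept(\x)\ne\hat\concept(\x)]\le\accuracy+\eps$ for the worst-case $\concept$.

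For completeness, if $\Dtarget\in\Dclasstarget$ then by Cauchy--Schwarz the population covariance satisfies $\lambda_{\max}(\E_\Dtarget[\x\x^\top])\le\sqrt{\concparam}\le\concparam$. Since $\var(M_{ij})\le\E_\Dtarget[\x_i^2\x_j^2]/m\le\concparam/m$ (again by Cauchy--Schwarz on the fourth moments), Chebyshev on each entry, together with the crude bound $\|M-\E M\|_{\mathrm{op}}\le d\cdot\max_{ij}|M_{ij}-\E M_{ij}|$ and a $d^2$ union bound, shows that $m \ge 20 d^4$ samples suffice to guarantee $\lambda_{\max}(M)\le 2\concparam$ with probability at least $99\%$. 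The completeness of $\Tester$ (using $m_\Tester$ samples) handles the remaining test, and a final union bound yields overall success probability $\ge 3/4$. The main obstacle is that only the fourth-moment hypothesis is available, so sharper concentration inequalities (e.g., matrix Bernstein) are out of reach; this forces the $O(d^4)$ sample cost in the spectral test but is precisely what preserves universality over heavy-tailed members of $\Dclasstarget$.
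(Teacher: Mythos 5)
Your proof is correct and follows essentially the same approach as the paper: you use the identical algorithm (spectral test on $\E_{\x\sim\Sunlabelled}[\x\x^\top]$ plus one call to the boundary proximity tester), the same soundness decomposition via Proposition~\ref{proposition:localization-from-locally-balanced} and the trace bound $\trc(UMU^\top)\le 2k\lambda_{\max}(M)\le 4k\concparam$ (equivalent to the paper's row-by-row estimate), and the same Chebyshev-plus-union-bound completeness argument for the covariance entries.
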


\begin{proof}[Proof of \Cref{theorem:boundary-proximity-2}]
    Let $\varrho = (\errorslack/\localbalance)^{1/k}\sqrt{k}\exp(2R^2/k)$ and let $(\hat\concept,\Sunlabelled)$ be an instance of the localized discrepancy problem (see \Cref{definition:localized-discrepancy}).
    The algorithm is similar to the one used in \Cref{theorem:boundary-proximity-1}, but for the first step, instead of matching low degree moments, we compute the maximum eigenvalue of the second moment matrix. 
    \begin{enumerate}
        \item Compute the maximum eigenvalue of the matrix $M = \E_{\x\sim \Sunlabelled}[\x\x^\top]$ and \textbf{reject} if the computed value is larger than $2\concparam$.
        \item Run the boundary proximity tester $\Tester$ with inputs $(\varrho, \hat\concept, \Sunlabelled)$ and \textbf{reject} if $\Tester$ rejects.
        \item Otherwise, \textbf{accept}.
    \end{enumerate}
    \paragraph{Soundness.} For the proof of soundness, we use a similar argument to the one for \Cref{theorem:boundary-proximity-1}, but we instead bound the term $\E_{\x\sim\Sunlabelled}[\|U\x\|_2^{2p}]$ for $p = 1$ and as follows
    \[
        \E_{\x\sim\Sunlabelled}[\|U\x\|_2^{2}] = \sum_{i=1}^{2k}\E_{\x\sim\Sunlabelled}[(\vu^i\cdot \x)^2] \le 2k \sup_{\vv\in\S^{d-1}}\E_{\x\in\Sunlabelled}[(\vv\cdot\x)^2] \le 4k\concparam\,,
    \]
    where $\vu^i$ denotes the vector corresponding to the $i$-th row of $U$.
    \paragraph{Completeness.} The completeness for step 1 follows by an application of Chebyshev's inequality to the random variables corresponding to each of the entries of the matrix $M$ and a union bound, to show that the Frobenius norm (and hence the operator norm) of the matrix $M-\E_{\x\sim \Dsource}[\x\x^\top]$ is sufficiently small (where $\Dsource$ is some distribution in $\Dclasstarget$ and $\Sunlabelled$ consists of independent draws from $\Dsource$).
\end{proof}

In the proofs of \Cref{theorem:boundary-proximity-1,theorem:boundary-proximity-2} we have used the following usedul proposition.

\begin{proposition}\label{proposition:low-dim-smooth-boundary}
    Let $\concept:\R^d \to \cube{}$ be a $k$-subspace junta, i.e., $\concept(\x) = \lowconcept(W\x)$, where $\lowconcept:\R^k\to\cube{}$ and $W\in\R^{k\times d}$ with 
    $WW^\top =I_k$. Then, we have $\x\in \partial_\varrho \concept$ if and only if $W\x\in \partial_\varrho \lowconcept$.
\end{proposition}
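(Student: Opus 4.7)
The plan is to unfold the definition of the $\varrho$-boundary and then verify the two implications by constructing perturbations that witness boundary membership, exploiting the isometry properties of $W$ that follow from $WW^\top = I_k$ (so $\|W\|_2 = 1$, and $W^\top$ is a right inverse of $W$ with $\|W^\top\|_2 = 1$ as well).

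For the forward direction ($\x \in \partial_\varrho \concept \Rightarrow W\x \in \partial_\varrho \lowconcept$), I would take a perturbation $\z \in \R^d$ with $\|\z\|_2 \le \varrho$ such that $\concept(\x+\z) \neq \concept(\x)$, and then project it by setting $\z' := W\z \in \R^k$. Since $\concept(\x+\z) = \lowconcept(W\x + W\z)$ and $\|\z'\|_2 \le \|W\|_2 \|\z\|_2 \le \varrho$, the perturbation $\z'$ witnesses that $W\x \in \partial_\varrho \lowconcept$.

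For the converse ($W\x \in \partial_\varrho \lowconcept \Rightarrow \x \in \partial_\varrho \concept$), the trick is to lift a low-dimensional perturbation to $\R^d$ using the right-inverse structure. Given $\z' \in \R^k$ with $\|\z'\|_2 \le \varrho$ and $\lowconcept(W\x + \z') \neq \lowconcept(W\x)$, I set $\z := W^\top \z' \in \R^d$. Then $W\z = WW^\top \z' = \z'$, so $\concept(\x + \z) = \lowconcept(W\x + \z') \neq \lowconcept(W\x) = \concept(\x)$, and $\|\z\|_2 = \|W^\top \z'\|_2 \le \varrho$, which shows $\x \in \partial_\varrho \concept$.

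There is no serious obstacle here; the claim is essentially a direct consequence of $W$ being a partial isometry. The only point that deserves explicit mention is that the condition $WW^\top = I_k$ guarantees both $\|W\|_2 = 1$ (needed to shrink the forward perturbation) and that $W^\top$ serves as an exact right inverse of $W$ (needed for the converse), which is why both the forward and backward liftings preserve the norm bound $\varrho$ without any slack.
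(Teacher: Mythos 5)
Your proof is correct and follows essentially the same route as the paper: project the perturbation by $W$ for the forward direction (using $\|W\|_2 = 1$), and lift by $W^\top$ for the converse (using $WW^\top = I_k$ so that $W^\top$ is a norm-one right inverse). The paper's proof makes the same constructions and the same norm estimates.
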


\begin{proof}
    Note, first that since $WW^\top = I_k$ and $k\le d$, we have that $\|W\|_2 = 1$. 
    Consider $\x\in \partial_\varrho \concept$. Then, by \Cref{definition:boundary}, we have that there exists $\z\in\R^d$ with $\|\z\|\le \varrho$ and $\concept(\x+\z)\neq\concept(\x)$. Note that for the same $\x$ and $\z$ we have $\lowconcept(W\x+W\z)\neq\lowconcept(W\x)$. Since $\|W\|_2 = 1$, we have that $\|W\z\|_2 \le \|\z\|_2 \le \varrho$. Let $\tilde{\z} = W\z \in \R^k$. We have $\|\tilde{\z}\|_2 \le \varrho$ and $\lowconcept(W\x+\tilde{\z})\neq\lowconcept(W\x)$, i.e., $W\x \in \partial_\varrho \lowconcept$.

    For the other direction, suppose that $W\x\in\partial_\varrho \lowconcept$. Then, there is $\tilde\z\in\R^k$ with $\|\tilde\z\|_2 \le \varrho$ such that $\lowconcept(W\x+\tilde\z)\neq \lowconcept(W\x)$. We have that $\tilde\z = I_k\tilde\z = WW^\top \tilde\z$. Let $\z = W^\top \tilde\z$. We have $\tilde\z = W\z$ and $\|\z\|_2 = \|W^\top \tilde\z\|_2 \le \|W^\top\|_2 \|\tilde\z\|_2 = \|W\|_2 \|\tilde\z\|_2 \le \varrho$. We have that $\concept(\x+\z) = \lowconcept(W\x + W\z) = \lowconcept(W\x+\tilde\z) \neq \lowconcept(W\x) = \concept(\x)$. Hence, $\x\in\partial_\varrho \concept$.
\end{proof}

\subsection{Application to TDS Learning}

We now focus on the class of balanced intersections of halfspaces, which is formally defined as follows.

\begin{definition}[Balanced Halfspace Intersections]\label{definition:balanced-intersections}
    A concept $\concept:\R^d \to \cube{}$ is called a $\balance$-balanced intersection of $k$ halfspaces if it is $\balance$-balanced (see \Cref{definition:globally-balanced}) and there are $\w^1, \w^2,\dots, \w^k \in\S^{d-1}$ and $\tau_1, \tau_2, \dots, \tau_k\in\R$ such that $\concept(\x) = 2\prod_{i=1}^k \ind\{\w^i\cdot \x \ge \tau_i\} - 1$ for all $\x\in\R^d$.
\end{definition}

We will now combine \Cref{theorem:boundary-proximity-1,theorem:boundary-proximity-2} with results from robust learning (\cite{diakonikolas2018learning}) to obtain the following theorem regarding TDS learning balanced intersections of halfspaces with respect to Gaussian training marginals. Our results indicate a trade-off between the training runtime and testing runtime and are robust to some amount of noise (in terms of the parameter $\optcommon$).

\begin{theorem}[TDS Learning of Balanced Halfspace Intersections]\label{theorem:tds-intersections}
    For $\balance\in(0,1/2)$, $d,k\in\nats$, let $\C$ be the class of $\balance$-balanced intersections of $k$ halfspaces $\R^d$. For any $\eps\in(0,1)$ with $\eps = O(\frac{\balance}{k^2})$, there is a $\Dclasstarget$-universal $\accuracy$-TDS learner for $\C$ w.r.t. $\Gauss_d$ in the agnostic setting for each of the cases listed in \Cref{table:universa-tds-intersections}.
\end{theorem}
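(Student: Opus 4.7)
The plan is to apply the framework of testing localized discrepancy with respect to the disagreement neighborhood $\disneighborhood$ (\Cref{definition:disagreement-neighborhood}), exactly as advocated in \Cref{section:poly-time-testing}. Three ingredients are needed: (i) a proper agnostic Gaussian learner for intersections of $k$ halfspaces that returns another intersection of $k$ halfspaces $\hat\concept$ whose Gaussian disagreement with the ground truth is $O(\optcommon)+\errorslack$; (ii) an efficient boundary proximity tester (\Cref{definition:tester-for-boundary-proximity}) for intersections of halfspaces that works either only under $\Gauss_d$ or universally over $\Dclasstarget_1$; and (iii) a structural lemma showing that a $\balance$-balanced intersection is locally balanced on its $k$-dimensional relevant subspace. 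Plugging these ingredients into \Cref{theorem:boundary-proximity-1} or \Cref{theorem:boundary-proximity-2} will produce, respectively, the non-universal and universal rows of \Cref{table:universa-tds-intersections}.

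For ingredient (i), I would invoke the robust proper learner of \cite{diakonikolas2018learning}, recorded as \Cref{theorem:learning-intersections-agnostic}, which for any accuracy $\errorslack\in(0,1)$ outputs an intersection of $k$ halfspaces $\hat\concept$ with Gaussian test error at most $O(\optcommon)+\errorslack$. Thus, provided $\optcommon$ falls within the noise tolerance listed in \Cref{table:universa-tds-intersections}, the ground truth $\coptcommon$ lies in the $\errorslack$-disagreement neighborhood of $\hat\concept$ for the value of $\errorslack$ demanded by \Cref{theorem:boundary-proximity-1,theorem:boundary-proximity-2}. This directly controls the training-phase sample and time complexities in the table.

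For ingredient (ii), I observe that the $\varrho$-boundary of any intersection with defining halfspaces $\hat\w^1,\dots,\hat\w^k\in\S^{d-1}$ and thresholds $\hat\tau_1,\dots,\hat\tau_k$ is contained in the union of $k$ slabs $\bigcup_{i=1}^k\{\x:|\hat\w^i\cdot\x-\hat\tau_i|\le\varrho\}$, because if $\hat\concept$ flips label along a segment of length at most $\varrho$, at least one of its defining halfspaces must flip along that segment. The tester therefore estimates the $k$ one-dimensional slab probabilities from test samples and accepts when each is at most $c\varrho$ for a suitable constant $c$. Under $\Gauss_d$ each slab has mass $O(\varrho)$, and the same bound holds uniformly over $\Dclasstarget_1$ thanks to its uniform upper bound on one-dimensional marginal densities. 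This yields an $(\varrho,O(k))$-boundary proximity tester with $\poly(d,k,1/\varrho)$ sample and time complexity, matching the claimed fully polynomial test time.

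The main technical obstacle is ingredient (iii), the local-balance lemma (\Cref{lemma:global-local-convex-balance}): any $\balance$-balanced convex set $\convset\subseteq\R^k$ is $(R,r)$-locally $\localbalance$-balanced for parameters $\localbalance, r, R$ depending only on $\balance$ and $k$. The argument is the geometric one sketched in \Cref{figure:localization-of-convex-disagreement-main}: since $\convset$ is $\balance$-balanced under $\Gauss_k$, standard small-ball arguments produce a ball $\ball_k(\x_c,\rho_c)\subseteq\convset$ with $\rho_c$ and $\|\x_c\|_2$ quantitatively controlled by $\balance$ and $k$. For any $\x\in\convset$ with $\|\x\|_2\le R$, convexity forces the convex hull of $\x$ and $\ball_k(\x_c,\rho_c)$ to lie in $\convset$, so $\ball_k(\x,\varrho)\cap\convset$ contains a spherical cone of Gaussian mass at least an $\Omega(1)$ fraction of $\pr_{\z\sim\Gauss_k}[\z\in\ball_k(\x,\varrho)]$; for $\x\notin\convset$, an analogous cone obtained from a separating hyperplane yields the complementary bound. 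Substituting the resulting $\localbalance,r$ into the error expression $\accuracy$ of \Cref{theorem:boundary-proximity-1,theorem:boundary-proximity-2} and optimizing the free parameters $R,p,\errorslack$ subject to the compatibility constraint $\errorslack\le \localbalance r^k k^{-k/2}e^{-2R^2}$ yields the $O(\eps)$ error bound and the entries of \Cref{table:universa-tds-intersections}; the restriction $\eps=O(\balance/k^2)$ enters precisely through these compatibility constraints.
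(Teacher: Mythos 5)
Your three ingredients, and the way you assemble them, are exactly the paper's: it combines \Cref{theorem:learning-intersections-agnostic} (the proper learner of Diakonikolas et al.), \Cref{lemma:boundary-proximity-intersections} (the slab-union boundary proximity tester, proved just as you sketch it), and \Cref{lemma:global-local-convex-balance} (the separating-hyperplane/cone argument you describe) inside the localized discrepancy testers of \Cref{theorem:boundary-proximity-1,theorem:boundary-proximity-2}, which is precisely how the two rows of \Cref{table:universa-tds-intersections} are obtained.

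There is, however, one genuine gap in the assembly. The radius $\varrho=(\errorslack/\localbalance)^{1/k}\sqrt{k}\,e^{2R^2/k}$ at which the boundary proximity test must be run is tied to the disagreement parameter $\errorslack$ of the neighborhood, and by \Cref{theorem:learning-intersections-agnostic} the relevant $\errorslack$ is $O(k\optcommon^{1/12})+\slack$ (not $O(\optcommon)+\errorslack$ as you wrote; the $1/12$ exponent is exactly where the table's $\optcommon^{\frac{1}{12k}}$ comes from). Since $\optcommon$ is unknown to the algorithm, you cannot simply ``optimize the free parameters $R,p,\errorslack$ subject to the compatibility constraint'': that optimization depends on a quantity the tester never sees. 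Fixing $\varrho$ for the largest tolerated $\optcommon$ forfeits the $\optcommon$-dependent guarantee claimed in \Cref{table:universa-tds-intersections}, while fixing it as if $\optcommon=0$ breaks soundness, because the ground truth may then lie outside $\disneighborhood(\hat\concept)$ and the discrepancy bound no longer applies to it. The paper resolves this by running the (probability-amplified) discrepancy tester for every $\varrho$ in an $O(\eps/k^2)$-net of $[0,\frac{\balance}{Ck\log k}]$ and accepting only if all runs accept: completeness survives a union bound over the $\poly(k,1/\eps)$ net points, and soundness is invoked at the net point just above the true, unknown value of $\varrho$, which costs only an additional $O(\eps)$ in the error. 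Adding this enumeration (or an equivalent adaptive choice of $\varrho$) closes the gap; everything else in your proposal matches the paper's proof.
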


\begin{table*}[!htbp]\begin{center}
\begin{tabular}{c c c c} 
 \toprule
  \textbf{Class $\Dclasstarget$ over $\R^d$} & \textbf{Training Time} &\textbf{Testing Time} & \textbf{Error Guarantee $\accuracy$} \\ \midrule
 \begin{tabular}{c} Gaussian $\Gauss_d$ \end{tabular} & $\poly(d) (\frac{k}{\eps\balance})^{O(k^3)}$ & $(dk)^{O(\log(1/\eps))}$ & $(\frac{k}{\eps \balance})^{O(1)}\optcommon^{\frac{1}{12 k}} + \eps$ \\ \midrule 
 \begin{tabular}{c} Fourth Moments Bound: \\ $\E[(\vv\cdot \x)^4] \le C\|\vv\|_2^4$ \& \\ Dimension-$1$ Marginal \\ Densities Bounded by $C$ \end{tabular} & $\poly(d)(\frac{k}{\balance})^{k^3}2^{O(\frac{k^3}{\eps})}$ & $\poly(d,k,1/\eps)$ & $(\frac{k }{\balance})^{O(1)} 2^{O(\frac{1}{\eps})} \optcommon^{\frac{1}{12 k}} + \eps$ \\
 \bottomrule 
\end{tabular}
\end{center}
\caption{Specifications for $\Dclasstarget$-universal $\accuracy$-TDS learning of $\balance$-balanced $k$-halfspace intersections. The properties that define the class $\Dclasstarget$ in line 2, hold for some given universal constant $C\ge 1$, for all members of $\Dclasstarget$, for all $\vv\in\R^{d}$ and the density bound holds for all one-dimensional projections of any member of $\Dclasstarget$.}
\label{table:universa-tds-intersections}
\end{table*}

For the learning phase of the algorithm of \Cref{theorem:tds-intersections}, we use an algorithm from \cite{diakonikolas2018learning} in the context of learning with nasty noise. Since the algorithm works under nasty noise, it will also work in the agnostic setting. The following result follows from \cite[Theorem 5.1]{diakonikolas2018learning}.

\begin{theorem}[Reformulation of Theorem 5.1 in \cite{diakonikolas2018learning}]\label{theorem:learning-intersections-agnostic}
    Let $\C$ be some hypothesis class that consists of intersections of $k$ halfspaces.
    For any $\slack\in (0,1)$, there is an algorithm that, upon receiving a number of i.i.d. examples from some labeled distribution $\Dtrainjoint$ whose marginal is $\Gauss_d$, runs in time $\poly(d)(\frac{k}{\slack})^{O(k^2)}$ and returns, w.p. at least $0.99$, some intersection of $k$ halfspaces $\hat \concept:\R^d\to \cube{}$ such that for any distribution $\Dtestjoint$ over $\R^d\times\cube{}$, if $\coptcommon\in\C$ is the intersection that achieves $\optcommon = \min_{\concept\in \C}(\error(\concept; \Dtrainjoint)+\error(\concept; \Dtestjoint))$, then we have $\coptcommon \in \disneighborhood(\hat\concept)$, where $\disneighborhood$ is the $(Ck\optcommon^{\frac{1}{12}}+\slack)$-disagreement neighborhood (see \Cref{definition:disagreement-neighborhood}), where $C$ is some sufficiently large universal constant.
\end{theorem}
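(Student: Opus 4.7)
The plan is to assemble the TDS learner by combining three ingredients stated earlier in the paper: the proper agnostic learner for intersections of halfspaces of Theorem \ref{theorem:learning-intersections-agnostic}, the boundary-proximity--based localized discrepancy testers of Theorems \ref{theorem:boundary-proximity-1} and \ref{theorem:boundary-proximity-2}, and a concrete boundary proximity tester for intersections of halfspaces. The scheme is decoupled. The training phase runs Theorem \ref{theorem:learning-intersections-agnostic} with an appropriately chosen slack $\slack$ to produce a proper hypothesis $\hat\concept$ (again an intersection of $k$ halfspaces) with the guarantee that the optimal $\coptcommon$ lies in the $\errorslack$-disagreement neighborhood $\disneighborhood(\hat\concept)$ for $\errorslack = C_1 k\optcommon^{1/12}+\slack$. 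The testing phase runs the cylindrical-grids-free localized discrepancy tester of Theorem \ref{theorem:boundary-proximity-1} (Case~1, Gaussian completeness) or Theorem \ref{theorem:boundary-proximity-2} (Case~2, structured completeness). Given the test accepts, the soundness clause combined with the triangle inequality yields $\error(\hat\concept;\Dtestjoint)\le \error(\coptcommon;\Dtestjoint)+\pr_{\x\sim\Dtestmarginal}[\hat\concept(\x)\neq\coptcommon(\x)]\le \optcommon+\accuracy+\eps$, which is the claimed error bound.

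To invoke Theorems \ref{theorem:boundary-proximity-1} and \ref{theorem:boundary-proximity-2}, I have to verify that the concept class is $k$-subspace-junta (immediate for intersections of $k$ halfspaces, with relevant subspace $\spn(\w^1,\dots,\w^k)$) and, crucially, that every $\balance$-balanced intersection is $(R,r)$-locally $\localbalance$-balanced on its relevant $k$-dimensional subspace with suitable parameters. For this, I will establish the geometric lemma illustrated in Figure \ref{figure:localization-of-convex-disagreement-main}: any $\balance$-balanced convex set $\convset\subseteq\R^k$ contains a Euclidean ball of radius $\Omega(\balance/\poly(k))$ centered at some point $\x_c$ of bounded norm. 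Then for any $\x\in\convset$ with $\|\x\|_2\le R$, the convex hull of $\x$ with that interior ball is itself in $\convset$; intersecting with $\ball_k(\x,\varrho)$ yields a cone $\cone'$ of relative Gaussian mass at least $\localbalance=\Omega(1)$ within the small ball, establishing the in-set case. For $\x\notin\convset$, a separating hyperplane argument shows that at least half of $\ball_k(\x,\varrho)$ avoids $\convset$. Together these give $(R,r)$-local $\localbalance$-balance with $\localbalance=\Omega(1)$ and $r=\Omega(\balance/\poly(k))$, and the hypothesis $\eps=O(\balance/k^2)$ is precisely what is needed to ensure that the relevant radius of interest stays within this regime.

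The remaining piece is a concrete boundary proximity tester for intersections of $k$ halfspaces. If $\hat\concept(\x)=\prod_{i=1}^k\ind\{\hat\w^i\cdot\x\ge\hat\tau_i\}$, then $\partial_\varrho\hat\concept\subseteq\bigcup_{i=1}^k\{\x:|\hat\w^i\cdot\x-\hat\tau_i|\le\varrho\}$, so the tester computes the empirical probability of each slab and rejects if any exceeds a threshold. Under Gaussian completeness each slab has mass $\le 2\varrho/\sqrt{2\pi}$, yielding $\hat\slackamp=O(k)$; under the Case~2 structured class (where all one-dimensional projections have density bounded by $C$), each slab has mass $\le 2C\varrho$, still giving $\hat\slackamp=O(Ck)$. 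The sample complexity is $\poly(d,k,1/\varrho)$ and the runtime is $\poly(d,k,1/\varrho)\cdot k$, so this step is fully polynomial whenever $1/\varrho$ is polynomial.

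Finally I tune parameters. For Case~2, Theorem \ref{theorem:boundary-proximity-2} with $R=O(\sqrt{k/\eps})$ kills the $4k\concparam/R^2$ error term, which forces $\varrho=(\errorslack/\localbalance)^{1/k}\sqrt{k}e^{O(1/\eps)}$; demanding the second error term be $O(\eps)$ requires $\errorslack=\poly(\eps/k)^k\exp(-O(k/\eps))$, which via $\errorslack=C_1k\optcommon^{1/12}+\slack$ dictates both the noise tolerance $\optcommon^{1/(12k)}\cdot 2^{O(1/\eps)}$ and the Theorem \ref{theorem:learning-intersections-agnostic} slack $\slack$, producing training time $\poly(d)(k/\balance)^{k^3}2^{O(k^3/\eps)}$ and fully polynomial testing time. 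For Case~1, Theorem \ref{theorem:boundary-proximity-1} with $p=\Theta(\log(1/\eps))$ and $R=O(\sqrt{kp})$ makes the $(4kp/R^2)^p$ term $\le\eps$ and yields testing complexity $(dk)^{O(p)}=(dk)^{O(\log(1/\eps))}$, while the induced $\errorslack$ is much milder and gives training time $\poly(d)(k/(\eps\balance))^{O(k^3)}$. The main obstacle I anticipate is the tight quantitative version of the local balance lemma, specifically pinning down the dependence of the inscribed-ball radius on $\balance$ and $k$ uniformly over $\ball_k(\mathbf{0},R)$ for the values of $R$ above; getting the exponents to match the table entries (and to justify the hypothesis $\eps=O(\balance/k^2)$) requires carefully tracking how the center of the inscribed ball can drift with $\x$ while the Gaussian density of the cone $\cone'$ remains $\Omega(1)$ of the Gaussian mass of $\ball_k(\x,\varrho)$.
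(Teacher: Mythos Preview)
Your overall plan is the paper's plan: combine the proper learner of \Cref{theorem:learning-intersections-agnostic}, the local-balance property of balanced convex sets (\Cref{lemma:global-local-convex-balance}), the slab-based boundary proximity tester (\Cref{lemma:boundary-proximity-intersections}), and the discrepancy testers of \Cref{theorem:boundary-proximity-1} or \Cref{theorem:boundary-proximity-2}. That scaffolding is correct.

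There is, however, a genuine gap in your parameter tuning. You write that ``$\errorslack=C_1k\optcommon^{1/12}+\slack$ dictates both the noise tolerance \ldots\ and the slack $\slack$'', and then you set $\varrho$ as a function of this $\errorslack$. But $\optcommon$ is \emph{unknown to the algorithm}: it depends on the test distribution $\Dtestjoint$, of which you only see unlabeled marginals. So you cannot choose $\varrho$ (and hence cannot instantiate \Cref{theorem:boundary-proximity-1}/\ref{theorem:boundary-proximity-2}) as a function of $\optcommon$. If you instead fix a single $\varrho$ independent of $\optcommon$, you obtain only a threshold-type guarantee (error $O(\eps)$ provided $\optcommon$ is below some fixed cutoff), not the graceful bound $(\cdot)\,\optcommon^{1/(12k)}+\eps$ claimed in \Cref{table:universa-tds-intersections}. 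The paper's fix is exactly this missing step: run the boundary-proximity/discrepancy tester for every $\varrho$ in an $O(\eps/k^2)$-net of $[0,\tfrac{\balance}{Ck\log k}]$, amplify each one, and accept only if all of them accept. Completeness holds for every $\varrho$ simultaneously; soundness is then read off at the $\varrho$ matching the true (unknown) $\errorslack$, yielding the $\optcommon$-adaptive error.

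A smaller quantitative point: your sketch of the local-balance lemma asserts $\localbalance=\Omega(1)$. The actual bound (\Cref{lemma:global-local-convex-balance}) is $\localbalance'=\balance^k e^{-R/2}/(Ck^2 R\ln(1/\balance))^k$, which depends polynomially on $\balance,k,R$ and is what produces the $(k/\balance)^{O(1)}$ factors in the table. Your cone argument is the right picture, but the cone's aperture shrinks with $\balance/(kR)$ and the Gaussian density ratio over $\ball_k(\x,\varrho)$ contributes an $e^{-\Theta(R)}$ factor; these must be tracked to land the stated exponents.
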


Note that for the above reformulation of Theorem 5.1 in \cite{diakonikolas2018learning}, we used the following reasoning. Their algorithm returns $\hat\concept$ with the guarantee that $\error(\hat\concept;\Dtrainjoint) \le O(k\opttrain^{\frac{1}{12}}) + \slack$, where $\opttrain = \min_{\concept\in\C}\error(\concept;\Dtrainjoint) \le \error(\coptcommon;\Dtrainjoint) \le \optcommon$. Therefore $\pr_{\x\sim\Gauss_d}[\hat\concept(\x) \neq \coptcommon(\x)] \le \error(\hat\concept;\Dtrainjoint) + \error(\coptcommon;\Dtrainjoint) \le Ck\optcommon^{\frac{1}{12}}+\slack$, which implies that $\coptcommon\in\disneighborhood(\hat\concept)$.

Our plan is to use the discrepancy testers of \Cref{theorem:boundary-proximity-1,theorem:boundary-proximity-2}. To this end, we have to show that (1) balanced halfspace intersections are locally balanced and (2) there is a boundary proximity tester (see \Cref{definition:tester-for-boundary-proximity}) for the class. It turns out that any convex set that is globally balanced (see \Cref{definition:globally-balanced}), is also locally balanced (see \Cref{definition:locally-balanced}), as we show in the following lemma.

\begin{lemma}[Globally Balanced Convex Sets are Locally Balanced]\label{lemma:global-local-convex-balance}
    For $\localbalance\in(0,1)$, let $\lowconcept:\R^k\to \cube{}$ be the indicator of a (globally) $\localbalance$-balanced convex set $\convset\subseteq\R^k$, let $C\ge 1$ some sufficiently large universal constant and let $R\ge 1$. Then, $\lowconcept$ is $(R,\frac{\localbalance}{Ck\log k})$-locally $\localbalance'$-balanced for $\localbalance' = \frac{\localbalance^k \exp({-\frac{1}{2}R})}{(Ck^{2}R\ln(\frac{1}{\localbalance}))^k}$.
\end{lemma}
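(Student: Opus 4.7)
The plan is to split the analysis on the sign of $\lowconcept(\x)$ and, in both sub-cases, exploit a ``deep center'' of the convex set $\convset$. First, a preliminary geometric lemma: since $\Gauss_k(\convset) \ge \localbalance$ and $\convset$ is convex, a standard argument (Gaussian isoperimetry combined with Nazarov's $O(k^{1/4})$ bound on the Gaussian surface area of convex bodies applied to the inner parallel body $\convset_\rho := \{\y \in \convset : \ball_k(\y,\rho) \subseteq \convset\}$) produces an anchor point $\x_c \in \convset$ with $\|\x_c\|_2 \le D_0 := O(\sqrt{k\log(1/\localbalance)})$ and $\ball_k(\x_c, \rho_0) \subseteq \convset$ for some $\rho_0 = \Omega(\localbalance / k^{1/4})$. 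Intuitively, $\x_c$ is the center of a macroscopic ball contained in $\convset$ that is not too far from the origin. Fix $\x \in \R^k$ with $\|\x\|_2 \le R$ and $\varrho \le r = \localbalance/(Ck\log k)$.

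Case $\lowconcept(\x) = -1$ (the easier half): by the separating-hyperplane theorem applied to the disjoint sets $\{\x\}$ and $\convset$, there is a closed half-space $H$ containing $\x$ and disjoint from $\convset$. Thus every $\z \in \ball_k(\x, \varrho) \cap H$ satisfies $\lowconcept(\z) = -1 = \lowconcept(\x)$. On $\ball_k(\x, \varrho)$ the Gaussian density varies by a factor at most $\exp(\varrho R + \varrho^2/2)$, which our parameter choices keep bounded by $2$. Hence $\pr_{\z \sim \Gauss_k}[\z \in H \mid \z \in \ball_k(\x, \varrho)] \ge 1/4$, comfortably exceeding $\localbalance'$.

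Case $\lowconcept(\x) = +1$ (the main case, corresponding to Figure~\ref{figure:localization-of-convex-disagreement-main}): by convexity, $\Delta := \conv(\{\x\} \cup \ball_k(\x_c, \rho_0)) \subseteq \convset$, so any point of $\Delta \cap \ball_k(\x, \varrho)$ has label $+1$. Letting $D = \|\x - \x_c\|_2 \le R + D_0$, this intersection contains the spherical sector
\[ S_\theta := \{\x + t\vu : t \in [0, \varrho],\ \vu \in \S^{k-1},\ \angle(\vu, \x_c-\x) \le \theta\}\,, \]
with $\theta := \arcsin(\rho_0/D) \ge \rho_0/(2D)$. A standard bound on the solid-angle fraction of a spherical cap gives $\mathrm{vol}(S_\theta)/\mathrm{vol}(\ball_k(\x, \varrho)) \ge (c\theta)^{k-1}/\sqrt{k}$ for a universal constant $c$, and comparing the Gaussian density to Lebesgue on $\ball_k(\x, \varrho)$ costs a further multiplicative factor no worse than $\exp(-\varrho R - \varrho^2/2) \ge \exp(-R/2)$. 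Substituting $\rho_0 = \Omega(\localbalance/k^{1/4})$ and $D \le O(R\sqrt{k\log(1/\localbalance)})$ and collecting constants yields the target lower bound $\localbalance^k \exp(-R/2) / (Ck^2 R \ln(1/\localbalance))^k$ for a suitable universal $C$.

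The main obstacle is the first step: extracting a quantitative inscribed ball $(\x_c, \rho_0)$ whose parameters have the right polynomial dependence on $k$ and $\localbalance$ to make the cone computation match the target exponents. The two case analyses and the spherical-cap volume estimate are essentially routine once this anchor is in hand; the real accounting is tracking how factors of $k$ and $\log(1/\localbalance)$ propagate from Nazarov's theorem and the inner-parallel-body argument through the ratio $\rho_0/D$ and its $(k-1)$-th power, and verifying that the parameter $r = \localbalance/(Ck\log k)$ is small enough for the density-variation slack to be absorbed into the $\exp(-R/2)$ factor.
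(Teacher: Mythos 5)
Your two-case decomposition (separating hyperplane when $\lowconcept(\x)=-1$, convex hull of $\x$ and an inscribed ball when $\lowconcept(\x)=+1$) is exactly the paper's structure, and your cone / spherical-cap bookkeeping for the $+1$ case matches the paper's calculation. The place where the proposal stops short is the inscribed-ball claim itself, and this is not a formality: it is the only nontrivial step in the lemma, and the paper devotes a self-contained Claim plus proof to it. The paper proves it by combining its smooth-boundary result for convex sets (Lemma~\ref{lemma:convex-smooth-boundary}, whose inner-boundary half is proved via the Kane--Klivans--O'Donnell noise-sensitivity inequality plus Ball's/Nazarov's GSA bound) with the probabilistic method: it shows $\pr_{\x\sim\Gauss_k}[\lowconcept(\z)=1\ \forall\z\in\ball_k(\x,r)] > \localbalance/2$ and $\pr[\|\x\|_2>R_c]\le\localbalance/2$, so a valid $\x_c$ with $\|\x_c\|_2 \le R_c = (2k\ln(8k/\localbalance))^{1/2}$ exists.

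Your sketched route for that step — integrate Nazarov's $O(k^{1/4})$ surface-area bound over the inner parallel bodies $\convset_\rho$ — is a genuinely different argument, and it is not obviously correct as stated. The issues are (i) you need $t\mapsto \Gauss_k(\convset_t)$ to be absolutely continuous with derivative (up to sign) equal to the Gaussian surface area of $\convset_t$, which is a real measure-theoretic claim about convex bodies that you would have to establish, and (ii) the paper's own inner-boundary anticoncentration bound (Lemma~\ref{lem:sa_convex_inner}) gives only $O(\rho\, k\log k)$, not $O(\rho\, k^{1/4})$, so your claimed radius $\rho_0 = \Omega(\localbalance/k^{1/4})$ is sharper than what the paper achieves and cannot be cited as ``standard.'' You also assert $\|\x_c\|_2 \le O(\sqrt{k\log(1/\localbalance)})$ without argument; the paper obtains this via Gaussian concentration inside the probabilistic-existence step. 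Finally, a minor slip: in the $-1$ case you claim the density ratio over $\ball_k(\x,\varrho)$ is bounded by $2$, but $\varrho R$ is not bounded by a constant for large $R$ (since $\varrho\le r$ is independent of $R$); the correct statement is that the ratio is $e^{-O(\varrho R)} \ge e^{-O(R)}$, which is still comfortably above $\localbalance'$ because $\localbalance'$ itself carries an $e^{-R/2}$ factor.
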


\begin{proof}[Proof of \Cref{lemma:global-local-convex-balance}]
    Let $\varrho \le \frac{\localbalance}{Ck\log k}$.
    We will first show that for any $\x\in\R^k$ with $\|\x\|_2\le R$ and $\lowconcept(\x) = -1$, we have $\pr_{\z\sim\Gauss_k}[\lowconcept(\z)=-1\;|\; \z\in \ball_k(\x,\varrho)] \ge \frac{1}{2}e^{-\varrho R}$. We have that $\x\not\in\convset$ and, therefore, there is a separating hyperplane between $\x$ and $\convset$, due to the convexity of $\convset$. This hyperplane does not pass through $\x$ and, hence, at least half of $\ball_k(\x,\varrho)$ is outside $\convset$. We obtain the following.
    \begin{align*}
        \pr_{\z\sim\Gauss_k}[\lowconcept(\z)=-1\;|\; \z\in \ball_k(\x,\varrho)] &= \frac{\pr_{\z\sim\Gauss_k}[\lowconcept(\z)=-1\text{ and } \z\in \ball_k(\x,\varrho)]}{\pr_{\z\sim\Gauss_k}[ \z\in \ball_k(\x,\varrho)]} \\
        &\ge \frac{\frac{1}{2}\vol(\ball_k(\x,\varrho))}{\vol(\ball_k(\x,\varrho))} \cdot \frac{\inf_{\z\in \ball_k(\x,\varrho)}\Gauss_k(\z)}{\sup_{\z\in \ball_k(\x,\varrho)}\Gauss_k(\z)} \\
        &\ge \frac{1}{2}\cdot \frac{\exp(-\frac{1}{2}(\|\x\|_2+\varrho)^2)}{\exp(-\frac{1}{2}(\|\x\|_2-\varrho)^2} \\
        &\ge \frac{1}{2}e^{-\frac{1}{2}\varrho \|\x\|_2} \ge \frac{1}{2}e^{-\frac{1}{2}\varrho R}
    \end{align*}

    For the case where $\lowconcept(\x) = 1$, we first prove the following claim, which states that when a convex set is (globally) balanced, it must contain some Euclidean ball with non-negligible mass.

    \begin{claim}
        Since $\convset$ is $\localbalance$-balanced and convex, there is $\x_c\in\R^k$ such that $\ball_k(\x_c,r) \subseteq \convset$, where $r = \frac{\localbalance}{Ck\log k}$, $\|\x_c\|_2 \le R_c = ({2k\ln(\frac{8k}{\localbalance})})^{1/2}$ and $C\ge 1$ is a sufficiently large universal constant.
    \end{claim}

    \begin{proof}
        Since $\convset$ is balanced, we have $\pr_{\x\sim\Gauss_k}[\lowconcept(\x) = 1]> \localbalance$. We now use \Cref{lemma:convex-smooth-boundary} to obtain that $\pr_{\x\sim\Gauss_k}[\x\in\partial_r \lowconcept] \le \frac{C}{2}rk\log k$. We have the following. 
        \begin{align*}
            \pr_{\x\sim\Gauss_k}[\lowconcept(\z)&=1, \forall \z\in\ball_k(\x,r)] = \pr_{\x\sim\Gauss_k}[\lowconcept(\x)=1\text{ and } \lowconcept(\x+\z)=1, \forall \z \text{ with }\|\z\|_2\le r] \\
            &=\pr_{\x\sim\Gauss_k}[\lowconcept(\x)=1] - \pr_{\x\sim\Gauss_k}[\lowconcept(\x)=1\text{ and } \exists \z:\|\z\|_2\le r \text{ and }\lowconcept(\x+\z)\neq 1]\\
            &\ge \pr_{\x\sim\Gauss_k}[\lowconcept(\x)=1] - \pr_{\x\sim\Gauss_k}[ \exists \z:\|\z\|_2\le r \text{ and }\lowconcept(\x+\z)\neq \lowconcept(\x)]\\
            &= \pr_{\x\sim\Gauss_k}[\lowconcept(\x)=1] - \pr_{\x\sim\Gauss_k}[ \x\in\partial_r\lowconcept] \\
            &> \localbalance -\frac{C}{2}rk\log k = \frac{\localbalance}{2}
        \end{align*}
        Moreover, since $\pr_{\x\sim\Gauss_k}[\|\x\|_2 > R_c] \le 4k e^{-\frac{R_c^2}{2k}} = \localbalance/2$, we overall have that 
        \[
            \pr_{\x\sim\Gauss_k}[\lowconcept(\z)=1, \forall \z\in\ball_k(\x,r) \text{ and }\|\x\|_2\le R_c] > 0
        \]
        Since the probability of such an $\x$ is positive, by the probabilistic method, there is some $\x_c$ as desired.
    \end{proof}

    We have shown that for some $\x_c$ with $\|\x_c\|_2\le R_c$, we have $\ball_k(\x_c,r) \subseteq \convset$. Let now $\x\in\R^k$ with $\|\x\|_2\le R$ and $\lowconcept(\x) = 1$ ($\x\in\convset$). Since $\convset$ is convex, if $\convset'$ is the convex hull of $\{\x\}\cup \ball_k(\x_c,r)$, we have $\convset'\subseteq \convset$. We will show that $\convset'\cap \ball_{k}(\x,\varrho)$ contains some cone $\cone'$ with non-trivial mass (see \Cref{figure:localization-of-convex-disagreement}).

    Let $\y$ be any point on the surface of $\ball_k(\x_c,r)$ such that the tangent hyperplane of $\ball_k(\x_c,r)$ on $\y$ passes from $\x$. Then, if we let $\theta$ to be the angle $\widehat{\y\x\x_c}$, we have $\sin\theta = \|\y-\x_c\|/\|\x-\x_c\|_2 = r/\|\x-\x_c\|_2$, because $\widehat{\x\y\x_c} = \pi/2$, by definition of $\y$. Note that the triangle defined by $\x,\y$ and $\x_c$ lies within $\convset'$ and hence within $\convset$ as well. Since this is true for any $\y$ as defined above, we have that $\convset$ contains a rotational cone $\cone$ with vertex $\x$, angle $\theta$ and height $h\in [\|\x-\x_c\|_2-r,\|\x-\x_c\|]$. Note that the volume of $\convset' \cap \ball_k(\x,\varrho)$ is decreasing in $\|\x-\x_c\|_2$, as long as $\varrho \le r$. Therefore, we may assume that $\|\x-\x_c\|_2 = R+R_c$ (which implies that $h\ge 1 \ge \varrho \ge \varrho \cos\theta$). Let $\cone' = \cone \cap \ball_k(\x,\varrho)$. 
    
        \begin{figure}[h]
        \centering
        \includegraphics[width=.5\linewidth]{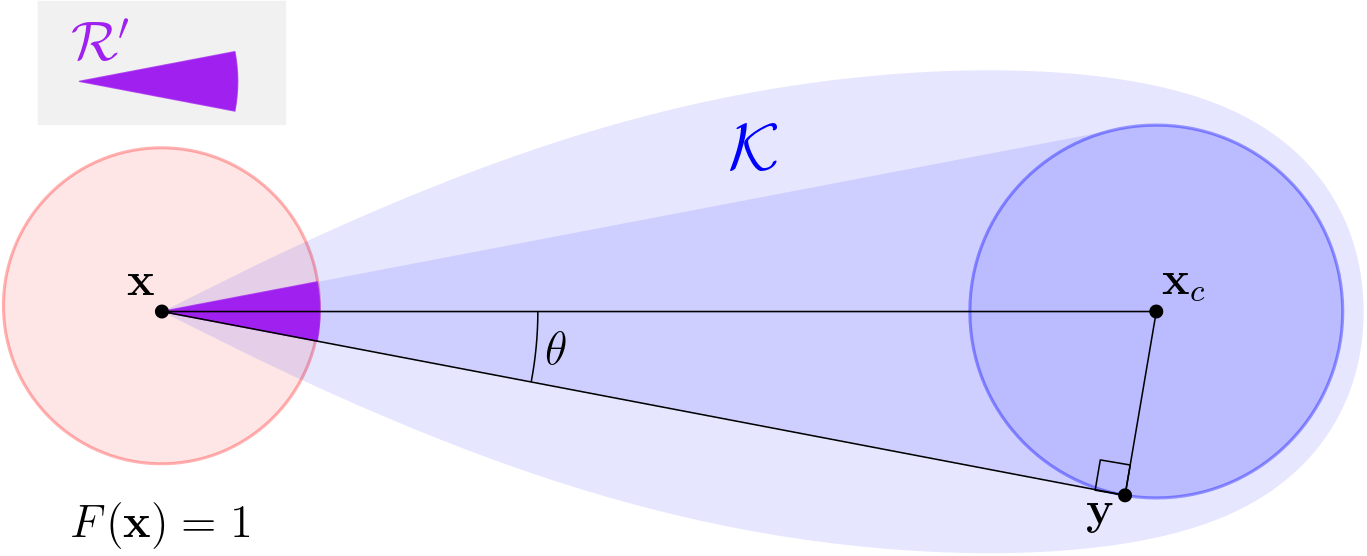}
        \caption{If $\x\in \convset$, then there is a cone $\cone'\subseteq\ball_k(\x,\varrho) \cap \convset$}
        \label{figure:localization-of-convex-disagreement}
    \end{figure}
    
    By observing that $\cone'$ contains a cone of angle $\theta$, height $\varrho \cos\theta$, where $\cos\theta \ge 1/2$ and $\varrho \le R$, we overall have the following.
    \begin{align*}
        \pr_{\z\sim\Gauss_k}[\lowconcept(\z)=1\;|\; \z\in \ball_k(\x,\varrho)] &= \frac{\pr_{\z\sim\Gauss_k}[\lowconcept(\z)=1\text{ and } \z\in \ball_k(\x,\varrho)]}{\pr_{\z\sim\Gauss_k}[ \z\in \ball_k(\x,\varrho)]}\\
        &\ge \frac{\vol(\cone')}{\vol(\ball_k(\x,\varrho))} \cdot \frac{\inf_{\z\in \ball_k(\x,\varrho)}\Gauss_k(\z)}{\sup_{\z\in \ball_k(\x,\varrho)}\Gauss_k(\z)}\\
        &\ge \frac{\varrho\cos\theta(\varrho \sin\theta)^{k-1}(2\pi)^{(k-1)/2}k^{-((k-1)/2+1)}}{\varrho^k (2\pi/k)^{k/2}} \cdot \exp(-\varrho R/2)\\
        &\ge \frac{(\sin \theta)^{k-1}}{2\sqrt{2\pi k}} \cdot e^{-\frac{1}{2}\varrho R} \ge \Bigr(\frac{\localbalance }{Ck^2R\ln(1/\localbalance)}\Bigr)^k e^{-R/2}
    \end{align*}
    Combining the two cases considered ($\lowconcept(\x) = -1$ and $\lowconcept(\x) = 1$), we obtain the desired result.
\end{proof}

Finally, we show that there is a boundary proximity tester for the class of halfspace intersections.

\begin{lemma}[Boundary Proximity Tester for Halfspace Intersections]\label{lemma:boundary-proximity-intersections}
    Let $\Dclasstarget$ be some class of distributions over $\R^d$ such that for each distribution in $\Dclasstarget$, any one-dimensional marginal has density upper bounded by $C>0$. Then, for any $\varrho\in(0,1)$, there is a $(\varrho,3Ck)$-boundary proximity tester for the class of intersections of $k$ halfspaces over $\R^d$ with time and sample complexity $\poly(d,k,1/\varrho)$.
\end{lemma}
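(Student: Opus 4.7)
The plan is to exploit the fact that the $\varrho$-boundary of an intersection of halfspaces is contained in the union of the $\varrho$-slabs around the defining hyperplanes, so that a single empirical slab-count suffices. Given the representation $\hat\concept(\x) = \prod_{i=1}^k \ind\{\w^i\cdot\x \ge \tau_i\}$ with $\w^i \in \Sphere^{d-1}$ and $\tau_i \in \R$, the tester will compute, for each $\x \in \Sunlabelled$ and each $i\in[k]$, whether $|\w^i\cdot\x - \tau_i| \le \varrho$, and accept iff the empirical fraction of examples satisfying the inequality for some $i$ is at most $3Ck\varrho$. This is clearly implementable in time $O(dk|\Sunlabelled|)$.

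The key structural step is the geometric containment
\[
\partial_\varrho \hat\concept \;\subseteq\; \bigcup_{i=1}^k \bigl\{\x \in \R^d : |\w^i\cdot\x - \tau_i|\le \varrho\bigr\}.
\]
To prove it, I would split on the sign of $\hat\concept(\x)$. If $\hat\concept(\x) = 1$ then every constraint is satisfied at $\x$, and any flipping perturbation $\z$ with $\|\z\|_2\le \varrho$ must violate some constraint $i$; combined with $\w^i\cdot\x\ge \tau_i$ and $|\w^i\cdot\z|\le \varrho$ this forces $0 \le \w^i\cdot\x - \tau_i < \varrho$. If $\hat\concept(\x) = -1$, then a flipping perturbation must satisfy every constraint, i.e., $\w^j\cdot\x \ge \tau_j - \varrho$ for all $j$; so for any originally violated constraint $i$ we obtain $|\w^i\cdot\x - \tau_i| < \varrho$. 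Either way, $\x$ lies in one of the defining slabs.

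Soundness is then immediate from the containment: if the tester accepts, the empirical mass of the slab union is at most $3Ck\varrho$, and the empirical mass of $\partial_\varrho \hat\concept$ is no larger. For completeness, I would use the one-dimensional density bound: since $\w^i$ is a unit vector, the marginal density of $\w^i\cdot\x$ under any $\D\in\Dclasstarget$ is at most $C$, so $\pr_{\x\sim\D}[|\w^i\cdot\x - \tau_i|\le \varrho]\le 2C\varrho$, and a union bound over $i\in[k]$ yields population slab-union mass at most $2Ck\varrho$. Taking $m = \poly(k,1/\varrho)$ samples and applying a Hoeffding bound guarantees the empirical fraction lies within $Ck\varrho$ of the mean with probability at least $99\%$, at which point the tester accepts. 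I do not expect a serious obstacle; the only nontrivial ingredient is the two-case geometric argument establishing the slab containment, and the rest is a standard concentration argument specialized to the one-dimensional marginal density assumption.
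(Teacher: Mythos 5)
Your proposal is correct and follows essentially the same approach as the paper's proof. The only cosmetic difference is that you test the empirical mass of the union of slabs once against a threshold of $3Ck\varrho$, whereas the paper tests each slab $i$ separately against $3C\varrho$ and combines by a union bound; both designs rely on the same structural containment $\partial_\varrho\hat\concept \subseteq \bigcup_i\{\x:|\w^i\cdot\x-\tau_i|\le\varrho\}$ (which you prove explicitly by the two-case argument and the paper leaves implicit), the same use of the one-dimensional density bound for completeness, and a standard concentration inequality to control the empirical deviation.
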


\begin{proof}
    The tester receives some intersection of halfspaces $\concept = 2\prod_{i=1}^k \ind\{\w^i\cdot \x-\tau_i\} - 1$ and $m_\Tester$ samples $\Sunlabelled$ from some unknown distribution over $\R^d$ and does the following.
    \begin{enumerate}
        \item If for some $i\in[k]$ we have $\pr_{\x\sim \Sunlabelled}[|\w^i\cdot\x-\tau_i| \le \varrho] > 3C\varrho$, then \textbf{reject}.
        \item Otherwise, \textbf{accept}.
    \end{enumerate}
    Soundness then follows from the fact that $\pr_{\x\sim \Sunlabelled}[\x\in\partial_\varrho \concept] \le \sum_{i\in[k]}\pr_{\x\sim\Sunlabelled}[|\w^i\cdot\x-\tau_i| \le \varrho]$ and a Hoeffding bound. Completeness follows from the fact that under any distribution $\Dgeneric$ in $\Dclasstarget$, we have $\pr_{\x\sim\Dgeneric}[|\w^i\cdot\x-\tau_i| \le \varrho] \le 2C\varrho$, due to the density upper bound in the direction $\w^i$ and a Chernoff bound.
\end{proof}

All of the ingredients of the proof of \Cref{theorem:learning-intersections-agnostic} are now in place.

\begin{proof}[Proof of \Cref{theorem:learning-intersections-agnostic}]
    The theorem follows by combining either \Cref{theorem:boundary-proximity-1} or \Cref{theorem:boundary-proximity-2} with \Cref{theorem:learning-intersections-agnostic}, \Cref{lemma:global-local-convex-balance} and \Cref{lemma:boundary-proximity-intersections}. Note that since the parameter $\optcommon$ is unknown to the algorithm, we will run the corresponding discrepancy tester (either of \Cref{theorem:boundary-proximity-1} or of \Cref{theorem:boundary-proximity-2}) for all possible values of the parameter $\varrho$ (of the discrepancy tester) within an $O(\eps/k^2)$-net of the interval $[0,\frac{\balance}{Ck\log k}]$, where we know that the tester has to accept with high probability (we can amplify the success probability for each fixed value of $\varrho$ through repetition). We accept if the (amplified) discrepancy tester accepts for all the values of $\varrho$ in the net. In total, we will need $\poly(k,1/\eps)$ repetitions.
\end{proof}

\section{NP-Hardness of Global Discrepancy Testing}\label{appendix:np-hardness}
In this section, we prove that there exist worst case pairs of distributions such that testing the globalized discrepancy between them with respect to the class of halfspaces is hard. These results also extend to the class of constant degree polynomial threshol functions. This motivates our study of localized notions of discrepancy. 
We now define the notion of discrepancy (globalized).
\begin{definition}[Discrepancy]
    Let $D_1,D_2$ be two distributions on $\R^d$ and let $\F$ be a set of boolean functions on $\R^{d}$. We say that the discrepancy between $D_1$ and $D_2$ with respect to $\F$, denoted by $\disc_{\F}(D_1,D_2)$ is,
    \[
    \disc_{\F}(D_1,D_2)=\sup_{f_1,f_2\in\F}\left(\Big|\pr_{\x\sim D_1}[f_1(\x)\neq f_2(\x)]-\pr_{\x\sim D_2}[f_1(\x)\neq f_2(\x)]\Big|\right)
    \]
\end{definition}
We prove our hardness result by reducing the following problem of learning constant degree PTFs with noise to the problem of identifying if the discrepancy between two distributions is large/small.
\begin{definition}
    For constants $\epsilon>0,k\in \mathbb{N}$, let $\ptfma(k,\epsilon)$ refers to the following promise problem: Given a set of tuples $\{\x_i,y_i\}_{i\in [n]}$ where $\x_i\in \R^{d}$ and $y_i\in \{\pm 1\}$ for all $i\in [n]$, distinguish between the following two cases:
    \begin{itemize}
        \item There exists a halfspace $h$ such that $\frac{1}{n}\sum_{i=1}^{n}\1\{h(\x_i)=y_i\}\geq 1-\epsilon$,
        \item For every degree $k$ PTF $g$, we have that $\frac{1}{n}\sum_{i=1}^{n}\1\{g(\x_i)=y_i\}\leq \frac{1}{2}+\epsilon$
    \end{itemize}
\end{definition}

This problem is known to be NP hard through a reduction from label cover.
\begin{lemma}[\cite{ptf_hardness}]
\label{lem:ptf_hard}
    For any constant $k\in \mathbb{N},\epsilon>0$, $\ptfma(k,\epsilon)$ is NP-hard. 
\end{lemma}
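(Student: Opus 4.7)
My plan is to reduce from an appropriately hard version of Label Cover via a dictatorship-test framework, following the standard template of inapproximability reductions from the PCP literature.

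First, I would invoke Raz's parallel repetition theorem applied to the PCP theorem for $\textsc{3-SAT}$ to obtain the following: for any constant $\delta > 0$, it is NP-hard to distinguish whether a Label Cover instance $L$ over variables $V, W$ with alphabet $\Sigma$ is fully satisfiable, or whether no labeling of $L$ satisfies more than a $\delta$-fraction of its projection constraints. Given such an $L$, I would construct a $\ptfma(k,\epsilon)$ instance by placing, for each variable $v \in V \cup W$, a block of coordinates (either $|\Sigma|$ real coordinates, or a long-code block indexed by $\{\pm 1\}^{|\Sigma|}$). Labeled examples would then be produced by a dictatorship test that samples a random edge of $L$, an assignment to its two endpoints consistent with the projection constraint, a random output bit, and a small amount of correlated noise.

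For completeness, if $L$ has a satisfying labeling $\sigma$, then the halfspace that simply reads off the coordinate corresponding to $\sigma(v)$ inside the appropriate block is a consistent ``dictator'' classifier that agrees with the induced label on all but at most an $\epsilon$-fraction of the examples (the noise term being the only source of error). Thus we land in the first case of the $\ptfma(k,\epsilon)$ promise.

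For soundness, I would argue that any degree-$k$ PTF $g$ achieving accuracy at least $1/2 + \epsilon$ on the constructed sample can be decoded into a randomized labeling of $L$ satisfying strictly more than $\delta$-fraction of constraints, contradicting the hardness assumption for sufficiently small $\delta$. The plan is to invoke a Fourier-analytic structure theorem for degree-$k$ PTFs --- for instance a regularity lemma in the spirit of Diakonikolas--Servedio--Tan--Wan, or an invariance principle \`a la Mossel--O'Donnell--Oleszkiewicz combined with anti-concentration bounds for low-degree polynomials (Kane, Meka--Nguyen--Vu) --- to show that $g$ restricted to the relevant blocks must be close to a junta on a constant number of coordinates. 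The decoded labeling is then obtained by sampling a ``significant'' coordinate from each block's low-degree Fourier spectrum.

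The main obstacle will be soundness against the full class of degree-$k$ PTFs rather than just halfspaces, since a naive dictator test designed for linear threshold functions can be fooled by a clever quadratic or higher-degree PTF. Concretely, one must design the noise rate and the underlying Boolean predicate of the dictator test so that: (i) noise sensitivity forces any degree-$k$ PTF with noticeable advantage over $1/2$ to have most of its Fourier mass on low levels and on a few influential coordinates, and (ii) the junta decoding step yields a Label Cover labeling whose quality beats $\delta$ with non-negligible probability. Calibrating the parallel-repetition parameter and the test noise so that $\delta$ can be pushed below any prescribed function of $k$ and $\epsilon$ is the key technical tuning step.
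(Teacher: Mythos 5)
The paper does not prove \Cref{lem:ptf_hard}; it imports it directly by citation to \cite{ptf_hardness} and supplies no argument of its own, so there is no paper proof to compare against. Your proposal is a plausible high-level blueprint of the Label-Cover/long-code/dictatorship-test methodology that underlies hardness results of this shape, and you correctly identify the crux: soundness must hold against \emph{all} degree-$k$ PTFs, not merely halfspaces, and a test designed naively for LTFs can be fooled by a quadratic.

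That said, what you have written is a research plan, not a proof. The completeness direction is routine, but the soundness direction --- showing that any degree-$k$ PTF with accuracy $1/2+\epsilon$ on the constructed sample can be decoded into a Label Cover labeling satisfying more than a $\delta$ fraction of constraints --- is precisely where all the substance lies, and your sketch does not actually establish it. You ``invoke'' an invariance principle and anti-concentration for low-degree polynomials and assert that these force $g$ to behave like a small junta on each block, but this does not follow from noise-sensitivity bounds alone: one needs a regularity/critical-index decomposition tailored to PTFs, together with a careful choice of the inner predicate and noise so that the influential coordinates are consistent across the two sides of a projection constraint. Your own final paragraph concedes this (``calibrating the parallel-repetition parameter and the test noise \ldots is the key technical tuning step''), which is an acknowledgment of a gap rather than a resolution of it. In short, the outline points in the right direction but leaves the load-bearing step unproved.
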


Given a set $S\subseteq \R^d$, let $U_S$ denote the uniform distribution on that set.
We define decision version of the problem of discrepancy testing for which we prove our NP-hardness result.
\begin{definition}
    For constants $\epsilon>0$ and a class $\F$ of boolean functions on $\R^{d}$, let $\disctest(\F,\epsilon)$ be the following promise problem: Given sets $S,S'\subseteq \R^d$, distinguish between the two cases:
    \begin{itemize}
        \item $\disc_{\F}(U_S,U_{S'})\geq 1-\epsilon$
        \item $\disc_{\F}(U_S,U_{S'})\leq \epsilon$
    \end{itemize}
\end{definition}
We are now ready to state and prove our result on the NP-hardness of $\disctest(\F,\epsilon)$ when $\F$ is the class of constant degree polynomial threshold functions. 
\begin{theorem}\label{thm:np-hard}
    Let $k\in \N$ and $\epsilon>0$. Let $\F$ be the class of PTFs of degree $k$. The problem $\disctest(\F,\epsilon)$ is NP-hard.
\end{theorem}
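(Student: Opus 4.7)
The plan is to reduce from the NP-hard problem $\ptfma(2k, \epsilon/4)$ (\Cref{lem:ptf_hard}) to $\disctest(\F, \epsilon)$. Given a $\ptfma(2k, \epsilon/4)$ instance $\{(\x_i, y_i)\}_{i \in [n]}$, I would split by label, setting $S = \{\x_i : y_i = +1\}$ and $S' = \{\x_i : y_i = -1\}$ (as multisets). Intuitively, the discrepancy between $U_S$ and $U_{S'}$ with respect to degree-$k$ PTFs should be large iff some degree-$2k$ PTF distinguishes the two conditional distributions, which is precisely what correlating with the labels means.

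The main algebraic bridge is the identity $\Ind[f_1 \neq f_2] = \frac{1 - f_1 f_2}{2}$ valid for $\{\pm 1\}$-valued $f_1, f_2$, which yields
\[
\pr_{U_S}[f_1 \neq f_2] - \pr_{U_{S'}}[f_1 \neq f_2] \;=\; \tfrac{1}{2}\bigl(\E_{U_{S'}}[f_1 f_2] - \E_{U_S}[f_1 f_2]\bigr),
\]
together with the observation that $\sign(p_1)\cdot \sign(p_2) = \sign(p_1 p_2)$, which shows that the product of two degree-$k$ PTFs is a degree-$2k$ PTF. This converts the supremum in $\disc_\F$ into a supremum over (a subset of) degree-$2k$ PTFs, matching the source problem.

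For completeness, given a halfspace $h$ of accuracy at least $1 - \epsilon/4$ in the YES case of $\ptfma$, I would plug $f_1 = h$ (a degree-$1$, hence degree-$k$, PTF) and $f_2 \equiv 1$ into the identity above, yielding $\disc_\F(U_S,U_{S'}) \geq 1 - \epsilon/2 \geq 1 - \epsilon$. For soundness, I would argue by contrapositive: any pair $(f_1, f_2)$ of degree-$k$ PTFs witnessing discrepancy greater than $\epsilon$ would, via the same identity, produce the degree-$2k$ PTF $\pm f_1 f_2$ with empirical accuracy exceeding $1/2 + \epsilon/4$, contradicting the NO case. The sign is chosen so that the accuracy (not the error) is on the large side; both $g$ and $-g$ are degree-$2k$ PTFs, so the class is closed under negation.

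The step I expect to be the main obstacle is handling class imbalance. The clean translation $|\mathrm{acc}(g) - 1/2| = \frac{1}{4}|\E_{U_S}[g] - \E_{U_{S'}}[g]|$ relies on $n_+ = n_-$; without balance, a trivial constant hypothesis can attain accuracy close to $1$ while the two conditionals remain indistinguishable (so discrepancy stays at $0$). I would resolve this either by verifying that the label-cover-based reduction of \cite{ptf_hardness} produces balanced instances, or by first reducing $\ptfma$ to its balanced-accuracy variant, which survives a standard duplication argument in which each positive point is copied $n_-$ times and each negative point $n_+$ times, and then applying the reduction above.
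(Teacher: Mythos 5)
Your core reduction is the same as the paper's: split the $\ptfma$ sample by label, get completeness by pairing the good halfspace with a constant PTF, and get soundness by turning any witness pair $(f_1,f_2)$ into the degree-$2k$ PTF $\pm f_1f_2$ via $\Ind[f_1\neq f_2]=\frac{1-f_1f_2}{2}$; the parameter bookkeeping you sketch is fine once balance is in hand. The genuine gap is exactly at the step you flagged as the main obstacle, and neither of your two proposed fixes closes it. The duplication argument does not work: copying each positive point $n_-$ times and each negative point $n_+$ times leaves the conditional distributions $U_S$ and $U_{S'}$ completely unchanged, so it cannot repair the failure mode you yourself described. Concretely, take one positive example at $\x_0$ and $n-1$ negative examples all equal to $\x_0$: this is a YES instance of $\ptfma$ (the all-negative halfspace has accuracy $1-1/n$), yet $U_S=U_{S'}$ and the discrepancy is $0$, before and after duplication; equivalently, the intermediate ``balanced-accuracy'' instance produced by duplication has balanced accuracy $1/2$, so the first link of your chain $\ptfma \le \text{balanced-}\ptfma \le \disctest$ already fails. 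Your other option---verifying that the instances produced by the reduction in \cite{ptf_hardness} happen to be balanced---is an unverified deferral, not an argument.

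The paper's fix is the observation you came within a hair of making: large imbalance means a \emph{constant} hypothesis already has accuracy greater than $\frac12+\epsilon$, so such an instance can never be a NO instance of $\ptfma$; since this condition is checkable in polynomial time, the reduction may detect it and output a canonical YES instance of $\disctest$ (for inputs outside the promise, any output is acceptable). On the remaining instances one assumes only $\epsilon$-unbiasedness, i.e., $\bigl||S^+|/n-\tfrac12\bigr|\le\epsilon$ and likewise for $S^-$, and carries the resulting $O(\epsilon)$ slack through both directions of your calculation; this costs only a constant factor in the gap (the paper reduces $\ptfma(2k,\epsilon)$ to $\disctest(\F,8\epsilon)$), rather than requiring exact balance. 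With that case analysis in place, your argument goes through essentially verbatim.
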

\begin{proof}
    We give a reduction from $\ptfma(2k,\epsilon)$ to $\disctest(\F,8\epsilon)$. The input to $\ptfma(2k,\epsilon)$ is a set of tuples $\{\x_i,y_i\}_{i\in [n]}$ where $\x_{i}\in \R^d$ and $y_i\in \{\pm 1\}$ for all $i\in [n]$. Let $S^+=\{\x_i\mid y_i=+1,i\in [n]\}$ and $S^-=\{\x_i\mid y_i=-1,i\in [n]\}$. We assume that $\big|\frac{|S^+|}{n}-\frac{1}{2}\big|\leq \epsilon$ and $\big|\frac{|S^-|}{n}-\frac{1}{2}\big|\leq \epsilon$. Otherwise, there exists a trivial halfspace(taking constant value) that achieves success probability greater than $\frac{1}{2}+\epsilon$ and this can easily be checked in polynomial time. We say that $S^+,S^-$ are $\epsilon$-unbiased if the above property holds. We now complete the proof by proving the following two claims and using \Cref{lem:ptf_hard}.
    \begin{claim}[Completeness]
        Let $S^+,S^-$ be $\epsilon$-unbiased. If there exists a halfspace $h$ s.t. $\frac{1}{n}\sum_{i=1}^{n}\1\{h(\x_i)=y_i\}\geq 1-\epsilon$, then $\disc_{\F}(U_{S^+},U_{S^-})\geq 1-8\epsilon$. 
    \end{claim}
    \begin{proof}
        We have that $\frac{|S^+|}{n}\pr_{\x\sim U_{S^+}}[h(\x)=1]+\frac{|S^-|}{n}\pr_{\x\sim U_{S^+}}[h(\x)=0]\geq 1-\epsilon$. Thus, simplifying some terms, we obtain that
        \begin{align*}
            1-\epsilon&\leq \frac{|S^-|}{n}+\frac{|S^+|}{n}\cdot \pr_{\x\sim U_{S^+}}[h(\x)=1]-\frac{|S^-|}{n}\cdot \pr_{\x\sim U_{S^-}}[h(\x)=1]\\
            &\leq \frac{1}{2}+\frac{1}{2}\cdot\Big(\pr_{\x\sim U_{S^+}}[h(\x)=1]-\pr_{\x\sim U_{S^-}}[h(\x)=1]\Big)+3\epsilon
        \end{align*} where the last inequality follows from the fact that $S^+,S^-$ are $\epsilon$-unbiased.
        Thus, $(\pr_{\x\sim U_{S^+}}[h(\x)=1]-\pr_{\x\sim U_{S^-}}[h(\x)=1])\geq 1-8\epsilon$. Let $g$ be the the halfspace that always outputs $-1$. Clearly, we have that $\disc_{\F}(U_{S^+},U_{S^-})\geq (\pr_{\x\sim U_{S^+}}[h(\x)\neq g(\x)]-\pr_{\x\sim U_{S^-}}[h(\x)\neq g(\x)])\geq 1-8\epsilon$. 
    \end{proof}
    \begin{claim}[Soundness]
     Let $S^+,S^-$ be $\epsilon$-unbiased. If there is no degree $2k$ PTF  $h$ with $\frac{1}{n}\sum_{i=1}^{n}\1\{h(\x_i)=y_i\}\geq \frac{1}{2}+\epsilon$, then $\disc_{\F}(U_{S^+},U_{S^-})\leq 8\epsilon$. 
    \end{claim}
    \begin{proof}
        Say $\disc_{\F}(U_{S^+},U_{S^-})\geq 8\epsilon$. Since $\F$ is closed under complements, we obtain without loss of generality that there exist two PTFs $h_1$, $h_2$ of degree $d$ s.t. $\pr_{\x\sim U_{S^-}}[h_1(\x)\neq h_2(\x)]-\pr_{\x\sim U_{S^+}}[h_1(\x)\neq h_2(\x)]\geq \frac{1}{2}+\epsilon$. Consider the function $g(\x)=h_1(\x)\cdot h_2(\x)$. We have that $g$ is a degree $2k$ PTF. Thus, we obtain that
        \begin{align*}
        \frac{1}{n}\sum_{i=1}^{n}\1\{g(\x)=y\}&=\frac{|S^-|}{n}\cdot \pr_{\x\sim U_{S^-}}[g(\x)=-1]+\frac{|S^+|}{n}\cdot \pr_{\x\sim U_{S^+}}[g(\x)=1]\\
        & =\frac{|S^-|}{n}\cdot \pr_{\x\sim U_{S^-}}[h_1(\x)\neq h_2(\x)]+\frac{|S^+|}{n}\cdot (1-\pr_{\x\sim U_{S^+}}[h_1(\x)\neq h_2(\x)])\\
        &\geq \frac{1}{2}+\frac{1}{2}\Big(\pr_{\x\sim U_{S^-}}[h_1(\x)\neq h_2(\x)]-\pr_{\x\sim U_{S^+}}[h_1(\x)\neq h_2(\x)\Big)-3\epsilon\\
        &\geq \frac{1}{2}+\epsilon
        \end{align*}
        where the penultimate inequality follows from the fact that $S^+,S^-$ are $\epsilon$-unbiased and the last inequality follows from our lower bound on the discrepancy. Since there exists no PTF of degree $2k$ that succeeds with probability $\frac{1}{2}+\epsilon$, we have a contradiction.
    \end{proof}
    This concludes the proof of \Cref{thm:np-hard}.
\end{proof}

\bibliographystyle{alpha}
\bibliography{refs}

\newpage
\appendix

\section{Extended Preliminaries}\label{appendix:extended-preliminaries}

We use standard big-O notation (and $\tilde{O}$ to hide poly-logarithmic factors), $\R^d$ is the $d$-dimensional euclidean space and $\Gauss_d$ the standard Gaussian over $\R^d$, $\cube{d}$ is the $d$-dimensional hypercube and $\unif(\cube{d})$ the uniform distribution over $\cube{d}$, $\nats$ is the set of natural numbers $\nats = \{1,2,\dots\}$ and $\x\in\R^d$ denotes a vector with $\x = (\x_1,\dots,\x_d)$ and inner products $\x\cdot \vv$.
For $\mindex \in \N^d$, we denote with $\x^\mindex$ the product $\prod_{i\in[d]}\x_i^{\mindex_i}$, $\moment_\mindex = \E[\x^\mindex]$ and $\|\mindex\|_1 = \sum_{i\in[d]}\mindex_i$. For a polynomial\footnote{In \Cref{appendix:cylindrical-grids,appendix:boundary-proximity}, we use the notation $p$ to denote natural numbers and use $q$ for polynomials instead.} $p$ over $\R^d$ and $\mindex\in \N^d$, we denote with $p_\mindex$ the coefficient of $p$ corresponding to $\x^\mindex$, i.e., we have $p(\x) = \sum_{\mindex\in\N^d}p_\mindex \x^\mindex$. If $p$ is a polynomial over $\cube{d}$, then we express it in its multilinear form, using only coefficients $p_\mindex$ with $\mindex \in \{0,1\}^d$, i.e., $p(\x) = \sum_{\mindex\in \{0,1\}^d} p_\mindex \x^\mindex$. We define the degree of $p$ and denote $\deg(p)$ the maximum degree of a monomial whose coefficient in $p$ is non-zero. We use standard notations for norms $\|\x\|_1 = \sum_{i\in[d]}|\x_i|,\|\x\|_2 = (\sum_{i\in[d]}\x_i^2)^{1/2}$ and $\|\x\|_\infty = \max_{i\in[d]}|\x_i|$. We denote with $\S^{d-1}$ the $d-1$ dimensional sphere on $\R^d$ and, for $\x\in\R^k$ and $r>0$, $\ball_{k}(\x,r) = \{\y\in\R^d: \|\x-\y\|_2 \le r\}$.

For any $\vv_1,\vv_2\in\R^d$, we denote with $\vv_1\cdot \vv_2 $ the inner product between $\vv_1$ and $\vv_2$ and we let $\measuredangle (\vv_1,\vv_2)$ be the angle between the two vectors, i.e., the quantity $\theta\in [0,\pi]$ such that $\|\vv_1\|_2\|\vv_2\|_2\cos(\theta) = \vv_1\cdot \vv_2$. For $\vv\in\R^d,\tau\in \R$, we call a function of the form $\x\mapsto \sign(\vv\cdot \x)$ an origin-centered (or homogeneous) halfspace and a function of the form $\x\mapsto \sign(\vv\cdot \x + \tau)$ a general halfspace over $\R^d$. 

We let $\X\subseteq \R^d$ be either the $d$-dimensional hypercube $\cube{d}$ or $\R^d$. For a distribution $\Dgeneric$ over $\X$, we use $\E_\Dgeneric$ (or $\E_{\x\sim \Dgeneric}$) to refer to the expectation over distribution $\Dgeneric$ and for a given set $\Sunlabelled$, we use $\E_\Sunlabelled$ (or $\E_{\x\sim \Sunlabelled}$) to refer to the expectation over the uniform distribution on $\Sunlabelled$ (i.e., $\E_{\x\sim \Sunlabelled}[g(\x)] = \frac{1}{|\Sunlabelled|}\sum_{\x\in \Sunlabelled}g(\x)$, counting possible duplicates separately). We let $\R_+ = (0,\infty)$. 


We define the notion of balance as follows.
\begin{definition}[Balanced Concepts]\label{definition:globally-balanced}
    For $\localbalance \in (0,1)$, we say that a function $\concept:\R^d\to \cube{}$ is (globally) $\localbalance$-balanced if for any $\x\in\R^d$ we have $\pr_{\z\sim \Gauss}[\concept(\z)= \concept(\x)] > \localbalance$.
\end{definition}

    

\section{Additional Tools}\label{appendix:additional-tools}

\subsection{Boundary Smoothness of Structured Concepts}\label{appendix:smooth-boundary}
In this section, we prove that low dimensional polynomial threshold functions and convex sets have smooth boundary, i.e., a non-asymptotic anticoncentration bounds that scales linearly  with the distance from the boundary. We first prove that PTFs have smooth boundary. 
\begin{lemma}[Smooth Boundary for PTFs]
    \label{lemma:smooth_ptf_boundary}
    Let $p$ be a polynomial of degree $\ell$ over $\R^k$. Let $\lowconcept:\R^{k}\to \cube{}$ be the function defined as $\lowconcept(\x)=\sign(p(\x))$. Then, $\lowconcept$ has a $C\ell^3k$-smooth boundary with respect to $\Gauss_k$ for a large universal constant $C$.
\end{lemma}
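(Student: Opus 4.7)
The plan is to reduce the boundary-mass estimate to a tube-measure estimate around the zero set of $p$. By the intermediate value theorem, $x \in \partial_\varrho \lowconcept$ holds iff the zero variety $Z(p) := \{y \in \R^k : p(y) = 0\}$ intersects $\ball_k(x,\varrho)$; equivalently, $\partial_\varrho \lowconcept = \{x : \mathrm{dist}(x, Z(p)) \le \varrho\}$. Thus the Gaussian mass of $\partial_\varrho \lowconcept$ equals the Gaussian measure of the $\varrho$-tube around $Z(p)$, and the task is to show this measure is at most $C\ell^3 k \varrho$.

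If $\mathrm{dist}(x, Z(p)) \le \varrho$, picking any $y \in Z(p) \cap \ball_k(x,\varrho)$ and applying the mean value theorem along the segment $[x,y]$ yields $|p(x)| = |p(x) - p(y)| \le \varrho \sup_{\xi \in \ball_k(x,\varrho)} \|\nabla p(\xi)\|_2$. Consequently, for any threshold $M > 0$,
\[
\pr_{x\sim \Gauss_k}[x\in\partial_\varrho \lowconcept] \;\le\; \pr_{x\sim \Gauss_k}[|p(x)| \le \varrho M] \;+\; \pr_{x\sim \Gauss_k}\Bigl[\sup_{\xi\in \ball_k(x,\varrho)}\|\nabla p(\xi)\|_2 > M\Bigr].
\]
For the second term, $\|\nabla p\|_2^2 = \sum_{i=1}^k (\partial_i p)^2$ is a polynomial of degree at most $2(\ell-1)$ whose expectation under $\Gauss_k$ is bounded by a $\poly(\ell,k)$-multiple of $\|p\|_{L^2(\Gauss_k)}^2$ (via Hermite expansion). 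Gaussian hypercontractivity for polynomials yields strong concentration, and a Markov--Bernstein polynomial derivative estimate combined with an $O(\varrho)$-covering of $\ball_k(x,\varrho)$ transfers this to a uniform bound on the sup over the small ball. This allows choosing $M$ as a fixed $\poly(\ell,k)$-multiple of $\|p\|_{L^2}$ with negligible failure probability.

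For the first term, the natural tool is the Carbery--Wright anti-concentration inequality $\pr[|p(x)| \le t\|p\|_{L^2}] \le O(\ell)\, t^{1/\ell}$; however this only yields a $\varrho^{1/\ell}$ bound, not the linear $\varrho$ scaling claimed. To recover linear dependence, I would instead integrate a surface-area estimate over level sets: write the tube measure via the coarea formula as $\int_{-\varrho M}^{\varrho M} \E_{\Gauss_k}[\|\nabla p\|_2^{-1}\mathbf{1}\{p(x)=s\}]\, ds$, and bound the Gaussian surface area of each level set $\{p = s\}$ by a $\poly(\ell,k)$ factor using the same gradient control as above. Equivalently, one can invoke the standard Gaussian surface-area bound for degree-$\ell$ PTFs and apply the tube inequality $\gamma_k(\{\mathrm{dist}(\cdot, Z(p)) \le \varrho\}) \le 2\varrho\, \Gamma(\lowconcept) + o(\varrho)$.

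The main obstacle is arriving at the exact $C\ell^3 k$ constant by elementary means, avoiding Kane's sharp $\Gamma(\lowconcept) = O(\ell)$ bound (which would already give the stronger $O(\ell\varrho)$ estimate, independent of $k$). The extra $\ell^2$ factor appears to arise from Markov--Bernstein-type polynomial derivative estimates when passing from $L^2$ to $L^\infty$ bounds on $\nabla p$, and the dimensional factor $k$ arises naturally from summing $k$ coordinate-wise partial derivatives when relating $\|\nabla p\|_2^2$ to $\sum_i (\partial_i p)^2$. Carefully tracking these constants through the two-regime split above should deliver the bound $\pr_{x\sim\Gauss_k}[x\in\partial_\varrho\lowconcept] \le C\ell^3 k \varrho$ for a universal constant $C$, as claimed.
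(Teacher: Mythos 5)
Your opening reduction of $\partial_\varrho \lowconcept$ to a tube around $Z(p)$ is correct, and you rightly observe that Carbery--Wright only gives a $\varrho^{1/\ell}$ bound, so a naive anti-concentration plus gradient-concentration split cannot deliver linear scaling in $\varrho$. But the fallback you propose does not close the gap. First, the coarea integrand you write, $\int_{p^{-1}(s)} \phi(y)\,\|\nabla p(y)\|^{-1}\,dH^{k-1}(y)$, is not the Gaussian surface area of the level set: GSA is $\int_{p^{-1}(s)}\phi\,dH^{k-1}$, and the extra factor $\|\nabla p\|^{-1}$ can blow up near critical points of $p$, so bounding GSA does not bound the coarea integrand. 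Second, the ``tube inequality'' $\gamma_k(\{\mathrm{dist}(\cdot,Z(p))\le\varrho\})\le 2\varrho\,\Gamma(\lowconcept)+o(\varrho)$ is not a usable estimate here: $\Gamma$ is defined via a $\liminf$ as $\varrho\to 0$, so it carries no quantitative information at fixed $\varrho$, and the whole point of the lemma is a non-asymptotic bound. (Indeed, the paper's companion Lemma~\ref{lem:sa_convex_inner} for convex sets has to work hard, via noise sensitivity and rescaling, precisely to convert an asymptotic GSA bound into a non-asymptotic one; that effort is not optional.) So as written, neither branch of your argument establishes the claimed $O(\ell^3 k\varrho)$ bound.

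The paper's proof uses a genuinely different ingredient that you do not have: Lemma~1.6 of \cite{kelley2021random}, which says that for $\x\sim\Gauss_k$, with probability $1-\delta$ every ratio $\|\nabla^i p(\x)\|_2/\|\nabla^{i-1}p(\x)\|_2$ is bounded by $C\ell^3/\delta$ simultaneously for $i=1,\dots,\ell$. Conditioned on this ``good'' event, a multivariate Taylor expansion of $p$ at $\x$ shows $|p(\x+\z)-p(\x)|\le|p(\x)|/2$ whenever $\|\z\|_2\le\varrho$ (with $\delta$ tuned to $\Theta(\ell^3 k\varrho)$), so the sign cannot flip inside the ball. This replaces anti-concentration of $|p|$ itself by anti-concentration of the gradient \emph{ratios}, which is exactly what supplies the linear $\varrho$ dependence you were trying to recover. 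Your sketch identifies the obstacle correctly but does not supply an argument that overcomes it.
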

\begin{proof}
   Let $C$ be a large universal constant that we fix later. Let $\delta=3C\ell^{3}\gamma k$. Define the set $S:=\{\x\mid \exists i\in [\ell],\norm{\nabla^ip(\x)}_2>(C{\ell^3}/{\delta})\cdot \norm{\nabla^{i-1}p(\x)}_2\}$. Observe that $\pr_{\x\sim \Gauss_{k}}[\x\in \partial_\slack \lowconcept]\leq \pr_{\x\sim \Gauss_{k}}[\x\in S]+\pr_{\x\sim \Gauss_{k}}[\x\in \partial_\slack f\mid \x\notin S]$. We bound these two terms separately. To bound the first term, we use the following theorem from \cite{kelley2021random}.
    \begin{lemma}[Lemma~1.6 from \cite{kelley2021random}]
        Let $C$ be a large universal constant. For any polynomial $p:\R^{k}\to \R$ of degree $\ell$ and $\x\sim \Gauss_k$, the following event occurs with probability at least $1-\delta$:
        \begin{align*}
        \twonorm{\grad^ip(\x)}\leq (C{\ell^3}/{\delta})\twonorm{\grad^{i-1}p(\x)},\,\text{ for all $1\leq i\leq \ell$.}
        \end{align*}
    \end{lemma}
        Thus, we have that $\pr_{\x\sim \Gauss_k}[\x\in S]\leq \delta$. Now consider a point $\x\notin S$. From a multivariate taylor expansion, we have that 
        $p(\x+\z)=p(\x)+\sum_{\alpha\in \N^k,1\leq|\alpha|\leq \ell}\frac{\partial^{\alpha}p(\x)}{\alpha!}\cdot \z^{\alpha}$. Thus, for $\z\in \R^k$ with $\twonorm{\z}\leq \slack$, we obtain that 
        \begin{align*}
            |p(\x)-p(\x+\z)|&\leq \sum_{1\leq |\alpha|\leq \ell}\big|\partial^{\alpha}p(\x)\big|\cdot\norm{\z}_{\infty}^{|\alpha|} 
            \leq \sum_{i\in [\ell]}\twonorm{\z}^{i}\cdot\norm{\grad^ip(\x)}_1\\
            &\leq \sum_{i\in [\ell]}\slack^ik^i\twonorm{\grad^{i}p(\x)}
            \leq \sum_{i\in [l]}\slack^ik^i(C\ell^3/\delta)^i|p(\x)|
            \leq |p(\x)|/2\,.
        \end{align*}
The first inequality follows from the multivariate Taylor expansion. The third inequality follows from the fact that $\twonorm{\z}\leq \slack$ and the bound on the number of monomials of size $i$ by $k^{2i}$. The penultimate inequality follows from the definition of the set $S$ and the last inequality is true by our choice of $\delta$.

Since $|p(\x)-p(\x+\z)|\leq |p(\x)|/2$, we have that $\lowconcept(\x)=\lowconcept(\x+\z)$ for all $\z\in \R^{k}$ with $\twonorm{\z}\leq \slack$. Thus, we have that $\pr_{\x\sim \Gauss_{k}}[\x\in \partial_\slack \lowconcept\mid \x\notin S]=0$. Thus, we have that $\pr_{\x\sim \Gauss_k}[\x\in \partial_{\slack}\lowconcept]\leq 3C\ell^{3}\gamma k$.
\end{proof}

We now move on to proving that low dimensional convex sets. To prove this, we will crucially use the notion of Gaussian surface area (an asymptotic anticoncentration bound) that we will now define.

\begin{definition}[Gaussian Surface Area]
\label{def:GSA}
Let $f$ be a boolean function. The Gaussian surface area $\Gamma(f)$ is defined as 
\[
    \Gamma(f) = \liminf_{\delta \to 0} \frac{1}{\delta} \pr_{\z\sim\Gauss(0,I_k)}\bigr[\z \in A_f^\delta \setminus A_f\bigr]\,,
\]
where $A_f=\1\{\x\mid f(\x)=1\}, A_f^\delta = \{\vu: \min_{\vv\in A_f}\twonorm{\vu-\vv} \le \delta\}$.
\end{definition}
We prove that convex sets have smooth boundary in two steps. We first prove that the set of points inside the set that are close to it's boundary have small mass. To do this, we use a noise sensitivity argument (\Cref{lem:sa_convex_inner}). Then, we prove that points outside it that are close to the boundary (\Cref{lem:sa_convex_outer}). This will follow from an argument uses the definition of Gaussian Surface area and a bound on this quantity for convex sets due to \cite{ball93}. Together, these two lemmas imply that convex sets have smooth boundary. 

The following lemma will be useful in proving the smooth boundary of the interior of the set. 
\begin{lemma}
    \label{lem: convex_scaling}
Let $\lambda\in (0,1/2)$. Let $S$ be a convex set on $\R^{k}$ and let $f(\x)=\1\{\x\in S\}$ be the indicator function of $S$. Then, we have that $\pr_{\x\sim \Gauss_k}[f(\x)\neq f(\x/\sqrt{1-\lambda})]\leq k\log k\sqrt{\lambda}$.
\end{lemma}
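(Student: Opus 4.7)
My plan is to reduce the statement to a boundary-smoothness bound for convex sets, combined with a Gaussian tail argument. The key observation is that $\x$ and $\y := \x/\sqrt{1-\lambda}$ lie on the same ray from the origin at distance $\|\y - \x\|_2 = \|\x\|_2 (1/\sqrt{1-\lambda} - 1)$. Using $1 - \sqrt{1-\lambda} = \lambda/(1+\sqrt{1-\lambda}) \le \lambda$ and $\sqrt{1-\lambda} \ge 1/\sqrt{2}$ for $\lambda \le 1/2$, this yields the elementary bound
\[
    \|\y - \x\|_2 \;\le\; \sqrt{2}\,\lambda\,\|\x\|_2.
\]
Pick the truncation radius $T = 1/\sqrt{\lambda}$, and whenever $\|\x\|_2 \le T$ we therefore have $\|\y - \x\|_2 \le \sqrt{2\lambda}$. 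If in addition $f(\x) \ne f(\y)$, then setting $\z = \y - \x$ in \Cref{definition:boundary} shows $\x \in \partial_{\sqrt{2\lambda}} f$.

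\textbf{Combining.} Split the target probability as
\[
    \pr_{\x\sim\Gauss_k}[f(\x)\ne f(\y)] \;\le\; \pr[\,\x\in\partial_{\sqrt{2\lambda}} f\,] \;+\; \pr[\,\|\x\|_2 > T\,].
\]
For the first term, since $S$ is convex the indicator $f$ has $O(k\log k)$-smooth boundary (this is exactly \Cref{lemma:convex-smooth-boundary}, invoked in the paper with constant $C k \log k$), giving $\pr[\x\in\partial_{\sqrt{2\lambda}} f] \le O(k\log k)\sqrt{\lambda}$. For the second term, Markov's inequality applied to $\|\x\|_2^2$ yields $\pr[\|\x\|_2 > T] = \pr[\|\x\|_2^2 > 1/\lambda] \le \E[\|\x\|_2^2]\,\lambda = k\lambda$. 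Since $k\lambda \le k\log k \sqrt{\lambda}$ whenever $\lambda \le 1$, the second contribution is absorbed into the first, producing the claimed bound $k\log k \sqrt{\lambda}$ up to absolute constants (which are either implicit in the stated bound or can be tightened via a slightly larger constant in the choice of $T$).

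\textbf{Main obstacle.} The proof is almost entirely driven by the smoothness bound \Cref{lemma:convex-smooth-boundary}; the only substantive work is the elementary inequality $1/\sqrt{1-\lambda} - 1 \le \sqrt{2}\lambda$ for $\lambda \le 1/2$ and the choice $T = 1/\sqrt{\lambda}$ that correctly balances the boundary term ($\propto \sqrt{\lambda}$) against the Markov tail ($\propto \lambda$). Assuming \Cref{lemma:convex-smooth-boundary} is in hand, there is no further technical hurdle; the bound could in fact be tightened (e.g.\ to $O(\sqrt{k}\,\lambda)$ via a polar-coordinate / pdf-of-$\|\x\|$ argument since the Gaussian-norm density is bounded uniformly in $k$), but the looser $k\log k\sqrt{\lambda}$ bound suffices for the paper's applications and admits this short, robust proof.
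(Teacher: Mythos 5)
Your proof is circular. You invoke \Cref{lemma:convex-smooth-boundary} to bound $\pr[\x\in\partial_{\sqrt{2\lambda}} f]$, but in the paper that smooth-boundary lemma for convex sets is established by combining \Cref{lem:sa_convex_inner} (the mass of $S\cap\partial_\varrho S$) with \Cref{lem:sa_convex_outer}, and the proof of \Cref{lem:sa_convex_inner} explicitly uses \Cref{lem: convex_scaling} --- the very statement you are trying to prove. Concretely, that proof applies the noise-sensitivity/Gaussian-surface-area bound (\Cref{lem:kos_gsa_ns}) to the rescaled indicator $g(\x)=f(\x/\sqrt{1-\lambda})$ and then ``combin[es] the above expression with \Cref{lem: convex_scaling}'' via a triangle inequality to translate the conclusion about $g$ back into one about $f$. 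So the smooth-boundary bound cannot be taken as a black box here; depending on it re-imports the current lemma.

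The paper's actual argument is self-contained and one-dimensional, exploiting spherical symmetry of $\Gauss_k$: conditioning on the direction $\w=\x/\|\x\|_2$, the radial part $\|\x\|_2$ is distributed as $\sqrt{r}$ with $r\sim\chi^2(k)$, and the restriction $f_\w(t)=f(t\w)$ is the indicator of an interval. The event $f_\w(\sqrt{r})\ne f_\w(\sqrt{r}/\sqrt{1-\lambda})$ forces $r$ into one of two short intervals of the form $[a,a/(1-\lambda)]$, which are handled by a chi-squared tail bound when $a\ge 2k\log(k/\lambda)$ and by the pointwise bound $\le\tfrac12$ on the $\chi^2(k)$ density when $a$ is small. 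Ironically, this is precisely the ``polar-coordinate / pdf-of-$\|\x\|$ argument'' you mention in your last sentence as an optional tightening: that is in fact the non-circular route, not a refinement of the one you wrote. Your geometric reduction ($\|\x/\sqrt{1-\lambda}-\x\|_2\le\sqrt{2}\lambda\|\x\|_2$, truncation at $\|\x\|_2=1/\sqrt{\lambda}$) is a sound start, but you would need to finish it with a self-contained bound on the relevant boundary mass --- or, more simply, carry out the radial argument you alluded to instead of appealing to \Cref{lemma:convex-smooth-boundary}.
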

\begin{proof}
    For any vector $\w\in \R^{k}$ with $\twonorm{\w}=1$, let $f_{\w}:\R^+\to \R$ be the function defined as $f_{\w}(r)=f(r\cdot \w)$. Also, note that $f_{\w}$ is the indicator function of a one dimensional convex set. Observe that $\pr_{\x\sim \Gauss_k}[f(\x)\neq f(\x/\sqrt{1-\lambda})]\leq \sup_{\twonorm{\w}=1}\pr_{r\sim\chi^2(k)}[f_{\w}(\sqrt{r})\neq f_{\w}(\sqrt{r}/\sqrt{1-\lambda})]$ from the fact that the $k$ dimensional Gaussian conditioned on pointing in direction $\w$ is distributed as $\sqrt{r}\w$ where $r\sim \chi^2(k)$. Here, $\chi^2(k)$ is the one dimensional Chi-squared distribution with mean $k$.
    
    We have thus reduced the problem to one dimension. Consider a function $g:\R\to \R$ such that $g(x)=\1\{x\in [\sqrt{a/(1-\lambda)},\sqrt{b/(1-\lambda)}]\}$ where $a,b$ are from $\R^+\cup\{+\infty\}$. All one dimensional indicators of convex sets are of this form. We will now prove that $\pr_{r\sim \chi^2(k)}[g(\sqrt{r})\neq g(\sqrt{r}/\sqrt{1-\lambda)}]\leq k\lambda\log(k/\lambda)$. 
    
    Observe that $\pr_{r\sim \chi^2(k)}[g(\sqrt{r})\neq g(\sqrt{r/{1-\lambda}})]\leq \pr_{r\sim \chi^2(k)}\left[r\in [a,a/(1-\lambda)]\cup[b,b/(1-\lambda)]\right]$. It suffices to bound $\pr_{r\sim \chi^2(k)}[r\in [a,a/(1-\lambda)]]$ for $a\in \R^+$ as the claim then follows from a union bound. We bound this by splitting into two cases.
    \paragraph{Case 1: $a\geq 2k\log(k/\lambda)$.}
    Since $\chi^2(k)$ is the distribution of the sum of squares of $k$ independent $\Gauss(0,1)$ Gaussian random variables, we have that $\pr_{r\sim \chi^2(k)}[r\geq a]\leq k\pr_{x\sim \Gauss(0,1)}[|x|^2\geq a/k]\leq ke^{-a/(2k)}$. Thus, when $a\geq 2k\log(k/\lambda)$, we have that $\pr_{r\sim \chi^2(k)}\big[r\in [a,a/(1-\lambda)]\big]\leq \pr_{r\sim \chi^2(k)}\big[r\geq a\big]\leq \lambda$.
    \paragraph{Case 2: $a<2k\log(k/\lambda)$.} Let $\psi$ be the density function for $\chi^2(k)$. It is a standard fact from probability that $\psi(x)=\frac{x^{k/2-1}}{2^{k/2}\Gamma(k/2)}e^{-x/2}$. For $k=1$, it is a fact that $\psi(x)\leq 1$. For $k\geq 2$, by taking a derivative, we can see that this density function is maximized at $x=k-2$. We obtain that 
    \[
    \psi(x)=\frac{(k-2)^{k/2-1}}{2^{k/2}\Gamma(k/2)}e^{-k/2+1}\leq \frac{\left((k-2)\cdot e\right)^{k/2-1}}{2^{k/2}(k/2)^{k/2-1}}e^{-k/2+1}\leq \frac{1}{2}
    \]
    where the second inequality follows from the fact that $\Gamma(t)\geq \left(\frac{t}{e}\right)^{t-1}$ for all $t\geq 2$ and $\Gamma(1)=1$.
    We have that \[\pr_{r\sim \chi^2(k)}\big[r\in [a,a/(1-\lambda)]\big]\leq \norm{\psi}_{\infty}\cdot a\left(1/(1-\lambda)-1\right)\leq 4k\lambda\log(k/\lambda)\leq 4k\log k\sqrt{\lambda}\,.\] 
    We get the first inequality from the upper bound on the density. The second follows from the fact that $1/(1-\lambda)\leq 1+2\lambda$ when $\lambda<1/2$. The third inequality follows from the assumption on $a$. The final inequality follows from the fact that $x\log(1/x)\leq \sqrt{x}$.
\end{proof}
We are now ready to prove the set of points inside the convex set that are close to it's boundary have small mass under the Gaussian. 
\begin{lemma}
    \label{lem:sa_convex_inner}
    Let $S$ be a convex set on $\R^k$. Let $\varrho\in (0,1)$. Then, we have that $\pr_{\x\sim \Gauss_k}[\x\in S\cap \partial_{\varrho}S] \leq Ck\log k\varrho$ where $C$ is a large universal constant.
\end{lemma}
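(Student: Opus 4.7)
Plan: My approach mirrors the structure of the proof of \Cref{lem: convex_scaling}: reduce the $k$-dimensional event to a one-dimensional one by exploiting the rotational invariance of the Gaussian, and then bound the one-dimensional probability using the density bound $\|\psi_{\chi^2(k)}\|_\infty \le 1/2$ together with a $\chi^2(k)$ tail estimate.

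Concretely, I would write $\x = r\vu$ with $r = \|\x\|_2$ and $\vu = \x/\|\x\|_2$, so that $r^2 \sim \chi^2(k)$ and $\vu\sim\Unif(\S^{k-1})$ are independent. For each fixed $\vu\in\S^{k-1}$, convexity of $S$ implies that the intersection of $S$ with the ray $\{t\vu : t\ge 0\}$ is a (possibly empty) interval $[a_\vu, b_\vu]$. The core geometric observation is that if $r\vu \in S \cap \partial_\varrho S$, then either the one-dimensional coordinate $r$ lies within distance $\varrho$ of one of the endpoints $a_\vu, b_\vu$, or else $r\vu$ lies within distance $\varrho$ of $S^c$ in a direction transverse to $\vu$; the latter case I handle by invoking the supporting hyperplane at the nearest boundary point of $r\vu$ (which is at distance $\le \varrho$) and projecting onto $\vu$ to obtain the same $O(\varrho)$ interval bound on $r$ up to a covering loss.

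Granted this reduction, the estimate follows by the two-case split already used in the proof of \Cref{lem: convex_scaling}: in the large-radius regime $b_\vu^2 \ge 2k\log(k/\varrho)$ the $\chi^2(k)$ tail bound $k e^{-b_\vu^2/(2k)} \le \varrho$ yields a contribution of $\le \varrho$, while in the bulk regime $b_\vu^2 \le 2k\log(k/\varrho)$ the density bound $\|\psi_{\chi^2(k)}\|_\infty \le 1/2$ converts the interval event $r^2 \in [a_\vu^2, (a_\vu+\varrho)^2]\cup[(b_\vu-\varrho)^2, b_\vu^2]$ into a probability of order $b_\vu \cdot \varrho \le O(\sqrt{k\log(k/\varrho)}\cdot\varrho)$. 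Integrating over $\vu\in\S^{k-1}$ and absorbing the $\sqrt{\log(k/\varrho)}$ factor gives the claimed $O(k\log k\cdot \varrho)$ bound.

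The main obstacle I foresee is the geometric reduction step, specifically the transverse-exit case where the nearest $S^c$ point to $r\vu$ does not lie on the ray through $\vu$. This is delicate because naively the transverse exit distance may translate into a radial interval of width $\asymp r\varrho / b_\vu$ rather than $\varrho$, which would give a much worse bound for elongated convex bodies (e.g.\ thin slabs). I expect to resolve this by using the supporting hyperplane at the nearest boundary point to set up a one-dimensional slab in its outward-normal direction, applying the 1D Gaussian density bound $1/\sqrt{2\pi}$ in that direction, and then summing over a net of normal directions controlled by the dimension factor $k\log k$; this is precisely where the $k\log k$ factor in the final bound is expected to come from.
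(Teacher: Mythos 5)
Your approach is genuinely different from the paper's, and it has a real gap that I don't see a way to fix along the lines you sketch.

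The radial reduction does not work for the transverse-exit case, and this is not just a technical annoyance. Take $S = \{\x : \x_1 \ge \tau\}$, a halfspace. Then $S \cap \partial_\varrho S$ is the slab $\{\tau \le \x_1 \le \tau + \varrho\}$. For a direction $\vu$ with $\vu_1 > 0$ small, a ray point $r\vu$ lands in this slab precisely when $r \in [\tau/\vu_1,\ (\tau+\varrho)/\vu_1]$, an interval of radial width $\varrho/\vu_1$, which is unbounded as $\vu_1 \to 0$. So the reduction ``$r\vu \in S \cap \partial_\varrho S$ implies $r$ is within $O(\varrho)$ of an endpoint'' is false even for halfspaces, and you cannot get a uniform-in-$\vu$ interval bound of width $O(\varrho)$. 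Your proposed repair — switch to the outward normal of a supporting hyperplane at the nearest boundary point and union-bound over a net of normal directions — does not rescue this: a supporting hyperplane exists and gives a clean one-dimensional estimate in its normal direction, but the set of relevant normal directions for a general convex body covers essentially all of $\S^{k-1}$, and any $\epsilon$-net has size $(1/\epsilon)^{\Theta(k)}$. A naive union bound over such a net costs $e^{\Theta(k)}$, not $k \log k$. There is no obvious way to amortize; the issue is that different points of the inner tube face different normals, and a pointwise argument in a fixed direction does not aggregate to a global measure bound without paying for the directional multiplicity.

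The paper sidesteps this entirely by going global from the start. It bounds the Gaussian \emph{noise sensitivity} $\pr_{\x,\y}[f(\x) \neq f(\x + \gamma\y)]$ of the indicator $f = \Ind\{\cdot \in S\}$ using Lemma \ref{lem:kos_gsa_ns} (noise sensitivity is at most $O(\sqrt{\lambda})$ times Gaussian surface area) together with Ball's theorem that convex sets have GSA $O(k^{1/4})$, and then folds in Lemma \ref{lem: convex_scaling} to replace the rescaled $\x/\sqrt{1-\lambda}$ by $\x$ at an extra cost of $O(k\log k\sqrt{\lambda})$. This gives a single scalar upper bound on noise sensitivity. It then observes that for each $\vp \in S \cap \partial_\varrho S$, the supporting hyperplane at the nearest boundary point gives a \emph{pointwise lower bound} $\pr_\z[f(\vp + \gamma\z) \ne f(\vp)] \ge \tfrac12 - \tfrac{\varrho}{2\gamma}$, and taking $\gamma \asymp \varrho$ (i.e., $\lambda \asymp \varrho^2$) makes this a constant. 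Averaging over $\vp \sim \Gauss_k$ sandwiches $\pr[\x \in S\cap\partial_\varrho S]$ between this constant and the noise-sensitivity upper bound, yielding $O(k\log k\,\varrho)$. The reason this avoids your directional-multiplicity problem is that noise sensitivity is an $L^1$ (integral) quantity that already aggregates over all directions at once, so there is no net and no union bound; Ball's GSA bound is doing all the dimensional heavy lifting. You would need to import an ingredient of comparable strength (Ball's theorem or something equivalent) to make a radial or directional decomposition converge to a polynomial-in-$k$ bound.
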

\begin{proof}
Define the function $f:\R^{k}\to \R$ as $f(\x)=\1\{\x\in S\}$. We now use a restatement of Corollary~12 from \cite{klivans2008learning}.
\begin{lemma}
    \label{lem:kos_gsa_ns}
    Let $g$ be a boolean function on $\R^k$.
    For any $\lambda\in (0,1)$, it holds that \[\pr_{\x,\y\sim \Gauss_k}\left[g\left(\x\right)\neq g\left(\sqrt{1-\lambda}\x+\sqrt{\lambda}\y\right)\right]\leq C\sqrt{\lambda}\Gamma(g)\] for large universal constant $C$.
\end{lemma}
Let $g$ be the function $g(\x)=f(\x/\sqrt{1-\lambda})$. Observe that $g$ is also an indicator of a convex set. From \cite{ball93}  we have that $\Gamma(g)\leq 4k^{1/4}$. Thus, applying \Cref{lem:kos_gsa_ns} to $g$, we obtain that for any $\lambda\in (0,1)$
\[
\pr_{\x,\y\sim \Gauss_k}\left[f(\x/\sqrt{1-\lambda})\neq f\left(\x+\sqrt{\frac{\lambda}{1-\lambda}}\y\right)\right]
\leq C\sqrt{\lambda}k^{1/4}\] where $C$ is a large constant. Combining the above expression with \Cref{lem: convex_scaling}, we obtain that for any $\lambda\in (0,1/2)$, 
\begin{equation}
\label{eqn:sensitivity_convex}
\pr_{\x,\y\sim \Gauss_k}\left[f(\x)\neq f\left(\x+\sqrt{\frac{\lambda}{1-\lambda}}\y\right)\right]
\leq C\sqrt{\lambda}k^{1/4}+2k\log k\sqrt{\lambda}
\,.
\end{equation}

Now, consider any point $\mathbf{p}$ in $S\cap \partial_{\varrho}S$. Since $S$ is convex, there exists a hyperplane $h(\x)=\1\{\w\cdot \x+b\geq 0\}$ for $\w\in R^k$ with $\twonorm{\w}=1$ and $b\in \R$ such that $h(\y)=1$ for all $\y\in S$ and $\w\cdot\mathbf{p}+b\leq \varrho$. This hyperplane correponds to the tangential plane whose normal vector is the line joining $\mathbf{p}$ and the point closest to it in $\partial S$. We have that for any $\gamma>0$, $\pr_{\z\sim \Gauss_k}[\w\cdot \gamma \z\leq -\varrho]\geq \frac{1}{2}-\frac{\varrho}{2\gamma}$ as the Gaussian density is upper bounded by $1$ pointwise. Thus, for any $\gamma>0$, $\pr_{\z\sim \Gauss_k}[f(\vp+\gamma\z)\neq f(\vp)]\geq \frac{1}{2}-\frac{\varrho}{2\gamma}$. Combining this with \Cref{eqn:sensitivity_convex}, we obtain that 
\[
\left(\frac{1}{2}-\frac{\varrho}{2}\cdot \sqrt{\frac{1-\lambda}{\lambda}}\right)\cdot\pr_{\x\sim \Gauss_k}[\x\in S\cap \partial_{\varrho} S]\leq C\sqrt{\lambda}k^{1/4}+2k\log k\sqrt{\lambda}
\,.\]
Setting $\lambda=4\varrho^2$ and rearranging terms, we obtain that $\pr_{\x\sim \Gauss_k}[\x\in S\cap \partial_{\varrho}S]\leq C'k\log k\varrho$ where $C'$ is a sufficiently large universal constant.
\end{proof}

We now prove the smoothness result for points outside the set. 
\begin{lemma}
\label{lem:sa_convex_outer}
      Let $S$ be a convex set on $\R^k$. Let $\varrho\in (0,1)$. Then, we have that $\pr_{\x\sim \Gauss_k}[\x\in S^c\cap \partial_{\varrho}S] \leq Ck^{1/4}\varrho$ where $C$ is a sufficiently large universal constant.
\end{lemma}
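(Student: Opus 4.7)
The plan is to identify the region of interest with a shell, then apply Ball's bound on the Gaussian surface area of convex sets in a non-infinitesimal way. For $t \geq 0$, let $S^t := \{\x : d(\x, S) \leq t\} = S + t\ball_k(0,1)$ denote the $t$-thickening of $S$; being the Minkowski sum of two convex sets, $S^t$ is again convex. A point $\x \in S^c$ lies in $\partial_\varrho S$ iff there exists $\z$ with $\|\z\|_2 \leq \varrho$ and $\x + \z \in S$, i.e.\ iff $\x \in S^\varrho \setminus S$. Define
\[
   g(t) := \pr_{\x \sim \Gauss_k}[\x \in S^t] - \pr_{\x \sim \Gauss_k}[\x \in S].
\]
The lemma then reduces to showing $g(\varrho) \leq C k^{1/4} \varrho$.

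The first ingredient is that, for every $t \geq 0$, applying Ball's theorem \cite{ball93} to the convex set $S^t$ gives $\Gamma(\1_{S^t}) \leq 4 k^{1/4}$. Unpacking \Cref{def:GSA} and using $(S^t)^\delta = S^{t+\delta}$, this says that the upper right Dini derivative of $g$ is uniformly bounded:
\[
   D^+ g(t) \; := \; \liminf_{\delta \to 0^+} \frac{g(t + \delta) - g(t)}{\delta} \; = \; \liminf_{\delta \to 0^+} \frac{1}{\delta}\pr_{\x \sim \Gauss_k}\bigl[\x \in S^{t+\delta} \setminus S^t\bigr] \; \leq \; 4 k^{1/4},
\]
for all $t \geq 0$. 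Note also that $g$ is continuous in $t$: the family $\{S^t\}$ is monotone and the topological boundary of any convex body has zero Gaussian measure, so $\pr[\x \in S^t]$ moves continuously with $t$.

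The second ingredient is a standard real-analysis fact: a continuous function on $[0, \infty)$ whose upper right Dini derivative is bounded above by a constant $M$ everywhere satisfies $g(\varrho) - g(0) \leq M \varrho$ for all $\varrho \geq 0$. Concretely, set $h(t) := g(t) - 4 k^{1/4} t$; then $h$ is continuous with $h(0) = 0$ and $D^+ h(t) \leq 0$ everywhere, which implies $h$ is non-increasing. (If not, pick $\varrho$ with $h(\varrho) > 0$ and let $T := \sup\{t \in [0,\varrho] : h(t) \leq 0\}$; continuity forces $h(T) = 0$ while $h(T+\delta) > 0$ for arbitrarily small $\delta > 0$, contradicting $D^+ h(T) \leq 0$.) Hence $g(\varrho) \leq 4 k^{1/4} \varrho$, completing the proof with $C = 4$.

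The main obstacle is cleanly extracting a linear global bound from the $\liminf$-based definition of $\Gamma$; the argument above handles this via the Dini-derivative comparison, and the key geometric input making it succeed is that the thickenings $S^t$ of a convex set remain convex, so Ball's bound applies uniformly in $t$ (not just at $t = 0$). An alternative would be to invoke a Brunn--Minkowski / Gaussian isoperimetric inequality directly showing that $t \mapsto \pr[\x \in S^{t+\delta} \setminus S^t]$ is non-increasing, which yields $g(\varrho) \leq \varrho \cdot \Gamma(\1_S)$ without any real-analysis detour; either route gives the stated conclusion.
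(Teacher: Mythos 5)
Your overall route is essentially the paper's: identify $S^c\cap\partial_\varrho S$ with the shell $S^\varrho\setminus S$, observe the thickenings $S^t$ remain convex, and integrate Ball's surface-area bound $\Gamma(\1_{S^t})\leq 4k^{1/4}$ over $t\in[0,\varrho]$. The paper carries out the integration via the co-area formula for the distance function; you do it via a Dini-derivative comparison, which is arguably cleaner because it works directly with the $\liminf$-based definition of $\Gamma$ instead of silently identifying it with a surface integral.

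There is, however, a genuine gap in your justification of the comparison step. You set $T:=\sup\{t\in[0,\varrho]:h(t)\leq 0\}$, note $h(T)=0$ and $h(T+\delta)>0$ for all small $\delta>0$, and conclude this contradicts $D^+h(T)\leq 0$. It does not. With the $\liminf$-based quantity you defined (terminologically, this is the \emph{lower} right Dini derivative $D_+h$, not the upper one), having $h(T+\delta)>0$ for all small $\delta>0$ only forces $D_+h(T)\geq 0$; combined with your hypothesis $D_+h(T)\leq 0$ you get $D_+h(T)=0$ --- consistent, not contradictory. (The same issue persists if one uses the $\limsup$ version: it gives $D^+h(T)=0$, still no contradiction.) The standard repair is an $\eps$-perturbation: for each $\eps>0$ set $h_\eps(t)=g(t)-(4k^{1/4}+\eps)t$, so that $D_+h_\eps(t)\leq -\eps<0$ everywhere. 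At $T_\eps:=\sup\{t\in[0,\varrho]:h_\eps(t)\leq 0\}$, the strict negativity $D_+h_\eps(T_\eps)\leq-\eps$ yields arbitrarily small $\delta>0$ with $h_\eps(T_\eps+\delta)<h_\eps(T_\eps)\leq 0$, contradicting maximality of $T_\eps$; hence $h_\eps(\varrho)\leq 0$, and letting $\eps\to 0$ gives $g(\varrho)\leq 4k^{1/4}\varrho$. With this fix the argument is complete. A side remark: the alternative you float at the end --- that $t\mapsto\pr[\x\in S^{t+\delta}\setminus S^t]$ is non-increasing --- is false in general; for instance with $S$ a single point, $S^t$ is a ball of radius $t$ and the Gaussian shell mass first increases and then decreases in $t$.
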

\begin{proof}
  For $t>0$, define the set $S_t$ as $S_t=\{\x\in \R^k \mid \inf_{\y\in S}\twonorm{\x-\y}\leq t\}$. We have that 
  \begin{align*}
      \pr_{\x\sim \Gauss_k}[\x\in S^c\cap \partial_{\varrho}]&=\pr_{\x\sim \Gauss_k}[\x\in S_{\varrho}\setminus S] \\
      &=\int_{t=0}^{\varrho}\int_{\x\in \partial S_t}\Gauss(\x;0,I_k) d\x\, dt \leq \int_{t=0}^{\varrho}Ck^{1/4}\,dt\
      \leq Ck^{1/4}\varrho
  \end{align*} where $C$ is a large universal constant. We obtained the penultimate inequality using the definition of Gaussian surface area.
\end{proof}

We now state our final result on the smooth boundary of convex sets. 
\begin{lemma}[Smooth Boundary for Convex sets]\label{lemma:convex-smooth-boundary}
    Let $S$ be a convex set. Let $F:\R^k\to \cube{}$ be the function defined as $F(\x)=\1\{\x\in S\}$. Then, $F$ has a $Ck\log k$-smooth boundary with respect to $\Gauss_k$ for a sufficiently large universal constant $C$.
\end{lemma}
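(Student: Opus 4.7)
The plan is to reduce the claim directly to the two preceding lemmas (\Cref{lem:sa_convex_inner} and \Cref{lem:sa_convex_outer}), which have already done the real analytic work. The key observation is that the $\varrho$-boundary of the indicator function $F$ (in the sense of \Cref{definition:boundary}) decomposes cleanly along the set $S$: a point $\x$ satisfies $F(\x) \neq F(\x+\z)$ for some $\z$ with $\|\z\|_2 \le \varrho$ if and only if $\x$ lies within Euclidean distance $\varrho$ of the boundary $\partial S$. This gives the set identity
\[
    \partial_\varrho F \;=\; \bigr(S \cap \partial_\varrho S\bigr) \cup \bigr(S^c \cap \partial_\varrho S\bigr),
\]
where $\partial_\varrho S := \{\x : \inf_{\y \in \partial S} \|\x-\y\|_2 \le \varrho\}$ agrees with the usage in \Cref{lem:sa_convex_inner,lem:sa_convex_outer}. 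Verifying this identity is a one-line sanity check: if $\x \in S$ and $\x+\z \notin S$ with $\|\z\|_2 \le \varrho$, then the segment from $\x$ to $\x+\z$ crosses $\partial S$, so $\x \in \partial_\varrho S$; the symmetric statement handles $\x \notin S$.

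Given the decomposition, the bound follows by a union bound:
\[
    \pr_{\x\sim \Gauss_k}[\x \in \partial_\varrho F] \;\le\; \pr_{\x\sim \Gauss_k}[\x \in S \cap \partial_\varrho S] \;+\; \pr_{\x\sim \Gauss_k}[\x \in S^c \cap \partial_\varrho S].
\]
By \Cref{lem:sa_convex_inner}, the first term is at most $C_1 k \log k \cdot \varrho$, and by \Cref{lem:sa_convex_outer}, the second term is at most $C_2 k^{1/4}\cdot \varrho$. Summing the two and absorbing constants, the total is at most $C k \log k \cdot \varrho$ for a sufficiently large universal constant $C$, which is exactly the $Ck\log k$-smooth boundary condition of \Cref{definition:smooth-boundary}.

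There is essentially no technical obstacle left here; the real content has been absorbed into the two earlier lemmas (the noise-sensitivity/scaling argument combined with Ball's $O(k^{1/4})$ surface-area bound on the interior side, and the Gaussian surface area bound on the exterior side). The only thing to be a bit careful about is confirming that the two notations of boundary used in the paper (the function-theoretic $\partial_\varrho F$ from \Cref{definition:boundary} and the set-theoretic $\partial_\varrho S$ used in \Cref{lem:sa_convex_inner,lem:sa_convex_outer}) are compatible in the way described above; this is immediate from convexity of $S$ and a one-line segment argument, so no additional hypotheses are needed.
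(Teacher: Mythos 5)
Your proof is correct and takes essentially the same approach as the paper: the paper's proof of \Cref{lemma:convex-smooth-boundary} simply states that the result is immediate from \Cref{lem:sa_convex_inner} and \Cref{lem:sa_convex_outer}, and what you have written is exactly the spelled-out version of that reduction (partition the $\varrho$-boundary by membership in $S$, apply the two lemmas, and take a union bound).
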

\begin{proof}
    The proof is immediate from \Cref{lem:sa_convex_inner} and \Cref{lem:sa_convex_outer}.
\end{proof}

\subsection{Sandwiching Polynomials}
\label{sec:sandwiching_bounds}
In this section, we present known results from pseudorandomness literature on the existence of sandwiching polynomials for various function classes with respect to $\Unif\cube{d}$ and $\Gauss_d$. Although previously known, these results are mostly not stated in the manner in which we need them. In particular, the coefficient bounds are not explicity stated in previous work. We state these results in terms of existence of sandwiching polynomials with coefficient bounds for completeness.

We now introduce the important notion of $(\delta,\ell)$-independent distributions.
\begin{definition}[$(\delta,\ell)$-independent distribution]
\label{defn:k-ind}
    Let $\D,\D'$ be distributions on $\R^d$. For $\delta>0$ and $\ell\in \N$, we say that the distribution $\D'$ is $(\delta,\ell)$-independent with respect to $\D$ if $\big|\E_{\x\sim \D}[\x^{\alpha}]-\E_{\x\sim \D'}[\x^{\alpha}]\big|\leq \delta$ for all $\alpha\in \N^d$.
\end{definition}
We drop the "with respect to $\D$" when the distribution is clear from context.
Let $\D,\D'$ be distributions on $\X\subseteq \R^d$ and $f:\X\to \cube{}$. For $\epsilon>0$, we say that $\D'$ $\epsilon$-\textit{fools} $f$ with respect to $\D$ if $\big|\E_{\x\sim \D}[f(\x)]-\E_{\x\sim \D'}[f(\x)]\big|\leq \epsilon$ (again, we drop the "with respect to" when the target distribution is clear from context). For a concept class $\mathcal{C}$, we say that $\D'$ $\epsilon$-\textit{fools} $\C$ with respect to $\D$ if $\D'$ $\epsilon$-\textit{fools} $f$ with respect to $\D$ for all functions $f\in \C$.

We will use the following result from \cite{gollakota2022moment} which is a generalization of a result from \cite{bazzi2009polylogarithmic}. We will only need one direction of the result which we state below.
\begin{lemma}
\label{lem:indep_implies_sandwiching}[Theorem~3.2 from \cite{gollakota2022moment}]
Let $\D$ be a distribution on $\X\subseteq \R^d$. Let $\delta,\epsilon>0$ and $\ell\in \N$. Let $f:\X\to \R^d$ be a function that satisfies the following property: given any distribution $\D'$ that is $(\delta,\ell)$-independent with respect to $\D$, we have that $\big|\E_{\x\sim \D}[f(\x)]-\E_{\x\sim \D'}[f(\x)]\big|\leq\epsilon$. Then, there exists degree $\ell$ polynomials $\pdown,\pup$  such that $\pdown\leq f\leq \pup$ and $\E_{\x\sim \D}[\pup(\x)-\pdown(\x)]+\delta(|\pup|+|\pdown|)\leq \epsilon$.
\end{lemma}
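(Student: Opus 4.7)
The plan is to prove this by linear programming duality.  First, encode the conclusion as the LP (over coefficient vectors of degree-at-most-$\ell$ polynomials $\pup, \pdown$)
\begin{align*}
    \text{minimize} \quad & \E_{\x\sim\D}[\pup(\x) - \pdown(\x)] + \delta(\|\pup\|_1 + \|\pdown\|_1) \\
    \text{subject to} \quad & \pdown(\x) \le f(\x) \le \pup(\x) \text{ for all } \x\in\X,
\end{align*}
where $\|p\|_1 := \sum_{|\alpha|\le \ell} |p_\alpha|$ is the coefficient $\ell_1$-norm.  Showing that the optimum of this LP is at most $\epsilon$ is exactly the lemma's conclusion, with any optimal pair serving as $\pup, \pdown$.

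Next I would take the Lagrangian dual.  Let $M_\alpha := \E_{\x\sim\D}[\x^\alpha]$.  Introducing nonnegative measures $\mu_+, \mu_-$ on $\X$ as dual variables for the two point-wise inequalities, and using the identity $\delta|c| = \max_{|e|\le\delta} ce$ to absorb the $\ell_1$ penalty into a perturbation of the moment vector, minimizing out the polynomial coefficients forces the moment-matching constraints
\[
    \bigl|\textstyle\int \x^\alpha \diff\mu_\pm - M_\alpha\bigr| \le \delta \quad \text{for every } |\alpha|\le \ell,
\]
and the dual becomes the maximization of $\int f\,\diff(\mu_+ - \mu_-)$ over such measures.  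Given any dual-feasible pair, setting $t_\pm := \mu_\pm(\X) \in [1-\delta, 1+\delta]$ (by the $\alpha = 0$ constraint) and normalizing to $\D'_\pm := \mu_\pm/t_\pm$ yields probability distributions that are $(\delta', \ell)$-independent with respect to $\D$ for some $\delta' = O(\delta)$.  The lemma's hypothesis applied to $\D'_+$ and $\D'_-$ then bounds the dual objective by $\epsilon$, and strong LP duality closes the loop.

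The main obstacle is the normalization subtlety: a nonnegative measure with moments $\delta$-close to $\D$ is not \emph{a priori} a probability distribution with $\delta$-close moments, since dividing by $t_\pm \ne 1$ perturbs the moments by an amount proportional to $|M_\alpha|$.  A clean fix is to exempt the constant coefficient of $\pup, \pdown$ from the $\delta$-penalty in the primal, which in the dual imposes $\mu_\pm(\X) = 1$ exactly; then $\D'_\pm = \mu_\pm$ are genuine $(\delta, \ell)$-independent probability distributions and the hypothesis applies verbatim.  A secondary technical point is justifying strong duality for this semi-infinite LP (infinitely many constraints, one per $\x \in \X$), which is standard: discretize $\X$ when it is finite (e.g.\ $\X = \cube{d}$) or invoke standard duality theorems for semi-infinite convex programming.
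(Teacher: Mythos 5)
The paper does not prove this lemma; it is quoted verbatim as Theorem~3.2 of \cite{gollakota2022moment} (a generalization of Bazzi's fooling-implies-sandwiching theorem), so there is no in-paper proof to compare against. That said, the LP-duality route you sketch is exactly the one used by Bazzi and by Gollakota et al.: the primal/dual pair you write down is correct, and you correctly flag the two substantive technical points, namely turning the dual nonnegative measures into genuine $(\delta,\ell)$-independent probability distributions, and justifying strong duality for the semi-infinite program.

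Two details in the sketch deserve more care. First, there is a factor-of-$2$ that your argument as written does not close: the dual objective $\int f\,d\mu_+ - \int f\,d\mu_-$ involves two independently chosen measures, and the fooling hypothesis bounds each of $\bigl|\int f\,d\mu_\pm - \E_\D[f]\bigr|$ by $\epsilon$, so the dual optimum (hence, by strong duality, the primal optimum) only comes out as $2\epsilon$, not $\epsilon$. Equivalently, running a one-sided LP (only $\pup$ with $\pup\ge f$, or only $\pdown$ with $\pdown\le f$) gives $\epsilon$ for each half, and adding the two gives $2\epsilon$; this matches the form of Bazzi's original two-sided statement. This is immaterial for every downstream use in the paper (constants are absorbed into big-$O$), but it means the literal bound ``$\le\epsilon$'' in the lemma statement does not follow from your argument as written. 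Second, exempting the constant coefficient from the $\delta$-penalty is indeed the clean way to force $\mu_\pm(\X)=1$ in the dual, but it means the LP bound only controls $\sum_{\alpha\ne 0}(|\pup_\alpha|+|\pdown_\alpha|)$; to recover the full quantity $|\pup|+|\pdown|$ claimed in the conclusion you still need to bound $|\pup_0|$ and $|\pdown_0|$ separately (which is doable from $\pdown\le f\le\pup$, the smallness of $\E_{\x\sim\D}[\pup(\x)-\pdown(\x)]$, and the already-controlled non-constant coefficient mass), at the cost of another additive $O(\epsilon+\delta)$.
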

\subsubsection{Sandwiching Polynomials: Boolean}
In this section, the target distrbution is $\Unif\cube{d}$. We will find the following lemma useful.
\begin{lemma}
    \label{lem:fooling_implies_sandwiching_boolean}
    Let $\epsilon>0$ and $\ell\in \N$. Let $f:\cube{d}\to\cube{}$ be a function such that all $(0,\ell)$-independent distributions $\epsilon$-\textit{fool} $f$. Then, there exists polynomials $\pup,\pdown$ of degree $\ell$ and coefficients bounded by $O(d^\ell)$ such that $\pdown\leq f\leq \pup$ and $\E_{\x\sim \Unif\cube{d}}[\pup(\x)-\pdown(\x)]\leq O(\epsilon)$.
\end{lemma}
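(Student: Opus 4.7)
The plan is to extract sandwiching polynomials via LP duality and then bound their coefficients by Fourier analysis on the hypercube, leveraging the fact that $f$ takes values in $\cube{}$ so $\pup$ is bounded below.

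First, I would set up the primal LP that minimizes $\E_{\Unif\cube{d}}[\pup-f]$ over multilinear degree-$\ell$ polynomials $\pup$ subject to $\pup(\x)\ge f(\x)$ for every $\x\in\cube{d}$. Its dual maximizes $\E_{\mu}[f]-\E_{\Unif\cube{d}}[f]$ over probability distributions $\mu$ on $\cube{d}$ with $\E_{\mu}[\chi_S]=0$ for $1\le |S|\le\ell$, that is, over the $(0,\ell)$-independent distributions. By hypothesis this dual value is at most $\epsilon$, so strong duality (the LP is finite-dimensional and feasible, e.g.\ via $\pup\equiv 1$) yields a degree-$\ell$ polynomial $\pup\ge f$ on $\cube{d}$ with $\E_{\Unif}[\pup-f]\le\epsilon$. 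Symmetrically one obtains $\pdown\le f$ with $\E_{\Unif}[f-\pdown]\le\epsilon$, so $\E_{\Unif}[\pup-\pdown]\le 2\epsilon=O(\epsilon)$. Equivalently, this is Lemma~\ref{lem:indep_implies_sandwiching} in the limit $\delta\to 0^+$, with the limit justified by compactness of the set of admissible sandwiching polynomials of bounded degree.

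Next I would reduce $\pup$ modulo $\x_i^2-1$ to its unique multilinear representative, which agrees with it pointwise on $\cube{d}$, has no larger degree, and preserves both the sandwich property and the $\Unif$-expectation. Writing $\pup=\sum_{|S|\le\ell}\hat{\pup}(S)\chi_S$ with $\chi_S(\x)=\prod_{i\in S}\x_i$, orthonormality of the characters under $\Unif\cube{d}$ gives $\hat{\pup}(S)=\E_{\Unif}[\pup\cdot\chi_S]$, whence $|\hat{\pup}(S)|\le \E_{\Unif}[|\pup|]$. Since $\pup\ge f\ge -1$ pointwise, $|\pup|\le \pup+2$, so
\[
\E_{\Unif}[|\pup|]\le \E_{\Unif}[\pup]+2 \le \E_{\Unif}[f]+\epsilon+2 \le 3+\epsilon.
\]
Thus every coefficient of $\pup$ is bounded by $O(1)$, which is in particular $O(d^\ell)$; the same argument applied to $-\pdown\ge -1$ handles $\pdown$. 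The only nontrivial step is the LP-duality appeal in the first paragraph; once the sandwich exists, the coefficient bound is an immediate consequence of the Fourier coefficient formula on the hypercube combined with the one-sided pointwise bound $\pup\ge -1$ inherited from $f$.
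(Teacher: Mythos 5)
Your proof is correct and takes a genuinely different route from the paper's, and in fact proves a stronger statement. The paper first invokes the result of Alon et al.\ that any $(\delta,\ell)$-independent distribution on $\cube{d}$ is within TV distance $\delta d^\ell$ of a $(0,\ell)$-independent one, upgrades the fooling hypothesis to all $(\epsilon/d^\ell,\ell)$-independent distributions, and then applies \Cref{lem:indep_implies_sandwiching} with $\delta = \epsilon/d^\ell$; the coefficient bound $O(d^\ell)$ falls out of the $\delta(|\pup|+|\pdown|)$ term in that lemma. You instead extract the sandwich directly by LP duality against $(0,\ell)$-independent distributions (no excursion through $\delta>0$ and hence no appeal to Alon et al.), and then bound coefficients via the Fourier identity $\hat{\pup}(S)=\E_{\Unif}[\pup\,\chi_S]$ combined with the one-sided pointwise bound $\pup\ge f\ge -1$, which cleanly yields $|\hat{\pup}(S)|\le \E_{\Unif}[|\pup|]\le \E_{\Unif}[\pup]+2 \le 3+\epsilon$. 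This gives $O(1)$ rather than $O(d^\ell)$, which is strictly better than what the lemma asserts; the paper's route has the advantage of being more uniform with the Gaussian case (\Cref{lem:fooling_implies_sandwiching_gaussian}), where the orthonormal basis is unbounded and the Fourier shortcut does not apply. One small caveat on presentation: the parenthetical claim that your argument is ``\Cref{lem:indep_implies_sandwiching} in the limit $\delta\to 0^+$'' is a bit misleading, since for $\delta>0$ that lemma requires the fooling hypothesis for $(\delta,\ell)$-independent distributions, which is not what you are given; it is the direct LP-duality argument you sketch, not the $\delta\to 0^+$ limit of the cited lemma, that makes the first step go through.
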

\begin{proof}
    We use the following theorem from \cite{alon2003almost} that states that for any $(\delta,\ell)$-distribution , there exists a $(0,\ell)$ distribution that is $\epsilon$-close to it in TV distance.
    \begin{lemma}[Theorem~2.1 from \cite{alon2003almost}]
        For $\delta>0$ and $\ell\in \N$, let $\D$ be a $(\delta,\ell)$-independent distribution on $\cube{d}$. Then, there exists a distribution $\D'$ that is $(0,\ell)$-independent such that the TV distance between $\D$ and $\D'$ is at most $\delta d^\ell$.
    \end{lemma}
    From the above claim, we have that any $(\epsilon/d^\ell,\ell)$-independent distribution $2\epsilon$-\textit{fools} $f$. Thus, from \Cref{lem:indep_implies_sandwiching}, there exists polynomials $\pup,\pdown$ of degree $\ell$ with coefficients bounded by $O(d^\ell)$  such that $\E_{\x\sim \Unif\cube{d}}[\pup(\x)-\pdown(\x)]\leq 2\epsilon$. This proves the claim.
    
\end{proof}

\begin{lemma}[Sandwiching polynomials for degree $2$ PTFs]
    \label{lem:sand_deg2_boolean}
    Let $\C$ be the class of degree $2$ PTFs. For $\epsilon>0$, the $O(\epsilon)$-approximate $\L_1$ sandwiching degree of $\C$ under $\Unif\cube{d}$ is at most $\ell=\tilde{O}(1/\epsilon^9)$ with coefficient bound $O(d^\ell)$.
\end{lemma}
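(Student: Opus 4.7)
The plan is to reduce the sandwiching question to a pseudorandomness statement via the machinery already set up in Lemma~\ref{lem:fooling_implies_sandwiching_boolean}: it suffices to exhibit some $\ell = \tilde O(1/\eps^9)$ such that every $\ell$-wise independent (i.e.\ $(0,\ell)$-independent) distribution on $\cube{d}$ $\eps$-fools every degree-$2$ PTF. Granting such an $\ell$, the lemma immediately produces degree-$\ell$ sandwichers with coefficient bound $O(d^\ell)$ and $L_1$ error $O(\eps)$, which is exactly the conclusion we want.

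To obtain the fooling statement, the intended route is to invoke the now-standard pseudorandom-generator results for low-degree PTFs over the hypercube. Specifically, the plan is to cite the result of Diakonikolas--Kane--Nelson (and the closely related work of Diakonikolas--Servedio--Tan--Wan) establishing that for any degree-$2$ PTF $f:\cube{d}\to\cube{}$, bounded $\ell$-wise independence $\eps$-fools $f$ provided $\ell = \tilde O(1/\eps^9)$. The exponent $9$ in our target bound is precisely the one that comes out of their analysis, which is why this choice is natural.

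Concretely the steps I would write down are: (i) State the black-box lemma that $\ell$-wise independence $\eps$-fools degree-$2$ PTFs for $\ell = \tilde O(1/\eps^9)$. (ii) Observe that this means, in the terminology of Definition~\ref{defn:k-ind}, every $(0,\ell)$-independent distribution on $\cube{d}$ $\eps$-fools every $f\in\C$. (iii) Apply Lemma~\ref{lem:fooling_implies_sandwiching_boolean} with this $\ell$ and $\eps$, producing polynomials $\pup,\pdown$ of degree at most $\ell$, coefficients bounded by $O(d^\ell)$, and sandwiching error $\E_{\x\sim \Unif\cube{d}}[\pup(\x)-\pdown(\x)] \le O(\eps)$. (iv) Rescale $\eps$ by the absorbed constant to meet the exact statement.

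The main (non-routine) content is entirely offloaded into the cited PRG result for degree-$2$ PTFs; everything else is bookkeeping. The principal obstacle, if one wanted a self-contained proof, would be re-deriving that $\ell$-wise independence fools degree-$2$ PTFs with this polynomial dependence on $1/\eps$, which requires a regularity/invariance-principle argument together with anticoncentration of degree-$2$ Gaussian/boolean polynomials and a mollification step — not something to reprove here. A minor care point is to make sure the coefficient bound $O(d^\ell)$ in Lemma~\ref{lem:fooling_implies_sandwiching_boolean} is indeed what its proof provides (it comes from the Alon--Andoni--Kaufman--Matulef--Rubinfeld-style reduction from $(\delta,\ell)$-independence to $(0,\ell)$-independence at cost $\delta d^{\ell}$ in TV), so that picking $\delta = \eps/d^{\ell}$ in the application causes no issue; this is the only arithmetic to double-check.
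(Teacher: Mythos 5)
Your proposal matches the paper's proof exactly: cite the bounded-independence-fools-degree-2-PTFs result (the paper's reference \texttt{diakonikolas2010ptf}, giving $\ell = \tilde O(1/\eps^9)$ over $\Unif\cube{d}$) and then apply Lemma~\ref{lem:fooling_implies_sandwiching_boolean} to convert the fooling statement into degree-$\ell$ sandwichers with the stated coefficient bound. The bookkeeping around the $(\delta,\ell)$-to-$(0,\ell)$ reduction you flagged is exactly what that lemma's proof handles, so nothing is missing.
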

\begin{proof}
    From \cite{diakonikolas2010ptf}, we have that $(0,\ell)$-independent distributions $\epsilon$-\textit{fools} $\C$ when $\ell=\tilde{O}(1/\epsilon^9)$. Now, we apply \Cref{lem:fooling_implies_sandwiching_boolean} to finish the proof.
\end{proof}
\begin{lemma}[Sandwiching polynomials for depth-$t$ $\mathsf{AC}_0$]
    \label{lem:sand_AC0}
   Let $\C$ be the class of depth-$t$ $\mathsf{AC}_0$ circuits of size $s$ on $\cube{d}$. For $\epsilon>0$,  the $O(\epsilon)$-approximate $\L_1$ sandwiching degree of $\C$ under $\Unif\cube{d}$ is at most $\ell=(\log s)^{O(t)}\log(1/\epsilon)$ with coefficient bound $O(d^\ell)$.
\end{lemma}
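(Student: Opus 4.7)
The plan is to mirror the strategy used in \Cref{lem:sand_deg2_boolean}: combine a known pseudorandomness result for $\mathsf{AC}_0$ with \Cref{lem:fooling_implies_sandwiching_boolean}. More precisely, the classical line of work initiated by Braverman on the power of bounded independence against constant-depth circuits (with the sharp quantitative bound due to Tal) shows that any $(0,\ell)$-independent distribution on $\cube{d}$ with $\ell = (\log s)^{O(t)}\log(1/\eps)$ $\eps$-fools every depth-$t$ size-$s$ $\mathsf{AC}_0$ circuit $f$. Thus for every $f \in \C$, the hypothesis of \Cref{lem:fooling_implies_sandwiching_boolean} is satisfied with this choice of $\ell$.

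Once this fooling statement is in hand, the proof is a direct invocation: \Cref{lem:fooling_implies_sandwiching_boolean} produces degree-$\ell$ polynomials $\pup,\pdown$ with $\pdown \le f \le \pup$ pointwise on $\cube{d}$ and $\E_{\x\sim\Unif\cube{d}}[\pup(\x)-\pdown(\x)] \le O(\eps)$, with coefficients bounded by $O(d^\ell)$. The coefficient bound is not an extra input — it comes from the reduction in \Cref{lem:fooling_implies_sandwiching_boolean} which converts $(\eps/d^\ell,\ell)$-independent distributions to exactly $\ell$-wise independent ones via the statistical-distance lemma from \cite{alon2003almost}. Applying this to every $f\in\C$ individually gives the claimed bound on the $O(\eps)$-approximate $\L_1$-sandwiching degree.

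Since the two ingredients are essentially off the shelf, there is no real obstacle to overcome; the only thing to be careful about is citing a fooling result with the right quantitative dependence on $s$, $t$, and $\eps$ (Braverman's original $(\log(s/\eps))^{O(t^2)}$ would suffice to get TDS learning, but one must use Tal's sharpening to obtain the stated $(\log s)^{O(t)}\log(1/\eps)$ bound). Everything else — the pointwise sandwiching, the $L_1$ bound under $\Unif\cube{d}$, and the $O(d^\ell)$ coefficient bound — is inherited verbatim from \Cref{lem:fooling_implies_sandwiching_boolean}.
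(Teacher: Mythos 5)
Your proposal matches the paper's proof: the paper likewise invokes the bounded-independence fooling results for $\mathsf{AC}_0$ (citing Braverman together with the sharpened bounds of Tal and Harsha–Srinivasan) to get $\ell = (\log s)^{O(t)}\log(1/\eps)$, and then applies \Cref{lem:fooling_implies_sandwiching_boolean} to obtain the sandwiching polynomials with the $O(d^\ell)$ coefficient bound. No gaps; this is essentially the same argument.
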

\begin{proof}
    From \cite{braverman2010polylogarithmic,tal2017tight,harsha2019polynomial}, we have that $(0,\ell)$-independent distributions $\epsilon$-\textit{fools} $f$ when $\ell=(\log s)^{O(t)}\log(1/\epsilon)$. Now, we apply \Cref{lem:fooling_implies_sandwiching_boolean} to finish the proof.
\end{proof}

\subsubsection{Sandwiching Polynomials: Gaussian}

\begin{lemma}
    \label{lem:fooling_implies_sandwiching_gaussian}
    Let $\epsilon>0$ and $\ell\in \N$. Let $f:\R^d\to\cube{}$ be a function such that all $(0,\ell)$-independent distributions $\epsilon$-\textit{fool} $f$. Then, there exists polynomials $\pup,\pdown$ of degree $\ell$ and coefficients bounded by $O(d^\ell)$ such that $\pdown\leq f\leq \pup$ and $\E_{\x\sim \Gauss_d}[\pup(\x)-\pdown(\x)]\leq O(\epsilon)$.
\end{lemma}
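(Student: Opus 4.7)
The plan is to mirror the proof of the Boolean counterpart (Lemma B.4) in two stages: first establish a Gaussian analog of the Alon-Goldreich-Mansour rounding theorem, then invoke Gollakota et al.'s LP duality (Lemma A.2) with a carefully chosen slack $\delta$. The hypothesis gives fooling for $(0,\ell)$-independent distributions, whereas Lemma A.2 requires fooling for $(\delta,\ell)$-independent distributions in order to extract sandwiching polynomials whose coefficient $\ell_1$-norm scales like $1/\delta$.

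For the rounding step I would prove the following Gaussian analog: for any $(\delta,\ell)$-independent distribution $\D'$ with respect to $\Gauss_d$, there exists a $(0,\ell)$-independent distribution $\D''$ with $\sup_{g:\R^d\to[-1,1]}|\E_{\D'}[g]-\E_{\D''}[g]| \le \delta \cdot d^{O(\ell)}$. The construction proceeds by (a) truncating $\D'$ to a ball of radius $R=O(\sqrt{d\ell\log(1/\delta)})$, which by Markov-type bounds on the moments of $\D'$ discards only an $O(\delta)$ fraction of mass while preserving the moment slack up to $O(\delta)$; (b) replacing the truncated distribution by a discrete measure on a grid with $d^{O(\ell)}$ cells of sufficiently small side-length; (c) solving a linear program with $O(d^\ell)$ moment-equality constraints to reweight the grid cells so that all mixed moments up to degree $\ell$ match those of $\Gauss_d$ exactly. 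Feasibility follows because a discretized Gaussian is an approximate feasible point, and the total weight perturbation needed is bounded by $\delta\cdot d^{O(\ell)}$.

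Combining this rounding with the hypothesis via the triangle inequality, every $(\delta,\ell)$-independent distribution $\D'$ satisfies $|\E_{\D'}[f]-\E_{\Gauss_d}[f]| \le \epsilon + \delta\cdot d^{O(\ell)}$. Setting $\delta = \Theta(\epsilon/d^{O(\ell)})$ makes this total fooling error $O(\epsilon)$. Invoking Lemma A.2 with target distribution $\Gauss_d$, degree $\ell$, and this $\delta$ then produces $\pup\ge f\ge \pdown$ of degree $\ell$ satisfying $\E_{\Gauss_d}[\pup-\pdown] + \delta(|\pup|+|\pdown|) \le O(\epsilon)$, which yields both $\E_{\Gauss_d}[\pup-\pdown] \le O(\epsilon)$ and the coefficient bound $|\pup|+|\pdown| \le O(\epsilon)/\delta = O(d^\ell)$, as claimed.

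The main obstacle will be the rounding step. Unlike in the Boolean case, where Alon et al.'s theorem provides a direct TV-distance bound on a finite support, the Gaussian's unbounded support and the $d^{O(\ell)}$ growth of its higher-order moments mean that a small moment slack can in principle produce a much larger variational perturbation. The truncation radius $R$ and the grid resolution must be balanced so that (a) truncation does not destroy moment matching, (b) the LP for moment correction remains feasible with a perturbation that is genuinely $O(\delta)$-weighted, and (c) the final rounding error does not exceed $\delta\cdot d^{O(\ell)}$; careful bookkeeping through Gaussian moment bounds is the heart of the argument.
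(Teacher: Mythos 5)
Your route is genuinely different from the paper's: you try to upgrade the hypothesis from fooling $(0,\ell)$-independent distributions to fooling $(\delta,\ell)$-independent ones via a Gaussian analog of the Alon et al.\ rounding theorem, and then read the coefficient bound off the $\delta(|\pup|+|\pdown|)$ term in \Cref{lem:indep_implies_sandwiching}. But the rounding lemma is the entire difficulty, and your sketch of it does not go through. Step (a) fails as stated: $(\delta,\ell)$-independence controls only moments up to degree $\ell$, so $\D'$ need not have subgaussian tails (the radius $O(\sqrt{d\ell\log(1/\delta)})$ is unjustified), and truncation can shift top-degree moments by a constant rather than $O(\delta)$. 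For instance, in $d=1$, $\ell=2$, put mass $c/M^2$ at $x=M$ and let the bulk be a Gaussian of variance $1-c$: this is $(\delta,2)$-independent w.r.t.\ $\Gauss_1$ once $M$ is large, yet deleting the atom changes the second moment by $c$; bounding such tail contributions by Cauchy--Schwarz would require degree-$2\ell$ moments, which the hypothesis does not provide. The same example breaks step (c): after truncating to a ball of radius $R$, repairing a constant-size error in the second moment using weight supported in that ball requires a weight perturbation of order $1/R^2$, far exceeding $\delta\cdot d^{O(\ell)}$; and in general the assertion that an approximately feasible discretized Gaussian can be corrected to \emph{exact} moment matching at cost $\delta\cdot d^{O(\ell)}$ is a quantitative truncated-moment-problem statement (how far the Gaussian moment vector lies inside the moment body of measures on your grid) that you never establish. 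So the reduction to \Cref{lem:indep_implies_sandwiching} with $\delta>0$ is not available without substantial new work, and if the rounding statement is salvageable at all, the correcting measure must be allowed unbounded support, contrary to your construction.

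The paper avoids rounding entirely: it invokes the duality lemma only to obtain degree-$\ell$ polynomials with $\pdown\le f\le \pup$ and $\E_{\x\sim\Gauss_d}[\pup(\x)-\pdown(\x)]\le 2\epsilon$, with no coefficient information, and then proves a separate structural fact: \emph{any} degree-$\ell$ pair sandwiching a $\pm 1$-valued function whose Gaussian $\L_1$ gap is at most $1$ automatically has all coefficients bounded by $2\cdot(10 d)^{\ell}$. That argument expands $\pup$ in the Hermite basis, uses Gaussian hypercontractivity to pass from the (small) $\L_1$ norm to the $\L_2$ norm, bounds the Hermite coefficients by orthonormality, and converts to monomial coefficients using the explicit bounds on Hermite polynomial coefficients. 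You should either adopt that structural argument for the coefficient bound, or actually prove the Gaussian rounding lemma, which is considerably harder than what remains of the proof.
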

\begin{proof}
    From \Cref{lem:indep_implies_sandwiching}, we have that there exists $\pup,\pdown$ of degree $\ell$ such that $\E_{\x\sim \Gauss_d}[\pup(\x)-\pdown(\x)]\leq 2\epsilon$ and $\pdown\leq f\leq \pup$. The claim now follows from the following lemma(proof is included in the end of this section) that states that any sandwiching polynomial with respect to $\Gauss_d$ must have bounded coefficients. 
    \begin{lemma}
		\label{thm: sandwiching polynomials con't have crazy high coefficients}
		Let $f:\R^d \rightarrow\{\pm 1\}$ be a function, and let $\pup$ and $\pdown$ be degree-$\ell$ polynomials satisfying the following (i) for every $\x \in\R^d$ we have $ \pup(\x) \geq f(\x) \geq \pdown(\x)$. (ii) $\ex_{\x \in \mathcal{N}(0, I)}[\pup(\x)-\pdown(\x)]\leq 1$. Then, the polynomials $\pup$ and $\pdown$ both have coefficients bounded by $2 \cdot (10d)^\ell$ in absolute value. 
	\end{lemma}
\end{proof}
	
\begin{lemma}[Sandwiching polynomials for degree $2$ PTFs]
    \label{lem:sand_deg2_gaussian}
    Let $\C$ be the class of degree $2$ PTFs. For $\epsilon>0$, the $O(\epsilon)$-approximate $\L_1$ sandwiching degree of $\C$ under $\Gauss_d$ is at most $\ell=\tilde{O}(1/\epsilon^8)$ with coefficient bound $O(d^\ell)$.
\end{lemma}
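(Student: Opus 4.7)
The plan is to follow exactly the same recipe that gave the Boolean analogue (\Cref{lem:sand_deg2_boolean}), but substitute the Boolean pseudorandomness ingredient with its Gaussian counterpart. Concretely, the two pieces we need are: (i) a moment-matching pseudorandom generator (PRG) statement saying that every distribution which agrees with $\Gauss_d$ on all moments up to some degree $\ell$ $\eps$-fools the class $\C$ of degree-$2$ PTFs, and (ii) the already-proved \Cref{lem:fooling_implies_sandwiching_gaussian}, which lifts such a fooling guarantee to the existence of degree-$\ell$ $\L_1$-sandwiching polynomials $\pdown\le f\le \pup$ with $\E_{\x\sim\Gauss_d}[\pup(\x)-\pdown(\x)]\le O(\eps)$ and coefficients bounded by $O(d^{\ell})$.

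For (i), the plan is to invoke Kane's moment-matching PRG for Gaussian degree-$2$ PTFs, which shows that any $(0,\ell)$-independent distribution with $\ell = \tilde{O}(1/\eps^{8})$ $\eps$-fools degree-$2$ PTFs under $\Gauss_d$. This is the natural Gaussian analogue of the Diakonikolas--Kane--Nelson bound used in the Boolean case (whose degree $\tilde O(1/\eps^{9})$ reflects the extra logarithmic/invariance-principle overhead not needed in the Gaussian setting, which is why the Gaussian exponent here is $8$ rather than $9$). Once cited, no further work is required at this step: the statement is exactly of the form required by \Cref{lem:fooling_implies_sandwiching_gaussian}.

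For (ii), we simply feed the fooling guarantee into \Cref{lem:fooling_implies_sandwiching_gaussian}. That lemma, combined with \Cref{thm: sandwiching polynomials con't have crazy high coefficients} used inside it to control coefficient magnitudes, immediately produces polynomials $\pup,\pdown$ of degree $\ell=\tilde O(1/\eps^{8})$ satisfying the pointwise sandwiching, an $O(\eps)$ $\L_1$ gap, and coefficient bound $O(d^{\ell})$. Rescaling the constant hidden in $O(\eps)$ yields the $\eps$-approximate sandwiching degree as stated.

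The main obstacle is really just bookkeeping around the PRG citation: we have to ensure that the cited Gaussian fooling result is stated for \emph{exact} moment matching (i.e., $(0,\ell)$-independence) rather than for approximate moment matching, since \Cref{lem:fooling_implies_sandwiching_gaussian} is formulated in the $\delta = 0$ regime (the coefficient-bound step of that lemma uses \Cref{thm: sandwiching polynomials con't have crazy high coefficients} to separately control the magnitudes of the $\pup,\pdown$ coefficients under $\Gauss_d$, so no small-$\delta$ slack is required). This is the standard statement of Kane's PRG, so no extra argument is needed; the remainder of the proof is a one-line composition of the two lemmas.
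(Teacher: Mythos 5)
Your proposal matches the paper's proof essentially verbatim: the paper likewise cites the bounded-independence fooling result for degree-$2$ PTFs under the Gaussian (attributed there to \cite{diakonikolas2010ptf}, with degree $\tilde{O}(1/\epsilon^8)$) and then applies \Cref{lem:fooling_implies_sandwiching_gaussian} to obtain the sandwiching polynomials with coefficient bound $O(d^\ell)$. The only difference is the attribution of the fooling theorem, which does not affect correctness.
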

\begin{proof}
    From \cite{diakonikolas2010ptf}, we have that $(0,\ell)$-independent distributions $\epsilon$-\textit{fools} $\C$ when $\ell=\tilde{O}(1/\epsilon^8)$. Now, we apply \Cref{lem:fooling_implies_sandwiching_gaussian} to finish the proof.
\end{proof}

In the remainder of this section, we prove  \Cref{thm: sandwiching polynomials con't have crazy high coefficients}. We will use the notion of Hermite polynomials. Recall that for $i=0,1,2,\cdot$ Hermite polynomials $\{H_i\}$ are the unique collection of polynomials over $\R$ that are orthogonal with respect to Gaussian distribution. In other words $\ex_{x \in \mathcal{N}(0,1)}[H_i(x) H_j(x)]=0$ whenever $i\neq j$. In this work, we normalize the Hermite polynomials s.t.: $\ex_{x \in \mathcal{N}(0,1)}[H_i(x) H_i(x)]=1$. It is a standard fact from theory of orthogonal polynomials that $H_0(x)=1$, $H_1(x)=x$ and for $i\geq 2$ Hermite polynomials satisfy the following recursive identity:
	\[
	H_{i+1}(x) \cdot \sqrt{(i+1)!}
	=
	x H_i(x)\cdot \sqrt{i!} 
	-
	i \cdot
	H_{i-1}(x)\cdot \sqrt{(i-1)!}
	\]
	\begin{proposition}
		\label{prop: coefficients of Hermite polynomials are bounded}
		Each coefficient of $H_i$ is bounded by $2^i$ in absolute value.
	\end{proposition}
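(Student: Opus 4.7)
The plan is to prove the bound by induction on $i$, using the recursive identity stated just above the proposition. First I would rewrite the recurrence in a form that isolates $H_{i+1}$ explicitly in terms of $H_i$ and $H_{i-1}$. Dividing both sides of the given identity by $\sqrt{(i+1)!}$ yields
\[
H_{i+1}(x) \;=\; \frac{1}{\sqrt{i+1}}\, x\, H_i(x) \;-\; \sqrt{\tfrac{i}{i+1}}\, H_{i-1}(x),
\]
so that the coefficient of $x^j$ in $H_{i+1}$ is the difference $\tfrac{1}{\sqrt{i+1}}[x^{j-1}]H_i(x) - \sqrt{i/(i+1)}\,[x^j]H_{i-1}(x)$. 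This clean scalar form is exactly what an inductive coefficient bound wants.

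For the base case, I would verify $H_0(x)=1$ and $H_1(x)=x$ directly (coefficients bounded by $1 \le 2^0$ and $1 \le 2^1$ respectively). For the inductive step, assume every coefficient of $H_i$ is bounded by $2^i$ and every coefficient of $H_{i-1}$ by $2^{i-1}$. Then the triangle inequality applied to the displayed identity gives, for every $j$,
\[
\bigl|[x^j]\,H_{i+1}(x)\bigr|\;\le\;\frac{2^i}{\sqrt{i+1}}\;+\;\sqrt{\tfrac{i}{i+1}}\cdot 2^{i-1}\;\le\; 2^i + 2^{i-1} \;=\; 3\cdot 2^{i-1}\;\le\; 2^{i+1},
\]
which closes the induction.

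There is no real obstacle here: both prefactors $1/\sqrt{i+1}$ and $\sqrt{i/(i+1)}$ are at most $1$, so the naive triangle inequality already yields a bound strictly smaller than $2^{i+1}$ (in fact the slack $3\cdot 2^{i-1}$ vs.\ $2^{i+1}$ suggests the constant $2$ in the claim is not tight, but it is the clean form wanted for \Cref{thm: sandwiching polynomials con't have crazy high coefficients}). The only thing to be careful about is the edge term $[x^{j-1}]H_i$ when $j=0$, which is simply zero and poses no issue. Thus the proposition follows directly from the recurrence together with a two-step induction.
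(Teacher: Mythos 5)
Your proof is correct and follows essentially the same route as the paper, whose proof consists of the single sentence that the bound ``follows immediately from the recursion relation''; your induction on $i$ via the normalized recurrence $H_{i+1}(x)=\tfrac{1}{\sqrt{i+1}}xH_i(x)-\sqrt{\tfrac{i}{i+1}}H_{i-1}(x)$ is exactly the argument the paper leaves implicit, with the prefactors bounded by $1$ and the triangle inequality giving $3\cdot 2^{i-1}\le 2^{i+1}$. The only difference is that you spell out the base cases and the edge term, which the paper omits.
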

	\begin{proof}
		This follows immediately from the recursion relation.
	\end{proof}
	
	\begin{proposition}
		\label{prop: coefficients of multi-dimensional Hermit.}
		All coefficients of multi-dimensional polynomial  $H_{i_1}(\x_1)H_{i_2}(\x_2)\cdots H_{i_d}(\x_d)$ are bounded by $2^{i_1+i_2+\cdots+i_d}$.
	\end{proposition}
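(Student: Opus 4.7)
The plan is to exploit the tensor-product structure: since each factor $H_{i_j}(\x_j)$ depends on a distinct variable $\x_j$, no ``mixing'' occurs when we expand the product, so each coefficient of the $d$-variate polynomial is literally a product of one coefficient from each factor rather than a sum over many such products.

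Concretely, I would write $H_{i_j}(\x_j) = \sum_{k_j=0}^{i_j} c^{(j)}_{k_j}\, \x_j^{k_j}$ for each $j\in[d]$, where by Proposition~\ref{prop: coefficients of Hermite polynomials are bounded} we have $|c^{(j)}_{k_j}| \le 2^{i_j}$. Expanding the product gives
\[
    H_{i_1}(\x_1)\cdots H_{i_d}(\x_d) \;=\; \sum_{k_1=0}^{i_1}\cdots\sum_{k_d=0}^{i_d} \Bigl(\prod_{j=1}^d c^{(j)}_{k_j}\Bigr) \x_1^{k_1}\cdots \x_d^{k_d}.
\]
Because the monomials $\x_1^{k_1}\cdots \x_d^{k_d}$ for distinct multi-indices $(k_1,\dots,k_d)$ are distinct monomials in $\R[\x_1,\dots,\x_d]$, no cancellation or collection of terms occurs, so the coefficient of $\x_1^{k_1}\cdots \x_d^{k_d}$ equals $\prod_{j=1}^d c^{(j)}_{k_j}$ exactly.

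Applying the univariate bound to each factor gives
\[
    \Bigl|\prod_{j=1}^d c^{(j)}_{k_j}\Bigr| \;\le\; \prod_{j=1}^d 2^{i_j} \;=\; 2^{i_1+i_2+\cdots+i_d},
\]
which is the desired bound. There is no real obstacle here; the only point that requires a line of care is the observation that in the multivariate expansion each target monomial is hit exactly once (so the coefficient is a single product, not a sum), which is why we avoid the kind of blowup that would arise if we multiplied two univariate polynomials in the \emph{same} variable.
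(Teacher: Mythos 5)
Your proof is correct and follows essentially the same route as the paper: each coefficient of the product is a single product of one coefficient from each univariate factor, and applying the bound $|c^{(j)}_{k_j}| \le 2^{i_j}$ from Proposition~\ref{prop: coefficients of Hermite polynomials are bounded} factorwise gives $2^{i_1+\cdots+i_d}$. Your explicit remark that distinct multi-indices yield distinct monomials (so no summation or cancellation occurs) is a slightly more careful statement of the same observation the paper makes implicitly.
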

	\begin{proof}
		Each monomial of $H_{i_1}(\x_1)H_{i_2}(\x_2)\cdots H_{i_d}(\x_d)$ can be expressed as $\prod_{j}m_j(\x_j)$ where each $m_j(\x_j)$ is a monomial of $H_{i_j}(\x_j)$. But we know that the coefficient of $m_j$ is bounded by $2^{i_j}$ in absolute value. Thus, each coefficient of $H_{i_1}(\x_1)H_{i_2}(\x_2)\cdots H_{i_d}(\x_d)$ is at most $2^{i_1+i_2+\cdots+i_d}$.
	\end{proof}
	
	\begin{proposition}
		\label{prop: bound coefficients of a polynomials with bounded Gaussian L2 norm}
		Let $p$ be a polynomial over $\R^d$ of degree $\ell$. Suppose that $p$ satisfies \[\ex_{\x \in \mathcal{N}(0, I)}[(p(\x))^2]\leq 1,\] then every monomial of $p$ has a coefficient of at most $(2d)^\ell$ in absolute value. 
	\end{proposition}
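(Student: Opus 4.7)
The plan is to expand $p$ in the orthonormal basis of multivariate Hermite polynomials under the Gaussian measure, read off control on the Hermite coefficients from the hypothesis, and then convert back to the monomial basis using the coefficient bound on Hermite polynomials established in the preceding proposition.

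Concretely, first I would write $p(\x) = \sum_{\alpha \in \N^d,\, |\alpha| \le \ell} c_\alpha H_\alpha(\x)$, where $H_\alpha(\x) = \prod_{j=1}^d H_{\alpha_j}(\x_j)$ and the univariate $H_i$ are as in the earlier propositions. Since the family $\{H_\alpha\}$ is orthonormal in $L^2(\Gauss(0,I))$, Parseval's identity gives $\sum_\alpha c_\alpha^2 = \ex_{\x \sim \Gauss(0,I)}[(p(\x))^2] \le 1$, and in particular $|c_\alpha| \le 1$ for every $\alpha$.

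Next I would pass back to the monomial basis. Write $H_\alpha(\x) = \sum_\beta h_{\alpha,\beta}\x^\beta$. By \Cref{prop: coefficients of multi-dimensional Hermit.}, $|h_{\alpha,\beta}| \le 2^{|\alpha|} \le 2^\ell$, and $h_{\alpha,\beta} = 0$ unless $|\beta| \le |\alpha| \le \ell$. Thus for a fixed monomial $\x^\beta$, its coefficient in $p$ is $\sum_\alpha c_\alpha h_{\alpha,\beta}$, and by Cauchy--Schwarz, $\bigl|\sum_\alpha c_\alpha h_{\alpha,\beta}\bigr| \le \sqrt{\sum_\alpha c_\alpha^2}\cdot \sqrt{\sum_\alpha h_{\alpha,\beta}^2} \le 1 \cdot 2^\ell \sqrt{N_{d,\ell}}$, where $N_{d,\ell} = |\{\alpha \in \N^d : |\alpha| \le \ell\}| = \binom{d+\ell}{\ell} \le (d+1)^\ell$. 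This yields a bound of order $(2\sqrt{d+1})^\ell$ on the monomial coefficient, which is comfortably within $(2d)^\ell$ for $d \ge 2$ (the case $d=1$ being absorbable into constants, or handled directly via the univariate bound on Hermite coefficients).

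The only real obstacle is constant bookkeeping; the conceptual content is the change of basis. There is no hidden difficulty: the hypothesis gives control of the Hermite coefficients via Parseval, the previous proposition gives per-Hermite monomial coefficient control, and the number of Hermite basis elements of degree at most $\ell$ in $d$ variables is polynomially small relative to $d^\ell$, so the three ingredients combine cleanly to give the stated bound.
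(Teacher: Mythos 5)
Your proof takes essentially the same route as the paper's: expand $p$ in the orthonormal multivariate Hermite basis, use Parseval to conclude each Hermite coefficient is at most $1$ in absolute value, and convert back to the monomial basis via the $2^{|\alpha|}\le 2^\ell$ coefficient bound on each Hermite product together with a count of the basis elements of degree at most $\ell$. The only difference is that you finish with Cauchy--Schwarz instead of the paper's triangle-inequality-plus-counting, yielding the slightly sharper $(2\sqrt{d+1})^\ell$; this is a harmless refinement, and the residual looseness in tiny-$d$ cases (e.g.\ $d=1$) is present in the paper's own counting as well and is immaterial to how the bound is used downstream.
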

	\begin{proof}
		For an element $\x\in \R^d$ we let $(\x_1,\cdots, \x_d)$ be its coordinates.
		We expand $p(\x)$ as a sum of multidimensional Hermite polynomials\footnote{Note that the expansion below is always possible for a degree $\ell$ polynomials because polynomials of the form $H_{i_1}(\x_1)H_{i_2}(\x_2)\cdots H_{i_d}(\x_d)$ are polynomials of degree at most $\ell$ that are linearly independent, because they are orthonormal with respect to the standard $d$-dimensional Gaussian.}:
		\begin{equation}
			\label{eq: multi-dimentional Hermit expansion}
			p(\x)
			=
			\sum_{
				\substack{i_1, i_2, \cdots i_d \geq 0\\
					i_1+i_2+\cdots i_d \leq \ell}
			}
			\alpha_{i_1, i_2,\cdots, i_d}
			H_{i_1}(\x_1)H_{i_2}(\x_2)\cdots H_{i_d}(\x_d)
		\end{equation}
		
		Due to orthogonality of Hermite polynomials, we have:
		\[
		\sum_{
			\substack{i_1, i_2, \cdots i_d \geq 0\\
				i_1+i_2+\cdots i_d \leq \ell}
		}
		\alpha_{i_1, i_2,\cdots, i_d}^2
		=
		\ex_{\x \in \mathcal{N}(0, I)}[(p(\x))^2]
		\leq 1    
		\]
		In particular, this implies that each coefficient $ \alpha_{i_1, i_2,\cdots, i_d}$ is bounded by $1$ in absolute value. Combining this with Equation \ref{eq: multi-dimentional Hermit expansion}, Proposition \ref{prop: coefficients of multi-dimensional Hermit.} and the fact that there are at most $d^\ell$ ways to choose $i_1, i_2,\cdots i_d\geq 0$ satisfying $\sum_j i_j \leq \ell$, we see that each coefficient of $p$ bounded by $(2d)^\ell$ in absolute value.
	\end{proof}
	
	Finally, we need the following standard fact.
 \begin{fact}[Gaussian Hypercontractivity \cite{bogachev1998gaussian},\cite{NELSON1973211}]
     If $p:\R^d\to \R$ is a polynomial of degree at most $\ell$, for every $t\geq 2$,
     \[
     \E_{\x\sim \Gauss(0,I_d)}[|p(\x)|^t]^{\frac{1}{t}}\leq (t-1)^{\ell/2}\sqrt{\E_{\x\sim \Gauss_d}[p^2(\x)]}\,.
     \]
 \end{fact}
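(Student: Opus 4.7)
The plan is to prove this via Nelson's classical hypercontractive inequality for the Ornstein--Uhlenbeck semigroup, combined with a Hermite expansion of $p$. The key insight is that the factor $(t-1)^{\ell/2}$ arises naturally from the eigenvalues of that semigroup restricted to polynomials of degree at most $\ell$.

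First, I would expand $p$ in the normalized multivariate Hermite basis as $p(\x) = \sum_{S \in \N^d,\, |S| \leq \ell} \alpha_S H_S(\x)$, where $|S| := \sum_i S_i$ and $H_S(\x) := \prod_i H_{S_i}(\x_i)$. Orthonormality of the $H_S$ under $\Gauss_d$ yields $\E_{\x\sim\Gauss_d}[p(\x)^2] = \sum_S \alpha_S^2$. Next, I would introduce the Ornstein--Uhlenbeck semigroup $T_\rho f(\x) := \E_{\y\sim\Gauss_d}[f(\rho \x + \sqrt{1-\rho^2}\,\y)]$, which is diagonalized by the Hermite basis via $T_\rho H_S = \rho^{|S|} H_S$. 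Define the ``inflated'' polynomial
\[
q(\x) := \sum_{S :\, |S|\le \ell} \alpha_S\,(t-1)^{|S|/2}\, H_S(\x),
\]
so that $T_{1/\sqrt{t-1}}\, q = p$ and $\E_{\x\sim\Gauss_d}[q(\x)^2] = \sum_S \alpha_S^2 (t-1)^{|S|} \le (t-1)^\ell \,\E_{\x\sim\Gauss_d}[p(\x)^2]$, since $|S| \le \ell$ for every surviving term.

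The core tool is Nelson's inequality: for $t\geq 2$ and $\rho = 1/\sqrt{t-1}$, the operator $T_\rho$ is a contraction from $L^2(\Gauss_d)$ into $L^t(\Gauss_d)$, i.e., $\|T_\rho f\|_t \le \|f\|_2$ for every $f \in L^2(\Gauss_d)$. Granting this, applying it to $q$ yields $\|p\|_t = \|T_{1/\sqrt{t-1}} q\|_t \le \|q\|_2 \le (t-1)^{\ell/2}\|p\|_2$, which is exactly the stated bound.

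The main obstacle is establishing Nelson's inequality itself. The textbook route proceeds in three stages: (i) verify the two-point hypercontractive inequality $\|T_\rho f\|_p \le \|f\|_q$ for $f:\{\pm 1\}\to \R$ whenever $|\rho| \le \sqrt{(q-1)/(p-1)}$, which reduces to a single-variable calculus check on functions of the form $f(\pm 1) = 1 \pm \alpha$; (ii) tensorize this to hypercontractivity on $\{\pm 1\}^n$ for every $n$, exploiting the fact that both sides factorize across coordinates so one may induct on the dimension; and (iii) pass to the Gaussian setting via the central limit theorem, using that rescaled sums of independent Rademachers converge weakly to a standard Gaussian and that finite-degree polynomials behave continuously under this limit. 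An alternative route goes through Gross's logarithmic Sobolev inequality for $\Gauss_d$, which is equivalent to Nelson's inequality by differentiating in $t$. The tensorization plus limit-passing step is the most delicate piece; the Hermite-analytic wrap-up described above is then essentially routine.
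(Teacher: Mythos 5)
The paper does not prove this statement at all---it is stated as a \emph{Fact} with citations to Bogachev and Nelson, and no proof is supplied. Your argument is the standard textbook proof of Gaussian hypercontractivity (Nelson's $(2,t)$-hypercontractive inequality for the Ornstein--Uhlenbeck semigroup, diagonalized by the Hermite basis, combined with the observation that for a degree-$\ell$ polynomial the eigenvalue factor is at most $(t-1)^{\ell/2}$), which is essentially the content of the cited references, and it is correct.

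Two minor points worth noting. First, your reduction to Nelson's inequality is complete and airtight: defining $q = \sum_{|S|\le\ell}\alpha_S (t-1)^{|S|/2} H_S$ gives $T_{1/\sqrt{t-1}}q = p$ exactly, $\|q\|_2 \le (t-1)^{\ell/2}\|p\|_2$ by Parseval, and Nelson's $L^2\to L^t$ contraction at $\rho = 1/\sqrt{t-1}$ then delivers the stated bound; this part needs no further elaboration. Second, the harder part---Nelson's inequality itself---you only sketch (two-point inequality, tensorization, CLT passage to the Gaussian, or alternatively Gross's log-Sobolev route). For a self-contained writeup one would have to fill in step (iii), the passage from the hypercube to $\Gauss_d$, with care about why convergence of $t$-th absolute moments of bounded-degree polynomials of rescaled Rademacher sums actually holds (uniform integrability via moment bounds); but since the paper itself treats the entire Fact as citable prior work, your level of detail is more than adequate for the role this statement plays.
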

 
 The following is a standard corollary:
	\begin{proposition}
		\label{prop: bound on L_2 norm using L_1 norm for polys}
		If $p: \R^d \rightarrow \R$ is a polynomial of degree $\ell$, then
		\[
		\sqrt{\ex_{\x \in \mathcal{N}(0, I)}[(p(\x))^2]}
		\leq
		e^{\ell}
		\ex_{\x \in \mathcal{N}(0, I)}[|p(\x)|]
		\]
	\end{proposition}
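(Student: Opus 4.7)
The strategy is to interpolate between the $L_1$ and a higher $L_q$ norm via Hölder, and then invoke the Gaussian hypercontractivity fact stated just above to push the $L_q$ norm back down to $L_2$. Concretely, fix any $q > 2$ and choose $\theta \in (0,1)$ so that $\tfrac{1}{2} = \theta + \tfrac{1-\theta}{q}$, which forces $\theta = \tfrac{q-2}{2(q-1)}$ and $1-\theta = \tfrac{q}{2(q-1)}$. Hölder's inequality then gives
\[
    \|p\|_2 \;\le\; \|p\|_1^{\theta}\, \|p\|_q^{1-\theta}.
\]
I will write $\|p\|_r$ throughout for $(\E_{\x\sim\Gauss(0,I)}[|p(\x)|^r])^{1/r}$.

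Next I apply the hypercontractivity fact with $t = q$, obtaining $\|p\|_q \le (q-1)^{\ell/2}\, \|p\|_2$. Substituting this into the Hölder bound and rearranging the $\|p\|_2^{1-\theta}$ factor to the left-hand side yields
\[
    \|p\|_2^{\theta} \;\le\; (q-1)^{\ell(1-\theta)/2}\, \|p\|_1^{\theta},
\]
which is equivalent to $\|p\|_2 \le (q-1)^{\ell q /(2(q-2))}\, \|p\|_1$. This is already a bound of the desired form $C^{\ell}\|p\|_1$ with a constant $C$ that does not depend on $d$ or on $p$.

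To finish, I will optimize the choice of $q$ (or simply pick a convenient value) so that the resulting base $(q-1)^{q/(2(q-2))}$ is at most $e$, matching the statement. For example, any $q$ close to $3$ pushes the constant below a small absolute value; the looseness in the exponent is irrelevant for the applications, which only require a $C^{\ell}$-type bound. Combining the steps above gives $\|p\|_2 \le e^{\ell}\, \E_{\x\sim\Gauss(0,I)}[|p(\x)|]$, as claimed.

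The only subtle step is matching the exact constant $e$ in the exponent, since the naive optimization of the Hölder/hypercontractivity trade-off is slightly looser; this is a cosmetic issue and any constant of the form $C^{\ell}$ suffices for downstream use (in particular for the coefficient bound derivation in Lemma~\ref{thm: sandwiching polynomials con't have crazy high coefficients}, where the $L_1$ norm of $\pup - \pdown$ is bounded and we need an $L_2$ bound to apply Proposition~\ref{prop: bound coefficients of a polynomials with bounded Gaussian L2 norm}).
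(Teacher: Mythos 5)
Your approach is the same as the paper's: interpolate between $L_1$ and $L_q$ via H\"older's inequality, apply Gaussian hypercontractivity with $t=q$ to bring $L_q$ back down to $L_2$, and rearrange to get $\|p\|_2 \le (q-1)^{\ell q/(2(q-2))}\|p\|_1$ for every $q>2$. The computation up to this point is correct. However, the final step contains an error: there is no $q>2$ for which the base $(q-1)^{q/(2(q-2))}$ is at most $e$. The function $q\mapsto (q-1)^{q/(2(q-2))}$ is strictly increasing on $(2,\infty)$; at $q=3$ it equals $2\sqrt{2}\approx 2.83>e$, and its infimum $e$ is attained only in the limit $q\to 2^{+}$ (writing $q=2+\lambda$, the base is $(1+\lambda)^{(2+\lambda)/(2\lambda)}\to e$ as $\lambda\to 0$). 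The paper closes the argument by passing to exactly this limit, which your derivation would also support had you noticed it. As written, "pick $q$ close to $3$" does not give the stated constant; your own fallback remark that any $C^{\ell}$ suffices for the downstream coefficient bound is true, but to prove the proposition as stated you must take $q\to 2^{+}$.
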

	\begin{proof}
		The proof is standard, and is included here for completeness (a completely analogous proof for the Boolean case can be found in Theorem~9.22 from \cite{o2014analysis}). Let $\lambda>0$ be a parameter and let $\theta=\frac{1}{2}\frac{\lambda}{1+\lambda}$. Using Generalized Holder's inequality and Gaussian Hypercontractivity, we have 
		\begin{multline*}
			\sqrt{\ex_{\x \in \mathcal{N}(0, I)}[(p(\x))^2]}
			\leq
			\left(\ex_{\x \in \mathcal{N}(0, I)}[|p(\x)|]\right)^{\theta}
			\left(\ex_{\x \in \mathcal{N}(0, I)}[(p(\x))^{2+\lambda}]\right)^{\frac{1-\theta}{2+\lambda}}
			\leq\\
			\leq
			\left(\ex_{\x \in \mathcal{N}(0, I)}[|p(\x)|]\right)^{\theta}
			\left(
			(1+\lambda)^{\ell/2}
			\sqrt{
				\ex_{\x \in \mathcal{N}(0, I)}[(p(\x))^{2}]}\right)^{1-\theta}
		\end{multline*}
		Overall,
		\[
		\left(
		\sqrt{\ex_{\x \in \mathcal{N}(0, I)}[(p(\x))^2]}
		\right)^{\theta}
		\leq
		(1+\lambda)^{(1-\theta)\ell/2}
		\left(\ex_{\x \in \mathcal{N}(0, I)}[|p(\x)|]\right)^{\theta}
		\]
		Taking power $1/\theta$ of both sides and recalling that $\theta=\frac{1}{2}\frac{\lambda}{1+\lambda}$ we get:
		\[
		\sqrt{\ex_{\x \in \mathcal{N}(0, I)}[(p(\x))^2]}
		\leq
		(1+\lambda)^{\frac{(1-\theta)}{\theta}\ell/2}
		\ex_{\x \in \mathcal{N}(0, I)}[|p(\x)|]
		=
		(1+\lambda)^{\left(\frac{1}{\lambda}-\frac{1}{2}\right)\ell}
		\ex_{\x \in \mathcal{N}(0, I)}[|p(\x)|].
		\]
		Finally, taking $\lambda \rightarrow 0$ proves the proposition. 
	\end{proof}
	
	Finally, we are ready to prove Theorem \ref{thm: sandwiching polynomials con't have crazy high coefficients}.
	\begin{proof}[Proof of Theorem \ref{thm: sandwiching polynomials con't have crazy high coefficients}]
		Without loss of generality\footnote{This is indeed without loss of generality, because the function $-f$ is bounded from above by $-\pdown$ and from below by $-\pup$.}, we bound the coefficients of $\pup(\x)$. We have 
		\begin{multline*}
			\ex_{\x \in \mathcal{N}(0, I)}[|\pup(\x)|]
			\leq
			\ex_{\x \in \mathcal{N}(0, I)}[|f(\x)|]
			+
			\ex_{\x \in \mathcal{N}(0, I)}[|\pup(\x)-f(\x)|]
			\leq\\
			\leq
			\ex_{\x \in \mathcal{N}(0, I)}[|f(\x)|]
			+
			\ex_{\x \in \mathcal{N}(0, I)}[\pup(\x)-\pdown(\x)]\leq 2.
		\end{multline*}
		Note that in the last inequality the value of $\ex_{\x \in \mathcal{N}(0, I)}[|f(\x)|]$ is bounded by $1$ because $f$ is $\{\pm 1\}$-valued, and $\ex_{\x \in \mathcal{N}(0, I)}[\pup(\x)-\pdown(\x)]$ was bounded by $1$ by the premise of the theorem. Combining the equation above with Proposition \ref{prop: bound on L_2 norm using L_1 norm for polys}, we get 
		\[
		\sqrt{
			\ex_{\x \in \mathcal{N}(0, I)}[(\pup(\x))^2]}
		\leq 2 \cdot e^{\ell}.
		\]
		Finally, together with Proposition \ref{prop: bound coefficients of a polynomials with bounded Gaussian L2 norm} implies that each coefficient of $\frac{\pup}{2 \cdot e^{\ell}}$ is bounded by $(2d)^\ell$ in absolute value. This allows us to conclude that each coefficient of $\pup$ is bounded by $2 \cdot (10d)^\ell$ in absolute value.
	\end{proof}

\end{document}